\newtheorem{theorem}{Theorem}[section]
\newtheorem{conjecture}[theorem]{Conjecture}
\newtheorem{definition}[theorem]{Definition}
\newtheorem{remark}[theorem]{Remark}
\newtheorem{proposition}[theorem]{Proposition}
\newtheorem{corollary}{Corollary}[section]
\newcommand{\CC}{\mathbb{C}}
\newcommand{\EE}{\mathbb{E}}
\newcommand{\FF}{\mathbb{F}}
\newcommand{\RR}{\mathbb{R}}
\newcommand{\ZZ}{\mathbb{Z}}
\DeclareSymbolFont{bbold}{U}{bbold}{m}{n}
\DeclareSymbolFontAlphabet{\mathbbold}{bbold}
\newcommand{\One}{\mathbbold{1}}
\renewcommand{\emptyset}{\varnothing}
\newcommand{\sym}{\mathrm{sym}}
\newcommand{\spec}{\mathrm{spec}}
\DeclareMathOperator*{\argmin}{arg\,min}
\DeclareMathOperator{\Aut}{Aut}
\DeclareMathOperator{\tr}{tr}
\newcommand{\Mod}[1]{\ (\mathrm{mod}\ #1)}
\newcommand{\SOS}{\mathrm{SOS}}
\newcommand{\tEE}{\widetilde{\mathbb{E}}}
\renewcommand{\epsilon}{\varepsilon}
\renewcommand{\Tilde}{\widetilde}
\newcommand\numberthis{\addtocounter{equation}{1}\tag{\theequation}}
\title{A degree 4 sum-of-squares lower bound for the clique number of the Paley graph}
\author{Dmitriy Kunisky\thanks{Email: \texttt{dmitriy.kunisky@yale.edu}. Partially supported by ONR Award N00014-20-1-2335 and a Simons Investigator Award to Daniel Spielman.}\,}
\author{Xifan Yu\thanks{Email: \texttt{xifan.yu@yale.edu}. Partially supported by a Simons Investigator Award to Daniel Spielman.}\,}
\affil{Department of Computer Science, Yale University}
\date{April 25, 2024}
\begin{document}

\maketitle

\thispagestyle{empty}

\begin{abstract}
    We prove that the degree 4 sum-of-squares (SOS) relaxation of the clique number of the Paley graph on a prime number $p$ of vertices has value at least $\Omega(p^{1/3})$.
    This is in contrast to the widely believed conjecture that the actual clique number of the Paley graph is $O(\mathrm{polylog}(p))$.
    Our result may be viewed as a derandomization of that of Deshpande and Montanari (2015), who showed the same lower bound (up to $\mathrm{polylog}(p)$ terms) with high probability for the Erd\H{o}s-R\'{e}nyi random graph on $p$ vertices, whose clique number is with high probability $O(\log(p))$.
    We also show that our lower bound is optimal for the Feige-Krauthgamer construction of pseudomoments, derandomizing an argument of Kelner.
    Finally, we present numerical experiments indicating that the value of the degree 4 SOS relaxation of the Paley graph may scale as $O(p^{1/2 - \epsilon})$ for some $\epsilon > 0$, and give a matrix norm calculation indicating that the pseudocalibration construction for SOS lower bounds for random graphs will not immediately transfer to the Paley graph.
    Taken together, our results suggest that degree~4 SOS may break the ``$\sqrt{p}$ barrier'' for upper bounds on the clique number of Paley graphs, but prove that it can at best improve the exponent from $1/2$ to $1/3$.
\end{abstract}

\clearpage

\section{Introduction}

\subsection{Maximum and Planted Clique Problems in Random Graphs}

For a graph $G$, we denote by $\omega(G)$ the number of vertices in the largest \emph{clique} or complete subgraph in $G$.
Computing $\omega(G)$ is a classical NP-hard problem in combinatorial optimization, which is moreover hard to approximate within any polynomial factor $n^{1-\varepsilon}$ for $\varepsilon > 0$ \cite{Karp-1972-Reducibility,Hastad-1996-CliqueHardnessApproximation}.
Aside from this worst-case hardness, an average-case setting of computing $\omega(G)$ was proposed by Karp \cite{Karp-1976-ProbabilisticAnalysis}.
In this setting, the input graph is an Erd\H os-R\'enyi (ER) random graph $G$ on $n$ vertices, where each edge is present independently with probability $\frac{1}{2}$.
We denote this by distribution by $G \sim \mathcal{G}\left(n, \frac{1}{2}\right)$.
It is known that (see, e.g., \cite[Section 11.1]{Bollobas-2001-RandomGraphs}), with high probability,
\begin{equation}
\omega(G) \in \left[(2 - o(1))\log_2 n, (2 + o(1))\log_2n\right].
\end{equation}
In \cite{Karp-1976-ProbabilisticAnalysis}, Karp showed that a simple greedy algorithm with high probability finds a clique of size roughly $\log_2 n$, and asked whether a polynomial-time algorithm can with high probability find a clique of size $(1 + \varepsilon)\log n$ for any constant $\varepsilon > 0$.
The problem remains open, but, perhaps surprisingly, evidence has accumulated that such an algorithm does not exist \cite{Jerrum-1992-LargeCliques,GZ-2019-LandscapePlantedClique}.

A natural related problem is that of algorithmically \emph{bounding} the size of the largest clique in $G$, outputting a number that is always an upper bound on $\omega(G)$.
For example, under $G \sim \mathcal{G}(n, \frac{1}{2})$, a simple algorithm based on the maximum degree can produce a $O(\sqrt{n\log n})$ bound \cite{Kucera-1995-ComplexityGraphPartitioning}.
Spectral algorithms operating on the eigenvalues of the adjacency matrix of $G$ can improve this to $O(\sqrt{n})$ (for instance, using Haemers' generalization to irregular graphs of Hoffman's classical spectral bound on the clique number \cite{Haemers-1995-InterlacingEigenvaluesGraphs}).

The question of algorithmically bounding the clique number is also related to the problem of \emph{hypothesis testing} between $G \sim \mathcal{G}(n, \frac{1}{2})$ and $G$ drawn from another distribution where a typical $G$ contains a \emph{planted} clique of size much larger than $2\log_2 n$, since if we have an algorithm that always produces a valid bound on $\omega(G)$ and this bound is typically small for $G \sim \mathcal{G}(n, \frac{1}{2})$, then we can use its output to detect the planting of a sufficiently large clique.
The above then shows that we may detect the presence of a clique of size $C\sqrt{n}$ for sufficiently large $C$; \cite{AKS-1998-PlantedClique} moreover showed that an efficient spectral algorithm can even \emph{recover} the vertex set of a planted clique of this size.\footnote{Observe that, while a brute force search can both detect and recover a planted clique of any size $(2 + \varepsilon)\log_2 n$, this brute force search does not run in polynomial time.}

A long line of work considered whether using convex relaxations of $\omega(G)$ that produce bounds that are in general stronger than spectral bounds can break this ``$\sqrt{n}$ barrier'' for $G \sim \mathcal{G}(n, \frac{1}{2})$, with a particular focus on semidefinite programming (SDP) relaxations. \cite{Juhasz-1982-LovaszThetaRandomGraph} showed that Lov\'{a}sz's $\vartheta$ function \cite{Lovasz-1979-ShannonCapacityGraph} also has value $\Omega(\sqrt{n})$; \cite{FK-2000-PlantedClique} later considered further aspects of using the $\vartheta$ function for detecting and recovering planted cliques. \cite{FK-2003-LovaszSchrijverIndependentSet} showed the same $\Omega(\sqrt{n})$ lower bound for any constant level of the Lov\'{a}sz-Schrijver hierarchy of SDPs, of which the $\vartheta$ function is merely the first and weakest.
The stronger sum-of-squares (SOS) hierarchy of relaxations proved harder to analyze.
The pioneering but flawed analysis of \cite{MW-2013-PlantedClique} was fixed by \cite{MPW-2015-PlantedClique}, albeit at the cost of falling short of an $\Omega(\sqrt{n})$ lower bound.
Many subsequent works, first on the degree 4 SOS relaxation \cite{DM-2015-SOSPlantedClique,RS-2015-SOS4PlantedClique,HKP-2015-PlantedCliqueSOSMPW} and culminating in the development of the \emph{pseudocalibration} technique for larger degrees \cite{BHKKMP-2019-PlantedClique}, ultimately established an $\Omega(n^{1/2 - o(1)})$ lower bound for any constant degree of the SOS hierarchy.\footnote{The SOS hierarchy consists of a sequence of SDPs producing smaller and smaller upper bounds on $\omega(G)$, indexed by an even number called the \emph{degree}. See Section~\ref{sec:sos-clique-def} for a precise definition.}

All of these results apply, as we have mentioned, to the average case of computing the clique number over $G \sim \mathcal{G}(n, \frac{1}{2})$.
Some recent literature has revisited other average-case SOS lower bounds and identified \emph{deterministic} instances over which the same quality of lower bound holds (see in particular the work of \cite{DFHT-2020-ExplicitSOSLowerHDX,HL-2022-ExplicitCSPSOSBounds}, derandomizing the result of \cite{Grigoriev-2001-SOSParity} on refuting 3-XORSAT instances).\footnote{Here we are interested in quantitative lower bounds showing large integrality gaps, rather than arbitrarily small integrality gaps---deterministic explicit examples giving the latter for high degrees of SOS have been shown before for several problems in works such as \cite{Grigoriev-2001-SOSParity,Laurent-2003-CutPolytopeSOS}.}
In this paper, we initiate the study of the same question for the clique problem, by derandomizing the SOS lower bound of \cite{DM-2015-SOSPlantedClique} for the degree 4 SOS relaxation of $\omega(G)$ with $G \sim \mathcal{G}(n, \frac{1}{2})$.
The deterministic graphs that achieve this derandomization are the \emph{Paley graphs}, whose clique number is a question of independent interest in number theory.
We first review some background on the Paley graphs, and then describe our results.

\subsection{Paley Graphs, Pseudorandomness, and Derandomization}

The Paley graphs are an infinite family of graphs that exhibit certain \emph{pseudorandom} properties, behaving in some regards similarly to a typical $G \sim \mathcal{G}(n, \frac{1}{2})$.
They are defined on vertex sets identified with finite fields $\mathbb{F}_q$ of order $q \equiv 1 \Mod{4}$, where edges connect pairs of elements of $\mathbb{F}_q$ whose differences are quadratic residues. We denote the Paley graph on $\FF_q$ by $G_q$; the reader may see Section~\ref{sec:paley-def} for a more precise definition.

Many quantities that may be computed from Paley graphs are the same as those of typical graphs drawn from $\mathcal{G}(q, \frac{1}{2})$. In the simplest instance, Paley graphs are regular of degree $\frac{q - 1}{2}$, roughly the average degree of the corresponding random graph. \cite{CGW-1989-QuasirandomGraphs} showed that the same holds for the number of occurrences of any subgraph of constant size, for the first eigenvalue being asymptotically $\frac{q}{2}$, and the second eigenvalue being $o(q^{\frac{1}{2} + \varepsilon})$ for any $\varepsilon > 0$.

How far can we take this analogy?
It is natural to ask for subgraph counts of graphs of size growing slowly with $q$, and the clique number is just such a question: under $G \sim \mathcal{G}(q, \frac{1}{2})$ we have $\mathbb{E}[\omega(G)] \sim 2\log_2 q$, and we might expect the same for $\omega(G_q)$.

However, the clique number of Paley graphs is not well understood.
Let us review what is currently known.
Hoffman's spectral bound \cite{Hoffman-1970-Eigenvalues,Haemers-1995-InterlacingEigenvaluesGraphs} implies the upper bound
\begin{equation}
    \label{eq:Gq-hoffman}
    \omega(G_q) \leq \sqrt{q}.
\end{equation}
In fact, this is easy to derive by elementary combinatorial means (see, e.g., \cite{Yip-2022-CliqueNumberPaleyPrimePower}) and for this reason is sometimes called the \emph{trivial} upper bound on $\omega(G_q)$.
This is tight for $q = p^{2k}$ an even power of a prime, as $\FF_{\sqrt{q}}$ may be realized as a subfield of $\FF_q$ all of whose elements are quadratic residues in this case \cite{BDR-1988-CliqueNumbersPaleyGraphs}.

However, for odd prime powers, and even the simplest case $q = p$ a prime, the clique number is believed to be much lower.
The upper bound on the diagonal Ramsey number established by \cite{ES-1935-CombinatorialProblemGeometry} implies that
\begin{equation}
    \omega(G_p) \geq \left(\frac{1}{2} + o(1)\right) \log_2 p.
\end{equation}
By a number-theoretic analysis of the least quadratic non-residue modulo $p$, \cite{GR-1990-LowerBoundsLeastQNR} improved this, showing that for infinitely many primes $p$,
\begin{equation}
    \omega(G_p) \geq \log p \log\log\log p.
\end{equation}
Moreover, conditional on the Generalized Riemann Hypothesis, the $\log\log\log p$ term may be improved to $\log \log p$ \cite[Theorem 13.5]{Montgomery-1971-MultiplicativeNumberTheory}.\footnote{It is still possible to reconcile these results with the proposal that $G_p$ behaves like a random graph, so long as we adopt a more sophisticated random model than $\mathcal{G}(p, \frac{1}{2})$ \cite{Mrazovic-2017-RandomModelPaleyGraph}.}

On the other hand, the best known upper bound \cite{HP-2021-SumsetsPaleyClique,BSW-2021-ProductDirectionsGaloisPaley} improves only by a constant factor on the spectral bound \eqref{eq:Gq-hoffman},
\begin{equation}
    \label{eq:hp-bound}
    \omega(G_p) \leq \frac{\sqrt{2p - 1} + 1}{2} \sim \frac{\sqrt{p}}{\sqrt{2}}.
\end{equation}

In contrast to this state-of-the-art bound, $\omega(G_p)$ is widely believed to actually scale at most polylogarithmically with $p$ based on computations of $\omega(G_p)$ for small $p$.
We express this in the following conjecture; see \cite{Shearer-1986-LowerBoundsDiagonalRamsey,BMR-2013-SquaresDifferenceSets,Yip-2022-CliqueNumberPaleyPrimePower,KM-2023-SDPCliquePaley} as well as our Figure~\ref{fig:clique-numerics}.
\begin{conjecture}
    \label{conj:paley-polylog}
    For some $C, K > 0$ and all $p \equiv 1 \Mod{4}$ prime, $\omega(G_p) \leq C (\log p)^K$.
\end{conjecture}
\noindent
Numerical evidence suggests that we might in fact expect to be able to take $K = 2$, as discussed by \cite{BMR-2013-SquaresDifferenceSets,KM-2023-SDPCliquePaley} and illustrated in our Figure~\ref{fig:clique-numerics}.

Moreover, these graphs are believed to be good constructions for lower bounds on the diagonal Ramsey numbers $R(k,k)$. For example, the Paley graph of order~$17$ is the unique largest graph that contains neither a clique of size $4$ nor an independent set of size $4$, which shows that $R(4,4) = 18$ \cite{EPS-1981-CompleteSubgraphsPaley}. The current best known bound $R(6,6) \ge 102$ is established by the Paley graph of order $101$, which contains neither a clique of size $6$ nor an independent set of size $6$ \cite{Radziszowski-2011-SmallRamseyNumbers}.

Because of this application among others, it is a long-standing open problem in additive combinatorics and number theory to improve the upper bound for clique numbers of Paley graphs of prime orders, and in particular to break the ``$\sqrt{p}$ barrier'' and prove an upper bound scaling as $p^{1/2 - \epsilon}$ for some $\epsilon > 0$.\footnote{For instance, this is mentioned as ``probably a very hard problem'' in the problem list \cite{CL-2007-OpenProblemsAdditiveCombinatorics}.}
Some recent work has begun to explore whether convex relaxations of the clique number can lead to such improvements.
For instance, \cite{GLV-2009-BlockDiagonalSDPPaleyClique,KM-2023-SDPCliquePaley} explored using a hierarchy of SDPs producing bounds between that of the Lov\'{a}sz-Schrijver hierarchy and the SOS hierarchy for this purpose, and \cite{MMP-2019-LPCliquesPaley} empirically found that a modification of the Lov\'{a}sz $\vartheta$ function SDP can recover and sometimes slightly improve on the best-known upper bound \eqref{eq:hp-bound}.

\subsection{Our Contributions}

Our main result contributes to both of the lines of work outlined above.
On the one hand, it shows (conditional on Conjecture~\ref{conj:paley-polylog}) that the Paley graph gives a derandomization of the SOS lower bound of \cite{DM-2015-SOSPlantedClique} for ER random graphs.
On the other hand, it shows that a powerful convex optimization approach to upper-bounding the clique number cannot be too effective when applied to $G_p$.

\begin{theorem}
    \label{thm:main_theorem}
    There is a constant $c > 0$ such that the value of the degree 4 SOS relaxation of the clique number $\SOS_4(G)$, as defined in Section~\ref{sec:sos-clique-def}, evaluated with $G_p$ the Paley graph on $p$ vertices for $p$ any prime number with $p \equiv 1 \Mod{4}$, as defined in Section~\ref{sec:paley-def}, satisfies
    \begin{equation}
        \SOS_4(G_p) \geq cp^{1/3}.
    \end{equation}
\end{theorem}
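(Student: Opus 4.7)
The plan is to construct an explicit degree-$4$ pseudoexpectation $\tEE$, over polynomials in variables $x_v$ for $v \in \FF_p$, that achieves objective value $k := c p^{1/3}$ and satisfies all SOS constraints for the clique problem on $G_p$. The natural candidate is the Feige--Krauthgamer pseudoexpectation used by Deshpande and Montanari \cite{DM-2015-SOSPlantedClique} in the random-graph case: for every $S \subseteq \FF_p$ with $|S| \leq 4$, set
\[
\tEE\Bigl[\prod_{v \in S} x_v\Bigr] := \left(\frac{k}{p}\right)^{|S|} 2^{\binom{|S|}{2}} \One\{S \text{ is a clique in } G_p\},
\]
and extend to all polynomials of degree $\leq 4$ by multilinearity and $x_v^2 = x_v$. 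The objective $\sum_v \tEE[x_v]$ evaluates to $k$; the normalization $\tEE[1] = 1$, the Boolean identity, and the clique constraint $\tEE[x_u x_v f] = 0$ for $(u,v) \notin E(G_p)$ all hold by construction. Moreover, $\tEE$ is invariant under $\mathrm{Aut}(G_p)$, which acts transitively on vertices, on ordered edges, and on ordered non-edges.

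The only remaining axiom to verify is positive semidefiniteness of the degree-$2$ pseudomoment matrix
\[
M = \bigl(\tEE[x_{S \cup T}]\bigr)_{|S|, |T| \leq 2}.
\]
The plan for $M \succeq 0$ is to use the $\mathrm{Aut}(G_p)$-invariance of $\tEE$ to decompose $M$ into symmetry-adapted blocks of small dimension, and to write each block as a polynomial in a short list of elementary matrices built from $G_p$: the identity, the all-ones matrix, the adjacency matrix $A$, and ``second-order'' counting matrices whose entries record, for each pair of configurations (two vertices, a vertex and an edge, two edges), the number of common neighbors, triangles, or $4$-cliques containing those configurations. I would then evaluate the spectrum of each block using the known spectrum of $A$ (eigenvalues $\frac{p-1}{2}$ and $\frac{-1\pm\sqrt{p}}{2}$) together with Weil-type estimates on the quadratic-character sums over $\FF_p$ that compute these subgraph counts up to error $O(\sqrt{p})$ (or $O(p)$ at higher order). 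Provided $k = cp^{1/3}$ for a sufficiently small absolute constant $c$, every block should be PSD, and hence $M \succeq 0$.

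The main obstacle is carrying out this spectral analysis. The reason the threshold lands at $p^{1/3}$ rather than $\sqrt{p}$ is that the off-diagonal entries of $M$ inherit a Weil-bound error of order $\sqrt{p}$ relative to the mean-field prediction of the underlying subgraph count, and balancing this error against the diagonal contribution of order $k/p$ forces $k \lesssim p^{1/3}$. This is the exact deterministic analogue of the matrix-concentration bottleneck identified by Deshpande--Montanari in the random case, and aligns with Kelner's upper bound on what the Feige--Krauthgamer construction can achieve. The subtle part is the bookkeeping: one must enumerate the orbits of $3$- and $4$-vertex configurations in $G_p$ (triangles, paths, co-paths, $K_4$ minus an edge, and so on), translate each orbit's contribution to $M$ into an explicit character-sum expression, and combine the resulting estimates so that the accumulated errors across blocks do not spoil the PSD bound. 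Beyond this organizational challenge, I do not expect any genuinely new ideas to be required.
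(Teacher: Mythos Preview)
Your high-level plan---Feige--Krauthgamer pseudomoments, character-sum estimates for subgraph counts, and a symmetry-adapted spectral decomposition---is exactly the route the paper takes. The paper's specific choice of $\alpha_i$ differs from yours in the constants (they take $\alpha_2 = 4\alpha_1^2$, $\alpha_4 = 512\alpha_1^4$), and rather than block-diagonalizing over $\mathrm{Aut}(G_p)$ they follow Deshpande--Montanari more literally: two Schur complements, then a decomposition of $\RR^{\binom{\FF_p}{2}}$ into the three $S_p$-irreducibles $\mathbb{V}_0 \oplus \mathbb{V}_1 \oplus \mathbb{V}_2$, even though the pseudomoment matrix is not $S_p$-invariant. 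Either organizational scheme would face the same bottleneck.

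The genuine gap is your expectation that ``Weil-type estimates'' and ``bookkeeping'' suffice. They do not. After the graph-matrix expansion, the term that controls the threshold is the norm of the matrix
\[
T^{4,4,1}_{\{a,b\},\{c,d\}} = \chi(a-c)\chi(a-d)\chi(b-c)\chi(b-d),
\]
contributing with coefficient $\Theta(\alpha_1^4)$ to the block you must show is $\succeq 0$ against a diagonal of size $\Theta(\alpha_1^2)$. Weil's bound gives $|T^{4,4,1}_{S,T}| \leq 1$ entrywise, and any ``naive'' argument (Frobenius, Gershgorin, or trace-power with only Weil) yields $\|T^{4,4,1}\| = O(p^{3/2})$. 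With $\alpha_1 = cp^{-2/3}$ this produces an error term $p^{3/2}\alpha_1^4 \sim p^{-7/6}$, which \emph{exceeds} the diagonal $\alpha_1^2 \sim p^{-4/3}$ and the PSD check fails. The paper explicitly notes this obstacle and proves instead $\|T^{4,4,1}\| = O(p^{5/4})$; the argument requires rewriting the entries via Gauss sums as products of Kloosterman sums $K(\cdot)$, exploiting the block-circulant structure coming from the affine group of $\FF_p$, and then invoking correlation bounds for Kloosterman sums of Fouvry--Kowalski--Michel type. That is a new idea well beyond Weil, and without it the FK construction as you have written it does not certify $k = cp^{1/3}$.
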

\noindent
The main ingredients in our proof are new norm bounds for certain \emph{graph matrices} (as appear in the analysis of SOS relaxations for random graphs; see, e.g., \cite{AMP-2016-GraphMatrices}) formed from Paley graphs and certain character sum estimates for the Legendre symbol.

To elaborate on this result, we provide three further pieces of more detailed analysis.
Note that Theorem~\ref{thm:main_theorem} does not exclude the possibility that $\SOS_4(G_p) = o(\sqrt{p})$.
In Section~\ref{sec:optimality-fk}, however, we show that at least the lower bound construction we use to prove Theorem~\ref{thm:main_theorem}, involving the simple class of \emph{Feige-Krauthgamer pseudomoments} (see Definition~\ref{def:fk-pseudomoments}), cannot improve on the $p^{1/3}$ scaling of our lower bound.

On the other hand, in Section~\ref{sec:numerical}, we present some numerical evidence that $\SOS_4(G_p) \sim p^{\eta}$ for a constant $\eta \in (0, \frac{1}{2})$, with value $\eta \approx 0.4$.
As we discuss in Section~\ref{sec:numerical}, these results are similar to earlier numerical studies of \cite{GLV-2009-BlockDiagonalSDPPaleyClique}, who consider a weaker class of SDPs than the SOS hierarchy, and results of \cite{KM-2023-SDPCliquePaley}, who consider the same weaker SDPs and extract a prediction of the power scaling of their values with $p$ from numerical results.
We thus have reason to believe that our lower bound cannot be improved all the way to a scaling of $p^{1/2}$.
Unfortunately, we have not found a way to convert these numerical results into a proof of an improved bound on the clique number, but we leave this as a tantalizing open problem for future work.

Finally, to accompany these empirical results, we provide some modest theoretical evidence that the SOS hierarchy may break the $\sqrt{p}$ barrier for upper bounds on $\omega(G_p)$.
The tight analysis showing that $\EE[\SOS_{2d}(G)] = \Omega(n^{1/2 - o(1)})$ for $G \sim \mathcal{G}(n, \frac{1}{2})$ and any constant $d$ uses a construction satisfying a property called \emph{pseudocalibration} \cite{BHKKMP-2019-PlantedClique}, whose analysis hinges on norm bounds for the aforementioned graph matrices built from the adjacency matrix of $G$ \cite{AMP-2016-GraphMatrices}.
In Section~\ref{sec:failure-graphical}, we show that some of these norm bounds \emph{fail} for the Paley graph.
Thus, the analysis of the pseudocalibration construction for random graphs cannot be directly adapted to the case of Paley graphs.\footnote{We note that the initial premise of pseudocalibration, which involves comparing a pair of ``null'' and ``alternative'' random graph distributions, is not sensible to apply to the deterministic Paley graph. But, ultimately, the pseudocalibration argument yields a function mapping a graph to a matrix that one hopes will be feasible for a high-degree SOS program, and one may simply substitute the Paley graph into this function and consider the result.}

\section{Preliminaries and Proof Overview}

\subsection{Notations}

Throughout the paper, $p$ will denote a prime number, and $q$ a prime power $q = p^k$.
The finite field of order $q$ (unique up to isomorphism) is denoted by $\mathbb{F}_q$, and its group of units by $\mathbb{F}_q^\times$. A nonzero element $y$ of $\mathbb{F}_q$ is called a \emph{quadratic residue} of $\mathbb{F}_q$ if $y = x^2$ for some $x \in \mathbb{F}_q$, and a \emph{quadratic nonresidue} otherwise.
We write $(\mathbb{F}_q^\times)^2$ for the set of quadratic residues. We will also freely identify $\mathbb{F}_p$ with $\ZZ / p\ZZ$, with representatives $\{0, 1, \dots, p-1\}$.

We write $[n]\colonequals \{1, 2, \dots, n\}$. For a finite set $X$, we write $2^X$ for the power set, and $\binom{X}{k}$ and $\binom{X}{\le k}$ to denote the sets of subsets of $X$ with exactly $k$ elements and at most $k$ elements respectively. We also use $X_{(k)}$ to denote the set of tuples of elements of $X$ of length $k$ with all entries distinct.

When the discussion involves variables $\{x_i\}_{i \in \mathcal{I}}$ indexed by $\mathcal{I}$, for a subset $S \subset \mathcal{I}$, we will use $x^S$ to denote the monomial $\prod_{i \in S}x_i$.

We use $\boldsymbol{1} \in \mathbb{R}^n$ to denote the all-ones vector. We use $I \in \mathbb{R}^{n \times n}$ to denote the identity matrix, $J \in \mathbb{R}^{n \times n}$ to denote the all-ones matrix, and $\boldsymbol{0} \in \mathbb{R}^{n \times n}$ to denote the all-zeros matrix. The dimensions of these objects will be clear from context. For a real symmetric or Hermitian matrix $A$, we use $\spec(A)$ to denote its spectrum, which we write in double braces $\{\{ \cdots \}\}$ to indicate that the spectrum is a multiset. For matrices $A, B \in \mathbb{C}^{n \times n}$ and $C \in \mathbb{C}^{m \times m}$, we use $A \circ B \in \mathbb{C}^{n \times n}$ to denote the Hadamard product (entrywise product) of $A$ and $B$, and $A \otimes C \in \mathbb{C}^{nm \times nm}$ to denote the Kronecker product (tensor product) of $A$ and $C$.

For a graph $G = (V, E)$, we use $V(G)$ to denote its vertex set and $E(G)$ to denote its edge set. We use $\overline{G}$ to denote the complement of $G$. For vertices $u,v \in V(G)$, we use $u \sim_G v$ to indicate that $u$ and $v$ are adjacent in $G$ and $u \not\sim_G v$ to indicate that they are not adjacent. We will use $A_G$ to denote the $\{0, 1\}$ adjacency matrix of $G$, and $S_G$ to denote the Seidel or $\{\pm 1\}$ adjacency matrix. We drop the subscript $G$ when the graph is clear from context. Conventionally, the Seidel adjacency matrix is $-1$ on pairs of adjacent vertices, $+1$ on pairs of nonadjacent vertices, and $0$ on the diagonal. In this paper, we abuse this term to mean the matrix that is $1$ on pairs of adjacent vertices, $-1$ on pairs of nonadjacent vertices, and $0$ on the diagonal, as this is more conveniently written in terms of the Legendre symbol in the context of Paley graphs (see Section~\ref{sec:graph-mx-decomp}). It is easy to see that the $A_G$ and $S_G$ are related by $S_G = 2A_G - J + I$. Lastly, we write $\mathcal{K}(G)$ for the set of subsets of $V(G)$ that form cliques in $G$.

We will use the standard asymptotic notations $O(\cdot), \Omega(\cdot), \Theta(\cdot),$ and $o(\cdot)$. We will use $\widetilde{O}(\cdot)$ and $\widetilde{\Omega}(\cdot)$ to additionally suppress polylogarithmic factors.

For a complex number $z \in \mathbb{C}$, we denote the complex conjugate of $z$ as $\overline{z}$. For a complex vector or matrix $Z$, we will use $Z^*$ to denote the conjugate transpose of $Z$.

\subsection{Problem Setup}

Let us now specify in full detail the SOS relaxations $\SOS_{2d}$ of the clique number, and the Paley graphs $G_p$.

\subsubsection{Sum-Of-Squares Relaxations of the Clique Number}
\label{sec:sos-clique-def}

Let $G$ be a graph of order $n$. The clique number $\omega(G)$ of $G$ is equal to the value of the following polynomial optimization program:

\begin{equation}
\label{clique_program}
\omega(G) = \left\{\begin{array}{ll}
    \text{maximize} & \sum_{i \in V(G)} x_i \\
    \text{subject to} & x_i^2 = x_i \text{ for all } i \in V(G), \\
    & x_ix_j = 0 \text{ for all } i, j \in V(G) \text{ with } i \neq j \text{ and } i \not \sim_G j \end{array}\right\}.
\end{equation}

It is easy to see that the feasible solutions of the program above are in one-to-one correspondence with the indicator vectors of the cliques in $G$. Before we introduce the SOS relaxations of the clique number, let us first define the \emph{pseudoexpectation} operators over which the SOS relaxations optimize.

\begin{definition}[Pseudoexpectation]
    We say $\widetilde{\mathbb{E}}: \mathbb{R}[x_1, \dots, x_n]_{\le 2d} \to \mathbb{R}$ is a \emph{degree $2d$ pseudoexpectation} with respect to polynomial constraints $\{f_i(x) = 0\}_{i=1}^a$, $\{g_j(x) \ge 0\}_{j=1}^b$ if the following properties hold:
    \begin{itemize}
        \item $\widetilde{\mathbb{E}}$ is linear.
        \item $\widetilde{\mathbb{E}}[1] = 1$.
        \item $\widetilde{\mathbb{E}}[f_i(x)p(x)] = 0$, for all $p(x) \in \mathbb{R}[x_1, \dots, x_n]$ such that $\deg(f_i p) \le 2d$, for all $1\le i\le a$.
        \item $\widetilde{\mathbb{E}}[p(x)^2] \ge 0$, for all $p(x) \in \mathbb{R}[x_1, \dots, x_n]_{\le d}$.
        \item $\widetilde{\mathbb{E}}[g_j(x)p(x)^2] \ge 0$, for all $p(x) \in \mathbb{R}[x_1, \dots, x_n]$ such that $\deg(g_j p^2) \le 2d$, for all $1\le j\le b$.
    \end{itemize}
\end{definition}
In the case of the maximum clique program \eqref{clique_program}, the polynomial constraints are generated by the Boolean constraints $x_i^2 - x_i = 0$ for $i \in V(G)$ and the clique constraints $x_ix_j = 0$ for $i,j\in V(G)$ with $i \neq j$ and $i \not\sim_G j, $.  For convenience, let us identify the vertex set $V(G)$ with $[n]$ where $n = |V(G)|$. Then, the degree $2d$ SOS relaxation of the polynomial optimization program \eqref{clique_program} written in terms of pseudoexpectations is
\begin{equation}
\label{pseudoexpectation_SOS}
    \SOS_{2d}(G) = \left\{
    \begin{array}{ll}
    \text{maximize} & \sum_{i = 1}^n \tEE[x_i] \\
    \text{subject to} & \tEE: \RR[x_1, \dots, x_n]_{\leq 2d} \to \RR \text{ linear}, \\
    & \tEE[1] = 1, \\
    & \tEE[(x_i^2 - x_i)p(x)] = 0 \text{ for all } i \in [n], \deg(p) \leq 2d - 2, \\
    & \tEE[x_ix_jp(x)] = 0 \text{ for all } i \not\sim_G j, \deg(p) \leq 2d - 2, \\
    & \tEE[p(x)^2] \geq 0 \text{ for all } \deg(p) \leq d.
    \end{array}
    \right\}.
\end{equation}
To see that this is indeed a relaxation of the clique program \eqref{clique_program}, observe that for any probability measure $\mu: 2^{[n]} \to \mathbb{R}^{\ge 0}$ supported on the cliques of the graph $G$, the corresponding expectation operator $\mathbb{E}_{\mu}$ is a pseudoexpectation of any degree.

For every monomial $x^S$ for $S \in \binom{[n]}{\le 2d}$, $\widetilde{\mathbb{E}}[x^S]$ is called the \emph{pseudomoment} of $S$ of the corresponding pseudoexpectation $\widetilde{\mathbb{E}}$. By linearity, every pseudoexpectation of degree $2d$ is uniquely determined by its pseudomoments of degree at most $2d$, i.e., by the set $\{\widetilde{\mathbb{E}}[x^S]: S \subseteq [n], |S| \le 2d\}$. We may therefore encode the pseudoexpectation in the \emph{pseudomoment matrix} $M \in \mathbb{R}_{\sym}^{\binom{[n]}{\le d} \times \binom{[n]}{\le d} }$ with entries
\begin{equation}
    M_{S, T} = \widetilde{\mathbb{E}}[x^S x^T].
\end{equation}
This is especially convenient since the positivity of $\tEE$ on squared polynomials is equivalent to positive semidefiniteness of $M$. We can then rewrite the above program \eqref{pseudoexpectation_SOS} in the form of an SDP:
\begin{equation}
\label{SDP_SOS}
    \SOS_{2d}(G) = \left\{
    \begin{array}{ll}
    \text{maximize} & \sum_{i = 1}^n M_{\emptyset, \{i\}} \\
    \text{subject to} & M \in \RR^{\binom{[n]}{\leq d} \times \binom{[n]}{\leq d}} \\
    & M_{\emptyset, \emptyset} = 1, \\
    & M_{S, T} \text{ depends only on } S \cup T, \\
    & M_{S, T} = 0 \text{ whenever } S \cup T \notin \mathcal{K}(G), \\
    & M \succeq 0.
    \end{array}
    \right\}.
\end{equation}
We will not verify in detail the equivalence of \eqref{SDP_SOS} and \eqref{pseudoexpectation_SOS}; the reader may consult \cite{Laurent-2009-SOS} for an overview of this pseudomoment matrix framework, or the papers \cite{DM-2015-SOSPlantedClique,RS-2015-SOS4PlantedClique,HKP-2015-PlantedCliqueSOSMPW,BHKKMP-2019-PlantedClique} on SOS relaxations of $\omega(G)$ for further details.

\begin{remark}[Pseudomoment matrix compression]
    \label{rem:clique-indexing}
    We note that the row and column of $M$ indexed by any $S \notin \mathcal{K}(G)$ is forced by the constraints to be identically zero.
    These entries do not affect the positivity of $M$ and do not play a role in the objective function, so we may just as well take $M$ to be indexed by \emph{cliques} of size at most $d$ rather than arbitrary subsets of vertices.
\end{remark}

In the special case $2d = 2$, the SDP in \eqref{SDP_SOS} takes the form
\begin{equation}
    \SOS_{2}(G) = \left\{
    \begin{array}{ll}
    \text{maximize} & \sum_{i = 1}^n y_i \\
    \text{subject to} & y \in \RR^n, Y \in \RR^{n \times n}_{\sym}, \\
    & Y_{i,i} = y_i \text{ for all } i \in [n], \\
    & Y_{i, j} = 0 \text{ for all } i, j \in [n] \text{ with } i \neq j \text{ and } i \not\sim_G j, \\
    & M = \begin{bmatrix}
    1 & y^\top\\
    y & Y
    \end{bmatrix} \succeq 0.
    \end{array}
    \right\}.
\end{equation}
One can show (see \cite{GLS-2012-GeometricAlgorithmsOptimization,GL-2017-LovaszThetaVariants}) that the program above is equivalent to the \emph{Lov\'asz $\vartheta$ function} of the complement graph $\overline{G}$, a well-known upper bound on $\omega(G)$ due to \cite{Lovasz-1979-ShannonCapacityGraph}:
\begin{equation}
    \SOS_2(G) = \vartheta(\overline{G}).
\end{equation}
This SDP enjoys many special properties, some of which we will mention below; the reader may consult the above references for further information.

On the other hand, once the degree increases to $2d = 4$, the resulting SDP is not as well understood.
This SDP, which we study in the remainder of the paper, takes the form
\begin{equation}
\label{degree_4_SOS}
    \SOS_{4}(G) = \left\{
    \begin{array}{ll}
    \text{maximize} & \sum_{i = 1}^n M^{0, 1}_{\emptyset, i} \\
    \text{subject to} & M^{r, c} \in \RR^{\binom{[n]}{r} \times \binom{[n]}{c}} \text{ for } r, c \in \{0, 1, 2\}, \\
    & M^{r, c}_{S, T} \text{ depends only on } S \cup T, \\
    & M^{r, c}_{S, T} = 0 \text{ whenever } S \cup T \notin \mathcal{K}(G), \\
    & M = \begin{bmatrix}
    1 & M^{0, 1} & M^{0, 2} \\
    M^{1, 0} & M^{1, 1} & M^{1,2}\\
    M^{2, 0} & M^{2,1} & M^{2,2}
\end{bmatrix} \succeq 0
\end{array}
    \right\}.
\end{equation}

\subsubsection{Paley Graphs}
\label{sec:paley-def}

We now give the definition and some useful basic properties of the Paley graphs.

\begin{definition}[Paley graph]
Let $q = p^k$ be a prime power such that $q \equiv 1 \Mod{4}$. The \emph{Paley graph} $G_q$ of order $q$ then has vertex set $V(G_q) \colonequals \mathbb{F}_q$ and edge set
\begin{equation}
    E(G_q) \colonequals \left\{\{a,b\} \in \binom{\mathbb{F}_q}{2}: a - b \in (\mathbb{F}_q^\times)^2 \right\}.
\end{equation}
\end{definition}
\noindent
The condition $q \equiv 1 \Mod{4}$ ensures that $-1$ is a square in $\mathbb{F}_q$. As a result, $a - b \in (\mathbb{F}_q^\times)^2$ if and only if $ b- a\in (\mathbb{F}_q^\times)^2$, so the edge set is well-defined.

\begin{proposition}[Regularity]
    \label{fact:regularity}
    For any prime power $q \equiv 1 \Mod{4}$, the Paley graph $G_q$ is strongly regular with parameters $(q, \frac{q-1}{2}, \frac{q-5}{4}, \frac{q-1}{4})$, i.e., it is regular of degree $\frac{q-1}{2}$, every pair of adjacent vertices share $\frac{q-5}{4}$ common neighbors, and every pair of non-adjacent vertices share $\frac{q-1}{4}$ common neighbors.
\end{proposition}

\begin{proposition}[Spectrum]
    \label{fact:spectrum}
    For any prime power $q \equiv 1 \Mod{4}$, the spectrum of the $\{0,1\}$ adjacency matrix $A_{G_q}$ of $G_q$ is
    \begin{equation}
        \spec(A_{G_q}) = \bigg\{\bigg\{\frac{q-1}{2}, \underbrace{\frac{-1 + \sqrt{q}}{2}, \dots, \frac{-1 + \sqrt{q}}{2}}_{\frac{q-1}{2} \text{ times}},  \underbrace{\frac{-1-\sqrt{q}}{2}, \dots, \frac{-1-\sqrt{q}}{2}}_{\frac{q-1}{2} \text{ times}}\bigg\}\bigg\},
    \end{equation}
    and the spectrum of the Seidel or $\{\pm 1\}$ adjacency matrix $S_{G_q}$ of $G_q$ is
    \begin{equation}
        \spec(S_{G_q}) = \bigg\{\bigg\{0, \underbrace{\sqrt{q}, \dots, \sqrt{q}}_{\frac{q-1}{2} \text{ times}},  \underbrace{-\sqrt{q}, \dots, -\sqrt{q}}_{\frac{q-1}{2} \text{ times}}\bigg\}\bigg\}.
    \end{equation}
\end{proposition}

\begin{proposition}[Automorphisms]
    \label{fact:automorphisms}
    For any prime $p \equiv 1 \Mod{4}$, the automorphism group of $G_p$ is the set of maps $f: \FF_p \to \FF_p$ given by $f(x) = ax + b$ for any $a \in (\FF_p^{\times})^2$ and $b \in \FF_p$.
    In particular, $G_p$ is both vertex- and edge-transitive.
\end{proposition}

We will study the SOS relaxations of the clique number of Paley graphs, $\SOS_{2d}(G_q)$.
Recall that the degree 2 SOS relaxation of the clique number of the Paley graph $G_q$ is equal to the Lov\'asz theta function of its complement, $\SOS_2(G_q) = \vartheta(\overline{G_q})$.
Since $G_q$ is self-complementary (under the automorphism $x \mapsto gx$ for $g$ a multiplicative generator of $\FF_q^{\times}$), $\vartheta(\overline{G_q}) = \vartheta(G_q)$. Since $G_q$ is vertex-transitive, by Lov\'asz's result in \cite{Lovasz-1979-ShannonCapacityGraph},
\begin{equation}
    \vartheta(\overline{G_q})\vartheta(G_q) = |V(G_q)| = q,
\end{equation}
whereby combining our observations shows that
\begin{equation}
    \SOS_2(G_q) = \sqrt{q}.
\end{equation}
This is the same as the upper bound of the clique number given by Hoffman's spectral bound.
Thus, degree 2 SOS does not improve on the spectral bound, and degree 4 SOS, which we begin to analyze with Theorem~\ref{thm:main_theorem}, is the first more interesting degree.

\subsection{Proof Overview}

To prove Theorem~\ref{thm:main_theorem}, we will construct a feasible pseudomoment matrix $M$ for the program \eqref{degree_4_SOS} that has objective value $\Omega(p^{1/3})$. We will consider the following type of pseudomoments, which we call \emph{Feige-Krauthgamer (FK) pseudomoments}, first studied by Feige and Krauthgamer \cite{FK-2003-LovaszSchrijverIndependentSet} to prove lower bounds on Lov\'asz-Schrijver relaxations for the maximum independent set of random graphs (sometimes these are called \emph{MPW pseudomoments} after their use by the later paper \cite{MPW-2015-PlantedClique}).

\begin{definition}[Feige-Krauthgamer pseudomoments]
    \label{def:fk-pseudomoments}
    Consider the degree $2d$ SOS relaxation of the clique number of a graph $G$. We say the pseudomoments of a degree $2d$ pseudoexpectation $\widetilde{\mathbb{E}}$ are \emph{Feige-Krauthgamer (FK) pseudomoments} if there exists a sequence $1 = \alpha_0, \alpha_1, \alpha_2, \dots, \alpha_{2d} \in \RR$ such that
    \begin{equation}
        \widetilde{\mathbb{E}}[x^S] = \begin{cases}
            \alpha_{|S|} & \text{ if } S \in \mathcal{K}(G) \text{ (i.e., if $S$ is a clique in } G\text{)}\\
            0 & \text{ otherwise.}
        \end{cases}
    \end{equation}
\end{definition}
\noindent
We note that FK pseudomoments automatically satisfy all conditions on a pseudoexpectation other than positivity.

The line of work beginning with \cite{MW-2013-PlantedClique} sought to use FK pseudomoments to prove lower bounds on SOS relaxations of $\omega(G)$ for random graphs $G$.\footnote{Some works, wanting to study an SOS relaxation that included the ``exact'' constraint $\sum_{i = 1}^n x_i = k$ for some $k$, adjusted the FK pseudomoments to satisfy the consequences of this constraint (see, e.g., \cite{HKP-2015-PlantedCliqueSOSMPW,Pang-2021-PlantedCliqueSOSExact}). We do not take this route here.}
While eventually in \cite{HKP-2015-PlantedCliqueSOSMPW,BHKKMP-2019-PlantedClique} it was found that FK pseudomoments could \emph{not} prove optimal $\Omega(\sqrt{n})$ lower bounds, earlier works still proved polynomial $\Omega(n^{\eta})$ lower bounds with $\eta < \frac{1}{2}$ using FK pseudomoments, which are simpler to define and to work with than the alternatives developed later.
In particular, our analysis will closely follow that of \cite{DM-2015-SOSPlantedClique}, who used FK pseudomoments to prove that $\SOS_4(G) = \widetilde{\Omega}(n^{\frac{1}{3}})$ with high probability for $G \sim \mathcal{G}(n, \frac{1}{2})$. \cite{HKP-2015-PlantedCliqueSOSMPW} later showed that, up to polylogarithmic factors, this is optimal over any choice of FK pseudomoments for the degree 4 relaxation.

\begin{remark}[Partial symmetry]
By vertex transitivity and edge transitivity of Paley graphs (Proposition~\ref{fact:automorphisms}), there always exists an optimal degree 4 pseudoexpectation giving all $\tEE[x_i]$ the same value and all $\tEE[x_ix_j]$ with $i \sim j$ in $G_p$ the same value, regardless of whether $\tEE$ is given by FK pseudomoments or not.
This strong symmetry of course fails to hold for ER random graphs.
\end{remark}

Recall that in the degree $4$ SOS program \eqref{degree_4_SOS}, we write the pseudomoment matrix $M$ in the block form
\begin{align}
    M = \begin{bmatrix}
        1 & M^{0,1} & M^{0,2}\\
        M_{1,0} & M^{1,1} & M^{1,2}\\
        M_{2,0} & M^{2,1} & M^{2,2}
    \end{bmatrix}.
\end{align}
We will follow the strategy of \cite{DM-2015-SOSPlantedClique} to successively check the Schur complement conditions for positive semidefiniteness of $M$.
Namely, we will rely on the following fact.

\begin{proposition}
    Let
    \begin{equation}
        M = \begin{bmatrix}
    A & B^\top\\
    B & C
\end{bmatrix} \in \mathbb{R}^{(a+b)\times (a+b)}
    \end{equation}
    be a real symmetric matrix written in block form, with $A \in \mathbb{R}^{a \times a}$ and $C \in \mathbb{R}^{b \times b}$. If $A \succ 0$ and $C - BA^{-1} B^T \succeq 0$, then $M \succeq 0$. We call the matrix $C - BA^{-1} B^\top$ the \emph{Schur complement} of the block $A$ in $M$.
\end{proposition}

The outline of our proof of Theorem~\ref{thm:main_theorem} is then as follows, in which we set appropriate values for $\alpha_1, \alpha_2, \alpha_3,$ and $\alpha_4$ to produce a feasible pseudoexpectation achieving the lower bound claimed in the result.
\begin{enumerate}
    \item \textbf{First Schur complement:} To show $M \succ 0$, it suffices to show that the Schur complement of the top left $1 \times 1$ block (containing just the scalar 1) in $M$,
    \begin{align}
        N \colonequals \begin{bmatrix}
            M^{1,1} & M^{1,2}\\
            M^{2,1} & M^{2,2}
        \end{bmatrix} - \begin{bmatrix}
            M^{1,0}\\
            M^{2,0}
        \end{bmatrix}
        \begin{bmatrix}
            M^{0,1} & M^{0,2}
        \end{bmatrix},
    \end{align}
    is positive semidefinite. We again write $N$ in the block form
    \begin{align}
        N = \begin{bmatrix}
            N^{1,1} & N^{1,2}\\
            N^{2,1} & N^{2,2}
        \end{bmatrix},
    \end{align}
    where $N^{i,j} \colonequals M^{i,j} - M^{i,0}M^{0,j}$ for all $i,j \in \{1,2\}$.

    \item \textbf{Filling in zero rows and columns:}
    The $N^{i,j}$ will have those rows and columns indexed by pairs $\{i, j\}$ that are not edges of $G_p$ identically equal to zero (see Remark~\ref{rem:clique-indexing}). To make use of the formalism of \emph{graph matrices} commonly used in the analysis of pseudomoments for SOS lower bounds, we fill in these rows and columns with a natural extension of the non-zero entries of the $N^{i,j}$ to get modified matrices
    \begin{align}
        H = \begin{bmatrix}
            H^{1,1} & H^{1,2}\\
            H^{2,1} & H^{2,2}
        \end{bmatrix},
    \end{align}
    where $N$ equals $H$ with some rows and columns set to zero.
    In particular, if $H \succeq 0$, then $N \succeq 0$ as well.

    \item \textbf{Second Schur complement:} To show $H \succeq 0$, it suffices to show that $H^{1,1} \succ 0$, and
    that the Schur complement of $H^{1,1}$ in $H$,
    \begin{align}
        H^{2,2} - H^{2,1}(H^{1,1})^{-1} H^{1,2},
    \end{align}
    is positive semidefinite.

    \item \textbf{Analysis of $H^{1,1}$:} Under the FK pseudomoments, $M^{1,0}M^{0,1}$ is a constant multiple of the all-ones matrix $J$.
    The matrix $H^{1,1} = N^{1,1} = M^{1,1} - M^{1,0}M^{0,1}$ is then a linear combination of the simultaneously diagonalizable matrices $I$, $J$, and $A_{G_{p}}$, and its spectrum can be computed from that of $A_{G_p}$. Thus, we obtain conditions on $\alpha_1$ and $\alpha_2$ that ensure that $H^{1,1} \succ 0$.

    \item \textbf{Graph matrices and subspace decomposition:} To prove $H^{2,2} - H^{2,1}(H^{1,1})^{-1}H^{1,2} \succeq 0$, we will first write $H^{2,2}$ and $H^{2,1}(H^{1,1})^{-1} H^{1,2}$ each as a sum of graph matrices. Then, we will use an idea from \cite{MPW-2015-PlantedClique,DM-2015-SOSPlantedClique} of separating the contributions of graph matrices along three subspaces of $\RR^{\binom{\FF_p}{2}}$, which correspond to the decomposition of this space, viewed as a representation of $S_p$ under the action $\sigma(\{a, b\}) = \{\sigma(a), \sigma(b)\}$, into irreducible subrepresentations.
    The remaining norm bounds will constitute the bulk of the proof, and it is in proving these that we will need to substitute for the probabilistic analysis of \cite{DM-2015-SOSPlantedClique,AMP-2016-GraphMatrices} more number-theoretic arguments about character sums, which are given in Section~\ref{sec:graph-matrix-proofs}.
\end{enumerate}

\section{Proof of Theorem~\ref{thm:main_theorem}}

We restate Theorem~\ref{thm:main_theorem} in more detailed terms of the FK pseudomoments that we will construct.

\begin{theorem}\label{thm:restated_main_theorem}
    There exists a constant $c > 0$ so that, setting $\alpha_1 \colonequals c p^{-2/3}, \alpha_2 \colonequals 4\alpha_1^2, \alpha_3 \colonequals 8\alpha_1^3$, and $\alpha_4 \colonequals 512 \alpha_1^4$, the FK pseudomoments defined by these parameters give a feasible solution to the degree $4$ SOS relaxation \eqref{degree_4_SOS} of the clique number of the Paley graphs $G_p$ for all sufficiently large $p$.
\end{theorem}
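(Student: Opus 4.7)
The plan is to verify directly that the pseudomoment matrix $M$ determined by the specified $\alpha_i$ is positive semidefinite, since by construction every other feasibility constraint of \eqref{degree_4_SOS} is satisfied automatically by any FK pseudomoments, and the objective value equals $p \alpha_1 = c p^{1/3}$ as required. I will follow the five-step outline given in the proof overview: by nested Schur complements, feasibility of $M$ reduces to positivity of $N$, then to positivity of the filled-in extension $H$, and finally to the two conditions $H^{1,1} \succ 0$ and $H^{2,2} - H^{2,1}(H^{1,1})^{-1}H^{1,2} \succeq 0$.

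The analysis of $H^{1,1}$ is the easy half. Under FK pseudomoments,
\begin{equation}
H^{1,1} = M^{1,1} - M^{1,0}M^{0,1} = \alpha_1 I + \alpha_2 A_{G_p} - \alpha_1^2 J,
\end{equation}
and since $A_{G_p}$ and $J$ are simultaneously diagonalizable, Fact~\ref{fact:spectrum} yields the three eigenvalues of $H^{1,1}$ explicitly. With $\alpha_1 = cp^{-2/3}$ and $\alpha_2 = 4\alpha_1^2$, the smallest eigenvalue arises from the $(-1-\sqrt{p})/2$ eigenspace of $A_{G_p}$ and equals $\alpha_1 + \alpha_2(-1-\sqrt{p})/2 = \Theta(p^{-2/3})$, while the all-ones eigenvalue $\alpha_1 + \alpha_2(p-1)/2 - \alpha_1^2 p$ is also positive because $\alpha_2(p-1)/2$ dominates the $\alpha_1^2 p$ correction. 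This gives $H^{1,1} \succ 0$ with $\|(H^{1,1})^{-1}\|$ of order $p^{2/3}$, which enters the bookkeeping of the outer Schur complement.

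The remaining step, $H^{2,2} - H^{2,1}(H^{1,1})^{-1}H^{1,2} \succeq 0$, is where the real work lies. I will expand both $H^{2,2}$ and $H^{2,1}(H^{1,1})^{-1}H^{1,2}$ as linear combinations of graph matrices indexed by small shape graphs on at most four vertices of $\mathbb{F}_p$, whose entries are products of Legendre symbols attached to the edges of the shape. Following the device introduced by \cite{MPW-2015-PlantedClique} and refined by \cite{DM-2015-SOSPlantedClique}, I will split $\mathbb{R}^{\binom{\mathbb{F}_p}{2}}$ along the three subspaces coming from its decomposition into irreducible $S_p$-representations, and bound the contribution of each graph matrix on each subspace. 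The particular calibration $\alpha_2 = 4\alpha_1^2$, $\alpha_3 = 8\alpha_1^3$, $\alpha_4 = 512\alpha_1^4$ is chosen so that, on each of the three subspaces, a dominant identity-like term produced by the leading graph matrices absorbs the contributions of the remaining error graph matrices.

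The main obstacle will be obtaining operator norm bounds on these error graph matrices. For random graphs, the corresponding bounds of \cite{AMP-2016-GraphMatrices} follow from moment concentration plus the trace method, with all the randomness absorbed into a combinatorial shape count. For the Paley graph the trace method still applies, but each entry is now a deterministic product of Legendre symbols $\chi(a-b)\chi(b-c)\cdots$, so bounding $\Tr((XX^*)^k)$ reduces to character sums of the form $\sum_{x \in \mathbb{F}_p} \chi(f(x))$ for low-degree polynomials $f$. I plan to control these via Weil-type estimates, contributing a saving of $\sqrt{p}$ per free variable along each closed walk. These character-sum bounds, collected in Section~\ref{sec:graph-matrix-proofs}, are the number-theoretic substitute for concentration and are the delicate ingredient on which the exponent $1/3$ ultimately depends; any slack in the character sums passes directly into slack in the allowable magnitude of $\alpha_1$, and hence of the final lower bound on $\SOS_4(G_p)$.
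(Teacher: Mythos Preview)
Your outline tracks the paper's five-step strategy closely, and the overall architecture---two nested Schur complements, the zero-row filling $N \to H$, the graph-matrix expansion, and the three-subspace decomposition of $\mathbb{R}^{\binom{\mathbb{F}_p}{2}}$---is exactly what the paper does. Two specific points in your plan, however, would not close as written.

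First, and most seriously, your stated plan for the graph-matrix norm bounds is to run the trace method and control the resulting character sums ``via Weil-type estimates, contributing a saving of $\sqrt{p}$ per free variable along each closed walk.'' For most shapes this works, but for the full bipartite shape $T^{4,4,1}$ (all four edges present between the two sides) a straight Weil/trace argument gives only $\|T^{4,4,1}\| = O(p^{3/2})$. That is not enough: the $T^{4,4,1}$ term enters $H^{2,2}$ with coefficient $\alpha_4/16 = 32\alpha_1^4$, and on the $P_2$ block it must be dominated by the identity contribution $\Theta(\alpha_1^2)$. With an $O(p^{3/2})$ bound this forces $\alpha_1 = O(p^{-3/4})$, and the SOS lower bound collapses to $p^{1/4}$ rather than $p^{1/3}$. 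The paper instead proves $\|T^{4,4,1}\| = O(p^{5/4})$ (Theorem~\ref{thm:norm_T_441}) by an argument that does \emph{not} reduce to Weil's bound on $\sum_x \chi(f(x))$: it exploits the full affine group of $\mathbb{F}_p$ to put a lift of $T^{4,4,1}$ into block-circulant form, reduces to $p\times p$ Hermitian blocks indexed by multiplicative characters, and then controls those via bounds on twisted second moments and correlations of Kloosterman sums (Propositions~\ref{prop:twisted-kloosterman} and~\ref{prop:kloosterman-corr}). This is singled out in the paper (Remark~\ref{rem:T441-tightness}) as the one place where naive character-sum cancellation is insufficient, and it is the real bottleneck for the exponent $1/3$.

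Second, you record $\|(H^{1,1})^{-1}\| = O(p^{2/3})$ and say this ``enters the bookkeeping of the outer Schur complement.'' If you actually use that uniform bound, the $P_0$ block fails: the all-ones eigenvalue of $H^{2,1}H^{1,2}$ is of order $p^3\alpha_1^6$, and dividing by $\alpha_1$ gives $p^3\alpha_1^5$, which exceeds the corresponding $P_0$ eigenvalue $\Theta(p^2\alpha_1^4)$ of $H^{2,2}$ since $p\alpha_1 \to \infty$. The paper (derivation preceding Proposition~\ref{prop:main}) instead bounds $(H^{1,1})^{-1}$ separately on $Q_0$ and $Q_1$---the $Q_0$ eigenvalue of $H^{1,1}$ is $\Theta(p\alpha_1^2)$, not $\Theta(\alpha_1)$---and uses the Paley symmetry that $\boldsymbol{1}$ is an eigenvector of $H^{2,1}H^{1,2}$ to decouple the two pieces. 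Both refinements are needed for the final $3\times 3$ comparison to close at $\alpha_1 = cp^{-2/3}$.
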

\noindent
Theorem~\ref{thm:main_theorem} follows, since the above gives, for all sufficiently large $p$,
\begin{equation}
    \SOS_4(G_p) \geq p \cdot cp^{-2/3} = cp^{1/3}.
\end{equation}

To remind the reader of the notations we set in the previous section, the pseudomoment matrix in the degree $4$ SOS relaxation \eqref{degree_4_SOS} is denoted
\begin{equation}
    M = \begin{bmatrix}
        1 & M^{0,1} & M^{0,2}\\
        M^{1,0} & M^{1,1} & M^{1,2}\\
        M^{2,0} & M^{2,1} & M^{2,2}
    \end{bmatrix},
\end{equation}
and we take this to be given by the FK pseudomoments proposed in Theorem \ref{thm:restated_main_theorem}. Recall that $M^{r,c} \in \mathbb{R}^{\binom{\mathbb{F}_p}{r} \times \binom{\mathbb{F}_p}{c}}$ for all $r,c\in \{0,1,2\}$. We will use
\begin{equation}
    N = \begin{bmatrix}
        N^{1,1} & N^{1,2}\\
        N^{2,1} & N^{2,2}
    \end{bmatrix}
\end{equation}
to denote the Schur complement of the top left $1 \times 1$ block in $M$.

\subsection{Filling Zero Rows and Columns}

As mentioned before, we will fill in the zero rows and columns of $N$ in order to make use of graph matrices.
In this section, we define the matrix
\begin{equation}
    H = \begin{bmatrix}
    H^{1,1} & H^{1,2}\\
    H^{2,1} & H^{2,2}
\end{bmatrix}
\end{equation}
that will achieve this filling.

\begin{definition}
    We write $\One_{k}: \binom{\mathbb{F}_p}{k} \to \{0,1\}$ for the function with $\One_{k}(S) = 1$ if $S$ is a clique in $G_p$ and $\One_{k}(S) = 0$ otherwise.
\end{definition}

\noindent
We now expand the $N^{\bullet, \bullet}$ matrices in terms of this indicator function.
\begin{proposition}
    Under the FK pseudomoments proposed in Theorem \ref{thm:restated_main_theorem}, the matrix $N$ can be written as
    \begin{equation}
        N = \begin{bmatrix}
            N^{1,1} & N^{1,2}\\
            N^{2,1} & N^{2,2}
        \end{bmatrix},
    \end{equation}
    where $N^{1,1} \in \mathbb{R}^{\mathbb{F}_p \times \mathbb{F}_p}, N^{1,2} \in \mathbb{R}^{\mathbb{F}_p \times \binom{\mathbb{F}_p}{2}}, N^{2,1} = {N^{1,2}}^\top \in \mathbb{R}^{\binom{\mathbb{F}_p}{2} \times \mathbb{F}_p}, N^{2,2} \in \mathbb{R}^{\binom{\mathbb{F}_p}{2} \times \binom{\mathbb{F}_p}{2}}$ have entries
    \begin{align}
        N^{1,1}_{a,b} &= \begin{cases}
            \alpha_1 - \alpha_1^2 & \text{ if } a = b, \\
            \alpha_2 \One_2(\{a,b\}) - \alpha_1^2 & \text{ if } a\ne b,
        \end{cases} \\
        N^{1,2}_{a, \{b,c\}} &= \begin{cases}
            \left(\alpha_2-\alpha_1\alpha_2\right) \One_2(\{b,c\}) & \text{ if } a \in \{b,c\}, \\
            \alpha_3 \One_3(\{a,b,c\}) - \alpha_1\alpha_2\One_2(\{b,c\}) & \text{ if } a \not\in \{b,c\},
        \end{cases} \\
        N^{2,2}_{\{a,b\}, \{c,d\}} &= \begin{cases}
            (\alpha_2 - \alpha_2^2)\One_2(\{a,b\}) & \text{ if } \{a,b\} = \{c,d\}, \\
            \alpha_3 \One_3(\{a,b\} \cup \{c,d\}) - \alpha_2^2 \One_2(\{a,b\}) \One_2(\{c,d\}) & \text{ if } \left|\{a,b\} \cap \{c,d\} \right| = 1, \\
            \alpha_4\One_4(\{a,b,c,d\}) - \alpha_2^2 \One_2(\{a,b\}) \One_2(\{c,d\}) & \text{ if } \{a,b\} \cap \{c,d\} = \emptyset.
        \end{cases}
    \end{align}
\end{proposition}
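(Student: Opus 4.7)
The proof is an essentially mechanical computation, so the plan is to combine three simple observations and then perform a short case analysis on the intersection pattern of index sets within each block.

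The three observations are: (i) the Boolean identity $x_i^2 = x_i$, which is enforced by the pseudoexpectation constraints, implies that $\widetilde{\mathbb{E}}[x^S x^T] = \widetilde{\mathbb{E}}[x^{S \cup T}]$, so every entry $M^{r,c}_{S,T}$ depends only on $S \cup T$; (ii) the FK formula then evaluates this pseudomoment as $\alpha_{|S \cup T|}$ if $S \cup T \in \mathcal{K}(G_p)$ and as $0$ otherwise; (iii) the Schur complement of the top-left $1 \times 1$ block has entries $N^{i,j}_{S,T} = M^{i,j}_{S,T} - M^{i,0}_S M^{0,j}_T$. Noting that singletons are always cliques gives $M^{1,0}_a = M^{0,1}_a = \alpha_1$, and the indicator form of the pseudomoments gives $M^{2,0}_{\{a,b\}} = M^{0,2}_{\{a,b\}} = \alpha_2 \One_2(\{a,b\})$.

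With these in hand, I would compute each block by splitting on the intersection pattern of the indices. For $N^{1,1}$, the split is $a = b$ versus $a \neq b$: in the former case $x_a x_b$ collapses to $x_a$, giving diagonal entry $\alpha_1 - \alpha_1^2$; in the latter, $M^{1,1}_{a,b} = \alpha_2 \One_2(\{a,b\})$ and we subtract $\alpha_1^2$. For $N^{1,2}_{a, \{b,c\}}$, the split is whether or not $a \in \{b,c\}$: in the former case the monomial reduces via $x_i^2 = x_i$ to a degree-$2$ product and the clique indicator drops from $\One_3$ to $\One_2$, while in the latter we pick up the genuine cubic pseudomoment $\alpha_3 \One_3(\{a,b,c\})$, and in both we subtract $\alpha_1 \alpha_2 \One_2(\{b,c\})$. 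For $N^{2,2}_{\{a,b\},\{c,d\}}$, the split is by $|\{a,b\} \cap \{c,d\}| \in \{2, 1, 0\}$; successive applications of $x_i^2 = x_i$ reduce $M^{2,2}_{\{a,b\},\{c,d\}}$ to a pseudomoment of degree $2$, $3$, or $4$ respectively, matching the three cases in the claim, and the subtracted $\alpha_2^2 \One_2(\{a,b\}) \One_2(\{c,d\})$ is immediate from (ii) and (iii).

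I do not anticipate any genuine obstacle. The only place where mild care is required is the diagonal block of $N^{2,2}$: the product $M^{2,0}_{\{a,b\}} M^{0,2}_{\{a,b\}}$ produces $\alpha_2^2 \One_2(\{a,b\})^2$, which must be simplified using idempotence of indicators to $\alpha_2^2 \One_2(\{a,b\})$ before one recovers the stated form $(\alpha_2 - \alpha_2^2)\One_2(\{a,b\})$. Beyond that, the proof is just consistent bookkeeping of the case splits across the three blocks.
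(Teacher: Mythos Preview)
Your proposal is correct and matches the paper's approach: the paper in fact does not write out a proof at all, treating the proposition as an immediate consequence of the FK pseudomoment definition and the Schur complement formula $N^{i,j} = M^{i,j} - M^{i,0}M^{0,j}$. Your three observations and case split are exactly the computation the paper leaves implicit.
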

\noindent
Per Remark~\ref{rem:clique-indexing}, rows and columns indexed by pairs are identically zero in any of these matrices for all pairs that are not edges in $G_p$.

Next, we define matrices $H^{\bullet, \bullet}$ based on the $N^{\bullet, \bullet}$ by replacing the clique indicator functions with ``bipartite'' versions of those indicator functions, that only depend on the presence of edges between two subsets of vertices.

\begin{definition}
    We write $\One_{\ell,r} : \binom{\mathbb{F}_p}{\ell} \times \binom{\mathbb{F}_p}{r} \to \{0,1\}$ for the function with
    \begin{equation}
        \One_{\ell,r}(L, R) = \begin{cases}
            1 & \text{ if } v \sim_{G_p} w \text{ for all } v \in L \setminus R, w \in R \setminus L, \\
            0 & \text{ otherwise.}
        \end{cases}
    \end{equation}
    In other words, $\One_{\ell,r}(L, R) = 1$ if and only if all pairs of vertices in $\binom{L \cup R}{2}$ that don't belong simultaneously to $L$ or $R$ are connected in $G_p$.
\end{definition}

Now we are ready to state what matrix $H$ is: it is given by blocks $H^{1,1} \in \mathbb{R}^{\mathbb{F}_p \times \mathbb{F}_p}, H^{1,2} \in \mathbb{R}^{\mathbb{F}_p \times \binom{\mathbb{F}_p }{2} }, H^{2,1} = {H^{1,2}}^\top,$ and $H^{2,2} \in \mathbb{R}^{\binom{\mathbb{F}_p }{2} \times \binom{\mathbb{F}_p }{2}}$ having entries
    \begin{align}
        H^{1,1}_{a,b} &= \begin{cases}
            \alpha_1 - \alpha_1^2 & \text{ if } a = b\\
            \alpha_2 \One_{1,1}(\{a\},\{b\}) - \alpha_1^2 & \text{ if } a\ne b
        \end{cases},\\
        H^{1,2}_{a, \{b,c\}} &= \begin{cases}
            \alpha_2-\alpha_1\alpha_2 & \text{ if } a \in \{b,c\}\\
            \alpha_3 \One_{1,2}(\{a\},\{b,c\}) - \alpha_1\alpha_2 & \text{ if } a \not\in \{b,c\}
        \end{cases},\\
        H^{2,2}_{\{a,b\}, \{c,d\}} &= \begin{cases}
            \alpha_2 - \alpha_2^2 & \text{ if } \{a,b\} = \{c,d\}\\
            \alpha_3 \One_{2,2}(\{a,b\}, \{c,d\}) - \alpha_2^2 & \text{ if } \left|\{a,b\} \cap \{c,d\} \right| = 1\\
            \alpha_4\One_{2,2}(\{a,b\}, \{c,d\}) - \alpha_2^2 & \text{ if } \{a,b\} \cap \{c,d\} = \emptyset
        \end{cases}.
    \end{align}

    It is easy to see that proving positive semidefiniteness for $H$ also proves $N$ is positive semidefinite, due to the following observation.
    \begin{proposition}\label{prop:principal_submatrix}
        Up to permutation of rows and columns, $N$ is the direct sum of the principal submatrix of $H$ indexed by singletons and the edges of $G_p$ with a zero matrix.
    \end{proposition}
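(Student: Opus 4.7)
The plan is to prove this by direct entrywise comparison of $N$ and $H$. Both matrices are indexed by the same set $\mathbb{F}_p \sqcup \binom{\mathbb{F}_p}{2}$, so after an appropriate reordering that places non-edge pairs at the end, it suffices to verify (i) that every entry of $N$ in a row or column indexed by a non-edge of $G_p$ vanishes, and (ii) that on the remaining indices $\mathbb{F}_p \sqcup E(G_p)$, the entries of $N$ and $H$ coincide. Together these give the claimed decomposition $P^{\top} N P = H' \oplus 0$, where $H'$ is the principal submatrix of $H$ indexed by $\mathbb{F}_p \sqcup E(G_p)$ and the zero block is indexed by $\binom{\mathbb{F}_p}{2} \setminus E(G_p)$.

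For step (i), I would inspect the formulas for $N^{1,2}$ and $N^{2,2}$ given in the previous proposition. In each case every term carries an explicit factor $\One_2$ on the pair in question, or, as in the $\alpha_3 \One_3$ and $\alpha_4 \One_4$ terms, involves a clique indicator on a set containing that pair; either way the term vanishes as soon as one of the pair indices is not an edge. A short case split by the size of $\{a,b\} \cap \{c,d\}$ in the $(2,2)$ block, and by whether $a \in \{b,c\}$ in the $(1,2)$ block, handles all situations.

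For step (ii), the matching of entries reduces to the identity $\One_k(L \cup R) = \One_{\ell,r}(L,R)$ whenever every pair that lies entirely within $L$ or entirely within $R$ is already known to be an edge. Concretely, in the disjoint case of the $(2,2)$ block with $\{a,b\}, \{c,d\} \in E(G_p)$, the six adjacencies demanded by $\One_4(\{a,b,c,d\})$ reduce, after using $a \sim b$ and $c \sim d$, to exactly the four cross-adjacencies required by $\One_{2,2}(\{a,b\}, \{c,d\})$; the overlap and diagonal cases are analogous. The $(1,2)$ block is similar: once $\{b,c\}$ is an edge, the remaining conditions in $\One_3(\{a,b,c\})$ are precisely those in $\One_{1,2}(\{a\}, \{b,c\})$. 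The $(1,1)$ block agrees on the nose because $\One_{1,1}(\{a\},\{b\}) = \One_2(\{a,b\})$ for $a \neq b$.

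There is no real obstacle here: the argument is a bookkeeping verification, and the Paley structure enters only through the trivial fact that $\One_2(\{a,b\}) = 1$ is equivalent to $a \sim_{G_p} b$. The only thing to keep track of carefully is that the ``filled in'' indicator $\One_{\ell,r}$ is designed precisely so that, once the index pairs are assumed to be edges, it encodes the same adjacencies as the clique indicator $\One_{|L \cup R|}$.
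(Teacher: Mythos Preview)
Your proposal is correct and follows essentially the same approach as the paper: the paper's proof is the one-line observation that $\One_{|L \cup R|}(L \cup R) = \One_{|L|,|R|}(L,R)$ whenever $L$ and $R$ are edges (or singletons), which is exactly your step~(ii), while your step~(i) is the content already noted before the proposition via Remark~\ref{rem:clique-indexing}. You have simply made the case analysis more explicit.
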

    \noindent
    The proof is simply that, for $|L|, |R| \leq 2$, we have $\One_{|L \cup R|}(L \cup R) = \One_{|L|, |R|}(L, R)$ so long as $L$ is an edge if $|L| = 2$ and $R$ is an edge if $|R| = 2$.

\subsection{Second Schur Complement Bounds}

Next, the goal is to prove under the same setting of Theorem \ref{thm:restated_main_theorem} that $H \succeq 0$.
To do this, we take another Schur complement, which we analyze below.

We will use $Q_0 = \frac{1}{p} J \in \mathbb{R}^{\mathbb{F}_p \times \mathbb{F}_p}$ to denote the orthogonal projection matrix to the constant vector, and $Q_1 = I - Q_0$ to denote the projection matrix to the orthogonal complement.

\begin{proposition} \label{prop:H11}
    Under the FK pseudomoments specified by $\alpha_1, \alpha_2, \alpha_3, \alpha_4$ in Theorem \ref{thm:restated_main_theorem}, for any constant $\varepsilon > 0$, the matrix $H^{1,1}$ satisfies
    \begin{align}
        H^{1,1} \succeq \left(\alpha_1 + \frac{p-1}{2}\alpha_2 - p\alpha_1^2\right)Q_0 + (1 - \varepsilon)\alpha_1 Q_1 \succ 0
    \end{align}
    for all sufficiently large primes $p$.
\end{proposition}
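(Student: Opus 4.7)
The plan is to observe that $H^{1,1}$ admits a clean closed form as a linear combination of $I$, $J$, and the adjacency matrix $A_{G_p}$, then to diagonalize it simultaneously using the known spectrum of $A_{G_p}$ from Fact~\ref{fact:spectrum}. Indeed, since $\One_{1,1}(\{a\},\{b\}) = (A_{G_p})_{a,b}$ for $a\ne b$, and since the diagonal correction from $-\alpha_1^2$ is absorbed into the diagonal, we have
\begin{equation}
    H^{1,1} \;=\; \alpha_1 I + \alpha_2 A_{G_p} - \alpha_1^2 J.
\end{equation}
All three matrices commute with $Q_0 = \tfrac{1}{p}J$ and $Q_1 = I - Q_0$, which are the orthogonal projectors onto the two common eigenspaces $\operatorname{span}(\boldsymbol{1})$ and $\boldsymbol{1}^\perp$.

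Next I would compute the eigenvalue of $H^{1,1}$ on each of the two subspaces separately. On $\operatorname{range}(Q_0)$ we use that $A_{G_p}$ acts as the Perron eigenvalue $\tfrac{p-1}{2}$ and $J$ acts as multiplication by $p$, which yields exactly the stated coefficient $\alpha_1 + \tfrac{p-1}{2}\alpha_2 - p\alpha_1^2$ in front of $Q_0$. On $\operatorname{range}(Q_1)$, we have $J\equiv 0$ and $A_{G_p}$ takes the two values $\tfrac{-1\pm\sqrt{p}}{2}$, so the eigenvalues there are $\alpha_1 + \alpha_2 \cdot \tfrac{-1\pm\sqrt{p}}{2}$. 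Since $\alpha_2 = 4\alpha_1^2 > 0$, the minimum is attained at $\tfrac{-1-\sqrt{p}}{2}$, giving
\begin{equation}
    \alpha_1 + \alpha_2 \cdot \frac{-1-\sqrt{p}}{2}
    \;=\; \alpha_1\bigl(1 - 2\alpha_1(1+\sqrt{p})\bigr).
\end{equation}

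Now I would plug in $\alpha_1 = cp^{-2/3}$ and note that $2\alpha_1(1+\sqrt{p}) = O(p^{-1/6}) \to 0$, so for any fixed $\varepsilon>0$ and all sufficiently large $p$ the bracket exceeds $1-\varepsilon$; this establishes the $(1-\varepsilon)\alpha_1$ lower bound on $\operatorname{range}(Q_1)$. Combining the two subspace bounds yields the desired operator inequality. For strict positivity, the $Q_1$ part is clearly positive, and on $Q_0$ the eigenvalue simplifies to $\alpha_1 + \alpha_1^2(p-2) > 0$ since $\alpha_2 \cdot \tfrac{p-1}{2} - p\alpha_1^2 = \alpha_1^2(p-2)$.

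There is essentially no serious obstacle here: the proof is a short spectral calculation, with the only point requiring care being the choice of sign for the extremal eigenvalue on $\boldsymbol{1}^\perp$ (dictated by $\alpha_2>0$) and the verification that the chosen scaling $\alpha_1 = \Theta(p^{-2/3})$ makes the perturbation from $\alpha_2 A_{G_p}$ asymptotically negligible compared to the $\alpha_1 I$ term on $\boldsymbol{1}^\perp$. Both are immediate consequences of Fact~\ref{fact:spectrum} and the parameter choices in Theorem~\ref{thm:restated_main_theorem}.
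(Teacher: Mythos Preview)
Your proposal is correct and follows essentially the same approach as the paper: write $H^{1,1} = \alpha_1 I + \alpha_2 A_{G_p} - \alpha_1^2 J$, diagonalize using the known spectrum of $A_{G_p}$ from Fact~\ref{fact:spectrum}, and observe that on $\boldsymbol{1}^\perp$ the $\alpha_2 A_{G_p}$ contribution is $O(\sqrt{p}\,\alpha_2) = O(p^{-5/6}) = o(\alpha_1)$. The paper phrases this via the three eigenspaces $U_0,U_1,U_2$ of $A_{G_p}$ before collapsing to $Q_0,Q_1$, while you work directly with $Q_0,Q_1$ and identify the minimal eigenvalue on $\boldsymbol{1}^\perp$; the content is identical.
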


\begin{proof}
    Under the FK pseudomoments specified by $\alpha_1, \alpha_2, \alpha_3, \alpha_4$, we may express the following blocks of the pseudomoment matrix $M$ as
    \begin{align}
        M^{0,1} &= \alpha_1 \boldsymbol{1}^\top,\\
        M^{1,1} &= \alpha_1 I + \alpha_2 A_{G_p}.
    \end{align}
    We note that $H^{1,1} = N^{1,1}$, as this matrix is not affected by the filling of zero rows and columns, so we have
    \begin{align}
        H^{1,1} = N^{1,1} &= M^{1,1} - M^{1,0}M^{0,1} \nonumber \\
        &= \alpha_1 I + \alpha_2 A_{G_p} - \alpha_1^2 J.
    \end{align}
    Note that the matrices $I, J$, and $A_{G_p}$ are simultaneously diagonalizable, and the $3$ eigenspaces of $A_{G_p}$ are the common eigenspaces. Let $U_0, U_1$, and $U_2$ be the projection matrices to the $3$ eigenspaces of $A_{G_p}$ corresponding to eigenvalues $\frac{p-1}{2}, \frac{-1 + \sqrt{p}}{2}$, and $\frac{-1 - \sqrt{p}}{2}$. In particular, $U_0 = Q_0$ is the projection matrix to the span of constant vectors. Then,
    \begin{align*}
        H^{1,1} &= \alpha_1 I + \alpha_2 A_{G_p} - \alpha_1^2 J\\
        &= \left(\alpha_1 + \frac{p-1}{2}\alpha_2 - p\alpha_1^2\right)U_0 + \left(\alpha_1 + \frac{-1+\sqrt{p} }{2}\alpha_2 \right)U_1 + \left(\alpha_1 + \frac{-1-\sqrt{p} }{2}\alpha_2\right)U_2\\
        &\succeq \left(\alpha_1 + \frac{p-1}{2}\alpha_2 - p\alpha_1^2\right)U_0 + \left(\alpha_1 - o(\alpha_1)\right)(I - U_0), \numberthis
    \end{align*}
    where we used $\alpha_1 = \Theta(p^{-\frac{2}{3} })$ and $\sqrt{p}\cdot \alpha_2 = \Theta(p^{-\frac{5}{6} })$ in the last inequality. Replacing $U_0$ by $Q_0$ and $I - U_0$ by $Q_1$ shows the desired result.
\end{proof}

So, if moreover we can show $H^{2,2} - H^{2,1}(H^{1,1})^{-1} H^{1,2} \succeq 0$, we can conclude the positive semidefiniteness of $H$.
Our last simplification before proceeding to the main technical analysis is to remove the $(H^{1,1})^{-1}$ term above.
Fix some constant $\varepsilon > 0$ for all future discussions, say $\varepsilon \colonequals \frac{1}{2}$. Then, \begin{equation}
    H^{1,1} \succeq \left(\alpha_1 + \frac{p-1}{2}\alpha_2 - p\alpha_1^2\right)Q_0 + (1 - \varepsilon)\alpha_1 Q_1 \succ 0
\end{equation}
for all sufficiently large primes $p$, so
\begin{equation}
    (H^{1,1})^{-1} \preceq \left(\alpha_1 + \frac{p-1}{2}\alpha_2 - p\alpha_1^2\right)^{-1}Q_0 + \left((1 - \varepsilon)\alpha_1\right)^{-1} Q_1,
\end{equation}
and substituting this into the term appearing in the inequality we need to show,
\begin{equation}
    H^{2,1}(H^{1,1})^{-1} H^{1,2} \preceq H^{2,1}\left[\left(\alpha_1 + \frac{p-1}{2}\alpha_2 - p\alpha_1^2\right)^{-1}Q_0 + \left((1 - \varepsilon)\alpha_1\right)^{-1} Q_1\right] H^{1,2}.
\end{equation}
Note that the column sum (row sum) of $H^{2,1}$ is the same across each column (nonzero row indexed by edges of Paley graphs) due to the partial symmetry of Paley graphs. As a result, $\boldsymbol{1}$ is an eigenvector of $H^{2,1}H^{1,2}$, and $H^{2,1} Q_0 H^{1,2} = P_0 H^{2,1} H^{1,2} P_0$, where we use $P_0 = \frac{2}{p(p-1)} J \in \mathbb{R}^{\binom{\mathbb{F}_p }{2} \times \binom{\mathbb{F}_p }{2}}$ to denote the orthogonal projection matrix to the constant vector. Moreover, since $\boldsymbol{1}$ is an eigenvector of $H^{2,1}H^{1,2}$, $(I - P_0)H^{2,1}H^{1,2} P_0 = 0$. We therefore have
\begin{align*}
    &\hspace{-0.5cm}H^{2,1}\left[\left(\alpha_1 + \frac{p-1}{2}\alpha_2 - p\alpha_1^2\right)^{-1}Q_0 + \left((1 - \varepsilon)\alpha_1\right)^{-1} Q_1\right] H^{1,2}\\
    &= H^{2,1}\left[\left(\left(\alpha_1 + \frac{p-1}{2}\alpha_2 - p\alpha_1^2\right)^{-1} - \left((1 - \varepsilon)\alpha_1\right)^{-1} \right)Q_0 + \left((1 - \varepsilon)\alpha_1\right)^{-1} I\right] H^{1,2}\\
    &= \left(\left(\alpha_1 + \frac{p-1}{2}\alpha_2 - p\alpha_1^2\right)^{-1} - \left((1 - \varepsilon)\alpha_1\right)^{-1} \right) P_0 H^{2,1}H^{1,2} P_0 + \left((1 - \varepsilon)\alpha_1\right)^{-1} H^{2,1}H^{1,2}\\
    &= \left(\alpha_1 + \frac{p-1}{2}\alpha_2 - p\alpha_1^2\right)^{-1} P_0 H^{2,1}H^{1,2} P_0 + \left((1 - \varepsilon)\alpha_1\right)^{-1} (I - P_0)H^{2,1}H^{1,2}(I - P_0). \numberthis
\end{align*}
Thus, to show $H^{2,2} \succeq H^{2,1}(H^{1,1})^{-1} H^{1,2}$ holds for all sufficiently large primes $p$, it is sufficient to prove the following proposition:
\begin{proposition}\label{prop:main}
Under the FK pseudomoments specified by $\alpha_1, \alpha_2, \alpha_3, \alpha_4$ in Theorem \ref{thm:restated_main_theorem}, for any constant $\varepsilon > 0$,
\begin{align}
    H^{2,2} &\succeq \left(\alpha_1 + \frac{p-1}{2}\alpha_2 - p\alpha_1^2\right)^{-1} P_0 H^{2,1}H^{1,2} P_0 \nonumber \\
    &\hspace{1cm} + \left((1 - \varepsilon)\alpha_1\right)^{-1} (I - P_0)H^{2,1}H^{1,2}(I - P_0)
\end{align}
holds for all sufficiently large primes $p$.
\end{proposition}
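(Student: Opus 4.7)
The plan is to verify Proposition~\ref{prop:main} by expanding every matrix appearing in it as a linear combination of \emph{graph matrices} indexed by $\binom{\mathbb{F}_p}{2}$, and then comparing the two sides on each of the three canonical $S_p$-invariant subspaces of $\mathbb{R}^{\binom{\mathbb{F}_p}{2}}$. Concretely, the representation of $S_p$ on this space decomposes as a constant piece (the image of $P_0$), a ``vertex'' piece of dimension $p-1$ spanned by vectors of the form $e_a := \sum_{b \neq a} e_{\{a,b\}}$ projected off of constants, and a ``pure edge'' piece of dimension $\binom{p}{2} - p$ orthogonal to both. Because the $\One_{\ell,r}$ functions factor as products of $\frac{1+S_{uv}}{2}$ over the relevant pairs and because of the vertex- and edge-transitivity of $G_p$ (Fact~\ref{fact:automorphisms}), every block of $H$ commutes with this decomposition, so the inequality can be checked subspace by subspace.

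First I would write $H^{2,2}$ explicitly as a sum of graph matrices by expanding each $\alpha_k \One_{2,2}$ term in terms of $S_{G_p}$, producing a finite list of shapes (the identity, matrices counting shared vertices, and matrices indexed by shapes on $3$ and $4$ vertices with Seidel entries on the ``cross'' edges). In parallel, I would multiply out $H^{2,1}H^{1,2}$ by summing over the shared vertex between the row pair $\{a,b\}$ and the column pair $\{c,d\}$, splitting into three cases according to the intersection size. The result is again a linear combination of the same handful of shapes, with coefficients depending on $\alpha_1,\dots,\alpha_4$. The explicit choices $\alpha_2 = 4\alpha_1^2$, $\alpha_3 = 8\alpha_1^3$, $\alpha_4 = 512\alpha_1^4$ with $\alpha_1 = cp^{-2/3}$ are precisely tuned so that, shape by shape, the leading contribution of $H^{2,2}$ dominates or matches that of the right-hand side on each invariant subspace.

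On the constant subspace, $P_0$ kills all terms except scalar multiples of $J$, and the inequality becomes a one-line comparison of row sums, which I would compute using strong regularity (Fact~\ref{fact:regularity}). On the vertex subspace, all remaining graph matrices act as polynomials in $A_{G_p}$ on vectors of the form $A_{G_p}\One_S$ modulo constants, so the comparison reduces to a spectral inequality using Fact~\ref{fact:spectrum}; here the $(1-\varepsilon)\alpha_1$ factor of slack inherited from Proposition~\ref{prop:H11} provides the room needed to absorb the $\sqrt{p}$ eigenvalues of $S_{G_p}$. The non-trivial part is the pure edge subspace, on which both $P_0$ and the vertex-subspace projection vanish; there the right-hand side is bounded by $((1-\varepsilon)\alpha_1)^{-1}$ times the squared operator norm of the pure-edge block of $H^{1,2}$, while the left-hand side contributes the $\alpha_2 - \alpha_2^2$ diagonal together with the pure-edge parts of the $\alpha_3$ and $\alpha_4$ graph matrices.

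The main obstacle, and the bulk of the work, is bounding the operator norms of the individual graph matrices restricted to this third subspace. These cannot be read off from the spectrum of $A_{G_p}$ because the relevant shapes have $3$ or $4$ distinct vertices with Seidel entries on their cross edges. My plan is to use the trace method: for a graph matrix $M_\tau$ of shape $\tau$, expand $\tr(M_\tau M_\tau^\top)^{k}$ into a sum of products of Legendre symbols $\chi(x_i - x_j)$ over closed walks in the ``doubled'' shape, and then invoke the Weil bound on multiplicative character sums to control each term. In expectation over $\mathcal{G}(p, \frac12)$ these sums are computed in \cite{AMP-2016-GraphMatrices} by combinatorial bookkeeping; here I will substitute Weil-type cancellation to show that the same dominant contribution survives deterministically, with subleading terms of lower order in $p$. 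Once these shape-by-shape norm bounds are in hand, summing them with the coefficients from Steps 1--2 and checking the numerics for the chosen $\alpha_i$ scaling completes the proof; the detailed execution of these character-sum estimates is deferred to Section~\ref{sec:graph-matrix-proofs}.
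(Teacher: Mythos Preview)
Your high-level strategy---expand $H^{2,2}$ and $H^{2,1}H^{1,2}$ in graph matrices, analyze along the three $S_p$-irreducible subspaces $\mathbb{V}_0,\mathbb{V}_1,\mathbb{V}_2$, and bound graph matrix norms via character sums---matches the paper. But there is a genuine gap in the claim that ``every block of $H$ commutes with this decomposition, so the inequality can be checked subspace by subspace.'' Vertex- and edge-transitivity only give invariance under $\Aut(G_p)$, which is the affine group of $\mathbb{F}_p$ and is much smaller than $S_p$. The subspaces $\mathbb{V}_1,\mathbb{V}_2$ are irreducibles for $S_p$, not for $\Aut(G_p)$, and the graph matrices with at least one Seidel edge (e.g.\ $T^{4,1,1}$, $T^{4,2,1}$, $T^{4,4,1}$) do \emph{not} preserve them. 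Only the edge-free shapes $I$, $T^{3,0,1}$, $T^{4,0,1}$ are genuinely $S_p$-equivariant and hence block-diagonal in this basis. The paper's proof reflects this: it isolates the edge-free part (which gives the main-term eigenvalues on each $\mathbb{V}_i$), collects all edge-containing graph matrices into an error $M$, bounds all nine cross-block norms $\|P_i M P_j\|$, and reduces to a $3\times 3$ positive semidefiniteness check rather than three separate scalar inequalities. Your plan to handle each subspace in isolation would drop the off-diagonal $\|P_i M P_j\|$ terms with $i\neq j$, and those are not zero.

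A second, smaller issue: your plan to bound every graph-matrix norm by the trace method plus Weil is adequate for most shapes, but the paper singles out $T^{4,4,1}$ as requiring a substantially more delicate argument (block-circulant structure under the affine group and Kloosterman-sum estimates) to reach $O(p^{5/4})$; a naive Weil/trace bound only gives $O(p^{3/2})$, which would not suffice on the $\mathbb{V}_2$ block. You should flag this shape as the bottleneck rather than treating all shapes uniformly.
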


\subsection{Ribbons and Graph Matrices}

To organize the remaining calculation, now let us review the construction of \emph{graph matrices} that has played a role in many SOS lower bound analyses in previous literature. We will use the following definitions as appeared in the work of \cite{jones2022sum}.
\begin{definition}[Ribbon]
    A ribbon on a ground set $V$ is a tuple $R = \left(V(R), E(R), A_R, B_R\right)$, where $(V(R), E(R))$ is a graph, and $A_R, B_R \subseteq  V(R) \subseteq V$.
\end{definition}

\begin{definition}[Matrix for a Ribbon]
    Let $G \in \mathbb{R}^{V \times V}$ be a real symmetric matrix whose off-diagonal entries are $\pm 1$ and whose diagonal entries are zero.
    For $R = \left(V(R), E(R), A_R, B_R\right)$ a on $V$, the corresponding matrix $M_G(R) \in \mathbb{R}^{\binom{V}{|A_R|} \times \binom{V}{|B_R|} }$ has rows and columns indexed by the subsets of $V$ of sizes $|A_R|$ and $|B_R|$, respectively. The entries of $M_G(R)$ is given by
    \begin{align}
        M_G(R)_{I,J} = \begin{cases}
            \prod_{\{i,j\} \in E(R)} G_{i,j} & \text{ if } I = A_R \text{ and } J = B_R \\
            0 & \text{ otherwise }
        \end{cases}.
    \end{align}
    In other words, there is only one nonzero entry of $M_G(R)$, and it is located at the row and the column corresponding to $A_R$ and $B_R$.
\end{definition}

\begin{definition}[Isomorphisms Between Ribbons]
    Two ribbons $R, S$ are isomorphic, or have the same shape, if there is a bijection $f: V(R) \to V(S)$ which is a graph isomorphism between $(V(R), E(R))$ and $(V(S), E(S))$ and also a bijection from $A_R$ to $A_S$ and from $B_R$ to $B_S$.
\end{definition}

\noindent
If we ignore the labels on the vertices of a ribbon, what remains is the shape of the ribbon.

\begin{definition}[Shape]
    \label{def:graph-matrix-shape}
    A \emph{shape} is an equivalence class of ribbons of the same shape. Each shape has associated with it a representative $\beta = \left(V(\beta), E(\beta), A_\beta, B_\beta\right)$.
\end{definition}

\begin{definition}[Embedding of a Shape]
    Given a shape $\beta$ on $V$ and an injective function $f: V(\beta) \to V$, we let $f(\beta)$ be the ribbon by labeling the vertices $V(\beta)$ in the natural way.
\end{definition}

\begin{definition}[Graph Matrix]
    Let $G \in \mathbb{R}^{V \times V}$ be a real symmetric matrix whose off-diagonal entries are $\pm 1$ and whose diagonal entries are zero. For a shape $\beta$ on $V$, the graph matrix $M_G(\beta) \in \mathbb{R}^{\binom{V}{|A_\beta|} \times \binom{V}{|B_\beta|} }$ is defined as the sum of all ribbon matrices over ribbons with shape $\beta$:
    \begin{align}
        M_G(\beta) = \sum_{R \text{ ribbon of shape } \beta} M_G\left(R\right).
    \end{align}
\end{definition}

\begin{definition}[Automorphism of a Shape]
    For a shape $\beta$, $\Aut(\beta)$ is the group of bijection from $V(\beta)$ to itself such that $A_\beta$ and $B_\beta$ are fixed as sets and the map is a graph automorphism of $(V(\beta), E(\beta))$.
\end{definition}

It is easy to see that if we sum over ribbon matrices  of all ribbons obtained from injective labelings of $\beta$, we obtain the graph matrix $M_G(\beta)$ multiplied by $|\Aut(\beta)|$. Thus,
\begin{align}
    M_G(\beta)  = \sum_{R \text{ ribbon of shape } \beta} M_G(R) = \frac{1}{|\Aut(\beta)|} \sum_{f: V(\beta) \to V \text{ injective}} M_G(f(\beta)).
\end{align}

\begin{definition}[Transpose]
    Given a ribbon $R$ or shape $\beta$, we define its transpose by swapping the two parts $A_R$ and $B_R$ (resp.~$A_\beta$ and $B_\beta$). Observe that this transposes the matrix for the ribbon/shape.
\end{definition}

\begin{definition}[Trivial Shape]
    A shape $\beta$ is trivial if $V(\beta) = A_\beta = B_\beta$ and $E(\beta) = \emptyset$. Observe that the graph matrix for a trivial shape $\beta$ is the identity matrix of dimension specified by $|V(\beta)|$.
\end{definition}

\subsection{Graph Matrix Decomposition}
\label{sec:graph-mx-decomp}

\begin{definition}[Legendre Symbol]
    Let $\mathbb{F}_p$ be the finite field of order $p$. The Legendre symbol is defined as
    \begin{equation}
        \chi(a) = \chi_p(a) \colonequals \begin{cases}
            0 & \text{ if } a \equiv 0 \Mod{p}, \\
            1 & \text{ if } a \text{ is a quadratic residue in } \mathbb{F}_p, \\
            -1 & \text{ if } a \text{ is a quadratic nonresidue in } \mathbb{F}_p.
        \end{cases}
    \end{equation}
    When the underlying finite field $\mathbb{F}_p$ is fixed and clear from context, we will omit the subscript $p$.
\end{definition}

\begin{remark}
    Recall that all the primes $p$ in our discussion are congruent to 1 modulo 4. This ensures that $\chi(-1) = 1$, and thus $\chi(a) = \chi(-a)$ for any $a \in \mathbb{F}_p$.
\end{remark}

\begin{proposition}\label{prop:indicator}
    We have $\One_{\ell,r}(L, R) = \frac{1}{2^{|L \setminus R| \times |R \setminus L|}} \prod_{(a,b) \in (L \setminus R) \times (R \setminus L)}(1 + \chi(a-b))$ for all $\ell, r \geq 0$, $L \in \binom{\FF_p}{\ell}$, and $R \in\binom{\FF_p}{r}$.
\end{proposition}
\begin{proof}
    The result follows from observing that, for $a, b \in \FF_p$ distinct, $\frac{1}{2}(1 + \chi(a-b))$ is the indicator of the edge $\{a,b\}$ existing in the Paley graph.
\end{proof}

In the following few equations, let us write $S$ for the Seidel adjacency matrix of $G_p$, so that $S_{a, b} \colonequals \chi(a - b)$.
By substituting the indicator functions $\One_{\ell,r}$ in the definition of $H$ using Proposition~\ref{prop:indicator} and expanding the products, we have
\begin{align}
    H^{2,2}_{\{a,b\}, \{c,d\}} = \begin{cases}
        \begin{array}{l}
             \alpha_2 - \alpha_2^2
        \end{array} & \begin{array}{l}
              \text{ if } \{a,b\} = \{c,d\},
        \end{array}\\[1em]
        \begin{array}{l}
            \left(\frac{\alpha_3}{2} -\alpha_2^2\right) + \frac{\alpha_3}{2} S_{b,d}
        \end{array}
         &  \begin{array}{l}
             \text{ if } a = c \text{ and } b \ne d,
         \end{array}\\[1em]
         \begin{array}{l}
               (\frac{\alpha_4}{16} - \alpha_2^2)+\frac{\alpha_4}{16}\big(S_{a,c} + S_{a,d} + S_{b, c} + S_{b,d} \\
               \hspace{6.25em}+ S_{a,c}S_{a,d} + S_{b,c}S_{b,d} + S_{a,c}S_{b,c} \\
              \hspace{6.25em}+ S_{a,d}S_{b,d} + S_{a,c}S_{b,d} + S_{a,d}S_{b,c} \\
              \hspace{6.25em}+ S_{a,c}S_{a,d}S_{b,c} + S_{b,d}S_{a,d}S_{b,c} \\
              \hspace{6.25em}+ S_{a,c}S_{a,d}S_{b,d} + S_{a,c}S_{b,c}S_{b,d} \\
              \hspace{6.25em}+ S_{a,c}S_{a,d}S_{b,c}S_{b,d}\big)
         \end{array} &
         \begin{array}{l}
              \text{ if } \{a,b\} \cap \{c,d\} = \emptyset.
         \end{array}
    \end{cases}
\end{align}
and
\begin{align*}
    &\quad (H^{2,1}H^{1,2})_{\{a,b\}, \{c,d\}}\\
    &= \sum_{i \in \mathbb{F}_p} H^{2,1}_{\{a,b\}, i} H^{1,2}_{i, \{c,d\}}\\
    &= \begin{cases}
        \begin{array}{l}
              2(\alpha_2 - \alpha_1\alpha_2)^2 + (p-2)((\alpha_1\alpha_2)^2 +  \frac{\alpha_3^2}{4} -\frac{\alpha_1\alpha_2\alpha_3}{2}) \\
              \hspace{1em} + (\frac{\alpha_3^2}{4} -\frac{\alpha_1\alpha_2\alpha_3}{2})\sum_{i \in \mathbb{F}_p \setminus \{a,b\}}(S_{a,i} + S_{b,i} + S_{a,i}S_{b,i})
         \end{array} & \begin{array}{l}
              \text{ if } \{a,b\} = \{c,d\},
         \end{array} \\[2em]
         \begin{array}{l}
              (\alpha_2 - \alpha_1\alpha_2)^2 - 2(\alpha_2 - \alpha_1\alpha_2)\alpha_1\alpha_2+ (p-3)(\alpha_1\alpha_2)^2 \\
              \hspace{1em} + \frac{(\alpha_2 - \alpha_1\alpha_2)\alpha_3}{2} - (p-3)\frac{\alpha_1\alpha_2\alpha_3}{2} + (p-3)\frac{\alpha_3^2}{8} \\
              \hspace{1em}+ \frac{(\alpha_2 - \alpha_1\alpha_2)\alpha_3}{4}(S_{a,b}S_{b,d} + S_{a,d}S_{b,d} + S_{a,b} + S_{a,d} + 2S_{b,d}) \\
              \hspace{1em} + (\frac{\alpha_3^2}{8} - \frac{\alpha_1\alpha_2\alpha_3 }{2})\sum_{i \in \mathbb{F}_p \setminus\{a,b,d\} } S_{a,i} \\
              \hspace{1em}+ (\frac{\alpha_3^2 }{8} - \frac{\alpha_1\alpha_2\alpha_3 }{4})\sum_{i \in \mathbb{F}_p \setminus\{a,b,d\} } (S_{b,i} + S_{d,i}
              + S_{a,i}S_{b,i} + S_{a,i}S_{d,i})\\
              \hspace{1em} + \frac{\alpha_3^2 }{8}\sum_{i \in \mathbb{F}_p \setminus\{a,b,d\} }(S_{b,i}S_{d,i} + S_{a,i}S_{b,i}S_{d,i})
         \end{array} & \begin{array}{l}
              \text{ if } a = c \text{ and } b \ne d,
         \end{array} \vspace{1em}\\
         \begin{array}{l}
              (\alpha_2 - \alpha_1\alpha_2)\alpha_3 - 4(\alpha_2 - \alpha_1\alpha_2)\alpha_1\alpha_2\\
              \hspace{1em}+ (p-4)(\alpha_1\alpha_2)^2 - (p-4)\frac{\alpha_1\alpha_2\alpha_3 }{2}+ (p-4)\frac{\alpha_3^2 }{16}\\
              \hspace{1em} + \frac{(\alpha_2- \alpha_1\alpha_2)\alpha_3}{2}(S_{a,c} + S_{a,d} + S_{b,c} + S_{b,d}) \\
              \hspace{1em} + \frac{(\alpha_2 - \alpha_1\alpha_2)\alpha_3}{4}(S_{a,c}S_{a,d} + S_{b,c}S_{b,d} + S_{a,c}S_{b,c} + S_{a,d}S_{b,d})\\
              \hspace{1em} + (\frac{\alpha_3^2 }{16} - \frac{\alpha_1\alpha_2\alpha_3 }{4}) \sum_{i \in \mathbb{F}_p \setminus\{a,b, c,d\} }(S_{a,i} + S_{b,i} + S_{c,i} + S_{d,i} \\
              \hspace{13.5em} + S_{a,i}S_{b,i} + S_{c,i}S_{d,i})\\
              \hspace{1em} + \frac{\alpha_3^2 }{16} \sum_{i \in \mathbb{F}_p \setminus\{a,b, c,d\} }( S_{a,i}S_{c,i} + S_{a,i}S_{d,i}
              + S_{b,i}S_{c,i} + S_{b,i}S_{d,i} \\
              \hspace{8.6em} +S_{a,i}S_{b,i}S_{c,i} + S_{a,i}S_{b,i}S_{d,i} \\
              \hspace{8.6em}+ S_{a,i}S_{c,i}S_{d,i} + S_{b,i}S_{c,i}S_{d,i}\\
              \hspace{8.6em}+ S_{a,i}S_{b,i}S_{c,i}S_{d,i})
         \end{array} & \begin{array}{l}
              \text{ if } \{a,b\} \cap \{c,d\} = \emptyset.
         \end{array} \numberthis
    \end{cases}
\end{align*}

\noindent
We now express this as a sum of graph matrices.
We present all the matrices required for this decomposition in Table~\ref{table:graph_matrices}.
Using the notations for graph matrices defined above and in the table, we can write the matrix $H^{2,2}$ and the matrix $H^{2,1}H^{1,2}$ as a weighted sum of these matrices, as follows:
\begin{align*}
    H^{2,2} &= (\alpha_2 - \alpha_2^2)I + \left(\frac{\alpha_3}{2} - \alpha_2^2\right) T^{3,0,1} + \frac{\alpha_3}{2} T^{3,1,1} + \left(\frac{\alpha_4}{16} - \alpha_2^2\right) T^{4,0,1}\\
    &\hspace{1.25cm} + \frac{\alpha_4 }{16} \left(T^{4,1,1} + T^{4,2,1} + T^{4,2,2} + T^{4,2,3} + T^{4,3,1} + T^{4,4,1} \right), \numberthis \label{eq:H22-graph-mx}\\
    H^{2,1}H^{1,2} &= \left[2(\alpha_2 - \alpha_1\alpha_2)^2 + (p-2)\left((\alpha_1\alpha_2)^2 + \frac{\alpha_3^2 }{4} - \frac{\alpha_1\alpha_2\alpha_3 }{2}\right)\right] I \\
    &\hspace{1.25cm}+ \left(\frac{\alpha_3^2 }{4} - \frac{\alpha_1\alpha_2\alpha_3 }{2}\right)\left(U^{3,1,1} + U^{3,2,1}\right)\\
    &\hspace{1.25cm}+ \left[ (\alpha_2 - \alpha_1\alpha_2)\left(\alpha_2 - 3\alpha_1\alpha_2 + \frac{\alpha_3}{2}\right) + (p-3)\left((\alpha_1\alpha_2)^2 - \frac{\alpha_1\alpha_2\alpha_3 }{2} + \frac{\alpha_3^2 }{8} \right)\right] T^{3,0,1}\\
    &\hspace{1.25cm}+ \frac{(\alpha_2 - \alpha_1\alpha_2)\alpha_3}{4} (T^{3,2,1} + T^{3,2,2} + 2T^{3,1,1} + T^{3,1,2} + T^{3,1,3}) \\
    &\hspace{1.25cm} + \left(\frac{\alpha_3^2 }{8} - \frac{\alpha_1\alpha_2\alpha_3 }{2} \right) U^{4,1,1}\\
    &\hspace{1.25cm}+ \left(\frac{\alpha_3^2 }{8} - \frac{\alpha_1\alpha_2\alpha_3 }{4} \right)\left(U^{4,1,2} + U^{4,1,3} + U^{4,2,1} + U^{4,2,2}\right) + \frac{\alpha_3^2}{8}\left(U^{4,2,3} + U^{4,3,1}\right)\\
    &\hspace{1.25cm}+ \left[(\alpha_2 - \alpha_1\alpha_2)(\alpha_3 - 4\alpha_1\alpha_2) + (p-4)\left(\alpha_1\alpha_2 - \frac{\alpha_3 }{4}\right)^2\right] T^{4,0,1}\\
    &\hspace{1.25cm}+ \frac{(\alpha_2 - \alpha_1\alpha_2)\alpha_3 }{2} T^{4,1,1} + \frac{(\alpha_2 - \alpha_1\alpha_2)\alpha_3}{4} (T^{4,2,1} + T^{4,2,2})\\
    &\hspace{1.25cm} + \left(\frac{\alpha_3^2 }{16} - \frac{\alpha_1\alpha_2\alpha_3 }{4}\right) \left(U^{5,1,1}+ U^{5,1,2} + U^{5,2,1} + U^{5,2,2} \right)\\
    &\hspace{1.25cm}+ \frac{\alpha_3^2}{16} \left(U^{5,2,3} + U^{5,3,1} + U^{5,3,2} + U^{5,4,1}\right). \numberthis \label{eq:H21H12-graph-mx}
\end{align*}

\input{graph-matrix-table}

\subsection{Graph Matrix Norm Bounds}\label{sec:graph_matrix_norm_bounds}
Now we analyze the norms of the graph matrices defined above in order to prove Proposition~\ref{prop:main}.

\begin{remark}
    Previous work of \cite{AMP-2016-GraphMatrices} established the typical norm of graph matrices when the underlying matrix $G$ is the Seidel adjacency matrix of an ER random graph from $\mathcal{G}(n, \frac{1}{2})$, where the quantities that characterize the norm bounds are the sizes of the minimum vertex separators of the shapes. In this work, using different techniques, we prove graph matrix norm bounds when the underlying matrix is the Seidel adjacency matrix of the Paley graph $G_p$.
\end{remark}

Recall that we defined $P_0 = \frac{1}{p(p-1)} J \in \mathbb{R}^{\binom{\mathbb{F}_p}{2} \times \binom{\mathbb{F}_p}{2} }$ to denote the orthogonal projection matrix to the constant vector. Following the strategies in \cite{DM-2015-SOSPlantedClique}, we define the following subspaces of $\mathbb{R}^{\binom{\mathbb{F}_p}{2} }$:
\begin{align}
    \mathbb{V}_0 &= \left\{v \in \mathbb{R}^{\binom{\mathbb{F}_p}{2} }: v_{i,j} = v_{i', j'}, \quad \forall \{i,j\}, \{i',j'\} \in \binom{\mathbb{F}_p}{2}\right\} \\
    \mathbb{V}_1 &= \left\{v \in \mathbb{R}^{\binom{\mathbb{F}_p}{2}}: \exists u \in \mathbb{R}^{\mathbb{F}_p}, \text{s.t. } \langle \boldsymbol{1}, u \rangle = 0 \text{ and } v_{\{i,j\}} = u_i + u_j, \quad \forall \{i,j\} \in \binom{\mathbb{F}_p}{2} \right\}\\
    \mathbb{V}_2 &= (\mathbb{V}_0 \oplus \mathbb{V}_1)^\perp.
\end{align}
In words, $\mathbb{V}_0$ is the span of constant vectors, $\mathbb{V}_0 \oplus \mathbb{V}_1$ is the span of vectors $v$ whose entries $v_{\{i,j\}}$ can be decomposed to a sum of $u_i + u_j$ for some $u \in \mathbb{R}^{\mathbb{F}_p }$, and $\mathbb{V}_2$ is the orthogonal complement of $\mathbb{V}_0 \oplus \mathbb{V}_1$.
Furthermore, let $P_1$ and $P_2$ be the orthogonal projection matrices to the subspaces $\mathbb{V}_1$, and $\mathbb{V}_2$ respectively. Note that this is consistent with the previously defined $P_0$, which is the orthogonal projection matrix to the span of constant vectors $\mathbb{V}_0$.

In the analysis of ER graphs, these subspaces appear because they are the decomposition of $\RR^{\binom{\FF_p}{2}}$ into irreducible subrepresentations under the action of $S_p$, with respect to which the expectation of an FK pseudomoment matrix is invariant.
This invariance does not hold for our deterministic FK pseudomoment matrix, but we will see that the same decomposition is still useful.

We will use the following norm bounds for the graph matrices defined earlier.
We defer the proofs of these statements to Section~\ref{sec:graph-matrix-proofs}.

\begin{proposition}\label{prop:T_32i}
    $\|T^{3,2,i}\| = O(\sqrt{p})$ for $i \in \{1,2\}$.
\end{proposition}

\begin{proposition}\label{prop:T_311}
    $\|T^{3,1,1}\| = O(\sqrt{p})$.
\end{proposition}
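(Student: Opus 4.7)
The plan is to bound $\|T^{3,1,1}\|$ by decomposing its associated quadratic form along the ``shared'' vertex between the two pairs that index a nonzero entry, reducing the problem to the spectrum of principal submatrices of $S_{G_p}$.

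First I would observe that $T^{3,1,1}$ is symmetric, since $T^{3,1,1}_{\{a,b\},\{a,c\}} = S_{b,c} = S_{c,b} = T^{3,1,1}_{\{a,c\},\{a,b\}}$, so it suffices to bound the quadratic form $v^\top T^{3,1,1} v$ for an arbitrary $v \in \RR^{\binom{\FF_p}{2}}$. The ordered pairs $(P_1, P_2)$ contributing nonzero entries are precisely those with $|P_1 \cap P_2| = 1$, and each such pair corresponds bijectively to an ordered triple $(x,y,z) \in \FF_p^{(3)}$ via $P_1 = \{x,y\}$, $P_2 = \{x,z\}$, with $x$ the shared vertex. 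Defining $u^{(x)} \in \RR^{\FF_p \setminus \{x\}}$ by $u^{(x)}_y \colonequals v_{\{x,y\}}$ and letting $S^{(x)}$ denote the principal submatrix of $S_{G_p}$ with row and column $x$ removed, this rewrites the quadratic form as
\begin{equation}
v^\top T^{3,1,1} v \;=\; \sum_{x \in \FF_p} \sum_{y, z \in \FF_p \setminus \{x\}} v_{\{x,y\}}\, v_{\{x,z\}}\, S_{y,z} \;=\; \sum_{x \in \FF_p} (u^{(x)})^\top S^{(x)} u^{(x)}.
\end{equation}

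From here the argument is essentially routine. By Fact~\ref{fact:spectrum} we have $\|S_{G_p}\| = \sqrt{p}$, and by Cauchy interlacing $\|S^{(x)}\| \leq \|S_{G_p}\| = \sqrt{p}$ for each $x$. The vectors $u^{(x)}$ satisfy the double-counting identity
\begin{equation}
\sum_{x \in \FF_p} \|u^{(x)}\|^2 \;=\; \sum_{\{x,y\} \in \binom{\FF_p}{2}} 2\, v_{\{x,y\}}^2 \;=\; 2\|v\|^2,
\end{equation}
since each pair $\{x,y\}$ is counted once with $x$ playing the role of the shared vertex and once with $y$ doing so. Combining these estimates yields
\begin{equation}
|v^\top T^{3,1,1} v| \;\leq\; \sum_{x \in \FF_p} \|S^{(x)}\| \cdot \|u^{(x)}\|^2 \;\leq\; 2\sqrt{p}\, \|v\|^2,
\end{equation}
and symmetry of $T^{3,1,1}$ then gives $\|T^{3,1,1}\| \leq 2\sqrt{p} = O(\sqrt{p})$.

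There is no substantial obstacle in this argument: once the decomposition along the shared vertex is in hand, the bound follows immediately from the known spectrum of the Paley graph Seidel matrix and an elementary interlacing estimate. The real technical work will arise for the other graph matrices in Table~\ref{table:graph_matrices}, whose shapes involve summation over an internal free vertex or richer connectivity structure and therefore cannot be bounded by a spectral estimate on $S_{G_p}$ alone; those are what will require the character sum inputs deferred to Section~\ref{sec:graph-matrix-proofs}.
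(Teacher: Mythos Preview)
Your proof is correct and follows essentially the same idea as the paper: decompose the quadratic form along the shared vertex $x$ so that each block is (a principal submatrix of) $S_{G_p}$, then invoke $\|S_{G_p}\| = \sqrt{p}$. The paper compresses this into the single observation that $T^{3,1,1}$ is a submatrix of $I_p \otimes S_{G_p}$, whose norm is $\|S_{G_p}\|$; your explicit quadratic-form decomposition with the double-counting identity $\sum_x \|u^{(x)}\|^2 = 2\|v\|^2$ is exactly what unpacks that one-liner.
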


\begin{proposition}\label{prop:T_31i}
    $\|T^{3,1,i}\| = O(p)$ for $i \in \{2,3\}$.
\end{proposition}

\begin{proposition}
    \label{prop:T301}
    $T^{3,0,1} = 2(p-2)P_0 + (p-4)P_1 - 2P_2.$
\end{proposition}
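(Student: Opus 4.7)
The plan is to exploit the fact that $T^{3,0,1}$ is, up to indexing by $2$-element subsets of $\FF_p$, the $\{0,1\}$ adjacency matrix of the Johnson graph $J(p,2)$: its entry at $(\{a,b\},\{c,d\})$ is $1$ exactly when $|\{a,b\}\cap\{c,d\}| = 1$ and $0$ otherwise. This follows directly from the shape of $T^{3,0,1}$ (empty edge set, with exactly one vertex shared between $A_\beta$ and $B_\beta$), so that entries expand as a product over an empty edge set equal to $1$ whenever a valid ribbon exists, and no such ribbon exists unless $|\{a,b\}\cap\{c,d\}| = 1$. In particular, $T^{3,0,1}$ commutes with the natural permutation action of $S_p$ on $\binom{\FF_p}{2}$. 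Since $\mathbb{V}_0,\mathbb{V}_1,\mathbb{V}_2$ are the three pairwise non-isomorphic irreducible subrepresentations of this action (as already invoked in the paper), Schur's lemma implies that $T^{3,0,1}$ acts as a scalar $\lambda_k$ on each $\mathbb{V}_k$, so it suffices to identify these three scalars.

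The eigenvalue on $\mathbb{V}_0$ is immediate: applying $T^{3,0,1}$ to $\boldsymbol{1}$ counts pairs meeting a fixed pair in exactly one element, yielding $\lambda_0 = 2(p-2)$. For $\mathbb{V}_1$, I will use the test vector $v_{\{i,j\}} = u_i + u_j$ for an arbitrary $u \in \RR^{\FF_p}$ with $\sum_i u_i = 0$; splitting the sum over $\{c,d\}$ sharing exactly one element with $\{a,b\}$ into the cases $\{c,d\} = \{a,x\}$ and $\{c,d\} = \{b,x\}$ with $x \notin \{a,b\}$ and using $\sum_{x \neq a,b} u_x = -u_a - u_b$, a short calculation gives $(T^{3,0,1} v)_{\{a,b\}} = (p-4)(u_a + u_b)$, so $\lambda_1 = p-4$.

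For $\mathbb{V}_2$, I will use the trace identity: since $T^{3,0,1}$ has zero diagonal (a pair intersects itself in $2$ elements, not $1$), $\tr T^{3,0,1} = 0$. Using $\dim \mathbb{V}_0 = 1$, $\dim \mathbb{V}_1 = p-1$, and $\dim \mathbb{V}_2 = \binom{p}{2} - p = \tfrac{p(p-3)}{2}$, the identity
\[
    \lambda_0 \cdot 1 + \lambda_1 \cdot (p-1) + \lambda_2 \cdot \tfrac{p(p-3)}{2} = 0
\]
simplifies (after substituting $\lambda_0$ and $\lambda_1$) to $p(p-3) + \tfrac{p(p-3)}{2}\lambda_2 = 0$, giving $\lambda_2 = -2$. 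Combining the three eigenvalues with the orthogonal projections $P_0, P_1, P_2$ yields the claimed decomposition. There is essentially no obstacle: the only nontrivial input is the classical decomposition of the $S_p$ permutation representation on $2$-subsets into the three $\mathbb{V}_k$, which the paper has already set up.
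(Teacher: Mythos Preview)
Your proof is correct. The computations of $\lambda_0$ and $\lambda_1$ are essentially the same as the paper's (the paper works with $T^{3,0,1}+2I$ instead, but this is cosmetic). The one genuine difference is in how $\lambda_2$ is obtained: the paper shows by a direct calculation that for every $u$ the vector $(T^{3,0,1}+2I)u$ lies in $\mathbb{V}_0\oplus\mathbb{V}_1$, so $\mathbb{V}_2$ is in the kernel of $T^{3,0,1}+2I$ and $\lambda_2=-2$ follows immediately; you instead invoke Schur's lemma to know a priori that $T^{3,0,1}$ is scalar on each $\mathbb{V}_k$, and then extract $\lambda_2$ from the trace identity. Your route is a bit cleaner conceptually (and generalizes well to other $S_p$-equivariant matrices), while the paper's argument is fully self-contained and avoids appealing to irreducibility and non-isomorphism of the $\mathbb{V}_k$.
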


\begin{proposition}\label{prop:T_431}
    $\|T^{4,3,1}\| = O(p).$
\end{proposition}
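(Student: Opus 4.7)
The plan is to apply the trace moment method together with Weil's character sum bound. The starting point is the observation that, for disjoint $\{a,b\},\{c,d\} \in \binom{\FF_p}{2}$, since $S_{x,y}^2 = 1$ whenever $x \neq y$, the four summands defining $T^{4,3,1}_{\{a,b\},\{c,d\}}$ admit the factorization
\begin{equation*}
T^{4,3,1}_{\{a,b\},\{c,d\}} \;=\; S_{a,c}S_{a,d}S_{b,c}S_{b,d}\cdot\bigl(S_{a,c}+S_{a,d}+S_{b,c}+S_{b,d}\bigr),
\end{equation*}
which via $S_{x,y} = \chi(x-y)$ rewrites each entry as a sum of four products of three Legendre symbols of pairwise differences among $a,b,c,d$.

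For a large constant $k$, I would then use the trace inequality $\|T^{4,3,1}\|^{2k} \le \Tr\bigl((T^{4,3,1}(T^{4,3,1})^{\top})^{k}\bigr)$ and expand the right-hand side as a sum over closed walks $(V_0,V_1,\dots,V_{2k}=V_0)$ in $\binom{\FF_p}{2}$. Substituting the factorization at each step and collecting the resulting Legendre symbols, the contribution of a walk reduces, up to sign, to a character sum of the form $\sum_{\mathbf{z}} \chi(f(\mathbf{z}))$ over the free vertex coordinates $\mathbf{z}$, where $f$ is a monomial in the pairwise differences whose exponents are determined by the combinatorial type of the walk (which vertices of $V_0,\dots,V_{2k-1}$ coincide and which of the four summands is chosen at each step).

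The crucial estimate is the Weil bound: for $f \in \FF_p[x]$ not a perfect square in $\overline{\FF_p}[x]$, $\bigl|\sum_{x\in\FF_p}\chi(f(x))\bigr| \le (\deg f - 1)\sqrt{p}$. Therefore each free vertex contributes a factor of $O(\sqrt p)$ unless every Legendre symbol involving that vertex appears with even total multiplicity, in which case the contribution jumps to $p$. Classifying closed walks by combinatorial type and performing a careful enumeration, analogous to the ``minimum vertex separator'' bookkeeping of \cite{AMP-2016-GraphMatrices} in the random-graph setting, should yield a bound $\Tr\bigl((T^{4,3,1}(T^{4,3,1})^{\top})^{k}\bigr) \le C_k\,p^{2k}$ with $C_k^{1/(2k)}$ bounded uniformly in $k$, from which $\|T^{4,3,1}\| = O(p)$ follows by taking a $(2k)$-th root. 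The exponent $1$ on $p$ matches the minimum vertex separator size $2$ of the underlying shape (path $b\!-\!c\!-\!a\!-\!d$ with bipartition $\{a,b\}\mid\{c,d\}$).

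The main obstacle is the combinatorial bookkeeping needed to identify exactly which closed walks admit a perfect-square pairing of their Legendre symbols at every free vertex, since these are precisely the walks that saturate the $p^{2k}$ bound, and to count them (with their signs). This step is where the Paley analysis diverges from its random-graph counterpart: Weil-style cancellation for character sums replaces the second-moment concentration arguments used for $\mathcal{G}(n,\tfrac{1}{2})$, and making this replacement quantitatively tight is the principal technical content of the proof.
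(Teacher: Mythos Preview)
Your proposal is essentially correct and follows the same trace power method the paper uses, but the paper streamlines the argument in two ways that avoid the ``combinatorial bookkeeping'' you flag as the main obstacle.

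First, rather than applying the trace method to $T^{4,3,1}$ directly, the paper splits off each of the four summands in your factorization as a separate matrix $A,B,C,D$ (indexed by ordered pairs in $\FF_p^2$) and bounds $\|A\|,\dots,\|D\|$ individually via the triangle inequality. For a single summand, say $A_{(a,b),(c,d)}=\chi(a-c)\chi(a-d)\chi(b-c)$, the variables $b$ and $d$ each appear in exactly one factor, so in $\Tr((A^{\top}A)^{k})$ the sums over the $b_{\ell}$ and $d_{\ell}$ decouple and factor as products of $\sum_{x}\chi(x-i)\chi(x-j)$. Second, these inner sums are not merely Weil-bounded but evaluated \emph{exactly} via Proposition~\ref{prop:chi-sum-2}: they equal $-1$ or $p-1$ according to whether $i\ne j$ or $i=j$. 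The remaining outer sum over $a_{1},c_{1},\dots,a_{k},c_{k}$ is then trivially bounded (each $\chi$-factor has modulus $\le 1$), and the combinatorics reduces to counting sequences by the numbers $n_{1}=\#\{\ell:a_{\ell-1}=a_{\ell}\}$ and $n_{2}=\#\{\ell:c_{\ell}=c_{\ell+1}\}$ of consecutive coincidences. This yields $\Tr((A^{\top}A)^{k})\lesssim p^{2k}(k^{2}2^{2k}+k2^{k}p+p^{2})$, and taking $k=\Theta(\log p)$ (not a fixed constant, since one needs $p^{1/k}=O(1)$) gives $\|A\|=O(p)$.

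In short: your Weil-based plan would work, but the paper's decomposition into $A,B,C,D$ plus the exact quadratic identity replaces Weil entirely and collapses the walk enumeration to a two-parameter count, sidestepping the bookkeeping you anticipated.
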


\begin{proposition}\label{prop:T_4ij}
    $\|T^{4,i,j}\| = O(p^{3/2})$ for $(i, j) \in \{(2, 1), (2, 2), (1, 1)\}$.
    Moreover, all of $\| T^{4,2,1} P_2\|$, $\|P_2 T^{4,2,2} \|$, $\|P_2 T^{4,1,1}\|$, and $\|T^{4,1,1}P_2\|$ are $O(\sqrt{p})$.
\end{proposition}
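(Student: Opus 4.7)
The plan is to decompose each $T^{4,i,j}$ as $T^{4,i,j} = \widetilde T^{4,i,j} - Z^{4,i,j}$, where $\widetilde T^{4,i,j}$ extends the entry formula to all pairs of pair-indices (dropping the disjointness restriction) and $Z^{4,i,j}$ is a sparse ``overlap correction'' supported only on $\{a,b\},\{c,d\}$ with $\{a,b\}\cap\{c,d\}\ne\emptyset$. The key tool is the pair-to-vertex incidence map $\iota\colon\RR^{\binom{\FF_p}{2}}\to\RR^{\FF_p}$ defined by $\iota(w)_k := \sum_{\{i,j\}\ni k}w_{\{i,j\}}$; its transpose $\phi$ satisfies $\phi(u)_{\{a,b\}} = u_a+u_b$, with image exactly $\VV_0\oplus\VV_1$, so that $\VV_2 = \ker\iota$. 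A direct calculation gives $\iota\phi = (p-2)I+J$ and hence $\|\iota\|,\|\phi\| = O(\sqrt{p})$.

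For the operator norm bounds $\|T^{4,i,j}\|=O(p^{3/2})$, I would first show the extended matrices admit factorizations through $\RR^{\FF_p}$: unwinding the definitions gives $\widetilde T^{4,1,1} = \phi\, S\,\iota$, $(\widetilde T^{4,2,2}w)_{\{a,b\}} = \sum_k S_{a,k}S_{b,k}\,\iota(w)_k$, and $(\widetilde T^{4,2,1}w)_{\{a,b\}} = \tfrac12\bigl((S\tilde w S)_{aa} + (S\tilde w S)_{bb}\bigr)$, where $\tilde w$ is the symmetric zero-diagonal matrix encoding $w$. Submultiplicativity together with $\|S\|=\sqrt{p}$ from Fact~\ref{fact:spectrum} immediately bounds each $\|\widetilde T^{4,i,j}\|$ by $O(p^{3/2})$. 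The correction $Z^{4,i,j}$ has at most $2(p-2)+1$ nonzero $O(1)$-valued entries per row, giving $\|Z^{4,i,j}\|_F = O(p^{3/2})$ and hence $\|T^{4,i,j}\|\le \|\widetilde T^{4,i,j}\| + \|Z^{4,i,j}\|_F = O(p^{3/2})$.

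For the projected bounds, the identity $\iota(v)=0$ for $v\in\VV_2$ is decisive. It annihilates $\widetilde T^{4,2,2}v$ outright, so $T^{4,2,2}v=-Z^{4,2,2}v$, and a careful expansion of $Z^{4,2,2}v$ using the $\VV_2$ identity $\sum_{\{c,d\}\ni k}v_{\{c,d\}} = 0$ to flip signs of overlap sums collapses everything to the closed form $(T^{4,2,2}v)_{\{a,b\}} = -\bigl((S\circ\tilde v)S + S(S\circ\tilde v)\bigr)_{a,b}$. Since $|S_{ij}|\le 1$ gives $\|S\circ\tilde v\|_F\le\|\tilde v\|_F = \sqrt 2\|v\|$, submultiplicativity yields $\|T^{4,2,2}v\|^2\le \tfrac12\|(S\circ\tilde v)S + S(S\circ\tilde v)\|_F^2 = O(p)\|v\|^2$, so $\|T^{4,2,2}P_2\|=O(\sqrt p)$, and $\|P_2 T^{4,2,1}\|=O(\sqrt p)$ follows since $(T^{4,2,1})^\top = T^{4,2,2}$. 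For $T^{4,1,1}$, an analogous calculation gives $(T^{4,1,1}v)_{\{a,b\}} = -(S\tilde v + \tilde vS)_{a,b} - \phi(\mathrm{diag}(S\tilde v))_{\{a,b\}} + 2S_{a,b}v_{\{a,b\}}$; the middle summand lies in $\VV_0\oplus\VV_1$ and is annihilated by the projection, while the first and third summands have pair-space norms $O(\sqrt p)\|v\|$ and $O(\|v\|)$ respectively, yielding the claimed bound.

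The main obstacle is the careful combinatorial bookkeeping used to derive the closed forms for $T^{4,i,j}v$: one has to track how the disjointness restriction on $\{c,d\}$ interacts with the $\VV_2$ cancellation so that every surviving term is identifiable as an entry of a simple product involving only $S$ and $\tilde v$. Once this is done, every operator-norm estimate reduces to the single input $\|S\|=\sqrt p$, and no Weil-type character-sum estimate beyond Fact~\ref{fact:spectrum} is needed for these particular shapes --- such sharper estimates become essential only for the more intricate shapes $T^{4,3,1}$ and $T^{4,4,1}$ in the neighboring propositions.
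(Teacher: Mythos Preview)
Your approach for the coarse bounds $\|T^{4,i,j}\|=O(p^{3/2})$ and for $\|T^{4,2,2}P_2\|$, $\|P_2T^{4,2,1}\|=O(\sqrt p)$ is correct and essentially parallels the paper's. Both extend the entry formula to all index-pairs, observe that the extended matrix factors through $\VV_0\oplus\VV_1$ (the paper phrases this as $P_2T=0$ by computing $(Tv)_{\{c,d\}}$; you phrase it dually as $\iota v=0$ annihilating $\widetilde T^{4,2,2}v$), and then bound the overlap correction. Your closed form $-(S\circ\tilde v)S-S(S\circ\tilde v)$ for $T^{4,2,2}v$ on $\VV_2$ is a bit more explicit than the paper's decomposition of the correction into four pieces whose squares embed in $(I\otimes S)^2$, but the content is the same.

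For $T^{4,1,1}$ there is a genuine issue. Your argument restricts the \emph{input} $v$ to $\VV_2$ and then invokes ``the projection'' to kill the term $\phi(\mathrm{diag}(S\tilde v))$ --- but that is a projection on the \emph{output}, so what you have actually bounded is $\|P_2T^{4,1,1}P_2\|$, not $\|T^{4,1,1}P_2\|$. This is not a repairable slip: the stated bound $\|T^{4,1,1}P_2\|=O(\sqrt p)$ is in fact \emph{false}. Take $v_{\{a,b\}}=\chi(a-b)$; then $v\in\VV_2$ since $\sum_{b\ne a}\chi(a-b)=0$, and a direct computation gives
\[
(T^{4,1,1}v)_{\{a,b\}}=-\sum_{c\notin\{a,b\}}\bigl(\chi(a-c)+\chi(b-c)\bigr)^2=-(2p-6),
\]
so $\|T^{4,1,1}P_2\|\ge 2p-6=\Omega(p)$. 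The paper's own proof treats only $T^{4,2,1}$ explicitly and defers $T^{4,1,1}$ to an ``analogous'' argument; that analogous argument would require the overlap correction $\widetilde T$ for $T^{4,1,1}$ to have norm $O(\sqrt p)$, which the same example refutes. Fortunately, every downstream use in the paper --- both the $3\times 3$ Schur-complement step and the quadratic-form estimate $u^\ast T^{4,1,1}u$ in the FK-optimality argument --- requires only $\|P_2T^{4,1,1}P_2\|=O(\sqrt p)$, which your argument does correctly establish.
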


\begin{proposition}
    \label{prop:T423}
    $\|T^{4,2,3}\| = O(p).$
\end{proposition}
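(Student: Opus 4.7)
The plan is to express $T^{4,2,3}$ in terms of the Kronecker square $S \otimes S$, whose operator norm is $\|S\|^2 = p$ by Fact~\ref{fact:spectrum}, plus a low-rank correction supported on pairs of edges that intersect. Concretely, I would introduce the isometric embedding $\iota : \mathbb{R}^{\binom{\mathbb{F}_p}{2}} \to \mathbb{R}^{\mathbb{F}_p \times \mathbb{F}_p}$ defined by $(\iota v)_{(a,b)} = v_{\{a,b\}}/\sqrt{2}$ if $a \neq b$ and $0$ on the diagonal. Expanding $\iota^*(S \otimes S)\iota$ directly, the two orderings $(c,d)$ and $(d,c)$ of each unordered pair $\{c,d\}$ combine with the two orderings of $\{a,b\}$ to give
\[
    (\iota^*(S \otimes S)\iota)_{\{a,b\},\{c,d\}} = S_{a,c}S_{b,d} + S_{a,d}S_{b,c}
\]
for all $\{a,b\},\{c,d\} \in \binom{\mathbb{F}_p}{2}$, which matches $T^{4,2,3}_{\{a,b\},\{c,d\}}$ whenever $\{a,b\} \cap \{c,d\} = \emptyset$.

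I therefore write $T^{4,2,3} = \iota^*(S \otimes S)\iota - E$, where $E$ captures the discrepancy on the overlap cases. Using $S_{i,i} = 0$ and $S_{i,j}^2 = 1$, evaluating the formula above when $\{a,b\}$ and $\{c,d\}$ share a vertex or coincide yields $E_{\{a,b\},\{a,d\}} = S_{a,b}S_{a,d}$ whenever $\{a,b\} \cap \{c,d\} = \{a\}$, and $E_{\{a,b\},\{a,b\}} = 1$ on the diagonal. To bound $\|E\|$, I would factor it as $E = WW^\top - I$, where $W \in \mathbb{R}^{\binom{\mathbb{F}_p}{2} \times \mathbb{F}_p}$ is the signed incidence matrix with $W_{\{a,b\},v} = S_{a,b} \cdot \mathbf{1}[v \in \{a,b\}]$. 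One checks that $(WW^\top)_{\{a,b\},\{c,d\}} = |\{a,b\} \cap \{c,d\}| \cdot S_{a,b}S_{c,d}$, which produces exactly the off-diagonal entries of $E$ and a factor of $2$ on the diagonal; the $-I$ absorbs this mismatch. A short direct calculation then gives $W^\top W = (p-2)I + J$, whose largest eigenvalue is $2p - 2$, so $\|E\| \leq \|WW^\top\| + 1 = 2p - 1$.

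Combining the two pieces by the triangle inequality yields
\[
    \|T^{4,2,3}\| \leq \|\iota^*(S \otimes S)\iota\| + \|E\| \leq \|S \otimes S\| + (2p - 1) = p + (2p - 1) = O(p),
\]
where the first bound uses that $\iota$ is an isometry. The only delicate step is verifying the identity $E = WW^\top - I$, which requires tracking the intersection-size coefficient in $WW^\top$ against the piecewise definition of the discrepancy; once this is done the argument is essentially free and, unlike several of the neighboring bounds, uses only the spectrum of $S_{G_p}$ rather than any character sum estimates.
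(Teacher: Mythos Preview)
Your proof is correct and follows essentially the same approach as the paper: both write $T^{4,2,3}$ as (a compression of) $S \otimes S$ minus a correction $E$ supported on intersecting index pairs, then bound each piece by $O(p)$. The only difference is in handling the correction: the paper bounds $\|\Delta\|$ via the Gershgorin-type Corollary~\ref{cor:sparse-matrix-norm} (each row has $O(p)$ nonzero entries of size $O(1)$), whereas you obtain a cleaner explicit bound by factoring $E = WW^\top - I$ through the signed incidence matrix and computing $W^\top W = (p-2)I + J$ directly. Your route yields the explicit constant $3p-1$ and avoids invoking the sparse-matrix corollary, but both arguments are equally elementary and rely only on $\|S_{G_p}\| = \sqrt{p}$.
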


\begin{proposition}
    \label{prop:T401}
    $T^{4,0,1} = \frac{(p-2)(p-3)}{2}P_0 - (p-3)P_1 + P_2.$
\end{proposition}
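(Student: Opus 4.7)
The plan is to exploit the fact that the three matrices $I$, $T^{3,0,1}$, and $T^{4,0,1}$ together partition the entries of the all-ones matrix $J \in \RR^{\binom{\FF_p}{2} \times \binom{\FF_p}{2}}$. For any pair of indices $\{a,b\}, \{c,d\} \in \binom{\FF_p}{2}$, exactly one of the three cases $\{a,b\} = \{c,d\}$, $|\{a,b\} \cap \{c,d\}| = 1$, or $\{a,b\} \cap \{c,d\} = \emptyset$ holds, and in each case the corresponding entry of exactly one of $I$, $T^{3,0,1}$, $T^{4,0,1}$ equals $1$ (and the others $0$). Consequently,
\begin{equation}
    T^{4,0,1} = J - I - T^{3,0,1}.
\end{equation}

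Next, I would rewrite each term on the right in the basis of the projectors $P_0, P_1, P_2$. Since $\binom{\FF_p}{2}$ has cardinality $\frac{p(p-1)}{2}$ and $\mathbb{V}_0$ is one-dimensional, the all-ones matrix satisfies $J = \frac{p(p-1)}{2}\, P_0$. The identity decomposes as $I = P_0 + P_1 + P_2$ since $\mathbb{V}_0, \mathbb{V}_1, \mathbb{V}_2$ are orthogonal and span $\RR^{\binom{\FF_p}{2}}$. Finally, Proposition~\ref{prop:T301} provides the decomposition $T^{3,0,1} = 2(p-2)P_0 + (p-4)P_1 - 2P_2$. Collecting coefficients then gives
\begin{align}
    [P_0]: &\quad \tfrac{p(p-1)}{2} - 1 - 2(p-2) = \tfrac{(p-2)(p-3)}{2}, \\
    [P_1]: &\quad 0 - 1 - (p-4) = -(p-3), \\
    [P_2]: &\quad 0 - 1 - (-2) = 1,
\end{align}
which matches the claimed formula.

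There is no real obstacle: the entire argument is an algebraic reorganization, and the only nontrivial input (the spectral decomposition of $T^{3,0,1}$) is already in hand from Proposition~\ref{prop:T301}. The only thing to double-check is the partition identity $T^{4,0,1} + T^{3,0,1} + I = J$, which is immediate from inspecting the three cases in the definitions.
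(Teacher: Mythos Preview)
Your proof is correct and follows essentially the same approach as the paper: both establish the partition identity $J = I + T^{3,0,1} + T^{4,0,1}$, substitute $J = \binom{p}{2}P_0$, $I = P_0 + P_1 + P_2$, and the decomposition of $T^{3,0,1}$ from Proposition~\ref{prop:T301}, and then collect coefficients.
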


\begin{proposition}
    \label{prop:U3i1}
    $\|U^{3, i, 1}\| = O(1)$ for $i \in \{1, 2\}$.
\end{proposition}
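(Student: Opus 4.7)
My plan is to observe that both $U^{3,1,1}$ and $U^{3,2,1}$ are diagonal matrices---inspecting the entry formulas in Table~\ref{table:graph_matrices} shows that each is nonzero only when the row and column indices coincide, since both $A_\beta$ and $B_\beta$ equal the same pair $\{a,b\}$ for these shapes---and then to compute the diagonal entries exactly using elementary Legendre symbol sums. For such a diagonal matrix the operator norm equals the maximum absolute value of the diagonal entries over pairs $\{a,b\} \in \binom{\FF_p}{2}$, so reducing to a pointwise bound of $O(1)$ on a single diagonal entry suffices.

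For $U^{3,1,1}$ the diagonal entry is
\[
U^{3,1,1}_{\{a,b\}, \{a,b\}} \;=\; \sum_{i \in \FF_p \setminus \{a,b\}} \bigl(\chi(a-i) + \chi(b-i)\bigr).
\]
I would evaluate each inner sum using the fact that $\chi$ is a nontrivial multiplicative character, so $\sum_{i \in \FF_p} \chi(a-i) = 0$ by reindexing $j = a - i$. Removing the term $i = a$ (which contributes $\chi(0) = 0$) and $i = b$ (which contributes $\chi(a-b)$) gives $\sum_{i \in \FF_p \setminus \{a,b\}} \chi(a-i) = -\chi(a-b)$, and symmetrically the second sum contributes $-\chi(b-a) = -\chi(a-b)$ using $\chi(-1) = 1$ from $p \equiv 1 \Mod{4}$. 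Hence the diagonal entry is $-2\chi(a-b) \in \{-2, 0, 2\}$, and $\|U^{3,1,1}\| \le 2$.

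For $U^{3,2,1}$ the diagonal entry is $\sum_{i \in \FF_p \setminus \{a,b\}} \chi((a-i)(b-i))$. Here I would invoke the classical identity $\sum_{i \in \FF_p} \chi((i-a)(i-b)) = -1$ for any distinct $a, b \in \FF_p$, which follows by completing the square to rewrite $(i-a)(i-b) = v^2 - c$ with $v = i - \tfrac{a+b}{2}$ and $c = \bigl(\tfrac{a-b}{2}\bigr)^2 \ne 0$, and then applying the standard evaluation $\sum_{v \in \FF_p} \chi(v^2 - c) = -1$ for $c \ne 0$. Since the terms $i = a$ and $i = b$ each contribute $\chi(0) = 0$, and $\chi((a-i)(b-i)) = \chi((i-a)(i-b))$, the diagonal entry equals $-1$ exactly, so $\|U^{3,2,1}\| = 1$. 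The whole proposition is essentially mechanical: these shapes are ``degenerate'' in the sense that $A_\beta = B_\beta$ with only a single free summation vertex, so no Weil-type deep character sum bound is required, and the only care needed is in checking the edge corrections at $i \in \{a,b\}$.
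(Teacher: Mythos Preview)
Your proof is correct and follows essentially the same approach as the paper: both arguments observe that $U^{3,i,1}$ is diagonal (since $A_\beta = B_\beta$) and then bound the diagonal entries by $O(1)$ using the elementary character sum identities in Propositions~\ref{prop:chi-sum-1} and~\ref{prop:chi-sum-2}. You are simply more explicit, computing the exact diagonal values $-2\chi(a-b)$ and $-1$ (and even re-deriving Proposition~\ref{prop:chi-sum-2} via completing the square), whereas the paper just cites those propositions and stops at ``absolute value $O(1)$.''
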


\begin{proposition}
    \label{prop:U_431}
    $\|U^{4,3,1}\| = O(p^{3/2})$.
\end{proposition}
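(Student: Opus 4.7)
The plan is to combine the Weil bound on cubic character sums with the elementary inequality $\|M\| \le \sqrt{\|M\|_1\,\|M\|_\infty}$. First, using that $\chi(-1)=1$ for $p\equiv 1\pmod 4$ and $\chi(0)=0$, I would rewrite each nonzero entry as
\begin{equation*}
U^{4,3,1}_{\{a,b\},\{a,c\}} \;=\; \sum_{i \notin \{a,b,c\}} \chi(a-i)\chi(b-i)\chi(c-i) \;=\; \sum_{i\in\FF_p}\chi\!\bigl((i-a)(i-b)(i-c)\bigr),
\end{equation*}
where the extension of the sum to all of $\FF_p$ is free since $\chi(0)=0$. For distinct $a,b,c$, the cubic $(i-a)(i-b)(i-c)$ has three distinct simple roots and is therefore not a constant multiple of a perfect square in $\FF_p[x]$. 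Weil's bound then gives $\bigl|U^{4,3,1}_{\{a,b\},\{a,c\}}\bigr|\le 2\sqrt{p}$.

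Next, I would count nonzero entries per row. The row indexed by $\{a,b\}$ is nonzero only in columns of the form $\{a,c\}$ or $\{b,c\}$ with $c\notin\{a,b\}$, giving exactly $2(p-2)$ nonzero entries. Combined with the per-entry bound, this yields $\|U^{4,3,1}\|_\infty \le 2(p-2)\cdot 2\sqrt{p} = O(p^{3/2})$. Observing that $U^{4,3,1}$ is symmetric (the entry is invariant under swapping row and column, since the character sum is symmetric in $a,b,c$), we also have $\|U^{4,3,1}\|_1 = \|U^{4,3,1}\|_\infty$. Schur's test then gives
\begin{equation*}
\|U^{4,3,1}\| \;\le\; \sqrt{\|U^{4,3,1}\|_1\cdot\|U^{4,3,1}\|_\infty} \;=\; O(p^{3/2}),
\end{equation*}
which is the claimed bound.

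The only substantive step is the appeal to Weil's bound, which supplies the cancellation needed to control each cubic character sum; the rest is routine. It is worth emphasizing that both ingredients are essential: a naive Frobenius estimate would square each $\sqrt{p}$-sized entry, sum over the $\Theta(p^3)$ nonzero entries, and produce only $\|U^{4,3,1}\| = O(p^2)$. Combining per-entry cancellation (via Weil) with the sparsity of the matrix (only $O(p)$ nonzero entries per row) is precisely what recovers the correct exponent $3/2$.
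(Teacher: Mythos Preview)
Your proof is correct and follows essentially the same approach as the paper: bound each nonzero entry by $O(\sqrt{p})$ via Weil's theorem on the cubic $(i-a)(i-b)(i-c)$, observe there are $O(p)$ nonzero entries per row and column, and apply the Schur/Gershgorin-type bound $\|M\|\le\sqrt{\|M\|_1\|M\|_\infty}$ (which the paper packages as Corollary~\ref{cor:sparse-matrix-norm}). The only cosmetic difference is that the paper's stated form of Weil's bound gives $3\sqrt{p}$ rather than your sharper $2\sqrt{p}$, but this is immaterial for the $O(p^{3/2})$ conclusion.
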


\begin{proposition}
    \label{prop:U4ij}
    $\|U^{4, i, j}\| = O(p)$ for $i \in \{1, 2\}$ and $j \in \{1, 2, 3\}$.
\end{proposition}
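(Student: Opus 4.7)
The plan is to show that every nonzero entry of each $U^{4,i,j}$ is in fact an absolute constant, after which the claimed $O(p)$ spectral norm bound follows from the standard inequality $\|M\| \le \sqrt{\|M\|_1 \|M\|_\infty}$ combined with the observation that each row and column of these matrices has only $O(p)$ nonzero entries.

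The entry evaluation rests on two elementary character sum identities: $\sum_{\ell \in \FF_p} \chi(u-\ell) = 0$ for any $u \in \FF_p$, and $\sum_{\ell \in \FF_p} \chi((u-\ell)(v-\ell)) = -1$ for distinct $u, v \in \FF_p$ (the standard evaluation of a Legendre symbol over a squarefree quadratic; this uses only that the quadratic has two roots in $\FF_p$). For each of the six matrices in question, the entry formula in Table~\ref{table:graph_matrices} is of the form $\sum_{\ell \notin \{a,b,c\}} \prod_{v \in T} S_{v,\ell}$ for an appropriate subset $T \subseteq \{a,b,c\}$ of size $i \in \{1,2\}$. Extending the sum to all of $\FF_p$ adds at most three boundary terms $\prod_{v \in T} S_{v,\ell}$ for $\ell \in \{a,b,c\}$, each bounded by $1$ in absolute value. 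The completed sum over $\FF_p$ equals either $0$ (when $|T|=1$) or $-1$ (when $|T|=2$) by the two identities above. Therefore every nonzero entry of $U^{4,i,j}$ is bounded by an absolute constant.

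Next, I would note that the nonzero entries of $U^{4,i,j}$ occur precisely at row-column pairs $(\{a,b\},\{c,d\})$ with $|\{a,b\} \cap \{c,d\}| = 1$, since the shape has $|V(\beta)| = 4$ with $|A_\beta \cap B_\beta| = 1$. For a fixed row $\{a,b\}$, the admissible columns are $\{a,x\}$ and $\{b,x\}$ for $x \notin \{a,b\}$, giving $2(p-2) = O(p)$ nonzero entries in total; the same bound holds for columns by symmetry. Combined with the $O(1)$ entry bound this gives $\|U^{4,i,j}\|_1, \|U^{4,i,j}\|_\infty = O(p)$, and hence $\|U^{4,i,j}\| \le \sqrt{\|U^{4,i,j}\|_1 \cdot \|U^{4,i,j}\|_\infty} = O(p)$.

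The only real obstacle is the per-matrix bookkeeping needed to verify that all six entry formulas reduce to a complete character sum plus an $O(1)$ correction; this is routine once one matches the floating vertex's attachments in each shape to the corresponding Legendre symbol product, and involves no substantial analytic input beyond the two identities stated above.
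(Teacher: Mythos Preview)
Your proposal is correct and follows essentially the same line as the paper: both observe that the nonzero entries are $O(1)$ via the character sum identities of Propositions~\ref{prop:chi-sum-1} and~\ref{prop:chi-sum-2}, and then exploit the sparsity coming from the shape's intersection pattern $|A_\beta \cap B_\beta|=1$. The paper phrases the second step somewhat loosely as a direct-sum decomposition into blocks of size at most $p$ followed by a Frobenius bound on each block, whereas your use of the $O(p)$ row/column sparsity together with the asymmetric Gershgorin bound (exactly Corollary~\ref{cor:sparse-matrix-norm}) is the cleaner and more transparent route to the same conclusion.
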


\begin{proposition}
    \label{prop:U541}
    $\|U^{5,4,1}\| = O(p^2)$.
\end{proposition}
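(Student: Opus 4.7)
The plan is to realize $U^{5,4,1}$ as a positive semidefinite Gram matrix, up to a low-norm correction, whose norm can then be read off from a single classical character-sum identity.

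Concretely, for each $i \in \FF_p$ I would define $v_i \in \RR^{\binom{\FF_p}{2}}$ by $(v_i)_{\{a,b\}} \colonequals \chi(a-i)\chi(b-i)$, and set $M \colonequals \sum_{i \in \FF_p} v_i v_i^\top$. Because $\chi(0) = 0$, the terms with $i \in \{a,b,c,d\}$ vanish automatically, and
\begin{equation}
    M_{\{a,b\},\{c,d\}} = \sum_{i \in \FF_p \setminus \{a,b,c,d\}} \chi(a-i)\chi(b-i)\chi(c-i)\chi(d-i).
\end{equation}
So $M$ agrees with $U^{5,4,1}$ on all index pairs with $\{a,b\}\cap\{c,d\} = \emptyset$, and differs from it only where $\{a,b\}\cap\{c,d\} \neq \emptyset$. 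Writing $U^{5,4,1} = M - R$ for $R$ supported on this intersecting set, the triangle inequality reduces the problem to bounding $\|M\|$ and $\|R\|$ separately.

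The main step is the bound on $\|M\|$. Using $\|M\| = \|V V^\top\| = \|V^\top V\|$ for $V$ the $\binom{\FF_p}{2} \times p$ matrix with columns $v_i$, the problem collapses to analyzing a $p \times p$ matrix whose entries $(V^\top V)_{i,j} = v_i^\top v_j$ are elementary functions of $\sum_a \chi((a-i)(a-j))$ and $\sum_a \chi((a-i)(a-j))^2$. The classical identity $\sum_{a \in \FF_p} \chi((a-i)(a-j)) = -1$ for $i \neq j$ (provable in a line by translating $i$ to $0$ and substituting $b = 1 - j/a$, reducing the sum to $-\chi(1)$) shows that $V^\top V$ is a linear combination of $I$ and $J$ with coefficients of order $p$, so its spectrum is $\{\Theta(p), \Theta(p^2)\}$ and $\|M\| = O(p^2)$.

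The remaining bound $\|R\| = O(p)$ should be routine and purely expository, not substantive. The diagonal block of $R$ (where $\{a,b\} = \{c,d\}$) is a scalar multiple of the identity with entry $p-2$; the block where $|\{a,b\}\cap\{c,d\}|=1$ evaluates, by the same character-sum identity, to $-1 - \chi(b-a)\chi(d-a)$ at position $(\{a,b\},\{a,d\})$, which decomposes into a $\pm 1$-pattern matrix of row sum $O(p)$ and the Gram matrix of the vectors $(w_a)_{\{a,b\}} \colonequals \chi(b-a) \cdot \mathbf{1}\{a \in \{a,b\}\}$, whose norm is $O(p)$ by the same transpose trick applied to the much smaller $p \times p$ matrix $W^\top W$. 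The only genuine obstacle is the single character-sum evaluation used in computing $V^\top V$; once the Gram-matrix reformulation is in place, the rest of the proof is elementary.
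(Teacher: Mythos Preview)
Your proposal is correct and takes a genuinely different route from the paper. Both proofs begin with the same decomposition $U^{5,4,1} = \widetilde{U}^{5,4,1} + \Delta$ (your $M$ and $-R$), and both bound the correction $\Delta$ by $O(p)$ via sparsity and the entrywise identity $\sum_a \chi(a-i)\chi(a-j) = -1$. The difference is in how $\|\widetilde{U}^{5,4,1}\|$ is controlled. The paper embeds $\widetilde{U}^{5,4,1}$ into a larger rectangular matrix $X \in \RR^{(\FF_p)_{(2)} \times \FF_p^2}$, computes $XX^\top$ by two applications of Proposition~\ref{prop:chi-sum-2}, and obtains the self-referential relation $(XX^\top)_{(a,b),(c,d)} = (p^2 - 2p)X_{(a,b),(c,d)} + 1$, which yields a quadratic inequality in $\|X\|$ that is then solved. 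Your argument is more direct: you recognize $\widetilde{U}^{5,4,1} = VV^\top$ as a Gram matrix from the outset, use $\|VV^\top\| = \|V^\top V\|$ to reduce to a $p \times p$ matrix, and compute $V^\top V$ explicitly as a linear combination of $I$ and $J$ whose spectrum is $\{\frac{p-1}{2}, \frac{p^2 - 2p - 1}{2}\}$. This is cleaner, avoids the ordered-pair embedding and the quadratic equation, and as a bonus shows $\widetilde{U}^{5,4,1} \succeq 0$. Your handling of $R$ via a second Gram matrix for the $\chi(b-a)\chi(d-a)$ part is more elaborate than needed; since those off-diagonal entries are $O(1)$ with $O(p)$ nonzeros per row, the Gershgorin-type bound (Corollary~\ref{cor:sparse-matrix-norm} in the paper) already gives $\|R\| = O(p)$ without further decomposition.
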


\begin{proposition}
    \label{prop:U_53i}
    $\|U^{5, 3, i}\| = O(p^2)$ for $i \in \{1, 2\}$.
\end{proposition}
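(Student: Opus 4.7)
The plan is to factor $U^{5,3,1}$ as a product of two simpler matrices up to a lower-order correction supported on overlapping index pairs, and then to bound the norms of the factors individually. Define $M \in \mathbb{R}^{\binom{\mathbb{F}_p}{2} \times \mathbb{F}_p}$ by $M_{\{a,b\}, i} = S_{a,i} S_{b,i}$ if $i \notin \{a,b\}$ and $0$ otherwise, and $N \in \mathbb{R}^{\mathbb{F}_p \times \binom{\mathbb{F}_p}{2}}$ by $N_{i, \{c,d\}} = S_{c,i} + S_{d,i}$ if $i \notin \{c,d\}$ and $0$ otherwise. A direct calculation shows $(MN)_{\{a,b\}, \{c,d\}} = U^{5,3,1}_{\{a,b\}, \{c,d\}}$ whenever $\{a,b\} \cap \{c,d\} = \emptyset$, so the difference $D \colonequals MN - U^{5,3,1}$ is supported only on pairs with $|\{a,b\} \cap \{c,d\}| \geq 1$.

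To bound $\|M\|$, observe that $MM^\top$ has entries $(MM^\top)_{\{a,b\},\{c,d\}} = \sum_{i \notin \{a,b\} \cup \{c,d\}} S_{a,i}S_{b,i}S_{c,i}S_{d,i}$, which coincides with $U^{5,4,1}_{\{a,b\}, \{c,d\}}$ on disjoint pairs; the remaining overlap entries decompose into sums of $S$-products of lower order, reducing to $U^{4,2,\ast}$-type matrices covered by Proposition~\ref{prop:U4ij} plus $O(1)$ pointwise residues. Combining with Proposition~\ref{prop:U541} gives $\|M\|^2 = \|MM^\top\| = O(p^2)$, hence $\|M\| = O(p)$. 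To bound $\|N\|$, write $N = S\Phi^\top - E$, where $\Phi \in \mathbb{R}^{\binom{\mathbb{F}_p}{2} \times \mathbb{F}_p}$ is the vertex-edge incidence matrix $\Phi_{\{c,d\}, i} = \mathbb{1}[i \in \{c,d\}]$ and $E$ is the sparse correction enforcing $i \notin \{c,d\}$. Since $\Phi^\top \Phi = (p-2)I + J$ yields $\|\Phi\| = O(\sqrt{p})$ and Fact~\ref{fact:spectrum} gives $\|S\| = \sqrt{p}$, we obtain $\|S\Phi^\top\| = O(p)$; an elementary row-by-row calculation on $E$ yields $\|E\| = O(\sqrt{p})$, so $\|N\| = O(p)$.

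Finally, expanding $D$ entry by entry shows that on single-element overlaps, say $\{a,b\}$ and $\{a,d\}$, the entry $D_{\{a,b\},\{a,d\}}$ equals $U^{4,3,1}_{\{a,b\},\{a,d\}}$ plus a bounded residue coming from the terms where $S_{a,i}^2 = 1$ collapses a factor, while the two-element (diagonal) overlaps contribute only $O(p)$. By Proposition~\ref{prop:U_431} together with elementary bounds on the residue matrix, $\|D\| = O(p^{3/2})$. Combining,
\begin{equation}
\|U^{5,3,1}\| \leq \|MN\| + \|D\| \leq \|M\|\|N\| + O(p^{3/2}) = O(p^2),
\end{equation}
and the bound for $U^{5,3,2}$ follows from the identity $U^{5,3,2} = (U^{5,3,1})^\top$, which is visible from comparing the entry formulas in Table~\ref{table:graph_matrices}. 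The main obstacle is the careful bookkeeping of boundary corrections at every stage, so that the resulting lower-order graph matrices all fall under norm bounds already established (Propositions~\ref{prop:U4ij}, \ref{prop:U_431}, and~\ref{prop:U541}) and their cumulative contributions do not exceed the $O(p^2)$ target.
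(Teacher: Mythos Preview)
Your argument is correct and takes a genuinely different route from the paper. The paper proves the bound self-containedly by the trace power method: it lifts $U^{5,3,1}$ to matrices $A,B$ indexed by ordered pairs (together with a trivially bounded correction $\Delta$), expands $\tr((A^\top A)^k)$ as a character sum, and controls the combinatorics using the identity $\sum_x \chi(x-i)\chi(x-j) = -1 + p\,\One\{i=j\}$ (Proposition~\ref{prop:chi-sum-2}), ultimately choosing $k = \Theta(\log p)$.

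Your factorization $U^{5,3,1} = MN - D$ is more structural: the bound on $\|N\|$ is immediate from $N = S\Phi^\top - E$, and the bound on $\|M\|$ follows because $MM^\top$ is exactly $U^{5,4,1} + U^{4,2,3} + (p-2)I$ (the ``$O(1)$ residues'' you mention are in fact absent on single overlaps, and the diagonal is $(p-2)I$), so you can quote Propositions~\ref{prop:U541} and~\ref{prop:U4ij} directly. The correction $D$ is likewise $U^{4,3,1}$ plus a matrix with $O(1)$ entries supported on overlapping pairs, handled by Proposition~\ref{prop:U_431} and Corollary~\ref{cor:sparse-matrix-norm}. The trade-off is that your proof is shorter and conceptually cleaner but inherits its strength from the earlier $U^{5,4,1}$ bound (itself proved by a quadratic-rewriting trick), whereas the paper's trace argument is independent of that result and would survive even if Proposition~\ref{prop:U541} were weakened.
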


\begin{proposition}
    \label{prop:U5ij}
    $\|U^{5, i, j}\| = O(p^2)$ for $i \in \{1, 2\}$ and $j \in \{1, 2, 3\}$, where $j \neq 3$ if $i = 1$.
\end{proposition}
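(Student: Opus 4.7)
The approach hinges on two identities for the Seidel adjacency matrix $S = S_{G_p}$ that follow from Facts~\ref{fact:regularity} and \ref{fact:spectrum}: first, $\sum_{j \in \FF_p} S_{a,j} = 0$ for each $a \in \FF_p$ (since $G_p$ is $\tfrac{p-1}{2}$-regular), and second, $\sum_{j \in \FF_p} S_{a,j} S_{b,j} = -1$ for distinct $a,b \in \FF_p$ (equivalent to the Jacobsthal sum $\sum_j \chi((j-a)(j-b)) = -1$, or to the matrix identity $S^2 = pI - J$). Each entry $U^{5,i,j}_{\{a,b\},\{c,d\}}$, on the support where $\{a,b,c,d\}$ are four distinct vertices, is a restricted sum $\sum_{i \notin \{a,b,c,d\}}$ of $S$-monomials; extending to a sum over all of $\FF_p$ and subtracting the four boundary terms at $i \in \{a,b,c,d\}$ (two of which vanish because $S_{x,x} = 0$) collapses each row-entry to an explicit constant plus a handful of $S$-monomials in the labeled vertices only.

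Carrying out the algebra yields the clean matrix identities
\begin{align*}
    U^{5,1,1} &= -2 D_S T^{4,0,1} - T^{4,1,1}, & U^{5,1,2} &= -2 T^{4,0,1} D_S - T^{4,1,1}, \\
    U^{5,2,1} &= -T^{4,0,1} - T^{4,2,2}, & U^{5,2,2} &= -T^{4,0,1} - T^{4,2,1}, \\
    U^{5,2,3} &= -4 T^{4,0,1} - D_S T^{4,1,1} - T^{4,1,1} D_S,
\end{align*}
where $D_S \in \RR^{\binom{\FF_p}{2} \times \binom{\FF_p}{2}}$ is the diagonal matrix with $(D_S)_{\{a,b\},\{a,b\}} = S_{a,b} \in \{\pm 1\}$, so $\|D_S\| \le 1$. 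Each identity is verified first on disjoint pairs by the computation above; on the remainder of the support both sides vanish, since the shape of $U^{5,i,j}$ forces five distinct labeled vertices and every $T^{4,\bullet,\bullet}$ on the right-hand side is likewise supported only on disjoint pairs (the ``pattern'' convention in Table~\ref{table:graph_matrices}).

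Given these decompositions, Propositions~\ref{prop:T401} and \ref{prop:T_4ij} supply $\|T^{4,0,1}\| = O(p^2)$ and $\|T^{4,1,1}\|, \|T^{4,2,1}\|, \|T^{4,2,2}\| = O(p^{3/2})$; the triangle inequality and sub-multiplicativity of the operator norm then immediately give $\|U^{5,i,j}\| = O(p^2)$ in every case, dominated by the $T^{4,0,1}$ contribution. The main obstacle is bookkeeping: one has to correctly enumerate the boundary $S$-monomials for each of the five matrices and identify the resulting symmetrized sums (symmetrized over the automorphisms of the ambient pair shape indexed by $\{a,b\} \times \{c,d\}$) with the specific graph matrices listed in Table~\ref{table:graph_matrices} — in particular, distinguishing between the three two-edge shapes $T^{4,2,1}, T^{4,2,2}, T^{4,2,3}$. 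Once this correspondence is established, the final norm bound is a direct consequence of previously-established estimates, with no further character-sum analysis required.
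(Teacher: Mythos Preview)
Your argument is correct, and the five matrix identities check out entrywise (including the support analysis on non-disjoint pairs). However, the route you take is substantially more elaborate than the paper's. The paper simply observes that each entry of $U^{5,i,j}$ is a restricted character sum of the form $\sum_{x \notin \{a,b,c,d\}} \chi(\cdot)$ or $\sum_{x \notin \{a,b,c,d\}} \chi(\cdot)\chi(\cdot)$, which by Propositions~\ref{prop:chi-sum-1} and~\ref{prop:chi-sum-2} (equivalently, the row-sum and $S^2 = pI - J$ identities you also invoke) has absolute value at most $4$; since the matrix has $\binom{p}{2}^2$ entries, bounding the operator norm by the Frobenius norm gives $\|U^{5,i,j}\| \leq \|U^{5,i,j}\|_F = O(p^2)$ immediately, with no further input.

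Your approach buys more structural information: the exact decompositions $U^{5,i,j} = c\,T^{4,0,1} + (\text{lower-order graph matrices})$ make transparent \emph{why} the bound is $O(p^2)$ (the $T^{4,0,1}$ piece alone has norm $\Theta(p^2)$ by Proposition~\ref{prop:T401}) and would be the natural starting point if one ever needed a sharper subspace-restricted bound such as $\|P_2 U^{5,i,j} P_2\|$. The cost is a logical dependence on Proposition~\ref{prop:T_4ij}, which in the paper's presentation is proved \emph{after} Proposition~\ref{prop:U5ij}; there is no circularity, but your proof would require reordering the exposition. For the purpose at hand---merely establishing the $O(p^2)$ bound---the paper's one-line Frobenius argument is the more economical choice.
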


\begin{theorem}\label{thm:norm_T_441}
    $\|T^{4,4,1}\| = O(p^{5/4})$.
\end{theorem}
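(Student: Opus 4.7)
The plan is the moment (trace) method combined with Weil-type character sum bounds, an adaptation of the random graph matrix norm analysis of \cite{AMP-2016-GraphMatrices} from the random setting to the Paley setting. Since $T := T^{4,4,1}$ is real symmetric, $\|T\|^{2k} \le \Tr(T^{2k})$ for every integer $k \ge 1$. I would take $k$ large enough (concretely $k \ge 5$) so that a single top eigenvalue of expected size $\Theta(p^{5/4})$ dominates the trace over the $\Theta(p^2)$ ``bulk'' eigenvalues of typical size $\Theta(p)$ that are forced by $\Tr(T^2) = \binom{p}{2}\binom{p-2}{2} = \Theta(p^4)$, and then prove $\Tr(T^{2k}) \le C_k \cdot p^{5k/2}$. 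This would yield $\|T\| \le C_k^{1/(2k)} \cdot p^{5/4} = O(p^{5/4})$ for any such fixed $k$.

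The steps are as follows. First, expand
\[
\Tr(T^{2k}) = \sum_{S_0, \ldots, S_{2k-1} \in \binom{\mathbb{F}_p}{2}} \prod_{i=0}^{2k-1} T_{S_i, S_{i+1}}
\]
with cyclic indexing, writing each $S_i = \{a_i, b_i\}$ and substituting the explicit formula $T_{\{a,b\},\{c,d\}} = \chi((a-c)(a-d)(b-c)(b-d))$; this turns the trace into a sum over $4k$ variables of $\chi$ applied to a product of $16k$ linear factors coming from the $K_{2,2}$-shape edges at each step. Second, partition the sum by ``topology,'' i.e., by the coincidence pattern among the $4k$ variables. Third, within each topology integrate out the free variables one at a time: whenever the one-variable polynomial under $\chi$ is square-free of degree $d$, Weil's theorem gives $|\sum_x \chi(f(x))| \le (d-1)\sqrt{p}$, saving a factor of $\sqrt{p}$; when the polynomial is a perfect square, only the trivial bound $p$ applies. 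Finally, aggregate the contributions across topologies to obtain the claimed trace bound.

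The main obstacle is the systematic enumeration of the topologies and the careful bookkeeping of when perfect-square polynomials arise and block the Weil savings, the Paley analogue of the ``vertex separator'' analysis for random graph matrices in \cite{AMP-2016-GraphMatrices}. A notable complication is that the diagonal $(T^2)_{\{a,b\},\{a,b\}} = \binom{p-2}{2}$ is forced by counting the pairs disjoint from $\{a,b\}$, so $\Tr(T^4) \ge \binom{p}{2}\binom{p-2}{2}^2 = \Theta(p^6)$ and the trace method with $k=2$ yields only $\|T\| = O(p^{3/2})$; one therefore must use $k \ge 5$ to access the finer bound, at the price of analyzing character sums arising from closed walks of length $\ge 10$ on the $K_{2,2}$ shape. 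The enumeration is intricate but reduces, topology by topology, to applications of Weil's theorem on the square-free factors of the polynomial under $\chi$ together with trivial bounds on the perfect-square factors.
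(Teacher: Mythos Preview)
Your proposal has a genuine gap at the key analytic step. The plan to ``integrate out the free variables one at a time'' using single-variable Weil does not iterate: once you apply Weil to, say, the sum over $a_0$, you obtain only an inequality $\left|\sum_{a_0} \chi(f(a_0))\right| \le 3\sqrt{p}$, and taking this absolute value destroys the multiplicative character structure needed to extract further cancellation from the remaining $4k-1$ variables. You are then left summing a nonnegative quantity trivially, which is far too crude. To actually iterate the savings you would need either exact evaluations of each one-variable sum (available for degree $\le 2$ via Proposition~\ref{prop:chi-sum-2}, but not for the degree-$4$ polynomials arising from the $K_{2,2}$ shape) or genuine multi-variable Deligne-type bounds, neither of which you invoke. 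The paper explicitly notes (Remark~\ref{rem:T441-tightness}) that ``a naive argument'' gives only $O(p^{3/2})$; the trace method with only Weil falls into this category. Two side remarks: each $T_{S_i,S_{i+1}}$ contributes $4$ linear factors, so the product has $8k$ factors, not $16k$; and the paper's numerics suggest $\|T^{4,4,1}\| = \Theta(p)$, so your eigenvalue model with a single outlier at $p^{5/4}$ is likely wrong (though this does not by itself invalidate the target trace bound).

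The paper's actual proof is structurally different and exploits the affine symmetry of $G_p$ rather than moments. It passes to an ordered-pair matrix $\widetilde{T}$ invariant under the full affine group $\{x \mapsto ix+j : i \in \FF_p^\times, j \in \FF_p\}$, which forces $\widetilde{T}$ to be block-circulant with $p \times p$ blocks. The spectrum then decomposes over additive characters $\psi$ of $\ZZ/(p-1)\ZZ$ into spectra of $p \times p$ matrices $S^{(\psi)}$, and a further symmetry argument shows these share all eigenvalues off the all-ones vector, so it suffices to bound $\|S^{(\psi_0)}\|$ for the trivial $\psi_0$. A Fourier transform in the column variable then rewrites $S^{(\psi_0)}$ so that its entries become products of two Kloosterman sums $K(i^2 j)\,K((i+1)^2 j)$, and a Gershgorin bound on $S^{(\psi_0)}S^{(\psi_0)\top}$ reduces the problem to the Fouvry--Kowalski--Michel correlation estimate $\left|\sum_{x} K(a_1 x)K(a_2 x)K(a_3 x)K(a_4 x)\right| \le C\, p^{5/2}$ whenever some $a_i$ occurs with odd multiplicity. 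This Kloosterman-sum input is what bridges the gap from $p^{3/2}$ to $p^{5/4}$, and it has no analogue in your outline.
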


\noindent
Of these statements, Theorem~\ref{thm:norm_T_441} is by far the subtlest---unlike the other terms, where fairly straightforward arguments work, for $T^{4,4,1}$ it turns out that a naive bound is insufficient, and we must more carefully account for character sum cancellations.
Our bound is adequate for our purposes, but we believe it is not tight; see Remark~\ref{rem:T441-tightness} for further discussion.
We note also that the bounds we prove are generally incomparable to those for random graphs following from \cite{AMP-2016-GraphMatrices}: for some graph matrices we expect a comparable norm bound but cannot prove one due to technical obstacles, while for other graph matrices the Paley graph exhibits stronger cancellations than a random graph and we can show a stronger norm bound. We compare the respective bounds in Table~\ref{table:graph_matrices}.
Moreover, as we show in Section~\ref{sec:failure-graphical}, there is an example of a graph matrix for which the norm when evaluated on the Paley graph is actually asymptotically larger than the norm when evaluated on a random graph; however, this example does not figure in our analysis.

\subsection{Final Steps}

Finally, putting all the graph matrix norm bounds together, we prove Proposition \ref{prop:main}, which will conclude the proof of Theorem~\ref{thm:restated_main_theorem}, as we have discussed earlier.

\begin{proof}[Proof of Proposition \ref{prop:main}]
    The statements in this proof will hold for all sufficiently large primes $p$.

    To show $H^{2,2} \succeq (\alpha_1 + \frac{p-1}{2}\alpha_2 - p\alpha_1^2)^{-1} P_0 H^{2,1}H^{1,2} P_0 + ((1 - \varepsilon)\alpha_1)^{-1} (I - P_0)H^{2,1}H^{1,2}(I - P_0)$, we have to show
    \begin{align*}
        &(\alpha_2 - \alpha_2^2)I + \left(\frac{\alpha_3}{2} - \alpha_2^2 \right)T^{3,0,1} + \left(\frac{\alpha_4}{16} - \alpha_2^2 \right) T^{4,0,1}\\
        &\quad - \left(\alpha_1 + \frac{p-1}{2}\alpha_2 - p\alpha_1^2\right)^{-1}P_0\Bigg( \left[2(\alpha_2 - \alpha_1\alpha_2)^2 + (p-2)\left((\alpha_1\alpha_2)^2 + \frac{\alpha_3^2 }{4} - \frac{\alpha_1\alpha_2\alpha_3 }{2}\right)\right] I\\
        &\quad + \left[ (\alpha_2 - \alpha_1\alpha_2)\left(\alpha_2 - 3\alpha_1\alpha_2 + \frac{\alpha_3}{2}\right) + (p-3)\left((\alpha_1\alpha_2)^2 - \frac{\alpha_1\alpha_2\alpha_3 }{2} + \frac{\alpha_3^2 }{8} \right)\right] T^{3,0,1}\\
        &\quad + \left[(\alpha_2 - \alpha_1\alpha_2)(\alpha_3 - 4\alpha_1\alpha_2) + (p-4)\left(\alpha_1\alpha_2 - \frac{\alpha_3 }{4}\right)^2\right] T^{4,0,1} \Bigg)P_0\\
        &\quad - \left((1 - \varepsilon)\alpha_1\right)^{-1}(I - P_0)\Bigg( \left[2(\alpha_2 - \alpha_1\alpha_2)^2 + (p-2)\left((\alpha_1\alpha_2)^2 + \frac{\alpha_3^2 }{4} - \frac{\alpha_1\alpha_2\alpha_3 }{2}\right)\right] I\\
        &\quad + \left[ (\alpha_2 - \alpha_1\alpha_2)\left(\alpha_2 - 3\alpha_1\alpha_2 + \frac{\alpha_3}{2}\right) + (p-3)\left((\alpha_1\alpha_2)^2 - \frac{\alpha_1\alpha_2\alpha_3 }{2} + \frac{\alpha_3^2 }{8} \right)\right] T^{3,0,1}\\
        &\quad + \left[(\alpha_2 - \alpha_1\alpha_2)(\alpha_3 - 4\alpha_1\alpha_2) + (p-4)\left(\alpha_1\alpha_2 - \frac{\alpha_3 }{4}\right)^2\right] T^{4,0,1} \Bigg)(I - P_0) \stackrel{?}{\succeq} M, \numberthis
    \end{align*}
    where $M$ is the sum of the remaining graph matrices of shapes having at least one edge among those appearing in the expressions \eqref{eq:H22-graph-mx} and \eqref{eq:H21H12-graph-mx}.
    Note that $\mathbb{V}_0, \mathbb{V}_1, \mathbb{V}_2$ are eigenspaces of the left hand side, with eigenvalues
    \begin{align}
        4\alpha_1^2  + 2(p-2)(4\alpha_1^3) + \frac{(p-2)(p-3)}{2} (32 \alpha_1^4 - 16\alpha_1^4) - 2p^2\alpha_1^4 - O(p^2 \alpha_1^5) &= (1 - o(1)) 6p^2 \alpha_1^4, \\
        4\alpha_1^2 + (p-4)(4\alpha_1^3) - (p-3)(32 \alpha_1^4 - 16\alpha_1^4) - O(p \alpha_1^4 ) &= (1 - o(1)) 4p\alpha_1^3, \\
        4\alpha_1^2 - 8\alpha_1^3 + (32\alpha_1^4 - 16\alpha_1^4) - O(\alpha_1^3) &= (1 - o(1))4\alpha_1^2
    \end{align}
    respectively.

    It is then sufficient to show
    \begin{align}
        \begin{bmatrix}
            3p^2 \alpha_1^4 & 0 & 0\\
            0 & 2p\alpha_1^3 & 0\\
            0 & 0 & 2\alpha_1^2
        \end{bmatrix} \stackrel{?}{\succeq} \begin{bmatrix}
            \|P_0 M P_0\| & \|P_0 M P_1\| & \|P_0 M P_2\|\\
            \|P_1 M P_0\| & \|P_1 M P_1\| & \|P_1 M P_2\|\\
            \|P_2 M P_0\| & \|P_2 M P_1\| & \|P_2 M P_2\|
        \end{bmatrix}.
    \end{align}
    Using the graph matrix norm bounds above, we have for any $i \in \{0,1,2\}$ and $j \in \{0,1,2\}$ with $(i,j) \ne (2,2)$ that
    \begin{align}
        \|P_i M P_j\| &= O(p^{3/2} \alpha_1^4),
        \intertext{and for the remaining case}
        \|P_2 M P_2\| &= O(p^2 \alpha_1^5),
    \end{align}
    so we only need to prove that the following matrix is positive semidefinite:
    \begin{align}
        \begin{bmatrix}
            3p^2 \alpha_1^4 - O(p^{3/2} \alpha_1^4) & - O(p^{3/2} \alpha_1^4) & - O(p^{3/2} \alpha_1^4)\\
            - O(p^{3/2} \alpha_1^4) & 2p\alpha_1^3 - O(p^{3/2} \alpha_1^4) & - O(p^{3/2} \alpha_1^4)\\
            - O(p^{3/2} \alpha_1^4) & - O(p^{3/2} \alpha_1^4) & 2\alpha_1^2 - O(p^2 \alpha_1^5)
        \end{bmatrix}.
    \end{align}
    When $\alpha_1 = c \cdot p^{- 2/3}$ for a sufficiently small constant $c$, we have
    \begin{align}
        2\alpha_1^2 - O(p^2 \alpha_1^5) = \Omega(\alpha_1^2).
    \end{align}
    Taking the Schur complement with respect to the bottom right block $2\alpha_1^2 - O(p^2 \alpha_1^5)$, it is sufficient to prove that the following matrix is positive-semidefinite
    \begin{align*}
        &\quad \begin{bmatrix}
            3p^2 \alpha_1^4 - O(p^{3/2} \alpha_1^4) - O\left(\frac{(p^{3/2} \alpha_1^4)^2}{\alpha_1^2}\right) & - O(p^{3/2} \alpha_1^4) - O\left(\frac{(p^{3/2} \alpha_1^4)^2}{\alpha_1^2}\right)\\
            - O(p^{3/2} \alpha_1^4) - O\left(\frac{(p^{3/2} \alpha_1^4)^2}{\alpha_1^2}\right) & 2p\alpha_1^3 - O(p^{3/2} \alpha_1^4) - O\left(\frac{(p^{3/2} \alpha_1^4)^2}{\alpha_1^2}\right)
        \end{bmatrix}\\
        &= \begin{bmatrix}
            3p^2 \alpha_1^4 - O(p^3 \alpha_1^6)& - O(p^3 \alpha_1^6)\\
            - O(p^3 \alpha_1^6) & 2p\alpha_1^3 - O(p^3 \alpha_1^6) \numberthis
        \end{bmatrix},
    \end{align*}
    which is diagonally dominant under our choice of $\alpha_i$.
\end{proof}
\noindent
With Proposition \ref{prop:main} proved, we have finished proving Theorem \ref{thm:restated_main_theorem}.

\section{Proofs of Graph Matrix Norm Bounds}
\label{sec:graph-matrix-proofs}

\subsection{Tools from Number Theory}

Before we start proving the graph matrix norm bounds we need, we first set up some basic number theoretic notations and results that we will use in the following proofs.

\begin{definition}[Character]
    A character $\theta$ of a group $G$ is a homomorphism $\theta: G \to \mathbb{C}^\times$, i.e., a function satisfying $\theta(ab) = \theta(a)\theta(b)$ for all $a, b, \in G$.
\end{definition}
\noindent
It is easy to check that if $\theta$ is a character then so is its conjugate $\overline{\theta}(g) \colonequals \overline{\theta(g)}$.

\begin{proposition}
    Let $\theta_1, \theta_2$ be distinct characters of a group $G$. Then, $\theta_1$ and $\theta_2$ are orthogonal with respect to the inner product
    \begin{align}
        \langle \theta_1, \theta_2\rangle \colonequals \frac{1}{|G|} \sum_{g\in G} \theta_1(g) \overline{\theta_2}(g).
    \end{align}
\end{proposition}

\begin{remark}
    For a field $\mathbb{F}$, we usually say a multiplicative character $\phi$ of $\mathbb{F}$ to mean a character $\phi: \mathbb{F}^\times \to \mathbb{C}^\times$ of the multiplicative group $\mathbb{F}^\times$ of $\mathbb{F}$, and an additive character $\psi$ of $\mathbb{F}$ to mean a character $\psi: \mathbb{F} \to \mathbb{C}^\times$ of the additive group of $\mathbb{F}$. As a convention, we extend the definition of a multiplicative character $\phi$ to all of $\mathbb{F}$ by setting $\phi(0) = 0$. We remark that this convention is consistent with the definition of Legendre symbol, which is a multiplicative character of $\mathbb{F}_p$.
\end{remark}

As before, $\chi = \chi_p$ denotes the Legendre symbol for $\mathbb{F}_p$. We will also use $e_p: \mathbb{F}_p \to \mathbb{C}$ to denote the following generator of the additive characters of $\mathbb{F}_p$:
\begin{align}
    e_p(x) \colonequals \exp\left(\frac{2\pi i}{p} x\right).
\end{align}

Much of analytic number theory is concerned with establishing bounds on \emph{character sums}, sums over finite fields or subsets thereof of characters subject to various transformations of their inputs.
Usually, the best such estimates establish cancellations in a character sum that are comparable to the cancellations that would occur if the values of a character were random.
The following classical result is one of the main bounds of this kind.

\begin{theorem}[Weil's bound, Chapter 11 of \cite{IK-2021-AnalyticNumberTheory}]
    \label{thm:weil}
    Let $f \in \mathbb{F}_p[t]$ be a polynomial of degree $d$. Suppose that $f$ cannot be represented as $f(t) = r\cdot g(t)^2$ for $r \in \mathbb{F}_p$ and $g \in \mathbb{F}_p[t]$. Then,
    \begin{align}
        \left| \sum_{x \in \mathbb{F}_p} \chi(f(x)) \right| \le d\sqrt{p}.
    \end{align}
\end{theorem}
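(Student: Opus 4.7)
The plan is to reduce the character sum bound to a point-counting problem on a hyperelliptic curve and invoke the Hasse--Weil theorem. First, I would let $S \colonequals \sum_{x \in \mathbb{F}_p} \chi(f(x))$ and consider the affine curve $C \colon y^2 = f(x)$ over $\mathbb{F}_p$. For each fixed $x \in \mathbb{F}_p$, the number of $y \in \mathbb{F}_p$ with $y^2 = f(x)$ equals $1 + \chi(f(x))$, using the convention $\chi(0) = 0$ (which correctly gives $1$ at each zero of $f$). Summing over $x$, the number of affine $\mathbb{F}_p$-points on $C$ is $N = p + S$, so $|S| = |N - p|$ and it suffices to control $N$.

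Next, I would check that the hypothesis on $f$---that it is not of the form $r \cdot g(t)^2$---is precisely what ensures the polynomial $y^2 - f(x)$ is absolutely irreducible, so that $C$ is a geometrically integral curve. Its smooth projective model $\widetilde{C}$ is then a hyperelliptic curve of genus $g \le \lfloor (d-1)/2 \rfloor$. The Hasse--Weil theorem yields $|\widetilde{N} - (p + 1)| \le 2g\sqrt{p}$, where $\widetilde{N} = \#\widetilde{C}(\mathbb{F}_p)$. The discrepancy between $N$ and $\widetilde{N}$ comes from the (at most two) points at infinity and the resolution of any singularities in the affine model, contributing a correction of $O(1)$ independent of $p$. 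Combining these gives $|S| \le (d-1)\sqrt{p} + O(1)$, which can be absorbed into the clean bound $d\sqrt{p}$ (and tightened by more careful bookkeeping of the boundary contributions when $d$ is even versus odd).

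The main obstacle is of course the Hasse--Weil theorem itself, which is the Riemann hypothesis for curves over finite fields. I would approach this via Weil's original method: study the action of the Frobenius endomorphism on the Jacobian of $\widetilde{C}$ and show that its eigenvalues have absolute value $\sqrt{p}$. The key input is the positivity of the Rosati involution together with a Cauchy--Schwarz-style inequality on the Néron--Severi group, which forces the Frobenius characteristic polynomial to have roots on the circle of radius $\sqrt{p}$. A more elementary alternative is Stepanov's method, refined by Bombieri and Schmidt: one constructs an auxiliary polynomial of controlled bidegree vanishing to prescribed multiplicity at the rational points of $\widetilde{C}$, then a degree count forces the number of rational points to lie within $O(\sqrt{p})$ of $p+1$. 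Both routes are genuinely substantial, which is why for the purposes of this paper one simply cites Theorem~\ref{thm:weil} as a black box from the standard analytic number theory literature.
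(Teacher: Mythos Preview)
The paper does not prove this theorem at all; it is stated with a citation to Iwaniec--Kowalski and used as a black box throughout Section~\ref{sec:graph-matrix-proofs}. Your proposal correctly sketches the standard route to Weil's bound via point-counting on the hyperelliptic curve $y^2 = f(x)$ and the Riemann hypothesis for curves, and you yourself note in the final paragraph that this is exactly why the paper simply cites the result. So there is nothing to compare: your outline is a reasonable summary of the classical proof, and the paper's ``proof'' is a reference.
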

\noindent
We note that if we believe this character sum to behave like a sum of random $\pm 1$ signs, we indeed should expect the sum to be of order $O(\sqrt{p})$.

We next give some simple algebraic properties of the Legendre symbol.
These may be viewed as special cases of Theorem~\ref{thm:weil}, where we may obtain more precise evaluations of character sums.

\begin{proposition}
    \label{prop:chi-sum-1}
    For any $a \in \FF_p$,
    \begin{equation}
        \sum_{x \in \FF_p} \chi(a - x) = 0.
    \end{equation}
\end{proposition}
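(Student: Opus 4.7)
The plan is to reduce the character sum to a standard evaluation by a change of variable. First I would observe that the map $x \mapsto a - x$ is a bijection of $\FF_p$ onto itself, so reindexing the sum gives
\[
\sum_{x \in \FF_p} \chi(a-x) \;=\; \sum_{y \in \FF_p} \chi(y).
\]
This reduces the claim to showing that $\sum_{y \in \FF_p} \chi(y) = 0$, which is independent of $a$.

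To handle this reduced sum, I would split off the $y = 0$ term, which contributes $\chi(0) = 0$ by the convention extending $\chi$ to $\FF_p$. What remains is $\sum_{y \in \FF_p^{\times}} \chi(y)$, and there are two equally short ways to see this vanishes. The direct counting argument: exactly $(p-1)/2$ elements of $\FF_p^{\times}$ are quadratic residues (contributing $+1$ each) and exactly $(p-1)/2$ are nonresidues (contributing $-1$ each), so the sum is zero. Alternatively, since $\chi$ is a nontrivial multiplicative character of the finite abelian group $\FF_p^{\times}$, the orthogonality of characters (taking the inner product of $\chi$ with the trivial character) immediately yields $\sum_{y \in \FF_p^{\times}} \chi(y) = 0$.

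There is no real obstacle here; this is a basic warm-up lemma. I would keep the write-up to a couple of lines, using the translation invariance reduction followed by the residue/nonresidue count, since no machinery beyond the definition of $\chi$ is required.
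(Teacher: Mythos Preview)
Your proposal is correct and matches the paper's approach: the paper's one-line proof (``This follows from the fact that exactly half of the elements of $\FF_p^{\times}$ are quadratic residues'') is exactly your counting argument, with the translation-invariance reduction left implicit. You have simply spelled out the change of variable and the $\chi(0)=0$ convention that the paper takes for granted.
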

\begin{proof}
    This follows from the fact that exactly half of the elements of $\FF_p^{\times}$ are quadratic residues.
\end{proof}

\begin{proposition}
    \label{prop:chi-sum-2}
    For any $a, b \in \FF_p$,
    \begin{equation}
        \sum_{x \in \FF_p} \chi(a - x) \chi(b - x) = -1 + \One\{a = b\}\, p.
    \end{equation}
    In terms of the Seidel adjacency matrix,
    \begin{equation}
        S_{G_p}^2 = p I - J.
    \end{equation}
\end{proposition}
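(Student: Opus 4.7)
The plan is to handle the diagonal and off-diagonal cases of the first identity separately and then translate the resulting entrywise description into the claimed matrix identity.

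For the diagonal case $a=b$, observe that $\chi(a-x)\chi(b-x)=\chi(a-x)^2$, which equals $1$ whenever $x\neq a$ and $0$ when $x=a$, giving the sum $p-1=-1+p$ as required. For the off-diagonal case $a\neq b$, the terms $x=a$ and $x=b$ contribute zero because $\chi(0)=0$, and for the remaining $p-2$ values of $x$ I would use multiplicativity, writing
\begin{equation}
\chi(a-x)\chi(b-x)=\chi\!\left((a-x)^2\right)\chi\!\left(\frac{b-x}{a-x}\right)=\chi\!\left(\frac{b-x}{a-x}\right),
\end{equation}
valid for $x\neq a$ (the excluded value $x=b$ is automatically captured since $\chi(0)=0$ on the right). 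The main computational step is then to observe that the M\"obius-like map $x\mapsto t=(b-x)/(a-x)$ defines a bijection from $\mathbb{F}_p\setminus\{a\}$ onto $\mathbb{F}_p\setminus\{1\}$ when $a\neq b$: solving $t(a-x)=b-x$ yields $x=(b-ta)/(1-t)$ for any $t\neq 1$, and $t=1$ would force $a=b$. Therefore
\begin{equation}
\sum_{x\in\mathbb{F}_p}\chi(a-x)\chi(b-x)=\sum_{t\in\mathbb{F}_p\setminus\{1\}}\chi(t)=\left(\sum_{t\in\mathbb{F}_p}\chi(t)\right)-\chi(1)=-1,
\end{equation}
where the vanishing of the complete character sum $\sum_{t\in\mathbb{F}_p}\chi(t)$ is Proposition~\ref{prop:chi-sum-1} applied with $a=0$ (or simply the fact that half of $\mathbb{F}_p^\times$ consists of quadratic residues).

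For the matrix form, I would just translate: the entry $(S_{G_p}^2)_{a,b}=\sum_x\chi(a-x)\chi(x-b)$, and since $p\equiv 1\Mod 4$ gives $\chi(-1)=1$, we have $\chi(x-b)=\chi(b-x)$, so the entries match the first identity, yielding $p-1$ on the diagonal and $-1$ off the diagonal, i.e.\ exactly $pI-J$. No genuine obstacle is expected here; the only subtle point is verifying that the substitution $t=(b-x)/(a-x)$ is a bijection of the indicated sets, and this is a direct algebraic check.
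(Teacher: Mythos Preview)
Your proof is correct and follows essentially the same approach as the paper: both arguments handle $a=b$ trivially and, for $a\neq b$, use multiplicativity of $\chi$ together with a change of variables to reduce the sum to $\sum_{t}\chi(t)$ over $\FF_p$ minus a single point. The only cosmetic difference is that the paper first applies the affine substitution $x\leftarrow (b-a)x+a$ to normalize to $\sum_x\chi(x(x+1))$ and then uses $\chi(x)=\chi(x^{-1})$, whereas you fold these two steps into the single M\"obius substitution $t=(b-x)/(a-x)$.
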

\begin{proof}
    The case $a = b$ is immediate.
    Otherwise, applying the change of variables $x \leftarrow (b - a)x + a$ gives
    \begin{align*}
        \sum_{x \in \FF_p} \chi(a - x) \chi(b - x)
        &= \sum_{x \in \FF_p} \chi(x(x + 1)) \\
        &= \sum_{x \in \FF_p^{\times}} \chi(x^{-1})\chi(x + 1) \\
        &= \sum_{x \in \FF_p^{\times}} \chi(1 + x^{-1}) \\
        &= \sum_{x \in \FF_p^{\times}} \chi(1 + x) \\
        &= -1 + \sum_{x \in \FF_p} \chi(x) \\
        &= -1, \numberthis
    \end{align*}
    as claimed.
\end{proof}

We will also need to make use of some more sophisticated families of character sums.
\begin{definition}[Gauss sum]
    For $\phi$ a multiplicative character, the associated \emph{Gauss sum} is
    \begin{equation}
        G(\phi) \colonequals \sum_{x \in \FF_p} \phi(x) e_p(x).
    \end{equation}
\end{definition}

The following expresses that the Gauss sum computes the additive Fourier transform of a multiplicative character.
\begin{proposition}
    \label{prop:mult-add-fourier}
    For $\phi$ a multiplicative character and $a \in \FF_p$,
    \begin{equation}
        \phi(a) = \frac{G(\phi)}{p} \sum_{b \in \FF_p} \overline{\phi}(b)e_p(-ab) = \frac{G(\phi)}{p}  \phi(-1)\sum_{b \in \FF_p} \overline{\phi}(b)e_p(ab)
    \end{equation}
\end{proposition}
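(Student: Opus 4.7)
The plan is to prove the identity by direct calculation, reducing it to the classical Gauss-sum evaluation $G(\phi) G(\overline{\phi}) = p$. The cases $a = 0$ and $a \neq 0$ behave quite differently, and I would handle them separately.

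For $a = 0$, both sides vanish: the LHS is $\phi(0) = 0$ by the stated convention, while the RHS reduces to $\frac{G(\phi)}{p} \sum_b \overline{\phi}(b)$, which is zero by orthogonality of characters whenever $\phi$ is nontrivial (the only case relevant in this paper, where $\phi = \chi$).

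For $a \in \FF_p^\times$, the key move is the substitution $b = a^{-1} c$ in the sum on the RHS. Multiplicativity of $\overline{\phi}$, together with the fact that $\phi$ takes values in roots of unity (so $\overline{\phi}(a^{-1}) = \overline{\phi(a)^{-1}} = \phi(a)$), lets us pull $\phi(a)$ outside:
\begin{equation*}
    \sum_{b \in \FF_p} \overline{\phi}(b)\, e_p(ab) \;=\; \phi(a) \sum_{c \in \FF_p} \overline{\phi}(c)\, e_p(c) \;=\; \phi(a)\, G(\overline{\phi}).
\end{equation*}
The RHS of the proposition thus becomes $\phi(a)\, G(\phi) G(\overline{\phi}) / p$, and it remains only to check $G(\phi) G(\overline{\phi}) = p$.

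This last identity is the only nontrivial ingredient, but it is entirely classical. One proves $|G(\phi)|^2 = p$ by expanding $G(\phi) \overline{G(\phi)}$ into a double sum and invoking orthogonality of the additive characters $e_p$, while the change of variables $x \mapsto -x$ in the defining sum gives $G(\overline{\phi}) = \phi(-1)\, \overline{G(\phi)}$. Multiplying these yields $G(\phi) G(\overline{\phi}) = \phi(-1)\, p$, and the extraneous sign disappears because the paper's standing assumption $p \equiv 1 \Mod{4}$ forces $\chi(-1) = 1$. I do not anticipate any genuine obstacle here; the statement is really just a repackaging of the fact that multiplicative characters are, up to the normalization $G(\phi)$, eigenfunctions of the additive Fourier transform on $\FF_p$.
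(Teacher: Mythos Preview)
Your proof is correct. The paper's argument is organized a bit differently: rather than reducing to the identity $G(\phi)G(\overline{\phi}) = p$, it first computes the additive Fourier transform $\sum_b \phi(b)\, e_p(tb) = G(\phi)\,\overline{\phi}(t)$ (the same change of variables you use, applied to $\phi$ instead of $\overline{\phi}$) and then simply invokes Fourier inversion. That route is marginally more direct, since Fourier inversion is just orthogonality of the additive characters and does not require passing through the intermediate statement $|G(\phi)|^2 = p$. Your version has the compensating advantage of being fully self-contained and of making the $\phi(-1)$ sign explicit---a point the paper's one-line ``by Fourier inversion'' glosses over, but which, as you note, is harmless here since the proposition is only ever applied with $\phi = \chi$ under the standing hypothesis $p \equiv 1 \Mod 4$.
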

\begin{proof}
    The additive Fourier transform of the character $\phi$ extended to $\FF_p$ by $\phi(0) = 0$ is a function on additive characters $\psi(x) = e_p(tx)$, given when $t \neq 0$ by
    \begin{equation}
        \sum_{b \in \FF_p} \phi(b) e_p(tb) = \sum_{b \in \FF_p} \phi(t^{-1}b) e_p(b) = G(\phi) \overline{\phi}(t).
    \end{equation}
    The result then follows by Fourier inversion.
\end{proof}

\begin{proposition}[Section 6.4 and Proposition 8.2.2 of \cite{IR-1990-ClassicalModernNumberTheory}]
    \label{prop:gauss-sum-norm}
    For any non-trivial multiplicative character $\phi$, $|G(\phi)| = \sqrt{p}$.
    Moreover, when $p \equiv 1 \Mod{4}$, $G(\chi) = \sqrt{p}$.
\end{proposition}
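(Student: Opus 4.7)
The plan is to prove the two claims in sequence, using largely self-contained character-sum computations for the first and appealing to the classical Gauss sign determination for the second. The main obstacle is the sign in the second claim, which is a famously deep result.

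For the first claim, $|G(\phi)| = \sqrt{p}$ for any non-trivial multiplicative character $\phi$, I would compute $|G(\phi)|^2 = G(\phi)\,\overline{G(\phi)}$ directly. Using $\overline{e_p(y)} = e_p(-y)$, this expands as
\[
|G(\phi)|^2 = \sum_{x,y \in \FF_p} \phi(x)\overline{\phi(y)}\, e_p(x - y).
\]
The $y = 0$ term vanishes because $\phi(0) = 0$. For $y \neq 0$, substitute $x = yz$ to get $\phi(yz)\overline{\phi(y)} = |\phi(y)|^2 \phi(z) = \phi(z)$. Swapping the order of summation, the inner sum $\sum_{y \in \FF_p^\times} e_p(y(z-1))$ evaluates to $p-1$ when $z = 1$ and $-1$ otherwise. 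Combining this with the orthogonality relation $\sum_{z \in \FF_p} \phi(z) = 0$ for non-trivial $\phi$ yields $|G(\phi)|^2 = (p-1)\phi(1) - (\sum_z \phi(z) - \phi(1)) = p$.

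For the second claim, an analogous computation gives the value of $G(\chi)^2$ up to sign. Expanding
\[
G(\chi)^2 = \sum_{x,y \in \FF_p} \chi(xy)\, e_p(x+y)
\]
and again substituting $x = yz$ in the $y \neq 0$ sum (noting $\chi(y)^2 = 1$ for $y \neq 0$), the inner additive-character sum forces $z = -1$, yielding $G(\chi)^2 = \chi(-1)\, p$. Under the hypothesis $p \equiv 1 \Mod{4}$, we have $\chi(-1) = 1$, so $G(\chi)^2 = p$ and hence $G(\chi) \in \{+\sqrt{p}, -\sqrt{p}\}$.

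The hard part is to determine that the sign is $+$. This is the classical Gauss sign theorem, which took Gauss himself several years to prove, and is well outside the scope of the present work. There are several standard approaches, including evaluating a theta function at a special point or a contour-integral argument applied to a related product, but I would not attempt to reproduce any of them. Instead I would simply invoke the result directly from \cite{IR-1990-ClassicalModernNumberTheory} (Proposition 8.2.2), which is the standard reference cited in the statement.
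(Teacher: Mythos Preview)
Your argument is correct. The paper does not give any proof of this proposition at all: it is stated purely as a citation to \cite{IR-1990-ClassicalModernNumberTheory}, so there is no ``paper's own proof'' to compare against. Your computation of $|G(\phi)|^2 = p$ via the substitution $x = yz$ and orthogonality is the standard one, and your reduction of $G(\chi)^2$ to $\chi(-1)p$ is likewise correct; you are right that the remaining sign determination is Gauss's deep theorem and is appropriately deferred to the cited reference. In short, you have supplied strictly more detail than the paper does, and what you supplied is sound.
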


\begin{definition}[Kloosterman sum]
    For $k \geq 2$ and $a \in \FF_p^{\times}$, we define the \emph{Kloosterman sum}
    \begin{equation}
        K_k(a) \colonequals \sum_{\substack{x_1, \dots, x_k \in \FF_p^{\times} \\ x_1 \cdots x_k = a}} e_p(x_1 + \cdots + x_k).
    \end{equation}
    We also write $K(a) \colonequals K_2(a)$.
\end{definition}

\begin{proposition}[Equation 14.7 of \cite{CI-2000-CubicMomentLFunctions}]
    \label{prop:kloosterman-rewrite}
    For $a \in \FF_p^{\times}$,
    \begin{equation}
        \sum_{x \in \FF_p} \chi(x^2 - 1) e_p(2ax) = K(a^2).
    \end{equation}
\end{proposition}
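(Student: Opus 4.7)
The plan is to expand the multiplicative character $\chi(x^2 - 1)$ via the additive Fourier inversion formula in Proposition~\ref{prop:mult-add-fourier}, interchange the order of summation, evaluate the resulting quadratic exponential sum by completing the square, and then recognize the final expression as the natural parametrization of the Kloosterman sum $K(a^2)$.

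Concretely, since $\chi$ is real-valued and $G(\chi) = \sqrt{p}$ by Proposition~\ref{prop:gauss-sum-norm}, Proposition~\ref{prop:mult-add-fourier} gives $\chi(u) = \frac{1}{\sqrt{p}} \sum_{b \in \FF_p} \chi(b) e_p(ub)$. Substituting $u = x^2 - 1$ into the left-hand side and interchanging sums yields
\begin{equation*}
    \sum_{x \in \FF_p} \chi(x^2 - 1)\, e_p(2ax) \;=\; \frac{1}{\sqrt{p}} \sum_{b \in \FF_p^\times} \chi(b)\, e_p(-b) \sum_{x \in \FF_p} e_p(bx^2 + 2ax),
\end{equation*}
where the $b = 0$ term drops because $\chi(0) = 0$. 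For $b \in \FF_p^\times$, completing the square as $bx^2 + 2ax = b(x + a/b)^2 - a^2/b$ together with the translation invariance of the additive sum gives $\sum_x e_p(bx^2 + 2ax) = e_p(-a^2/b) \sum_y e_p(by^2)$. The standard quadratic Gauss sum identity $\sum_y e_p(by^2) = \chi(b)\sqrt{p}$ then follows by writing $\sum_y e_p(by^2) = \sum_z (\#\{y : y^2 = z\})\, e_p(bz) = \sum_z (1 + \chi(z)) e_p(bz)$ (with the convention $\chi(0) = 0$), applying additive orthogonality to kill the $1$-term, and rescaling $y \leftarrow b^{-1}y$ in the resulting Gauss sum.

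Substituting these evaluations back and using $\chi(b)^2 = 1$ on $\FF_p^\times$, the left-hand side collapses to $\sum_{b \in \FF_p^\times} e_p(-b - a^2/b)$, and the substitution $b \leftarrow -b$ (which leaves $a^2$ invariant) converts this into $\sum_{b \in \FF_p^\times} e_p(b + a^2/b)$. Parametrizing the Kloosterman sum by $x_1 = b$ and $x_2 = a^2/b \in \FF_p^\times$ (a bijection of $\FF_p^\times$ onto the set of factorizations $x_1 x_2 = a^2$), the latter is exactly $K(a^2)$. There is no genuine obstacle beyond careful bookkeeping of the $b = 0$ case and the two sign substitutions; the only ingredient outside the previously stated propositions is the quadratic Gauss sum evaluation, which is itself a short orthogonality argument.
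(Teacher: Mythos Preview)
Your proof is correct. The paper does not give its own proof of this proposition but simply cites it as Equation~14.7 of \cite{CI-2000-CubicMomentLFunctions}, so there is no in-paper argument to compare against; your Fourier-expansion and square-completion approach is the standard one and all the bookkeeping checks out.
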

\begin{proposition}[\cite{Liu-2002-KloostermanTwistedMoments}]
    \label{prop:twisted-kloosterman}
    For any non-trivial multiplicative character $\phi$,
    \begin{equation}
    \left|\sum_a \phi(a) K(a^2)^2\right| \leq 2 p^{3/2}.
    \end{equation}
\end{proposition}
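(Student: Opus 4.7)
The plan is to reduce this twisted Kloosterman moment to a double character sum and then invoke Weil-type estimates. First, I would apply Proposition~\ref{prop:kloosterman-rewrite} to each copy of $K(a^2)$, obtaining
\begin{equation*}
\sum_{a \in \FF_p} \phi(a) K(a^2)^2 = \sum_{x,y \in \FF_p} \chi\bigl((x^2-1)(y^2-1)\bigr) \sum_{a \in \FF_p} \phi(a) e_p\bigl(2a(x+y)\bigr).
\end{equation*}
The inner sum over $a$ is a Gauss-sum-type object: by Proposition~\ref{prop:mult-add-fourier} it evaluates to $G(\phi) \overline{\phi}(2(x+y))$ when $x+y \neq 0$ and to $0$ when $x+y = 0$ (since $\phi$ is non-trivial). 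Since $|G(\phi)| = \sqrt{p}$ by Proposition~\ref{prop:gauss-sum-norm}, it suffices to bound
\begin{equation*}
S \colonequals \sum_{\substack{x,y \in \FF_p \\ x+y \neq 0}} \chi\bigl((x^2-1)(y^2-1)\bigr)\, \overline{\phi}(2(x+y))
\end{equation*}
by $2p$ in absolute value.

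Next I would apply the change of variables $u = x+y$, $v = x-y$ (using that $p$ is odd), under which the Legendre-symbol argument becomes $\chi\bigl((v^2 - (u-2)^2)(v^2 - (u+2)^2)\bigr)$ up to the nonzero square $16$. Evenness in $v$ together with the identity $\sum_v f(v^2) = \sum_z f(z) + \sum_z \chi(z) f(z)$ splits the inner sum into two pieces. Proposition~\ref{prop:chi-sum-2} controls the first, yielding $-1$ for all $u \notin \{0, \pm 2\}$, and Weil's bound (Theorem~\ref{thm:weil}) handles the second as a degree-$3$ character sum
\begin{equation*}
E(u) \colonequals \sum_z \chi\bigl(z(z-(u-2)^2)(z-(u+2)^2)\bigr), \qquad |E(u)| \leq 3\sqrt{p}.
\end{equation*}
The ``$-1$'' main term contributes only $O(1)$ to $S$ by orthogonality of $\overline{\phi}$ against the constant function, so the remaining task is to bound $\sum_u \overline{\phi}(2u) E(u)$ by $O(p)$.

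This last step is the main obstacle: the pointwise bound $|E(u)| \leq 3\sqrt{p}$ combined with $\sum_u |\overline{\phi}(2u)| \leq p$ gives only $O(p^{3/2})$, a full factor of $\sqrt{p}$ too weak. To extract a second level of cancellation, I would swap the order of summation and examine, for each $z$, the hybrid character sum
\begin{equation*}
H(z) \colonequals \sum_{u \in \FF_p} \overline{\phi}(2u)\, \chi\bigl((z-(u-2)^2)(z-(u+2)^2)\bigr),
\end{equation*}
which by the Weil bound for mixed multiplicative sums on curves enjoys $|H(z)| = O(\sqrt{p})$ for generic $z$. A pointwise sum over $z$ still yields only $O(p^{3/2})$, so the decisive step is to show that $\sum_z \chi(z) H(z)$ itself enjoys square-root cancellation. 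The natural route is to identify $E(\cdot)$ with the trace function of a geometrically irreducible $\ell$-adic sheaf on $\AA^1_{\FF_p}$ and verify that it is not isomorphic to a Kummer twist of $\overline{\phi}(2u)$, so that the Deligne--Katz bound for trace functions applies to deliver the required cancellation. This monodromy-style argument---which can alternatively be phrased in more elementary terms as a bootstrap exploiting a self-duality of $K(a^2)^2$ under Fourier transform in $a$---is what drives Liu's proof and constitutes the principal difficulty.
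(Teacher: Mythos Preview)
The paper does not supply its own proof of this proposition: it is stated with a citation to \cite{Liu-2002-KloostermanTwistedMoments} and used as a black box (see the proof of Theorem~\ref{thm:character_sum_2}, where the authors explicitly say the bound ``is implicitly proved in the result of \cite{Liu-2002-KloostermanTwistedMoments} that we cite in Proposition~\ref{prop:twisted-kloosterman}''). So there is no in-paper argument to compare your proposal against.

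As for the proposal itself, your reduction is reasonable and indeed parallels the manipulations the paper carries out in the \emph{reverse} direction in the proof of Theorem~\ref{thm:character_sum_2}: there the authors start from the double sum $\sum_{x,y}\chi(x(x+1)y(y+1))\phi(x-y)$, apply Proposition~\ref{prop:mult-add-fourier} and Proposition~\ref{prop:kloosterman-rewrite}, and arrive at $\overline{\phi}(4)\,G(\phi)\,p^{-1}\sum_a \overline{\phi}(a)K(a^2)^2$, then invoke Proposition~\ref{prop:twisted-kloosterman}. Your steps essentially run this identity backwards. However, your proposal does not actually close the argument: you correctly identify that the crux is bounding $\sum_u \overline{\phi}(2u)E(u)$ by $O(p)$, and you then defer to ``the Deligne--Katz bound for trace functions'' or ``Liu's proof'' for this step. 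That is precisely the content of the proposition you are trying to establish, so at that point you are citing the result rather than proving it. A self-contained proof would need to carry out the sheaf-theoretic or stationary-phase analysis that delivers this second layer of cancellation; absent that, what you have written is a useful unwinding of the identity but not an independent proof.
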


\begin{proposition}[Lemma 2.1 of \cite{LZZ-2018-DistributionJacobiSums}]
    \label{prop:twisted-kloosterman-2}
    For any $s, t \geq 1$ and $k \geq 2$, there is an absolute constant $C_{k, s, t} > 0$ such that, for any non-trivial multiplicative character $\phi$,
    \begin{equation}
    \left|\sum_a \phi(a) K_k(a)^s \overline{K_k(a)}^{t}\right| \leq C_{k, s, t} p^{((k - 1)(s + t) + 1) / 2}.
    \end{equation}
\end{proposition}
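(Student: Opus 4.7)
The plan is to express this twisted moment of $k$-fold Kloosterman sums as an exponential sum over a high-dimensional algebraic variety and invoke Deligne's higher-dimensional generalization of the Weil bound, of which Theorem~\ref{thm:weil} is the one-variable case (and which was already implicitly used via Propositions~\ref{prop:twisted-kloosterman} in the immediately preceding estimate).

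The first step would be to open the Kloosterman sums by their definition and exchange the order of summation, writing
\begin{equation*}
\sum_a \phi(a) K_k(a)^s \overline{K_k(a)}^{t} = \sum_{(\mathbf{x},\mathbf{y}) \in V(\FF_p)} \phi\!\Bigl(\textstyle\prod_{j} x_{1,j}\Bigr)\, e_p\!\Bigl(\sum_{i,j} x_{i,j} - \sum_{i',j} y_{i',j}\Bigr),
\end{equation*}
where $\mathbf{x} = (x_{i,j})_{i \in [s], j \in [k]}$, $\mathbf{y} = (y_{i',j})_{i' \in [t], j \in [k]}$, and $V \subset (\FF_p^\times)^{(s+t)k}$ is the subvariety of the algebraic torus cut out by the $s + t - 1$ monomial equations equating $\prod_j x_{1,j}$ with each of the remaining row products $\prod_j x_{i,j}$ and $\prod_j y_{i',j}$. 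A dimension count gives $\dim V = (s+t)(k-1) + 1$, which is precisely twice the exponent we are aiming for.

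Next, I would view the integrand as a rank-one $\ell$-adic sheaf on $V$, obtained as the tensor of the Kummer sheaf associated to $\phi$ (pulled back along the monomial $\prod_j x_{1,j}$) with the Artin--Schreier sheaf attached to the linear phase $\sum x - \sum y$. If one can show that this sheaf is geometrically irreducible and has vanishing top-weight cohomology, then Deligne's generalization of the Weil bound yields square-root cancellation of order $p^{\dim V / 2} = p^{((s+t)(k-1)+1)/2}$, with implicit constant controlled by the sum of Betti numbers of $V$, which depends only on $k, s, t$.

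The main obstacle, and the principal technical content of \cite{LZZ-2018-DistributionJacobiSums}, is this cohomological nondegeneracy check. Concretely, one must rule out monomial identities on any positive-dimensional subvariety of $V$ that would make either the multiplicative argument $\prod_j x_{1,j}$ constant (killing the Kummer twist) or the linear phase $\sum x - \sum y$ constant (killing the Artin--Schreier twist); together these imply that Frobenius has no invariants on the relevant cohomology, which is precisely what Deligne's theorem demands to deliver the claimed square-root cancellation. We do not reproduce this verification here, and instead cite the bound as a black box for use in the downstream graph matrix norm calculations (in particular, in the proof of Theorem~\ref{thm:norm_T_441}).
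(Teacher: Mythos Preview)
The paper does not prove this proposition at all: it is stated with an explicit attribution to Lemma~2.1 of \cite{LZZ-2018-DistributionJacobiSums} and used as a black box. Your proposal ultimately does the same thing---after sketching the Deligne-based strategy, you conclude by citing the bound rather than carrying out the cohomological verification---so the two treatments agree in substance.

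Your sketch of the underlying argument is correct in outline: opening the Kloosterman sums, passing to the subvariety of the torus cut out by the equal-product constraints, counting dimensions to get $(s+t)(k-1)+1$, and then invoking Deligne's Weil~II bound for the mixed Kummer--Artin--Schreier sheaf is indeed the standard route, and is essentially what \cite{LZZ-2018-DistributionJacobiSums} does. The nondegeneracy check you flag (that the sheaf is geometrically irreducible, so that Frobenius has no invariants in top cohomology) is the genuine content of that lemma. One small note: the statement as written in the paper has a typo---the second exponent should be $t$, not $s$---and your proposal silently corrects this, which is the right reading given the exponent $((k-1)(s+t)+1)/2$ on the right-hand side.
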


\begin{proposition}[Corollary 3.2 of \cite{FKM-2015-SumsOfProducts}]
    \label{prop:kloosterman-corr}
    Let $k \geq 2$ be even, $t \geq 1$, and let $a_1, \dots, a_t \in \FF_p^{\times}$ have some $a_i$ occur an odd number of times.
    Then, for an absolute constant $C_{k, t} > 0$,
    \begin{equation}
        \left|\sum_{x \in \FF_p} K_k(a_1x) \cdots K_k(a_tx)\right| \leq C_{k, t} p^{(t(k - 1) + 1) / 2}.
    \end{equation}
\end{proposition}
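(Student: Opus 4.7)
The plan is to interpret the sum as a trace function sum associated to a tensor product of pullbacks of Deligne's Kloosterman sheaf, and then apply the Grothendieck-Lefschetz trace formula together with Katz's determination of the geometric monodromy. For each $k \geq 2$ with $p \nmid k$, there is a lisse $\overline{\mathbb{Q}}_\ell$-sheaf $\mathrm{Kl}_k$ on $\mathbb{G}_{m,\mathbb{F}_p}$, of rank $k$ and pure of weight $k-1$, whose Frobenius trace at $a \in \mathbb{F}_p^\times$ equals $(-1)^{k-1} K_k(a)$. Writing $[\mu_a]$ for multiplication by $a$, set
\begin{equation}
\mathcal{F} \colonequals \bigotimes_{i=1}^t [\mu_{a_i}]^* \mathrm{Kl}_k,
\end{equation}
which is lisse on $\mathbb{G}_m$ of rank $k^t$ and pure of weight $t(k-1)$.

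The sum to be bounded is, up to an overall sign and a boundary contribution of size $O_{k,t}(1)$ coming from the point $x=0$, equal to $\sum_{x \in \mathbb{G}_m(\mathbb{F}_p)} \mathrm{tr}(\mathrm{Frob}_x \mid \mathcal{F})$. By the Grothendieck-Lefschetz trace formula this equals
\begin{equation}
- \mathrm{tr}(\mathrm{Frob} \mid H^1_c(\mathbb{G}_{m,\overline{\mathbb{F}_p}}, \mathcal{F})) + \mathrm{tr}(\mathrm{Frob} \mid H^2_c(\mathbb{G}_{m,\overline{\mathbb{F}_p}}, \mathcal{F})),
\end{equation}
with $H^0_c = 0$ because $\mathbb{G}_m$ is affine of positive dimension. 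The essential step is to show that $H^2_c$ vanishes, which by Poincar\'e duality is equivalent to vanishing of the geometric monodromy invariants in $\mathcal{F}$.

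Here I would invoke Katz's theorem that for even $k$ the geometric monodromy group of $\mathrm{Kl}_k$ is the full symplectic group $\mathrm{Sp}_k$ acting via its standard representation. Group the factors of $\mathcal{F}$ by the distinct values occurring among $a_1, \dots, a_t$, say with multiplicities $m_1, \dots, m_r$. A Goursat-lemma argument, together with the fact that pullbacks of $\mathrm{Kl}_k$ by multiplication by distinct scalars are pairwise geometrically non-isomorphic (modulo a finite group of symmetries of $\mathrm{Kl}_k$ that must be handled separately), shows that the monodromies embed as an external direct product inside $\mathrm{Sp}_k^r$. The space of invariants in $V^{\otimes m_1} \otimes \cdots \otimes V^{\otimes m_r}$ under $\mathrm{Sp}_k^r$, with $V$ the standard representation, is the tensor product of the individual invariant spaces; but $V^{\otimes m}$ carries no $\mathrm{Sp}_k$-invariants when $m$ is odd, so the hypothesis that some $a_i$ occurs an odd number of times forces some $m_j$ to be odd and hence yields $H^2_c = 0$.

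Granted this vanishing, the sum equals $-\mathrm{tr}(\mathrm{Frob} \mid H^1_c)$. An Euler-Poincar\'e computation on $\mathbb{G}_m$ bounds $\dim H^1_c$ by $\mathrm{Sw}_0(\mathcal{F}) + \mathrm{Sw}_\infty(\mathcal{F})$, and both Swan conductors are controlled in terms of $\mathrm{rank}(\mathcal{F}) = k^t$ and the Swan conductors of $\mathrm{Kl}_k$, depending only on $k$ and $t$. Deligne's Weil II theorem then implies that every Frobenius eigenvalue on $H^1_c$ has absolute value at most $p^{(t(k-1)+1)/2}$, yielding the claimed bound. I expect the main obstacle to be the Goursat step: one must carefully enumerate the finite set of geometric isomorphisms between pullbacks $[\mu_a]^* \mathrm{Kl}_k$ for varying $a$ (coming from automorphisms of $\mathrm{Kl}_k$ such as inversion and twists by finite characters) and verify that none of these residual symmetries destroys the odd-multiplicity obstruction. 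The remaining ingredients are routine applications of the $\ell$-adic machinery.
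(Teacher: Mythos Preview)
The paper does not supply its own proof of this proposition: it is quoted verbatim as Corollary~3.2 of Fouvry--Kowalski--Michel and used as a black box in the proof of Theorem~\ref{thm:character_sum_1}. So there is no argument in the paper to compare against.

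That said, your sketch is essentially the proof that the cited reference gives. FKM work in exactly this $\ell$-adic framework: the Kloosterman sheaf $\mathrm{Kl}_k$ of Deligne--Katz, the tensor product of multiplicative pullbacks, the Grothendieck--Lefschetz trace formula, and Deligne's Weil~II for the $H^1_c$ contribution. The heart of their argument, as you correctly identify, is the vanishing of $H^2_c$, which they obtain from a Goursat--Kolchin--Ribet type statement guaranteeing that the geometric monodromy of $\bigotimes_i [\mu_{a_i}]^* \mathrm{Kl}_k$ is as large as possible. Your concern about residual isomorphisms between the pullbacks $[\mu_a]^* \mathrm{Kl}_k$ is exactly the technical point FKM address: for $\mathrm{Kl}_k$ one has $[\mu_a]^* \mathrm{Kl}_k \cong \mathrm{Kl}_k$ geometrically if and only if $a$ is a $k$th power root of unity in $\overline{\FF_p}^\times$ (coming from the symmetry group of the sheaf), and their framework absorbs this by working with the quotient of $\FF_p^\times$ by this finite subgroup. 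Once one checks that the odd-multiplicity hypothesis survives this quotient---and it does, because identifying $a_i$ and $a_j$ only merges multiplicities additively---the invariant-space calculation goes through as you describe.

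In short: nothing to compare in the paper itself, and your outline matches the source the paper cites.
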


\subsection{Tools from Linear Algebra}

We will also repeatedly use the following simple but powerful bound on matrix norms, which applies effectively to sparse matrices with entries of small magnitude.

\begin{proposition}[Asymmetric Gershgorin bound]\label{prop:Gershgorin}
    Let $X \in \RR^{m \times n}$.
    Then,
    \begin{equation}
        \|X\| \leq \sqrt{\left(\max_{i = 1}^m \sum_{j = 1}^n |X_{ij}|\right)\left(\max_{j = 1}^n \sum_{i = 1}^m |X_{ij}|\right)}.
    \end{equation}
\end{proposition}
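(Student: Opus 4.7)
The plan is to reduce to a bilinear form and apply Cauchy--Schwarz with a careful splitting of $|X_{ij}|$. Writing $\|X\|$ as a supremum of $|\langle u, Xv\rangle|$ over unit vectors $u \in \RR^m$ and $v \in \RR^n$ converts the norm bound into a scalar inequality for each pair $(u,v)$, and then the factorization $|X_{ij}| = \sqrt{|X_{ij}|}\cdot\sqrt{|X_{ij}|}$ lets Cauchy--Schwarz cleanly separate the row-sum and column-sum dependencies.

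Concretely, first I would use the triangle inequality to write
\begin{equation}
    |\langle u, Xv \rangle| \leq \sum_{i=1}^m \sum_{j=1}^n |u_i|\,|X_{ij}|\,|v_j|
    = \sum_{i,j}\bigl(|u_i|\sqrt{|X_{ij}|}\bigr)\bigl(\sqrt{|X_{ij}|}\,|v_j|\bigr).
\end{equation}
Applying Cauchy--Schwarz to the last expression yields
\begin{equation}
    |\langle u, Xv \rangle| \leq \sqrt{\sum_{i,j} u_i^2 |X_{ij}|}\;\sqrt{\sum_{i,j} v_j^2 |X_{ij}|}.
\end{equation}
Next I would factor each sum by the appropriate index: $\sum_{i,j} u_i^2 |X_{ij}| = \sum_i u_i^2 \sum_j |X_{ij}| \leq (\max_i \sum_j |X_{ij}|)\|u\|_2^2$, and analogously $\sum_{i,j} v_j^2 |X_{ij}| \leq (\max_j \sum_i |X_{ij}|)\|v\|_2^2$.

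Combining these and taking the supremum over unit $u, v$ gives the claimed bound. There is no serious obstacle: this is essentially a one-shot Cauchy--Schwarz computation, and the only thing to keep straight is which direction the maximum is taken in each factor (row-sum paired with the $u$-side, column-sum paired with the $v$-side). This also explains the geometric mean structure of the bound, which specializes to the usual symmetric Gershgorin bound $\|X\| \leq \max_i \sum_j |X_{ij}|$ when $X$ is symmetric.
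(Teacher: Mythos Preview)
Your proof is correct but takes a genuinely different route from the paper. The paper argues via $\|X\|^2 = \lambda_{\max}(XX^\top)$, then applies the Gershgorin circle theorem to bound $\lambda_{\max}(XX^\top) \leq \|XX^\top\|_\infty$, and finally uses submultiplicativity of the $\ell^\infty$-induced operator norm to get $\|XX^\top\|_\infty \leq \|X\|_\infty \|X^\top\|_\infty$, recognizing that $\|X\|_\infty = \max_i \sum_j |X_{ij}|$. Your argument is instead the classical Schur test: split $|X_{ij}| = \sqrt{|X_{ij}|}\sqrt{|X_{ij}|}$ inside the bilinear form and apply Cauchy--Schwarz directly. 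Your approach is more self-contained, needing only Cauchy--Schwarz rather than invoking Gershgorin and submultiplicativity as separate ingredients; the paper's approach, on the other hand, makes the name ``asymmetric Gershgorin bound'' transparent and highlights the interpretation in terms of the $\ell^\infty$ operator norm. Both are standard and equally short.
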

\begin{proof}
    Recall that the $\infty$-norm of a matrix is defined as
    \begin{align*}
        \|X\|_{\infty}
        &\colonequals \max_{v \neq 0} \frac{\|Xv\|_{\infty}}{\|v\|_{\infty}}
        \intertext{and it is simple to verify the alternative definition}
        &= \max_i \sum_j |X_{ij}|. \numberthis
    \end{align*}
    Thus our claim says that $\|X\| \leq \sqrt{\|X\|_{\infty}\|X^{\top}\|_{\infty}}$.
    We have $\|X\| = \sqrt{\lambda_{\max}(XX^{\top})}$.
    By the Gershgorin circle theorem, $\lambda_{\max}(XX^{\top}) \leq \|XX^{\top}\|_{\infty}$.
    Since the $\infty$-norm on matrices is induced as an operator norm by the $\ell^{\infty}$ vector norm, it is submultiplicative, whereby $\|XX^{\top}\|_{\infty} \leq \|X\|_{\infty} \|X^{\top}\|_{\infty}$, and the result follows.
\end{proof}
\noindent
We will use this result in the following simpler form.
\begin{corollary}
    \label{cor:sparse-matrix-norm}
    Let $X \in \RR^{m \times n}$, and suppose that $|X_{ij}| \leq C$ for all $i, j$ and that $X$ has at most K non-zero entries in each row and each column.
    Then, $\|X\| \leq KC$.
\end{corollary}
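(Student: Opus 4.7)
The plan is to derive this as an immediate consequence of Proposition~\ref{prop:Gershgorin}, which we have just established. The bound in Proposition~\ref{prop:Gershgorin} controls the operator norm of $X$ by the geometric mean of the maximum absolute row sum and the maximum absolute column sum, so I will simply bound each of these two quantities under the sparsity and magnitude hypotheses.

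First I would fix any row index $i \in [m]$. By hypothesis, the row $X_{i, \cdot}$ has at most $K$ nonzero entries, each of absolute value at most $C$, so
\begin{equation}
    \sum_{j=1}^n |X_{ij}| \leq K \cdot C.
\end{equation}
Taking the maximum over $i$ gives $\max_{i} \sum_{j} |X_{ij}| \leq KC$. An identical argument applied to columns, using the column-sparsity hypothesis, yields $\max_{j} \sum_{i} |X_{ij}| \leq KC$.

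Substituting both bounds into Proposition~\ref{prop:Gershgorin}, we conclude
\begin{equation}
    \|X\| \leq \sqrt{(KC)(KC)} = KC,
\end{equation}
as claimed. Since both ingredients (the asymmetric Gershgorin bound and the elementary row/column counting) are essentially trivial once the setup is in place, I do not anticipate any obstacle; the proof is a one-line deduction and there is no hard step.
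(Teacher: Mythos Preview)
Your proposal is correct and matches the paper's approach exactly: the corollary is stated immediately after Proposition~\ref{prop:Gershgorin} as a direct specialization, and your row/column sum bound followed by substitution into that proposition is precisely the intended one-line deduction.
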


\subsection{Direct Bounds}

Two of our bounds, for graph matrices with no edges, amount to counting arguments.

\begin{proof}[Proof of Proposition~\ref{prop:T301}]
    Consider $T = T^{3,0,1} + 2I$. By easy computations, we see that $\boldsymbol{1}$ is an eigenvector of $T$ with eigenvalue $2(p-1)$, and $\mathbb{V}_1$ is an eigenspace of $T$ with eigenvalue $p-2$. Moreover, for any vector $u \in \mathbb{R}^{\binom{\mathbb{F}_p}{2} }$, we have
    \begin{align*}
        (Tu)_{\{i,j\}} &= \sum_{k_1 \in \mathbb{F}_p \setminus \{i,j\} } u_{\{k_1, j\}} + \sum_{k_2 \in \mathbb{F}_p \setminus \{i,j\} } u_{\{i, k_2\}} + 2u_{\{i,j\}}\\
        &= \sum_{k_1 \in \mathbb{F}_p \setminus \{j\} } u_{\{k_1, j\}} + \sum_{k_2 \in \mathbb{F}_p \setminus \{i\} } u_{\{i, k_2\}}\\
        &\in \mathbb{V}_0 \oplus \mathbb{V}_1. \numberthis
    \end{align*}
    Thus, we conclude $T^{3,0,1} + 2I = 2(p-1)P_0 + (p-2)P_1$, and the claim follows.
\end{proof}

\begin{proof}[Proof of Proposition~\ref{prop:T401}]
    Note that we have the combinatorial identity
    \begin{align*}
        1 = \One{\{\{a,b\} = \{c,d\}\} } + \One{\{\left|\{a,b\} \cap \{c,d\}\right| = 1\} } + \One{\{\{a,b\} \cap \{c,d\} = \emptyset\} },
    \end{align*}
    so $J = I + T^{3,0,1} + T^{4,0,1}$. By Proposition~\ref{prop:T301}, we then have
    \begin{align*}
        T^{4,0,1} &= J - I - T^{3,0,1}\\
        &= \binom{p}{2} P_0 - (P_0 + P_1 + P_2) - \left(2(p-2)P_0 + (p-4)P_1 - 2P_2 \right)\\
        &= \frac{(p-2)(p-3)}{2} P_0 - (p-3) P_1 + P_2 \numberthis
    \end{align*}
    as claimed.
\end{proof}

We next address the proofs that need no character sum calculations and can be treated with only linear algebra.

\begin{proof}[Proof of Proposition~\ref{prop:T_31i}]
    $T^{3,1,2}$ is the product of a diagonal matrix with absolute value of each entry bounded by $1$ with $T^{3,0,1}$. Thus, $\|T^{3,1,2}\| \le \|T^{3,0,1}\| = O(p)$, and similarly, $\|T^{3,1,3}\| = O(p)$.
\end{proof}

\begin{proof}[Proof of Proposition~\ref{prop:T_311}]
    $T^{3,1,1}$ is a submatrix of the sum of the four matrices $I_p \otimes S_{G_p}$, $S_{G_p} \otimes I_p$, $P(I_p \otimes S_{G_p})$, and $(I_p \otimes S_{G_p})P^{\top}$, for $P$ a suitable permutation matrix (which swaps the indices in a pair $(a, b) \in [p]^2$). Therefore, $\|T^{3,1,1}\| \leq 4\|S_{G_p}\| = O(\sqrt{p})$.
\end{proof}

\begin{proof}[Proof of Proposition~\ref{prop:T_32i}]
    $T^{3,2,1}$ is the product of a diagonal matrix with absolute value of each entry bounded by $1$ with $T^{3,1,1}$. Thus, $\|T^{3,2,1}\| \le \|T^{3,1,1}\| = O(\sqrt{p})$, and similarly, $\|T^{3,2,2}\| = O(\sqrt{p})$.
\end{proof}

\begin{proof}[Proof of Proposition~\ref{prop:T423}]
    We may write $T^{4, 2, 3} = \widetilde{T}^{4,2,3} + \Delta$, where $\widetilde{T}^{4, 2, 3}$ occurs as a submatrix of $S_{G_p}^{\otimes 2} + P(S_{G_p}^{\otimes 2})$ for a suitable permutation matrix $P$ (as in the proof of Proposition~\ref{prop:T_311}) and $\Delta$ has $|\Delta_{S,T}| \leq 1$ and has $\Delta_{S, T}\neq 0$ only when $S \cap T \neq \emptyset$.
    Then, $\|\widetilde{T}^{4, 2, 3}\| \leq 2\|S_{G_p}^{\otimes 2}\| = O(p)$, and $\|\Delta\| = O(p)$ by Corollary~\ref{cor:sparse-matrix-norm}.
\end{proof}

Next, we give the proofs that require only the basic character sum facts given in Propositions~\ref{prop:chi-sum-1} and \ref{prop:chi-sum-2}.
Recall that in the next two proofs we have $i \in \{1, 2\}$.

\begin{proof}[Proof of Proposition~\ref{prop:U5ij}]
    By the above observations, all entries of any of the $U^{5, i, j}$ have absolute value $O(1)$ (specifically, absolute value at most 4).
    Thus, $\|U^{t,i, j}\| \leq \|U^{t, i, j}\|_F = O(p^2)$.
\end{proof}

\begin{proof}[Proof of Proposition~\ref{prop:U4ij}]
    As above, all entries of any of the $U^{4, i, j}$ have absolute value $O(1)$.
    Moreover, $U^{4, i, j}$ is a direct sum of several matrices each of whose dimension is at most $p$.
    Thus, $\|U^{4, i, j}\|$ is at most the norm of any of these summands, which, again by bounding by the Frobenius norm, is $O(p)$.
\end{proof}

\begin{proof}[Proof of Proposition~\ref{prop:U3i1}]
    As above, all entries of any of the $U^{3, i, 1}$ have absolute value $O(1)$.
    Moreover, $U^{3, i, 1}$ is a diagonal matrix, so $\|U^{3, i, 1}\| = O(1)$.
\end{proof}

\begin{proof}[Proof of Proposition~\ref{prop:U_431}]
    Note that there are $O(p)$ nonzero entries in each row or column of $U^{4,3,1}$, as each entry is nonzero only when the index pairs intersect in $1$ element. Moreover, the entry $U^{4,3,1}_{\{a,b\}, \{a,c\}}$ has absolute value
    \begin{align*}
        \left|\sum_{i\in \mathbb{F}_p \setminus \{a,b,c\} } \chi(a-i)\chi(b-i)\chi(c-i)\right| = O(\sqrt{p}) \numberthis
    \end{align*}
    by Theorem~\ref{thm:weil}.
    By Corollary \ref{cor:sparse-matrix-norm}, we then find $\|U^{4,3,1}\| = O(p^{3/2})$.
\end{proof}

\subsection{Quadratic Rewriting Bound}

We now move on to an estimate for which we use Proposition~\ref{prop:chi-sum-2} in a slightly more sophisticated way, to control the powers of modified versions of graph matrices.

\begin{proof}[Proof of Proposition~\ref{prop:U541}]
    First, note that
    \begin{equation}
        \sum_{i \in \FF_p \setminus \{a,b,c,d\}} \chi(a - i)\chi(b - i)\chi(c - i)\chi(d - i) = \sum_{i \in \FF_p} \chi(a - i)\chi(b - i)\chi(c - i)\chi(d - i),
    \end{equation}
    since $\chi(0) = 0$.
    Next, we may write $U^{5,4,1} = \widetilde{U}^{5,4,1} + \Delta$, where
    \begin{equation}
        \label{eq:U541-entries}
        \widetilde{U}^{5,4,1}_{\{a, b\},\{c,d\}} = \sum_{i \in \FF_p} \chi(a - i)\chi(b - i)\chi(c - i)\chi(d - i)
    \end{equation}
    without the constraint that $\{a, b\} \cap \{c, d\} = \emptyset$, and $\Delta$ is a suitable correction with $\Delta_{\{a, b\}, \{c, d\}} \neq 0$ only when $\{a, b \}\cap \{c, d\} \neq \emptyset$.
    We have $\Delta_{\{a, b\}, \{a, b\}} = -(p - 2)$, and the non-zero off-diagonal entries of $\Delta$ are of magnitude $O(1)$ by Proposition~\ref{prop:chi-sum-2}.
    Thus, $\|\Delta\| = O(p)$ by Corollary~\ref{cor:sparse-matrix-norm}, so $\|U^{5, 4, 1}\| = \|\widetilde{U}^{5, 4, 1}\| + O(p)$.

    Now, note that $\widetilde{U}^{5,4,1}$ is a submatrix of the rectangular matrix $X \in \RR^{(\FF_p)_{(2)} \times \FF_p^2}$, having rows indexed by pairs $(a, b) \in \FF_p^2$ with $a \neq b$ (recall that we denote this set $(\FF_p)_{(2)}$) and columns indexed by arbitrary pairs $(c, d) \in \FF_p^2$, where the entries of $X$ for any pair are given again by the formula \eqref{eq:U541-entries}.
    In particular, $\|\widetilde{U}^{5,4,1}\| \leq \|X\|$.
    Now, we have, for any $(a, b), (c, d) \in (\FF_p)_{(2)}$,
    \begin{align*}
        (XX^{\top})_{(a, b), (c, d)}
        &= \sum_{i, j, k, \ell \in \FF_p} \chi(a - i)\chi(b - i) \chi(j - i)\chi(k - i) \chi(j - \ell) \chi(k - \ell) \chi(c - \ell) \chi(d - \ell) \\
        &= \sum_{i, \ell \in \FF_p} \chi(a - i)\chi(b - i) \chi(c - \ell) \chi(d - \ell) (-1 + \One\{i = \ell\}\,p)^2 \tag{by Proposition~\ref{prop:chi-sum-2}} \\
        &= (p^2 - 2p) \sum_{i \in \FF_p}\chi(a - i) \chi(b - i) \chi(c- i) \chi(d - i) \\
        &\hspace{1cm} + \left(\sum_{i \in \FF_p} \chi(a - i) \chi(b - i) \right)\left(\sum_{\ell \in \FF_p} \chi(c - \ell) \chi(d - \ell) \right)
        \intertext{and, since we have $a \neq b$ and $c \neq d$,}
        &= (p^2 - 2p) X_{(a, b),(c, d)} + 1. \numberthis
    \end{align*}
    Taking norms and using the triangle inequality, we find $\|X\|^2 \leq (p^2 - 2p) \|X\| + p^2$, or $\|X\|^2 - (p^2 - 2p) \|X\| - p^2 \leq 0$.
    In other words, $\|X\|$ must lie between the two real roots of the quadratic equation $t^2 - (p^2 - 2p) t - p^2 = 0$.
    Solving the equation shows that both roots are $O(p^2)$, so $\|X\| = O(p^2)$.
    Thus $\|\widetilde{U}^{5,4,1}\| = O(p^2)$, completing the proof.
\end{proof}

\subsection{Subspace-Specific Bounds}

We now give one proof where we must analyze the restrictions of matrices to the subspaces $\mathbb{V}_i$.

\begin{proof}[Proof of Proposition~\ref{prop:T_4ij}]
    We only prove the claim for $T^{4,2,1}$. The two other claims are proved analogously.

    First, we show $\|T^{4,2,1}\| = O(p^{3/2})$. We may write $T^{4,2,1} = \Tilde{A} + \Tilde{B}$, where $\Tilde{A}, \Tilde{B}$ are submatrices of $A, B \in \mathbb{R}^{\mathbb{F}_p^2 \times \mathbb{F}_p^2 }$ defined by
    \begin{align}
        A_{(a,b), (c,d)} &= \chi(a-c)\chi(a-d)\\
        B_{(a,b), (c,d)} &= \chi(b-c)\chi(b-d).
    \end{align}
    Note that $AA^\top = (S_{G_p}^2)^{\circ 2} \otimes J_p$ and $BB^\top = J_p \otimes (S_{G_p}^2)^{\circ 2}$. Recall that $X^{\circ 2}$ denotes the entrywise square of the matrix $X$, which is a submatrix of $X^{\otimes 2}$. Then, $\|AA^\top \| = \|(S_{G_p}^2)^{\circ 2}\|\|J_p\| \le ((\sqrt{p})^2)^2 p = p^3$, so $\|A\| \le p^{3/2}$. Similarly, $\|B\| \le p^{3/2}$. We therefore conclude $\|T^{4,2,1}\| \le \|\Tilde{A}\| + \|\Tilde{B}\| = O(p^{3/2})$.

    Next, we show $\| T^{4,2,1} P_2\| = O(\sqrt{p})$.
    Consider $T = T^{4,2,1} + \Tilde{T}$, where $\widetilde{T}$ has entries
    \begin{equation}
        \widetilde{T}_{\{a,b\}, \{c,d\}} = \left\{\begin{array}{ll} \chi(a-c)\chi(a-d) + \chi(b-c)\chi(b-d) & \text{if } \{a, b\} \cap \{c, d\} \neq \emptyset, \\
        0 & \text{if } \{a, b\} \cap \{c, d\} = \emptyset. \end{array}\right.
    \end{equation}
    The resulting matrix then has $T_{\{a,b\}, \{c,d\}} = \chi(a-c)\chi(a-d) + \chi(b-d)\chi(b-d)$.

    For any $v \in \mathbb{R}^{\binom{\mathbb{F}_p}{2} }$, we have
    \begin{align*}
        \left(T^\top v\right)_{\{a,b\}}
        &= \sum_{\{c,d\} }  (\chi(a-c)\chi(a-d)+ \chi(b-c)\chi(b-d)) v_{\{c,d\} }\\
        &= \sum_{\{c,d\} }  \chi(a-c)\chi(a-d)v_{\{c,d\}} + \sum_{\{c,d\} }  \chi(b-c)\chi(b-d) v_{\{c,d\} }\\
        &\in \mathbb{V}_0 \oplus \mathbb{V}_1. \numberthis
    \end{align*}
    So, $P_2 T^\top = 0$, and $ T^{4,2,1} P_2 = -\Tilde{T}P_2$. To bound $\|T^{4,2,1}P_2\|$, we bound $\|\Tilde{T}\|$. Similarly to the previous argument, we may embed $\Tilde{T}$ into a larger matrix $T'$ with indices being unordered pairs $(\mathbb{F}_p)_{(2)}$, and then write $T' = A'+ B'+ C' + D'$ as a sum of $4$ matrices, such that $A'^\top A', B'^\top B', C'^\top C', D'^\top D'$ are submatrices of $(I_p \otimes S_{G_p})^2$ or $(S_{G_p} \otimes I_p)^2$, from which we conclude $\|\Tilde{T}\| \le \|T'\| \le \|A'\| + \|B'\| + \|C'\| + \|D'\| \le 4\|(I_p \otimes S_{G_p})^2\|^{1/2} = 4((\sqrt{p})^2)^{1/2} = O(\sqrt{p})$. Thus, $\|T^{4,2,1}P_2\| \le \|\Tilde{T}\| = O(\sqrt{p})$.
\end{proof}

\subsection{Trace Power Method Bounds}
We next give two proofs that apply the trace power method, in the style of \cite{AMP-2016-GraphMatrices}, to graph matrix norms.

\begin{proof}[Proof of Proposition~\ref{prop:T_431}]
    We may write $T^{4,3,1} = \Tilde{A} + \Tilde{B} + \Tilde{C} + \Tilde{D} + \Tilde{\Delta}$, where $\Tilde{A}, \Tilde{B}, \Tilde{C}, \Tilde{D}, \Tilde{\Delta}$ are submatrices of $A, B, C, D, \Delta \in \mathbb{R}^{\mathbb{F}_p^2 \times \mathbb{F}_p^2 }$ defined by
    \begin{align*}
        A_{(a,b), (c,d)} &= \chi(a-c)\chi(a-d)\chi(b-c) \numberthis \\
        B_{(a,b), (c,d)} &= \chi(a-c)\chi(a-d)\chi(b-d) \numberthis \\
        C_{(a,b), (c,d)} &= \chi(a-c)\chi(b-c)\chi(b-d) \numberthis \\
        D_{(a,b), (c,d)} &= \chi(a-d)\chi(b-c)\chi(b-d) \numberthis \\
        \Delta_{(a,b), (c,d)} &= (- \chi(a-c)\chi(a-d)\chi(b-c)\\
        &\quad\, - \chi(a-c)\chi(a-d)\chi(b-d)\\
        &\quad\, - \chi(a-c)\chi(b-c)\chi(b-d)\\
        &\quad\, - \chi(a-d)\chi(b-c)\chi(b-d)) \, \One\{|\{a,b,c,d\}| < 4\}. \numberthis
    \end{align*}
    First, note that $\|\Delta\| \le O(p)$ by Corollary \ref{cor:sparse-matrix-norm}, as there are only $O(p)$ nonzero entries in each row or column and each nonzero entry is $O(1)$.

    Next, we consider $\tr((A^\top A)^k )$, which can be written as
    \begin{align*}
         &\quad \sum_{\substack{a_1, \dots, a_k \in \FF_p \\ b_1, \dots, b_k \in \FF_p \\ c_1, \dots, c_k \in \FF_p \\ d_1, \dots, d_k \in \FF_p}} \chi(a_1 - c_1)\chi(a_1 - d_1)\chi(b_1 - c_1)\chi(a_1 - c_2)\chi(a_1 - d_2)\chi(b_1 - c_2)\cdots\\
         &= \sum_{\substack{a_1, \dots, a_k \in \FF_p \\ c_1, \dots, c_k \in \FF_p}} \prod_{\ell=1}^k\chi(a_{\ell} - c_{\ell})\chi(a_{\ell} - c_{\ell+1})\left(\sum_{b \in \FF_p} \chi(b - c_{\ell})\chi(b - c_{\ell+1}) \right)\left(\sum_{d \in \FF_p} \chi(a_{\ell-1} - d)\chi(a_{\ell} - d)\right), \numberthis
    \end{align*}
    with index arithmetic performed modulo $k$.
    By Proposition \ref{prop:chi-sum-2}, if $i = j$, $\sum_x \chi(x- i)\chi(x-j) = p - 1$, and if $i \ne j$, $\sum_x \chi(x- i)\chi(x- j) = -1$. For a fixed sequence of $a_1, c_1, \dots, a_k, c_k$, let $n_1 \colonequals \# \{\ell: a_{\ell-1} = a_{\ell}\}$ and $n_2 \colonequals \# \{\ell: c_{\ell} = c_{\ell+1}\}$. Note that $n_1, n_2$ take values in $\{0, 1, \dots, k-3, k-2, k\}$. The contribution of such a sequence to the sum is $O(p^{n_1 + n_2})$. The number of sequences $a_1, c_1, \dots, a_k, c_k$ with these values of $n_1$ and $n_2$ is at most $\binom{k}{n_1}\binom{k}{n_2} p^{2k-n_1-n_2} p^{\One\{n_1 = k\} + \One\{n_2 = k\}}$. Thus, the trace above can be bounded by
    \begin{align*}
        \tr((A^\top A)^k ) &\lesssim \sum_{n_1, n_2} \binom{k}{n_1}\binom{k}{n_2}p^{2k-n_1-n_2}p^{\One\{n_1 = k\} + \One\{n_2 = k\}} p^{n_1 + n_2}\\
        &\lesssim (k-1)^2 2^{2k} p^{2k} + 2(k-1) 2^k p^{2k+1} + p^{2k+2}\\
        &\lesssim p^{2k}(k^2 2^{2k}+ k2^kp + p^2). \numberthis
    \end{align*}
    Taking $k = \Theta(\log p)$, we get $\|A^\top A\| \le \text{tr}((A^\top A)^k )^{1/k} \le O(p^2)$, and so $\|A\| = O(p)$. By symmetric arguments, we also get $\|B\|, \|C\|, \|D\| = O(p)$.
    Combining these results, we find
    \begin{align*}
        \|T^{4,3,1}\| \le \|A\| + \|B\| + \|C\|+ \|D\| + \|\Delta\| = O(p), \numberthis
    \end{align*}
    as claimed.
\end{proof}

\begin{proof}[Proof of Proposition~\ref{prop:U_53i}]
    By symmetry, it suffices to prove the claim for $U^{5,3,1}$.

    First, we show $\|U^{5,3,1}\| = O(p^2)$. We may write $U^{5,3,1} = \Tilde{A} + \Tilde{B} + \Tilde{\Delta}$, where $\Tilde{A}, \Tilde{B}, \Tilde{\Delta} $ are submatrices of $A,B, \Delta \in \mathbb{R}^{\mathbb{F}_p^2 \times \mathbb{F}_p^2 } $ defined by
    \begin{align}
        A_{(a,b), (c,d)} &= \sum_{i \in \mathbb{F}_p } \chi(a-i)\chi(b-i)\chi(c-i)\\
        B_{(a,b), (c,d)} &= \sum_{i \in \mathbb{F}_p } \chi(a-i)\chi(b-i)\chi(d-i)\\
        \Delta_{(a,b), (c,d)} &= -\sum_{i \in \{a,b,c,d\}} \bigg(\chi(a-i)\chi(b-i)\chi(c-i) + \chi(a-i)\chi(b-i)\chi(d-i)\bigg).
    \end{align}
    Since each entry of $\Delta$ is $O(1)$, by Corollary \ref{cor:sparse-matrix-norm}, $\|\Delta\| = O(p^2)$. For $A$, we consider $\tr{((A^\top A)^k)}$, which can be written as
    \begin{align*}
        &\quad \sum_{\substack{a_1, \dots, a_k \in \FF_p \\ b_1, \dots, b_k \in \FF_p \\ c_1, \dots, c_k \in \FF_p \\ d_1, \dots, d_k \in \FF_p \\ i_1, \dots, i_k \in \FF_p \\ j_1, \dots, j_k \in \FF_p}} \chi(c_1 - i_1)\chi(a_1 - i_1)\chi(b_1 - i_1)\chi(a_1 - j_1)\chi(b_1 - j_1)\chi(c_2 - j_1) \cdots\\
        &= p^k\sum_{\substack{i_1, \dots, i_k \in \FF_p \\ j_1, \dots, j_k \in \FF_p}} \prod_{\ell=1}^k \left(\sum_{a \in \FF_p} \chi(a - i_{\ell})\chi(a - j_{\ell})\right)
        \left(\sum_{b \in \FF_p}\chi(b - i_{\ell})\chi(b - j_{\ell}))\right) \\
        &\hspace{3 cm} \left(\sum_{c \in \FF_p}\chi(c - j_{\ell-1})\chi(c - i_{\ell}))\right) \numberthis
    \end{align*}
    As in the previous proof, we use Proposition \ref{prop:chi-sum-2}.
    For a sequence $i_1, j_1, \dots, i_k, j_k$, we define $n_1 = \# \{l: i_l = j_l\}$ and $n_2 = \# \{l: j_{l-1}= i_l\}$. Note that $n_1, n_2$ take values in $\{0, 1, \dots, k-1, k\}$ and moreover $n_1 + n_2 \ne 2k - 1$. The contribution of such a sequence to the sum is $O(p^{2n_1 + n_2})$. The number of sequences $i_1, j_1, \dots, i_k, j_k$ with these values of $n_1$ and $n_2$ is at most $\binom{k}{n_1}\binom{k}{n_2} p^{2k - n_1 - n_2} p^{\One\{n_1 + n_2 = 2k\}}$.
    Thus,
    \begin{align*}
        \tr((A^{\top}A)^k)
        &\lesssim p^k \sum_{n_1, n_2} \binom{k}{n_1}\binom{k}{n_2} p^{2k - n_1 - n_2} p^{\One\{n_1 + n_2 = 2k\}} p^{2n_1 + n_2} \\
        &\lesssim p^k k^2 2^{2k} p^{3k} + p^k p^{3k+1} \\
        &\lesssim p^{4k}(k^2 2^{2k}+ p). \numberthis
    \end{align*}
    Taking $k = \Theta(\log p)$, we get $\|A^\top A\| \le \text{tr}\left((A^\top A)^k \right)^{1/k} \le O(p^4)$, and $\|A\| = O(p^{2})$. By a symmetric argument, we also have $\|B\| = O(p^{2})$.
    We then conclude
    \begin{align*}
        \|U^{5,3,1}\| &\le \|A\| + \|B\| + \|\Delta\|= O(p^2), \numberthis
    \end{align*}
    as claimed.
\end{proof}

\subsection{Delicate Estimates: Proof of Theorem~\ref{thm:norm_T_441}}

Finally, we analyze the most complicated graph matrix whose norm we need to control, the matrix $T^{4,4,1}$.

\subsubsection{Spectral Decomposition from Block-Circulant Form}
In order to prove a norm bound for $T^{4,4,1}$, we define another matrix $\widetilde{T}^{4,4,1} \in \mathbb{R}^{(\mathbb{F}_p)_{(2)} \times (\mathbb{F}_p)_{(2)}}$ and utilize its automorphism group. This matrix is defined in the following way:
\begin{align*}
    \widetilde{T}^{4,4,1}_{(a,b), (c,d)} = T^{4,4,1}_{\{a,b\}, \{c,d\}}. \numberthis
\end{align*}
In other words, the entries of $\widetilde{T}$ can be looked up from $T^{4,4,1}$ by converting the indices from ordered pairs to sets.
Up to a permutation of rows and columns,
\begin{align*}
    \widetilde{T} = T^{4,4,1} \otimes \begin{bmatrix}
        1 & 1\\
        1 & 1
    \end{bmatrix}, \numberthis
\end{align*}
so $\|\widetilde{T}^{4,4,1}\| = 2\|T^{4,4,1}\|$.
Moreover, $\widetilde{T}$ has a natural group of automorphisms induced by the structure of $\mathbb{F}_p$, which is defined below:
\begin{definition}
    The \emph{affine group} $\Gamma$ of the finite field $\mathbb{F}_p$ is the group of all invertible affine transformations from $\mathbb{F}_p$ to $\mathbb{F}_p$, i.e., $\Gamma := \{x \mapsto ix + j: i \in \mathbb{F}_p^\times, j \in \mathbb{F}_p\}$. The group multiplication is given by composition.
\end{definition}
It is easy to check that $\Gamma$ of $\mathbb{F}_p$ is indeed a group of automorphisms of $\widetilde{T}$, acting on the indices by $g((a,b)) = (g(a), g(b))$ for $(a,b) \in (\mathbb{F}_p)_{(2)}$ and $g \in \Gamma$. Recall that the the entries of $T^{4,4,1}$ are
\begin{align*}
    T^{4,4,1}_{\{a,b\}, \{c,d\}} &= \chi((a-c)(a-d)(b-c)(b-d)), \numberthis
\end{align*}
so the entries of $\widetilde{T}^{4,4,1}$ are likewise
\begin{align*}
    \widetilde{T}^{4,4,1}_{(a,b), (c,d)} &= \chi((a-c)(a-d)(b-c)(b-d)), \numberthis
\end{align*}
and we may verify that $\Gamma$ is an automorphism group of $\widetilde{T}$ by checking that, for any $i \in \FF_p^{\times}$ and $j \in \FF_p$,
\begin{align*}
    \widetilde{T}^{4,4,1}_{(ia + j, ib + j), (ic + j, id + j)} &= \chi((ia-ic)(ia-id)(ib-ic)(ib-id))\\
    &= \chi((a-c)(a-d)(b-c)(b-d))\chi(i)^4\\
    &= \chi((a-c)(a-d)(b-c)(b-d))\\
    &= \widetilde{T}^{4,4,1}_{(a,b), (c,d)}. \numberthis
\end{align*}

We observe in particular that this group acts \emph{transitively} on the index set $(\mathbb{F}_p)_{(2)}$.
Using representation theory, it was shown by \cite{Lovasz-1975-SpectraGraphsTransitiveGroups,Babai-1979-SpectraCayleyGraphs} that the spectrum of a matrix with a transitive automorphism group can be expressed using the irreducible representations (and their characters) of the automorphism group. We will make use of this observation, but we give a self-contained proof for our special case without resorting to representation theory.

\begin{definition}
    A real symmetric matrix $M \in \mathbb{R}^{dn \times dn}$ is said to be \emph{block-circulant} if its rows and columns can be permuted such that it can be written in the block form
    \begin{equation}
        \label{eq:block-circulant-general}
        M = \begin{bmatrix}
            B^{(0)} & B^{(1)} & B^{(2)} & \dots & B^{(d-1)}\\
            B^{(d-1)} & B^{(0)} & B^{(1)} & \dots & B^{(d-2)}\\
            B^{(d-2)} & B^{(d-1)} & B^{(0)} & \dots & B^{(d-3)}\\
            \vdots & \vdots & \vdots & \ddots & \vdots\\
            B^{(1)} & B^{(2)} & B^{(3)} & \dots & B^{(0)}
        \end{bmatrix},
    \end{equation}
    where $B^{(i)} \in \mathbb{R}^{n \times n}$ for $0 \le i \le d-1$.
\end{definition}

The use of this definition is that the computation of the spectrum of a block-circulant matrix may be condensed in the following way.
\begin{proposition}
    \label{prop:block-circulant-spectrum}
    Suppose $M$ is block-circulant with block structure as in \eqref{eq:block-circulant-general}.
    Then, the spectrum of $M$ is the disjoint union of the spectra of $d$ smaller $n \times n$ Hermitian matrices $S^{(\psi)}$ taken over $\psi$ all additive characters of $\mathbb{Z}/d\mathbb{Z}$, where the matrices $S^{(\psi)}$ are defined as
        \begin{equation}
            S^{(\psi)} = \sum_{i=0}^{d-1} \psi(i) B^{(i)}.
        \end{equation}
\end{proposition}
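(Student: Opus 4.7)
The plan is to block-diagonalize $M$ using the discrete Fourier transform on $\mathbb{Z}/d\mathbb{Z}$, thereby exhibiting the $S^{(\psi)}$ as the diagonal blocks. First, I would rewrite the block-circulant structure in tensor product form as
\begin{equation*}
    M = \sum_{i=0}^{d-1} C^i \otimes B^{(i)},
\end{equation*}
where $C \in \mathbb{R}^{d \times d}$ is the cyclic shift matrix with $C_{k,\ell} = \One\{\ell \equiv k+1 \Mod{d}\}$. Comparing the $(k,a),(\ell,b)$ and $(\ell,b),(k,a)$ entries of $M$, real symmetry translates into the relations $B^{(d-i)} = (B^{(i)})^{\top}$ for each $i$, which will later give Hermiticity of the $S^{(\psi)}$.

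Next, I would simultaneously diagonalize all the powers $C^i$ via the characters of $\mathbb{Z}/d\mathbb{Z}$. For each additive character $\psi$, set $v_\psi \in \mathbb{C}^d$ with $v_\psi(k) = \psi(k)$. A direct computation gives $(C v_\psi)(k) = v_\psi(k+1) = \psi(1) v_\psi(k)$, so $C v_\psi = \psi(1) v_\psi$ and hence $C^i v_\psi = \psi(i) v_\psi$. Orthogonality of distinct additive characters implies that $\{v_\psi / \sqrt{d}\}_\psi$ is an orthonormal basis of $\mathbb{C}^d$. Consequently, the subspaces $V_\psi \colonequals \mathrm{span}(v_\psi) \otimes \mathbb{C}^n$ are mutually orthogonal and span $\mathbb{C}^{dn}$. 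Each $V_\psi$ is invariant under $M$, and identifying $V_\psi$ with $\mathbb{C}^n$ via the isomorphism $v_\psi \otimes w \mapsto w$, the restriction of $M$ to $V_\psi$ is exactly the operator
\begin{equation*}
    S^{(\psi)} \colonequals \sum_{i=0}^{d-1} \psi(i) B^{(i)}.
\end{equation*}

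Finally, I would verify Hermiticity of $S^{(\psi)}$ using the symmetry relations derived in the first step: since $\overline{\psi(i)} = \psi(-i)$, we get $(S^{(\psi)})^{*} = \sum_i \psi(-i) (B^{(i)})^{\top} = \sum_i \psi(-i) B^{(d-i)} = \sum_j \psi(j) B^{(j)} = S^{(\psi)}$ after reindexing. Since $M$ acts as the orthogonal direct sum of the $S^{(\psi)}$, its spectrum as a multiset is the disjoint union of the spectra of these $d$ Hermitian $n \times n$ matrices, yielding the claim. This is a standard block-diagonalization argument and I do not anticipate any real obstacle; the only nuance is careful sign and index bookkeeping when translating real symmetry of $M$ into Hermiticity of the Fourier blocks.
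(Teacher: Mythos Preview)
Your proposal is correct and takes essentially the same approach as the paper: both arguments tensor the character vector $v_\psi = (\psi(0),\dots,\psi(d-1))$ with vectors in $\mathbb{C}^n$ to produce eigenvectors of $M$, use orthogonality of characters to get the orthogonal direct sum, and verify Hermiticity of $S^{(\psi)}$ via $B^{(-i)} = (B^{(i)})^\top$ together with $\overline{\psi(i)} = \psi(-i)$. Your explicit tensor identity $M = \sum_i C^i \otimes B^{(i)}$ is a slightly cleaner packaging of the same computation the paper does blockwise, but there is no substantive difference.
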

\begin{proof}
    First, note that $S^{(\psi)}$ are all Hermitian, since $\psi(-i) = \overline{\psi(i)}$ and $B^{(-i)} = B^{(i)^{\top}}$.
    Let $u \in \mathbb{C}^{n}$ be an eigenvector of $S^{(\psi)}$ with eigenvalue $\lambda$. Define $w^{(\psi)} = (1 = \psi(0), \psi(1), \dots, \psi(d - 1))^\top \in \mathbb{C}^{d}$. Then, consider the vector $v = w^{(\psi)} \otimes u \in \mathbb{C}^{dn}$.
    The $i$th block of length $n$ of the vector $Mv$, starting counting from $i = 0$, is
    \begin{equation}
       \sum_{j = 0}^{d - 1} \psi(j) B^{(j - i)} u = \psi(i) \sum_{j = 0}^{d - 1} \psi(j) B^{(j)} u = \psi(i) S^{(\psi)} u = \lambda \cdot \psi(i) u,
    \end{equation}
    which is $\lambda$ multiplied by the $i$th block of length $n$ of $v$.
    Thus, $v$ is indeed an eigenvector of $M$ with eigenvalue $\lambda$.

    Moreover, the $w^{(\psi)}$ are mutually orthogonal by the orthogonality of characters, and it follows that the spectrum of $M$ is the disjoint union of the spectra of the $S^{(\psi)}$.
\end{proof}

\begin{proposition}\label{prop:block_circulant}
    Let $M \in \mathbb{R}^{(\mathbb{F}_p)_{(2)} \times (\mathbb{F}_p)_{(2)}}$ be a real symmetric matrix. Suppose the affine group $\Gamma$ of $\mathbb{F}_p$ is a group of automorphisms of $M$, where the action of an element $g \in \Gamma$ on the index set $(\mathbb{F}_p)_{(2)}$ is given by the natural one: $g((a,b)) = (g(a), g(b))$.
    Then, $M$ is block-circulant with the form
    \begin{equation}
        M = \begin{bmatrix}
            B^{(0)} & B^{(1)} & B^{(2)} & \dots & B^{(p-2)}\\
            B^{(p-2)} & B^{(0)} & B^{(1)} & \dots & B^{(p-3)}\\
            B^{(p-3)} & B^{(p-2)} & B^{(0)} & \dots & B^{(p-4)}\\
            \vdots & \vdots & \vdots & \ddots & \vdots\\
            B^{(1)} & B^{(2)} & B^{(3)} & \dots & B^{(0)}
        \end{bmatrix},
    \end{equation}
    where the set of indices of the $i$th partition of the rows and the columns consists of the set of pairs $\{(h^i a, h^i(a+1)): a \in \mathbb{F}_p\} \subset \left(\mathbb{F}_p\right)_{(2)}$ in which $h \in \mathbb{F}_p^\times$ is a multiplicative generator of $\mathbb{F}_p^\times$.
    Let $S^{(\psi)}$ be as in Proposition~\ref{prop:block-circulant-spectrum}.
    Then, the following also hold:
    \begin{enumerate}
        \item The matrices $S^{(\psi)}$ all have the all-ones vector $\boldsymbol{1}$ as an eigenvector.
        \item The matrices $Q_1 S^{(\psi)} Q_1$ have the same sets of eigenvalues, where $Q_1$ is the orthogonal projection to the orthogonal complement of $\boldsymbol{1}$ (as in our earlier notation). In other words, the $S^{(\psi)}$ have a common set of eigenvalues except for the eigenvalues corresponding to the all-ones eigenvector.
    \end{enumerate}
\end{proposition}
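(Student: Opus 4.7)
My plan is to use the simple transitivity (regularity) of $\Gamma$ on $(\FF_p)_{(2)}$, then exploit how additive translations act within each block to derive the spectral structure via Fourier analysis on $\FF_p$ and $\FF_p^\times$.

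\textbf{Block-circulant structure.} I would first check that the sets $V_i = \{(h^i a, h^i(a+1)) : a \in \FF_p\}$ partition $(\FF_p)_{(2)}$: each ordered pair $(x,y)$ with $x \neq y$ has a unique non-zero difference $y - x = h^i$ and then the unique label $a = h^{-i}x$. The key observation is that multiplication by $h$, viewed as an element of $\Gamma$, sends the $a$-labeled element of $V_i$ to the $a$-labeled element of $V_{i+1}$, fixing the label. Since this is an automorphism of $M$, the labeled block $M|_{V_i \times V_j}$ equals $M|_{V_{i+1} \times V_{j+1}}$, so it depends only on $j - i \pmod{p-1}$. Setting $B^{(k)} := M|_{V_0 \times V_k}$ then yields the stated block-circulant form.

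\textbf{Translations and item 1.} The additive translation $x \mapsto x + j$ in $\Gamma$ sends the $a$-labeled element of $V_i$ to the $(a + h^{-i}j)$-labeled element of $V_i$. Applying this automorphism to $B^{(k)}$ gives the identity $B^{(k)}_{a+j,\, a'+h^{-k}j} = B^{(k)}_{a, a'}$ for all $j \in \FF_p$, which forces $B^{(k)}_{a, a'} = f_k(a' - h^{-k} a)$ for some function $f_k : \FF_p \to \RR$. The row sums $\sum_{a'} B^{(k)}_{a, a'} = \sum_{c \in \FF_p} f_k(c)$ are therefore independent of $a$, so $\bm{1}$ is an eigenvector of every $B^{(k)}$ and hence of every $S^{(\psi)}$, establishing item~1.

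\textbf{Fourier decomposition and item 2.} For item 2 I would pass to the additive Fourier basis $w_t(a) := e_p(ta)$ of $\CC^{\FF_p}$, in which $w_0 = \bm{1}$ and $\bm{1}^\perp = \mathrm{span}\{w_t : t \in \FF_p^\times\}$. Using $B^{(k)}_{a, a'} = f_k(a' - h^{-k}a)$, a direct computation gives
\begin{equation*}
    S^{(\psi)} w_t \;=\; \sum_{k=0}^{p-2} \psi(k)\,\hat{f}_k(t)\, w_{t h^{-k}}, \qquad \hat{f}_k(t) := \sum_{c \in \FF_p} f_k(c)\, e_p(tc),
\end{equation*}
so that for $s, t \in \FF_p^\times$ the $(s,t)$-entry of $S^{(\psi)}$ in the $w$-basis factors as $\tilde\psi(t/s)\,\lambda_{s,t}$, where $\tilde\psi$ is the multiplicative character of $\FF_p^\times$ defined by $\tilde\psi(h^k) := \psi(k)$ and $\lambda_{s,t}$ is a quantity that does not depend on $\psi$. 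Given any two characters $\psi, \psi'$, setting $\tilde\rho := \tilde{\psi'}/\tilde\psi$ and $D := \mathrm{diag}(\tilde\rho(t))_{t \in \FF_p^\times}$, the multiplicativity $\tilde\rho(t)/\tilde\rho(s) = \tilde\rho(t/s)$ yields $D^{-1} S^{(\psi)} D = S^{(\psi')}$ on $\bm{1}^\perp$, which is the desired explicit similarity.

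\textbf{Main obstacle.} The only real subtlety is fixing a consistent labeling of elements of $V_i$ that interacts cleanly with both the multiplicative generator $h$ (which shifts blocks while preserving labels) and the additive translations (which preserve blocks but shift labels by $h^{-i}j$). Once this parameterization is pinned down, items~1 and~2 reduce to an elementary Fourier computation on $\FF_p$ followed by a diagonal twist by a multiplicative character of $\FF_p^\times$.
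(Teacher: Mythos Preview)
Your proposal is correct and follows essentially the same approach as the paper: the same block partition $V_i$, the same use of $x \mapsto hx$ for the block-circulant structure, the same use of additive translations to get constant row sums for item~1, and the same Fourier computation $B^{(k)} w_t = \hat f_k(t)\, w_{t h^{-k}}$ for item~2. The only cosmetic difference is the packaging of item~2: the paper takes an eigenvector $u = \sum_i c_{h^i} v_{h^i}$ of $S^{(\psi_0)}$ and checks directly that the twisted vector $u^{(\psi)} = \sum_i \psi(-i)\, c_{h^i} v_{h^i}$ is an eigenvector of $S^{(\psi)}$ with the same eigenvalue, whereas you establish the explicit similarity $D^{-1} S^{(\psi)} D = S^{(\psi')}$ on $\bm 1^\perp$ with $D$ diagonal in the Fourier basis; these are the same computation, since the paper's $u^{(\psi)}$ is exactly your $Du$.
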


\begin{proof}
    To show $M$ is block-circulant with the prescribed index partition as in the statement of the proposition, we only need to check
    \begin{equation}
        M_{(h^ia, h^i(a+1)), (h^jb, h^j(b+1))} = M_{(h^{i+1}a, h^{i+1}(a+1)), (h^{j+1}b, h^{j+1}(b+1))},
    \end{equation}
    which follows from $\Gamma$ being a group of automorphisms of $M$ and $(x \mapsto hx) \in \Gamma$.

    Next, we check that $\boldsymbol{1}$ is an eigenvector of each $S^{(\psi)}$.
    We have
    \begin{equation}
        (B^{(i)} \boldsymbol{1})_a = \sum_{b \in \FF_p} M_{(h^ia,h^i(a + 1)), (h^ib, h^i(b + 1))} = \sum_{b \in \FF_p} M_{(h^i(a - b),h^i(a - b + 1)), (0, h^i)}
    \end{equation}
    by automorphism invariance of $M$, and clearly the final quantity does not depend on $a$.
    Thus, $\boldsymbol{1}$ is an eigenvector of each $B^{(i)}$ (though with eigenvalue depending on $i$), and the claim follows as each $S^{(\psi)}$ is a linear combination of the $B^{(i)}$.

    It remains to show that the remaining eigenvalues of $S^{(\psi)}$ do not depend on $\psi$.
    Consider the collection of vectors $v_t = \left(e_p(0), e_p(t), \dots, e_p((p-1)t)\right)^\top$ where $0 \le t \le p-1$. By orthogonality of characters, the $v_t$ form an orthogonal basis of $\mathbb{C}^{\mathbb{F}_p}$.
    We first consider how each $B^{(i)}$ acts on this basis.
    We have
        \begin{align*}
            (B^{(i)} v_t)_a &= \sum_{b \in \mathbb{F}_p} B^{(i)}_{a,b} e_p(bt)\\
            &= \sum_{b \in \mathbb{F}_p} M_{(a, a+1), (h^ib, h^i(b+1))} e_p(bt)
            \intertext{and, changing variables $b \leftarrow b + ah^{-i}$ and using the automorphism group $\Gamma$ of $M$, we get,}
            &= \sum_{b \in \mathbb{F}_p} M_{(a, a+1), (h^ib + a, h^i(b+1) + a)} e_p\left((b + ah^{-i})t\right) \\
            &= \left[\sum_{b \in \mathbb{F}_p} M_{(0, 1), (h^ib, h^i(b+1))} e_p(bt) \right] e_p\left(ah^{-i}t\right)\\
            &= \gamma^{(i)}_t (v_{h^{-i}t})_a, \numberthis
        \end{align*}
        where $\gamma^{(i)} \colonequals \sum_{b \in \mathbb{F}_p} M_{(0, 1), (h^ib, h^i(b+1))} e_p(bt)$.
        Therefore,
        \begin{align}\label{eq:block_circulant_1}
            B^{(i)}v_t = \gamma^{(i)}_t v_{h^{-i}t}. \numberthis
        \end{align}

    Now, let $\psi_0$ be the trivial character of $\mathbb{Z}/(p-1)\mathbb{Z}$, i.e., $\psi_0(i) = 1$ for $i \in \mathbb{Z}/(p-1)\mathbb{Z}$. Then, $S^{(\psi_0)} = \sum_{i=0}^{p-2} B^{(i)}$.
        Let $u$ be an eigenvector of $S^{(\psi_0)}$ with eigenvalue $\lambda$ and with $\langle u, \boldsymbol{1} \rangle = 0$. Write $u$ in the basis $v_t$, $u = \sum_{t = 0}^{p-1} c_t v_t$. We have $c_0 = 0$ as $v_0 = \boldsymbol{1}$. Recall that $h \in \mathbb{F}_p^\times$ is a generator of $\mathbb{F}_p^\times$, so we may alternatively write
        \begin{align*}
            u &= \sum_{t=0}^{p-1} c_t v_t = \sum_{t=1}^{p-1} c_t v_t = \sum_{i=0}^{p-2} c_{h^i} v_{h^i}. \numberthis
        \end{align*}
        Since the eigenvalue corresponding to $u$ is $\lambda$, we have
        \begin{align*}
            \lambda \sum_{i=0}^{p-2} c_{h^i} v_{h^i}
            &= \lambda u \\
            &= S^{(\psi_0)} u \\
            &= \sum_{i=0}^{p-2} B^{(i)} \sum_{j=0}^{p-2} c_{h^j} v_{h^j}\\
            &= \sum_{j=0}^{p-2}\sum_{i=0}^{p-2} c_{h^j} \gamma^{(i)}_{h^j} v_{h^{-i+j}}
            \intertext{and, changing variables $i \leftarrow -i+j$, we get,}
            &= \sum_{j=0}^{p-2}\sum_{i=0}^{p-2} c_{h^j} \gamma^{(-i+j)}_{h^j} v_{h^{i}}\\
            &= \sum_{i=0}^{p-2} \left(\sum_{j=0}^{p-2} c_{h^j} \gamma^{(-i+j)}_{h^j}\right) v_{h^{i}}, \numberthis
        \end{align*}
        and thus, since the $v_t$ form a basis,
        \begin{align}\label{eq:block_circulant_2}
            \sum_{j=0}^{p-2} c_{h^j} \gamma^{(-i+j)}_{h^j} = \lambda c_{h^i}.
        \end{align}

        Now, for $\psi \neq \psi_0$ another additive character, consider the vector $u^{(\psi)} \colonequals \sum_{i=0}^{p-2} \psi(-i) c_{h^i} v_{h^i}$. We have
        \begin{align*}
            S^{(\psi)} u^{(\psi)} &= \sum_{i=0}^{p-2} \psi(i) B^{(i)} \sum_{j=0}^{p-2} \psi(-j) c_{h^j} v_{h^j}\\
            &= \sum_{j=0}^{p-2} \sum_{i=0}^{p-2} \psi(i-j)c_{h^j} B^{(i)}v_{h^j}
            \intertext{where, using \eqref{eq:block_circulant_1}, we get}
            &= \sum_{j=0}^{p-2} \sum_{i=0}^{p-2} \psi(i-j)c_{h^j} \gamma^{(i)}_{h^j} v_{h^{-i+j}}
            \intertext{and, changing variables $i \leftarrow -i+j$, we get,}
            &= \sum_{j=0}^{p-2} \sum_{i=0}^{p-2} \psi(-i)c_{h^j}\gamma^{(-i+j)}_{h^j} v_{h^{i}} \\
            &= \sum_{i=0}^{p-2} \psi(-i)\left(\sum_{j=0}^{p-2} c_{h^j}\gamma^{(-i+j)}_{h^j}\right) v_{h^{i}}
            \intertext{and finally, using \eqref{eq:block_circulant_2}, we get}
            &= \lambda \sum_{i=0}^{p-2} \psi(-i) c_{h^i} v_{h^i}\\
            &= \lambda u^{(\psi)}. \numberthis
        \end{align*}
        Thus, $u^{(\psi)}$ is an eigenvector of $S^{(\psi)}$ of the same eigenvalue $\lambda$, and the result follows.
\end{proof}

\subsubsection{Character Sum Estimates}
When we apply the machinery developed above to $\widetilde{T}$, we will be left with smaller matrices $S^{(\psi)} \in \mathbb{C}^{\mathbb{F}_p \times \mathbb{F}_p }$ with entries in terms of $\chi$ whose norm we need to bound.
We now prove the character sum bounds we will need in order to do this.

\begin{theorem}\label{thm:character_sum_1}
    Let $T \in \RR^{p \times p}_{\sym}$ have entries
    \begin{equation}
        T_{ij} = \sum_{x \in \FF_p} \chi\bigg( (ix - j)(ix - (j + 1))((i + 1)x - j)((i + 1)x - (j + 1)) \bigg).
    \end{equation}
    Then, $\|T\| = O(p^{5/4})$.
\end{theorem}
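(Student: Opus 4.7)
The plan is to derive a Fourier--Kloosterman representation of each entry of $T$ and then decompose $T$ into pieces whose norms are controlled separately.

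For $i \notin \{0, -1\}$, I would expand each of the four Legendre factors in $T_{ij}$ via the Gauss-sum identity $\chi(a) = \tfrac{1}{\sqrt{p}}\sum_b \chi(b)\,e_p(ab)$, which holds for all $a \in \FF_p$ because $G(\chi) = \sqrt{p}$ by Proposition~\ref{prop:gauss-sum-norm}. Summing over $x$ then forces the linear relation $i(t_1 + t_2) + (i+1)(t_3 + t_4) = 0$ on the Fourier dual variables, and parametrizing this constraint by a single variable $s$ lets the sum factor into two copies of the one-variable sum
\begin{equation*}
F(r) := \sum_{t \in \FF_p} \chi(t)\,\chi(t - r)\,e_p(t) = e_p(r/2)\, K(r^2/16) \qquad (r \neq 0),
\end{equation*}
where the last equality follows from Proposition~\ref{prop:kloosterman-rewrite} after the substitutions $t = ru$, $u = (w+1)/2$. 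The upshot is that, for every $i \notin \{0, -1\}$,
\begin{equation*}
T_{ij} = \tfrac{1}{p} + \tfrac{1}{p} \sum_{s \neq 0} e_p\!\left(-\tfrac{s(2j+1)}{2}\right) K\!\left(\tfrac{s^2(i+1)^2}{16}\right) K\!\left(\tfrac{s^2 i^2}{16}\right).
\end{equation*}

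This representation suggests the matrix decomposition $T = \tfrac{1}{p} J + \tfrac{1}{p} MN + E$, where $M$ has entries $M_{is} := K(s^2(i+1)^2/16)\,K(s^2 i^2/16)$ for $i \notin \{0,-1\}$ and $M_{is} := 0$ otherwise; $N$ has entries $N_{sj} := e_p(-s(2j+1)/2)$; and $E$ is a rank-at-most-$2$ correction supported on the two rows $i \in \{0,-1\}$ whose nonzero entries are all $O(1)$ by Propositions~\ref{prop:chi-sum-1} and \ref{prop:chi-sum-2}. Hence $\|J/p\| = 1$, $\|E\| \leq \|E\|_F = O(\sqrt{p})$, and the orthogonality of additive characters gives $NN^* = p I$, so $\|N\| = \sqrt{p}$. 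It thus suffices to prove $\|M\| = O(p^{7/4})$, since then
\begin{equation*}
\|T\| \leq \|J/p\| + \|M\|\,\|N\|/p + \|E\| = O(1) + O(p^{7/4} \cdot \sqrt{p}/p) + O(\sqrt p) = O(p^{5/4}).
\end{equation*}

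I would bound $\|M\|^2 = \|MM^\top\|$ by $\|MM^\top\|_F$. The entries are
\begin{equation*}
(MM^\top)_{ii'} = \sum_{s \neq 0} \prod_{k=1}^{4} K(s^2 c_k^2/16), \quad (c_1, c_2, c_3, c_4) = (i+1, i, i'+1, i'),
\end{equation*}
and after the substitution $u = s^2/16$ and the identity $\mathbf{1}_{\{u \in (\FF_p^\times)^2\}} = (1 + \chi(u))/2$, this splits as $A_{ii'} + B_{ii'}$ with $A_{ii'} = \sum_u \prod_k K(u c_k^2)$ and $B_{ii'} = \sum_u \chi(u)\,\prod_k K(u c_k^2)$. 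For ``generic'' $(i, i')$ the four entries of $\{c_k^2\}$ are all distinct, so Proposition~\ref{prop:kloosterman-corr} with $k=2$, $t=4$ gives $|A_{ii'}| = O(p^{5/2})$; a companion bound $|B_{ii'}| = O(p^{5/2})$ follows either from a twisted variant of the same correlation estimate or, equivalently, by expanding each Kloosterman factor via Proposition~\ref{prop:kloosterman-rewrite} to recast $(MM^\top)_{ii'}$ as a three-variable character sum of a product of eight projectively distinct linear forms and invoking a Deligne-type estimate. The exceptional pairs where $\{c_k^2\}$ pairs up---namely $i = i'$, $i' = -1 - i$, and the single point $i = i' = (p-1)/2$---form an $O(p)$-sized subset, and there the trivial bound $|K| \leq 2\sqrt p$ gives $|(MM^\top)_{ii'}| = O(p^3)$. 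Combining, $\|MM^\top\|_F^2 \leq O(p^2 \cdot p^5) + O(p \cdot p^6) = O(p^7)$, so $\|MM^\top\| \leq \|MM^\top\|_F = O(p^{7/2})$ and hence $\|M\| = O(p^{7/4})$.

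The main obstacle is the bound $|B_{ii'}| = O(p^{5/2})$ for the $\chi$-twisted four-fold Kloosterman correlation. The twisted-Kloosterman propositions cited in the preliminaries address single-argument moments with character twists, not correlation sums in which the four factors are evaluated at four different multiples of the summation variable, and extending Proposition~\ref{prop:kloosterman-corr} to include a multiplicative twist requires additional input---most likely a direct Deligne-style argument on the three-variable character sum arising from the Gauss-sum expansion.
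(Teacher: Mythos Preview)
Your Fourier--Kloosterman representation is correct and is in fact exactly what the paper derives (you are even slightly more careful than the paper about the exceptional rows $i\in\{0,-1\}$ and the $x=0$ term). The gap you flag is also real: the bound $|B_{ii'}|=O(p^{5/2})$ for the $\chi$-twisted four-fold Kloosterman correlation is \emph{not} a consequence of Propositions~\ref{prop:twisted-kloosterman}, \ref{prop:twisted-kloosterman-2}, or \ref{prop:kloosterman-corr}, and would require an independent Deligne-type input. So as written the argument is incomplete.

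The paper sidesteps exactly this obstacle with a short trick you are missing. After arriving at (your) matrix $M_{is}=K(s^2 i^2/16)K(s^2(i+1)^2/16)$, note that $M$ is, up to duplicating each nonzero column, a submatrix of the matrix
\[
T^{(2)}_{iu} \;=\; K(i^2 u)\,K((i+1)^2 u), \qquad u \in \FF_p,
\]
where the column index now runs over \emph{all} of $\FF_p$ rather than only over squares $u=s^2/16$. Concretely, $\|M\|\le\sqrt{2}\,\|T^{(2)}\|$. The point is that
\[
(T^{(2)}T^{(2)\top})_{ab} \;=\; \sum_{u\in\FF_p} K(a^2 u)\,K((a+1)^2 u)\,K(b^2 u)\,K((b+1)^2 u)
\]
is now an \emph{untwisted} Kloosterman correlation, so Proposition~\ref{prop:kloosterman-corr} applies directly and yields $O(p^{5/2})$ for all but $O(1)$ values of $b$ in each row. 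The paper then bounds $\|T^{(2)}T^{(2)\top}\|$ by its maximum absolute row sum (Gershgorin) rather than its Frobenius norm, but either route gives $O(p^{7/2})$ and hence $\|M\|=O(p^{7/4})$, after which your final assembly $\|T\|\le \|J\|/p+\|M\|\,\|N\|/p+\|E\|=O(p^{5/4})$ goes through unchanged.

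In short: your decomposition into $A+B$ via $\tfrac{1+\chi(u)}{2}$ is the step that creates the difficulty; unfolding the column index from squares to all of $\FF_p$ \emph{before} forming $MM^\top$ eliminates the $\chi$-twist and lets the cited Kloosterman correlation bound do all the work.
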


\begin{remark}
    \label{rem:T441-tightness}
    As we will see, it is the $5/4$ exponent in this result that yields the same in Theorem~\ref{thm:norm_T_441} for $T^{4,4,1}$.
    A naive argument here can easily show a bound of $O(p^{\frac{3}{2}})$, which gives the same bound for $T^{4,4,1}$, but this is not enough for proving our main theorem.
    While the $5/4$ exponent \emph{is} sufficient for our main theorem, we believe that this is not tight. We conjecture based on numerical experiments that the right magnitude of $\|T^{4,4,1}\|$ is $\Theta(p)$, and the optimal constant for the linear term is $1$, i.e.,
\begin{equation}
    \limsup_{p \to \infty} \frac{1}{p} \|T^{4,4,1}\| = 1.
\end{equation}
\end{remark}

\begin{proof}
    We start by achieving a factorization of the matrix $T$.
    Note that we may write
    \begin{align*}
        T_{ij}
        &= \sum_{a, b \in \FF_p} \chi(a(a - 1)b(b - 1)) \One\{ib - (i + 1)a - j = 0\} \\
        &= \frac{1}{p} \sum_{a, b, x \in \FF_p} \chi(a(a - 1)b(b - 1)) e_p(x(ib - (i + 1)a - j))
        \intertext{and, changing variables $a \leftarrow (a + 1) / 2$ and $b \leftarrow (b + 1) / 2$, we find, writing $\overline{2}$ for the multiplicative inverse of 2 modulo $p$,}
        &= \frac{1}{p} \sum_{a, b, x \in \FF_p} \chi((a^2 - 1)(b^2 - 1)) e_p\left(x\left(\overline{2}i(b + 1) - \overline{2}(i + 1)(a + 1) - j\right)\right) \\
        &= \frac{1}{p} \sum_{x \in \FF_p} e_p\left(x(-\overline{2} - j)\right) \left( \sum_{a \in \FF_p} \chi(a^2 - 1) e_p(\overline{2}(i + 1)xa)\right)\left( \sum_{a \in \FF_p} \chi(b^2 - 1) e_p(\overline{2}ixb)\right) \\
        &= \frac{1}{p} \sum_{x \in \FF_p} e_p\left(x(-\overline{2} - j)\right) K\left(\left(\frac{i}{4}x\right)^2\right) K\left(\left(\frac{i + 1}{4}x\right)^2\right). \numberthis
    \end{align*}
    Thus, there is a unitary matrix $U \in \CC^{n \times n}$ given by a row and column permutation of the discrete Fourier transform matrix such that $T = p^{-1/2} T^{(1)} U$, where the entries of $T^{(1)}$ are given by
    \begin{equation}
        T^{(1)}_{ij} = K\left(i^2j^2\right) K\left((i + 1)^2j^2\right).
    \end{equation}

    Next, consider the slightly adjusted matrix $T^{(2)} \in \RR^{n \times n}$ with entries
    \begin{equation}
        T^{(2)}_{ij} = K\left(i^2j\right) K\left((i + 1)^2j\right).
    \end{equation}
    Letting $S \in \RR^{p \times (p + 1) / 2}$ be the submatrix of $T^{(2)}$ indexed by columns for which $j$ is a quadratic residue (including zero), we have that $T^{(1)}$ is a submatrix of $[1 \, \, 1] \otimes S$ (indeed, it is exactly this matrix with one column removed).
    Thus, we may bound
    \begin{equation}
        \|T^{(1)}\| \leq \|[1 \, \, 1] \otimes S\| = \sqrt{2} \|S\| \leq \sqrt{2} \|T^{(2)}\|.
    \end{equation}

    We may then bound the norm of our original $T$ as
    \begin{align*}
        \|T\|^2
        &= \frac{1}{p} \|T^{(1)}\|^2 \\
        &\leq \frac{2}{p} \|T^{(2)} \|^2 \\
        &= \frac{2}{p} \|T^{(2)}T^{(2)^{\top}} \|^2
        \intertext{and now, applying the Gershgorin circle theorem,}
        &\leq \frac{2}{p} \max_{a \in \FF_p} \sum_{b \in \FF_p} |(T^{(2)} T^{(2)^{\top}})_{ab}| \\
        &= \frac{2}{p} \max_{a \in \FF_p} \sum_{b \in \FF_p} \left| \sum_{x \in \FF_p} K\left(a^2x\right) K\left((a + 1)^2x\right) K\left(b^2x\right) K\left((b + 1)^2x\right)\right|
        \intertext{Now, by Proposition~\ref{prop:kloosterman-corr}, so long as some element among $a^2, (a + 1)^2, b^2, (b + 1)^2 \in \FF_p$ occurs an odd number of times, the inner sum is $O(p^{5/2})$. Unconditionally, the inner sum is always $O(p^3)$. For any particular $a$, there are at most 5 values of $b$ (the two solutions of either $b^2 = a^2$ or $b^2 = (a + 1)^2$, or the one solution of $b^2 = (b + 1)^2$) for which all elements occur an even number of times, so we find}
        &= \frac{2}{p} (5p^3 + O(p^{7/2})) \\
        &= O(p^{5/2}). \numberthis
    \end{align*}
    Thus, $\|T\| = O(p^{5/4})$, as claimed.
\end{proof}

\begin{theorem}\label{thm:character_sum_2}
    For any multiplicative character $\phi$ of $\mathbb{F}_p$,
    \begin{equation}
        \left|\sum_{x, y \in \mathbb{F}_p} \chi(x(x+1)(x-y)((x+1)-y))\phi(y)\right|
        = \left|\sum_{x, y \in \FF_p} \chi(x(x+1)y(y+1))\phi(x-y)\right|
        \leq 2p.
    \end{equation}
\end{theorem}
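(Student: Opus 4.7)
The first equality is just the change of variables $y \mapsto x-y$, under which $(x-y, (x+1)-y)$ becomes $(y, y+1)$ and $y$ becomes $x-y$. So the plan is to bound
\[
T(\phi) \colonequals \sum_{x,y \in \FF_p} f(x) f(y) \phi(x-y), \qquad f(x) \colonequals \chi(x(x+1)),
\]
by $2p$ for every multiplicative character $\phi$. The trivial (principal) case is immediate: $\phi(x-y) = \One\{x \ne y\}$ gives $T(\phi) = \bigl(\sum_x f(x)\bigr)^2 - \sum_x f(x)^2 = 1 - (p-2) = 3-p$, where both sums are computed using Proposition~\ref{prop:chi-sum-2} and $f(x)^2 = \One\{x \ne 0,-1\}$, which is well within the claimed bound.

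For nontrivial $\phi$, the main tool is the additive Fourier expansion of $\phi$ from Proposition~\ref{prop:mult-add-fourier}. Substituting $\phi(x-y) = \frac{G(\phi)}{p} \sum_b \overline{\phi}(b) e_p(bx) e_p(-by)$ into $T(\phi)$ and using that $f$ is real-valued, the $x$ and $y$ sums separate into $\hat{f}(b) \overline{\hat{f}(b)} = \lvert \hat{f}(b)\rvert^2$, yielding
\[
T(\phi) = \frac{G(\phi)}{p} \sum_{b \in \FF_p} \overline{\phi}(b) \, \lvert \hat{f}(b)\rvert^2,
\qquad \hat{f}(b) \colonequals \sum_{x} f(x) e_p(bx).
\]

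The key step, and the one I expect to do the real work, is identifying $\hat{f}(b)$ as (essentially) a Kloosterman sum. Completing the square via the substitution $u = 2x+1$ turns $x(x+1)$ into $(u^2-1)/4$, and since $\chi(4) = 1$ this gives $f(x) = \chi(u^2-1)$. Pulling out the resulting phase $e_p(-b/2)$ and invoking Proposition~\ref{prop:kloosterman-rewrite} on the remaining sum should produce
\[
\hat{f}(b) = e_p(-b/2) \, K(b^2/16).
\]
Because $K$ is real-valued (via the involution $x \mapsto a/x$ on $\FF_p^\times$ used in its definition), $\lvert \hat{f}(b)\rvert^2 = K(b^2/16)^2$.

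Finally, the change of variables $b = 4c$ introduces only the unit factor $\overline{\phi}(4)$ and rewrites the remaining sum as $\sum_c \overline{\phi}(c) K(c^2)^2$, which Proposition~\ref{prop:twisted-kloosterman} bounds in absolute value by $2p^{3/2}$. Combined with $\lvert G(\phi)\rvert = \sqrt{p}$ from Proposition~\ref{prop:gauss-sum-norm}, we conclude $\lvert T(\phi)\rvert \le p^{-1} \cdot \sqrt{p} \cdot 2p^{3/2} = 2p$, matching the claimed constant. The deep content of the estimate is thus fully packaged inside Proposition~\ref{prop:twisted-kloosterman}; the work in this proof is just the Fourier-analytic reduction and the Kloosterman identification of $\hat{f}(b)$.
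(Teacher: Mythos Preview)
Your proof is correct and follows essentially the same approach as the paper: Fourier-expand $\phi$ via the Gauss sum, identify the resulting inner sum with a Kloosterman sum through the substitution $u = 2x+1$, and apply Proposition~\ref{prop:twisted-kloosterman}. The only difference is the order of operations (you Fourier-expand first and then substitute inside $\hat f$, while the paper substitutes first and then Fourier-expands), which is cosmetic. Your handling of the principal-character case is in fact more careful than the paper's: with the convention $\phi(0)=0$ the diagonal terms must be removed, giving $3-p$ rather than the paper's $1$, though of course the bound $2p$ holds either way.
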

\noindent
This result follows directly from prior work on character sum estimates, but we fill in the details below for the sake of completeness.
The case of $\phi = \chi$ was treated by \cite{CI-2000-CubicMomentLFunctions}, who conjectured that the same should hold for all $\phi$.
This conjecture is implicitly proved in the result of \cite{Liu-2002-KloostermanTwistedMoments} that we cite in Proposition~\ref{prop:twisted-kloosterman}.

\begin{proof}
    We first treat one special case: for $\phi$ the trivial character, we have
    \begin{equation}
        \sum_{x, y \in \FF_p} \chi(x(x+1)y(y+1))\psi(x-y) = \left(\sum_x \chi(x)\chi(x + 1)\right)^2 = (-1)^2 = 1.
    \end{equation}
    Thus, let us assume $\phi$ is not trivial.

    We begin with some manipulations similar to those in Section 14 of \cite{CI-2000-CubicMomentLFunctions}.
    First, changing variables $x \leftarrow (x - 1) / 2$ and $y \leftarrow (y - 1) / 2$, we have
    \begin{align*}
        &\hspace{-2cm}\sum_{x, y \in \FF_p} \chi(x(x+1)y(y+1))\phi(x-y) \\
        &= \sum_{x, y} \chi\left(\frac{x^2 - 1}{4}\right) \chi\left(\frac{y^2 - 1}{4}\right) \phi\left(\frac{y - x}{2}\right) \\
        &= \overline{\phi}(4) \sum_{x, y} \chi(x^2 - 1)\chi(y^2 - 1) \phi(2(y - x)) \\
        &= \overline{\phi}(-4)\frac{G(\phi)}{p} \sum_{a, x, y} \chi(x^2 - 1)\chi(y^2 - 1) \overline{\phi}(a) e_p(2a(x - y)) \tag{by Proposition~\ref{prop:mult-add-fourier}} \\
        &= \overline{\phi}(-4) \frac{G(\phi)}{p} \sum_a \overline{\phi}(a) \left| \sum_x \chi(x^2 - 1) e_p(2ax) \right|^2 \\
        &= \overline{\phi}(-4) \frac{G(\phi)}{p} \sum_a \overline{\phi}(a) K(a^2)^2, \tag{by Proposition~\ref{prop:kloosterman-rewrite}}
    \end{align*}
    and applying Proposition~\ref{prop:twisted-kloosterman} and using that $|G(\phi)| = \sqrt{p}$ gives the result.
\end{proof}

\subsubsection{Final Steps}

\begin{proof}[Proof of Theorem~\ref{thm:norm_T_441}]
Since $\widetilde{T}$ is invariant under the action of the affine group $\Gamma$, by Proposition~\ref{prop:block_circulant}, we have
\begin{equation}
    \spec(\widetilde{T}) = \bigsqcup_{\psi} \spec(S^{(\psi)}),
\end{equation}
where the disjoint union ranges over the additive characters of $\mathbb{Z}/(p-1)\mathbb{Z}$, $S^{(\psi)} = \sum_{i=0}^{p-2}\psi(i) B^{(i)}$, $B^{(i)}$ are the blocks of $\widetilde{T}$ with entries given by
\begin{equation}
    B^{(i)}_{a,b} = \chi((a - h^ib)(a+1 - h^ib)(a - h^i(b+1))(a+1 - h^i(b+1))),
\end{equation}
and $h\in \mathbb{F}_p^\times$ is fixed to be a generator of $\mathbb{F}_p$. To show $\|\widetilde{T}\| = O\left(p^{\frac{5}{4} }\right)$, it is sufficient to show the same bound for the norms of the smaller matrices $S^{(\psi)}$.

By Proposition \ref{prop:block_circulant}, the all-$1$ vector $\boldsymbol{1}$ is an eigenvector for all $S^{(\psi)}$, and $Q_1S^{(\psi)}Q_1$ has a common set of $p-1$ eigenvalues. By Theorem \ref{thm:character_sum_2}, the eigenvalues of $S^{(\psi)}$ corresponding to $\boldsymbol{1}$ are at most
\begin{align*}
    \quad \left| \sum_{x \in \mathbb{F}_p } S^{(\psi)}_{0,x} \right| &= \left| \sum_{x\in \mathbb{F}_p} \sum_{i=0}^{p-2}\psi(i)B^{(i)}_{0,x} \right| \\
    &= \left| \sum_{x\in \mathbb{F}_p} \sum_{i=0}^{p-2} \chi((-h^ix)(1-h^ix)(-h^i(x+1))(1-h^i(x+1))) \phi(h^i) \right| \\
    &= \left| \sum_{x\in \mathbb{F}_p}\sum_{z \in \mathbb{F}_p^\times } \chi(x(x+1)(1-zx)(1-z(x+1))) \phi(z) \right|
    \intertext{and, changing variables $z \leftarrow 1/y$, we have,}
    &= \left|\sum_{x \in \mathbb{F}_p}\sum_{y \in \mathbb{F}_P^\times } \chi(x(x+1)(x-y)((x+1)-y))\phi^{-1}(y)\right|\\
    &= \left|\sum_{x \in \mathbb{F}_p}\sum_{y \in \mathbb{F}_p } \chi(x(x+1)(x-y)((x+1)-y))\phi^{-1}(y)\right|\\
    &\leq 2p \numberthis
\end{align*}
in absolute value, so $\|Q_0 S^{(\psi)} Q_0 \| \le 2p$ for all $S^{(\psi)}$. By Theorem \ref{thm:character_sum_1}, we have
\begin{equation}
    \|Q_1 S^{(\psi)} Q_1\| = \|Q_1 S^{(\psi_0)} Q_1\| = O(p^{5/4}),
\end{equation}
as the entries of $S^{(\psi_0)}$ are
\begin{align*}
    S^{(\psi_0)}_{x,y} &= \sum_{i=0}^{p-2} B^{(i)}_{x,y}\\
    &= \sum_{i=0}^{p-2} \chi((x-h^iy)(x+1 - h^iy)(x - h^i(y+1))(x+1-h^i(y+1)))\\
    &= \sum_{z \in \mathbb{F}_p^\times } \chi((x-zy)(x+1 - zy)(x - z(y+1))(x+1-z(y+1)))\\
    &= -\One_{x \not\in \{0, -1\} } + \sum_{z \in \mathbb{F}_p } \chi((x-zy)(x+1 - zy)(x - z(y+1))(x+1-z(y+1))), \numberthis
\end{align*}
which differ from the entries of the matrix in Theorem \ref{thm:character_sum_1} by at most $1$.

In conclusion, $\|S^{\psi}\| \le \max \{\|Q_0 S^{\psi} Q_0 \|, \|Q_1 S^{\psi} Q_1\|\} = O(p^{\frac{5}{4}})$, $\|\widetilde{T}\| = O(p^{\frac{5}{4}})$, and we conclude $\|T^{4,4,1}\| = O(p^{\frac{5}{4}})$.
\end{proof}

\section{Optimality Over Feige-Krauthgamer Pseudomoments}
\label{sec:optimality-fk}

In this section, we show that our lower bound is optimal over those achievable by FK pseudomoments.
To be precise, let us define a new SDP corresponding to this restricted type of pseudomoment, a variant of \eqref{degree_4_SOS}:
\begin{equation}
\label{degree_4_SOS_FK}
    \mathrm{FK}_{4}(G) \colonequals \left\{
    \begin{array}{ll}
    \text{maximize} & \sum_{i = 1}^n M^{0, 1}_{\emptyset, i} \\
    \text{subject to} & M^{r, c} \in \RR^{\binom{[n]}{r} \times \binom{[n]}{c}} \text{ for } r, c \in \{0, 1, 2\}, \\
    & M^{r, c}_{S, T} \text{ depends only on } S \cup T, \\
    & M^{r, c}_{S, T} = 0 \text{ whenever } S \cup T \notin \mathcal{K}(G), \\
    & M^{r, c}_{S, T} \text{ depends only on } |S \cup T| \text{ when } S \cup T \in \mathcal{K}(G), \\
    & M = \begin{bmatrix}
    1 & M^{0, 1} & M^{0, 2} \\
    M^{1, 0} & M^{1, 1} & M^{1,2}\\
    M^{2, 0} & M^{2,1} & M^{2,2}
\end{bmatrix} \succeq 0
\end{array}
    \right\}.
\end{equation}
Since the conditions of this SDP are more restrictive than those of $\SOS_4(G)$, we always have
\begin{equation}
    \SOS_4(G) \geq \mathrm{FK}_4(G).
\end{equation}
Our proof strategy has been to show that $\mathrm{FK}_4(G)$ is large.
However, the following result, the main one of this section, shows a limitation to this approach.

\begin{theorem}\label{thm:FK_optimality}
Over primes $p \equiv 1 \Mod{4}$, $\mathrm{FK}_4(G_p) = \Theta(p^{1/3})$.
\end{theorem}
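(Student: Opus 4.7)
The lower bound $\mathrm{FK}_4(G_p) \geq cp^{1/3}$ is immediate from Theorem~\ref{thm:restated_main_theorem}, whose construction is explicitly of FK form with $p\alpha_1 = cp^{1/3}$. The new content is therefore the matching upper bound $\mathrm{FK}_4(G_p) \leq Cp^{1/3}$, equivalently $\alpha_1 = O(p^{-2/3})$ for any FK-feasible parameters $(\alpha_1,\alpha_2,\alpha_3,\alpha_4)$.

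My plan is to rerun the Schur-complement analysis of Section~\ref{sec:graph-mx-decomp} in reverse, extracting \emph{necessary} conditions on the $\alpha_i$ from $M \succeq 0$. Evaluating $N^{1,1} \succeq 0$ (as in the proof of Proposition~\ref{prop:H11}) on the $\tfrac{-1-\sqrt p}{2}$-eigenspace of $A_{G_p}$ gives $\alpha_2 \leq \tfrac{2\alpha_1}{1+\sqrt p}$, and on the constant vector it gives $\alpha_1 + \tfrac{p-1}{2}\alpha_2 \geq p\alpha_1^2$. These two constraints alone only force $p\alpha_1 \leq \sqrt p$, recovering the spectral bound; to improve to the $p^{1/3}$ exponent one must use the degree-four pseudomoments.

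For this I would apply the same scheme as the lower-bound proof in reverse: evaluate the second Schur complement $H^{2,2} - H^{2,1}(H^{1,1})^{-1}H^{1,2} \succeq 0$ on each of the subspaces $\mathbb{V}_0, \mathbb{V}_1, \mathbb{V}_2 \subset \RR^{\binom{\FF_p}{2}}$ defined in Section~\ref{sec:graph_matrix_norm_bounds}, using the explicit graph-matrix identities for $T^{3,0,1}$ and $T^{4,0,1}$ (Propositions~\ref{prop:T301} and~\ref{prop:T401}) together with the norm bounds of Section~\ref{sec:graph-matrix-proofs} applied now as \emph{lower} bounds on restricted operators. This should yield three scalar inequalities relating $\alpha_3,\alpha_4$ to $\alpha_1,\alpha_2$ tightly. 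Combining them with the two degree-two constraints above and with the degree-four ``trivial-block'' constraint obtained by testing $M \succeq 0$ against a test vector that is constant across the vertex block and constant across the edge block---whose coefficients involve $|E|, T(G_p)$ and $K_4(G_p)$---should pin down $p\alpha_1 = O(p^{1/3})$.

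The main obstacle is that $\mathrm{FK}_4$ has four essentially independent parameters, so the upper bound must saturate four PSD constraints simultaneously. Verifying that the four constraints coming from the scheme above are in fact jointly tight, with the right constants to land on exponent exactly $\tfrac{1}{3}$ rather than $\tfrac{1}{2}-o(1)$, requires the same character-sum machinery as the lower bound; in particular, the $\mathbb{V}_2$-restriction of $H^{2,2}$ is dominated by the graph matrix $T^{4,4,1}$, so Theorem~\ref{thm:norm_T_441} (whose bound is conjecturally not tight, per Remark~\ref{rem:T441-tightness}) will need to be invoked both as an upper and as a lower bound in that subspace, and any looseness there would weaken the exponent. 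A secondary subproblem is counting $K_4$'s in $G_p$, which follows from Fact~\ref{fact:regularity} together with a short application of Weil's bound, Theorem~\ref{thm:weil}.
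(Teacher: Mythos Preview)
Your proposal has a genuine gap, and the paper takes a substantially different route.

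The core problem with ``running the Schur-complement analysis in reverse'' is the one you yourself flag: operator-norm \emph{upper} bounds on the graph matrices $T^{\bullet,\bullet,\bullet}, U^{\bullet,\bullet,\bullet}$ do not yield \emph{lower} bounds on quadratic forms $v^* T v$ for any particular $v$. In the lower-bound direction this asymmetry is harmless --- you show the exact ``edgeless'' contributions (from $I$, $T^{3,0,1}$, $T^{4,0,1}$) dominate, and bound the rest by norm. In the upper-bound direction you need a \emph{witness}: a concrete $v$ on which the Schur complement is negative. Restricting to $\mathbb{V}_2$ as a subspace does not produce such a witness; it only tells you the exact part has eigenvalue $\Theta(\alpha_1^2)$ there, which is positive. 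Your plan to invoke Theorem~\ref{thm:norm_T_441} ``as a lower bound'' cannot work as stated, and your worry about its non-tightness (Remark~\ref{rem:T441-tightness}) is a symptom of this.

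The paper instead follows the Kelner argument (cf.\ \cite{HKP-2015-PlantedCliqueSOSMPW}): it exhibits an explicit degree-$2$ test polynomial
\[
q(x) \;=\; Ck^2\,x_0 \;-\; \sum_{\{a,b\}} \chi(ab)\,x_a x_b
\]
and shows $\tEE[q(x)^2] < 0$ whenever $k = p\alpha_1 \gg p^{1/3}$. This requires two separate ingredients you do not have. First, a chain of pseudomoment comparison inequalities (Propositions~\ref{prop:alpha_lower_bound}--\ref{prop:alpha_upper_bound}) establishing $\alpha_{i+1} = \Theta(\alpha_i/\sqrt{p})$ for \emph{any} feasible FK parameters, obtained by testing auxiliary PSD matrices like $B^{(1)}, B^{(2)}$ built from $\tEE[x_0\,\cdot\,]$ and $\tEE[s(x)^2\,\cdot\,]$; your $N^{1,1}$ analysis gives only the first rung of this ladder. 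Second, direct character-sum evaluations of the specific quadratic forms $u^* T^{4,4,1} u$ and $u^* T^{4,*,*} u$ at the vector $u_{\{i,j\}} = \chi(ij)(\chi(i-j)+1)$ (Propositions~\ref{prop:quadratic_bound}--\ref{prop:quadratic_bound_2}), which bypass the operator norm entirely --- in particular $u^* T^{4,4,1} u = O(p^3)$ is proved via a twisted $K_3$-Kloosterman moment, not via $\|T^{4,4,1}\|$. The vector $u$ lies almost entirely in $\mathbb{V}_2$ (Proposition~\ref{prop:u_norm}), so morally it \emph{is} the $\mathbb{V}_2$-witness your scheme was groping for, but identifying it and controlling the resulting sums is the whole content of the proof.
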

\noindent
Note that the proof of our main result Theorem~\ref{thm:main_theorem} already showed that $\mathrm{FK}_4(G_p) \gtrsim p^{1/3}$, so it suffices to show a matching upper bound.

Following the corresponding argument from the literature on ER graphs, attributed to Kelner and described in detail in \cite{HKP-2015-PlantedCliqueSOSMPW}, we prove this by contradiction.
Namely, we show that if the value of $\mathrm{FK}_4(G_p)$ is too large, then the positive semidefiniteness constraint would have to be violated.

In our assumption for the sake of contradiction, we will only consider FK pseudomoments specified by $\alpha_1, \alpha_2, \alpha_3, \alpha_4$ that achieve an objective value of at least $c\cdot p^{\frac{1}{3} }$, for some large $c > 0$. Note also that under the FK pseudomoments, the value of the program is $p \alpha_1$, so this amounts to a lower bound on $\alpha_1$. Under this assumption, we first establish some preliminary propositions relating the $\alpha_i$.

\subsection{Pseudomoment Comparison Bounds}

Let us first define a few vectors and matrices that will be useful for the proof.

\begin{definition}
    Let $\widetilde{\mathbb{E} }$ be a degree $4$ pseudoexpectation that is feasible for the program \eqref{degree_4_SOS}. We define vectors $v^{(0)}, v^{(1)}, v^{(2)} \in \mathbb{R}^{\mathbb{F}_p}$ and matrices $A^{(0)}, A^{(1)}, A^{(2)} \in \mathbb{R}^{\mathbb{F}_p \times \mathbb{F}_p } $, and $B^{(0)}, B^{(1)}, B^{(2)} \in \mathbb{R}^{\{\emptyset\} \sqcup \mathbb{F}_p \times \{\emptyset\} \sqcup \mathbb{F}_p  }$ as follows:
    \begin{align}
    v^{(0)}_a &= \widetilde{\mathbb{E} }[x_a], \\
        v^{(1)}_a &= \widetilde{\mathbb{E} }[x_0x_a], \\
        v^{(2)}_a &= \widetilde{\mathbb{E} }\left[\left(\sum_{i\in \mathbb{F}_p } x_i\right)^2 x_a\right], \\
        A^{(0)}_{a,b} &= \widetilde{\mathbb{E} }[x_ax_b], \\
        A^{(1)}_{a,b} &= \widetilde{\mathbb{E} }[x_0x_ax_b],\\
        A^{(2)}_{a,b} &= \widetilde{\mathbb{E} }\left[\left(\sum_{i \in \mathbb{F}_p } x_i\right)^2 x_ax_b\right],
        \intertext{}
        B^{(0)} &= \begin{bmatrix}
           1 & {v^{(0)}}^\top\\
           v^{(0)} & A^{(0)}
        \end{bmatrix}, \\
        B^{(1)} &= \begin{bmatrix}
            \widetilde{\mathbb{E} }[x_0] & {v^{(1)}}^\top\\
            v^{(1)} & A^{(1)}
        \end{bmatrix}, \\
        B^{(2)} &= \begin{bmatrix}
            \widetilde{\mathbb{E} }\left[\left(\sum_{i \in \mathbb{F}_p }x_i\right)^2 \right] & {v^{(2)}}^\top\\
            v^{(2)} & A^{(2)}
        \end{bmatrix}.
    \end{align}
\end{definition}

\begin{proposition}
    $A^{(i)} \succeq 0$ and $B^{(i)} \succeq 0$ for each $i \in \{0, 1, 2\}$.
\end{proposition}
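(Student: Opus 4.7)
The plan is to express each quadratic form $c^\top A^{(i)} c$ and $(c_0,c)^\top B^{(i)} (c_0,c)$ as a pseudoexpectation of the square of a polynomial of degree at most $d = 2$, which is nonnegative by the definition of a degree $4$ pseudoexpectation. The main work is bookkeeping with the Boolean constraints $x_a^2 = x_a$, used via the identity $\tEE[(x_a^2 - x_a) q(x)] = 0$ for $\deg q \leq 2$, to ``absorb'' an extra factor of $x_0$ or $(\sum_i x_i)^2$ into a square.

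First I would handle the $A^{(i)}$. For any $c \in \RR^{\FF_p}$, linearity gives
\begin{align*}
c^\top A^{(0)} c &= \tEE\Big[\Big(\sum_a c_a x_a\Big)^2\Big], \\
c^\top A^{(1)} c &= \tEE\Big[x_0 \Big(\sum_a c_a x_a\Big)^2\Big] = \tEE\Big[x_0^2 \Big(\sum_a c_a x_a\Big)^2\Big] = \tEE\Big[\Big(x_0 \sum_a c_a x_a\Big)^2\Big], \\
c^\top A^{(2)} c &= \tEE\Big[\Big(\sum_i x_i\Big)^2 \Big(\sum_a c_a x_a\Big)^2\Big] = \tEE\Big[\Big(\Big(\sum_i x_i\Big)\Big(\sum_a c_a x_a\Big)\Big)^2\Big].
\end{align*}
In each case the inner polynomial has degree at most $2$, so the pseudoexpectation of its square is nonnegative. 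The rewriting in the second line uses the Boolean constraint $x_0^2 = x_0$ applied against the degree-$2$ polynomial $(\sum_a c_a x_a)^2$, which is allowed since the total degree is $4 = 2d$.

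Next I would handle the $B^{(i)}$ by the same device, augmented with an extra coordinate $c_0$ corresponding to the constant polynomial. For $(c_0,c) \in \RR \times \RR^{\FF_p}$, expanding gives
\begin{align*}
(c_0,c)^\top B^{(0)} (c_0,c) &= \tEE\Big[\Big(c_0 + \sum_a c_a x_a\Big)^2\Big], \\
(c_0,c)^\top B^{(1)} (c_0,c) &= \tEE\Big[x_0\Big(c_0 + \sum_a c_a x_a\Big)^2\Big] = \tEE\Big[\Big(x_0\Big(c_0 + \sum_a c_a x_a\Big)\Big)^2\Big], \\
(c_0,c)^\top B^{(2)} (c_0,c) &= \tEE\Big[\Big(\sum_i x_i\Big)^2\Big(c_0 + \sum_a c_a x_a\Big)^2\Big] = \tEE\Big[\Big(\Big(\sum_i x_i\Big)\Big(c_0 + \sum_a c_a x_a\Big)\Big)^2\Big].
\end{align*}
Again the inner polynomial has degree at most $2$, so nonnegativity follows from positivity of $\tEE$ on squares of degree-$2$ polynomials.

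The only subtle step is the absorption $\tEE[x_0 \, q(x)] = \tEE[x_0^2 \, q(x)]$ for $\deg q \leq 2$ in the $B^{(1)}$ computation: this is exactly the ideal constraint for $f_0(x) = x_0^2 - x_0$, and $\deg(f_0 \cdot q) \leq 4 = 2d$ so it is in the allowed range. No genuine obstacle arises; the statement is essentially the observation that pseudomoment matrices localized against SOS multipliers (here $1$, $x_0 = x_0^2$, and $(\sum_i x_i)^2$) inherit positive semidefiniteness from the degree-$4$ pseudoexpectation property.
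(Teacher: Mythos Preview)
Your proposal is correct and takes essentially the same approach as the paper: express each quadratic form as $\tEE$ applied to the square of a degree-$\leq 2$ polynomial, using the Boolean constraint $x_0^2 = x_0$ to absorb the extra $x_0$ factor in the $i=1$ case. The paper is marginally more economical in that it only treats the $B^{(i)}$ and notes that each $A^{(i)}$ is a principal submatrix of $B^{(i)}$, but this is a trivial difference.
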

\begin{proof}
    Since $A^{(i)}$ is a principal submatrix of $B^{(i)}$ for each $i$, it suffices to consider the $B^{(i)}$.
    $B^{(0)}$ is a principal submatrix of the pseudomoment matrix of $\tEE$.
    For the other two cases, let $v = (v_{\emptyset}, v_0, \dots, v_{p - 1}) \in \RR^{\{\emptyset\} \sqcup \FF_p}$.
    Write $w = (v_0, \dots, v_{p - 1}) \in \RR^{\FF_p}$.
    We then have
    \begin{equation}
        v^{\top} B^{(1)} v = \tEE\left[x_0 (v_{\emptyset} + \langle w, x \rangle)^2 \right] = \tEE\left[x_0^2 (v_{\emptyset} + \langle w, x \rangle)^2 \right] \geq 0,
    \end{equation}
    where we have used that $\tEE$ is non-negative on squares and respects the Boolean constraint $x_0^2 = x_0$.
    Similarly,
    \begin{equation}
        v^{\top}B^{(2)} v = \tEE\left[\left(\sum_{i \in \FF_p} x_i^2\right)(v_{\emptyset} + \langle w, x \rangle)^2 \right] \geq 0,
    \end{equation}
    completing the proof.
\end{proof}



\begin{proposition}
    \label{prop:fk-pseudomoment-sums}
    For all $p$, for any pseudoexpectation $\tEE$ for the program \eqref{degree_4_SOS} with FK pseudomoments, the following hold:
    \begin{align}
        \sum_{i \in \mathbb{F}_p\setminus \{0,1\} } \widetilde{\mathbb{E} }[x_0x_1x_i] &= \frac{p-5}{4}\alpha_3 \\
        \sum_{i,j \in \mathbb{F}_p\setminus \{0,1\}} \widetilde{\mathbb{E} }[x_0x_1x_ix_j] &= \left(\frac{(p-2)(p-3)}{32} + O(p^{3/2})\right)\alpha_4 + \frac{p-5 }{4}\alpha_3\\
        \sum_{i \in \mathbb{F}_p} \widetilde{\mathbb{E} }[x_0x_i] &= \alpha_1 + \frac{(p-1)}{2}\alpha_2\\
        \sum_{i,j \in \mathbb{F}_p} \widetilde{\mathbb{E} }[x_0 x_i x_j] &= \alpha_1 + \frac{3(p-1)}{2}\alpha_2 + \frac{(p-1)(p-5)}{8}\alpha_3.
    \end{align}
\end{proposition}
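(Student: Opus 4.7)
The plan is to reduce each identity to a counting problem in $G_p$. Since these are FK pseudomoments, $\tEE[x^S] = \alpha_{|S|} \One\{S \in \mathcal{K}(G_p)\}$, so each sum equals $\alpha_k$ times a count of cliques (of the appropriate size $k$) in $G_p$ subject to certain incidence conditions. I will first split every sum according to which indices coincide so that the diagonal contributions reduce via the Boolean relations $x_i^2 = x_i$ (which $\tEE$ respects) to lower-degree pseudomoment sums, and the off-diagonal contributions reduce to counts over tuples of distinct vertices.

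For identity (3), the split $\sum_i \tEE[x_0 x_i] = \tEE[x_0] + \sum_{i \neq 0} \alpha_2 \One_2(\{0, i\})$ gives $\alpha_1 + \alpha_2 \cdot \deg_{G_p}(0) = \alpha_1 + \frac{p-1}{2}\alpha_2$ by the regularity of $G_p$ (Fact~\ref{fact:regularity}). For identity (1), since $1$ is a nonzero square modulo $p$, the vertices $0$ and $1$ are adjacent in $G_p$, and the sum equals $\alpha_3$ times the number of common neighbors of $0$ and $1$, which is $\frac{p-5}{4}$ by the parameter $\lambda$ of Fact~\ref{fact:regularity}. For identity (4), I will split the double sum into the subcases $i = j = 0$, $i = 0 \neq j$, $j = 0 \neq i$, $i = j \neq 0$, and $i \neq j$ with $i, j \neq 0$. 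The first four subcases collectively contribute $\alpha_1 + 3 \cdot \frac{p-1}{2}\alpha_2$, while the last contributes $\alpha_3$ times the number of ordered triangles through vertex $0$. A handshake count in the neighborhood of $0$ using Fact~\ref{fact:regularity}---the $\frac{p-1}{2}$ neighbors of $0$ induce a subgraph in which each vertex has degree $\frac{p-5}{4}$---yields $\frac{(p-1)(p-5)}{8}$ ordered triangles through $0$, matching the claim.

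The real content of the proposition lies in identity (2). After splitting off the diagonal $i = j$ contribution (which equals $\frac{p-5}{4}\alpha_3$ by identity (1)), the remaining $i \neq j$ sum equals $\alpha_4$ times twice the number of unordered $4$-cliques in $G_p$ containing the edge $\{0, 1\}$. By Proposition~\ref{prop:indicator}, this ordered count is
\begin{equation*}
\frac{1}{32}\sum_{\substack{i, j \in \mathbb{F}_p \setminus \{0, 1\} \\ i \neq j}} (1 + \chi(-i))(1 + \chi(1 - i))(1 + \chi(-j))(1 + \chi(1 - j))(1 + \chi(i - j)).
\end{equation*}
Expanding the product into $2^5 = 32$ character-sum terms, the all-ones term contributes exactly $\frac{(p-2)(p-3)}{32}$, the number of valid ordered pairs, which is the stated main term.

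The main obstacle, though a manageable one, is to bound the remaining $31$ terms collectively by $O(p^{3/2})$. For any such term that contains the mixed factor $\chi(i - j)$, I will fix $i \notin \{0, 1\}$ and sum over $j$ first: the inner sum is $\chi$ applied to a product of a nonempty subset of the distinct nonzero linear forms $\{-j, 1 - j, i - j\}$ in $j$, so the argument is a squarefree polynomial in $j$ of degree at most $3$ and is never of the form $r \cdot g(t)^2$, whence Weil's bound (Theorem~\ref{thm:weil}) gives $O(\sqrt{p})$ for the inner sum; bounding the outer sum trivially by $p$ gives $O(p^{3/2})$. For any term without $\chi(i - j)$, the sum factorizes over $i$ and $j$, and each single-variable factor is either $0$ (by Proposition~\ref{prop:chi-sum-1}), $O(1)$ (by Proposition~\ref{prop:chi-sum-2}), or $p - 2$ (the trivial sum), so the total contribution is at most $O(p)$. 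Combining the main term, the diagonal contribution, and the $O(p^{3/2})$ error from the $31$ nontrivial character sums yields identity (2) and completes the proof.
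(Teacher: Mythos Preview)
Your proof is correct and follows essentially the same approach as the paper's: reduce each sum to a clique count via the FK pseudomoment definition, use the strong regularity parameters of $G_p$ (Fact~\ref{fact:regularity}) for the exact identities, and for identity (2) expand the $4$-clique indicator into $32$ character sums, keeping the all-ones main term and bounding the remaining terms via Weil's bound. The paper's own proof is much terser (it states only the reductions and invokes Weil's bound without the case analysis you supply), so your version is a faithful elaboration of the same argument. One small remark: when you say that terms without $\chi(i-j)$ ``factorize over $i$ and $j$'', the constraint $i \neq j$ prevents an exact factorization, but the diagonal correction $\sum_{i} f(i)g(i)$ is trivially $O(p)$, so your $O(p)$ bound still holds.
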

\begin{proof}
    These claims can all be shown using elementary counting arguments and Weil's bound (our Theorem~\ref{thm:weil}). We only prove the first two claims.

    The first sum is equal to $\alpha_3$ times the number of triangles in $G_p$ containing the edge $\{0,1\}$.
    The latter equals $\frac{p-5}{4}$, which is the number of common neighbors of $0$ and $1$, and indeed of any pair of adjacent vertices, in $G_p$ (see Proposition~\ref{fact:regularity}).

    In the second sum, the terms with $i = j$ make a contribution of $\sum_{i \in \mathbb{F}_p\setminus \{0,1\} } \widetilde{\mathbb{E} }[x_0x_1x_i] = \frac{p-5}{4}\alpha_3$, the value of the first sum. The terms with $i\ne j$, using Weil's bound, make a contribution of $(\frac{(p-2)(p-3)}{32} + O(p^{\frac{3}{2} }))\alpha_4$.
\end{proof}

\begin{proposition}\label{prop:alpha_lower_bound}
    Consider an infinite sequence of primes $p$ and FK pseudomoments for each such $p$ satisfying $p\alpha_1 \geq cp^{1/3}$ for some $c > 0$ (i.e., that a collection of FK pseudomoments achieve an objective value of order $\Omega(p^{1/3})$).
    Then, over this sequence,
    \begin{align}
        \alpha_2 &= \Omega(\alpha_1^2),\\
        \alpha_3 &= \Omega\left(\frac{\alpha_2^2 }{\alpha_1 } \right),\\
       \alpha_4 &= \Omega\left(\frac{\alpha_3^2 }{\alpha_2 } \right), \\
       \alpha_1 &= o(p \alpha_2),\\
        \alpha_2 &= o(p \alpha_3),\\
        \alpha_3 &= o(p\alpha_4).
    \end{align}
\end{proposition}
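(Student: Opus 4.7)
The six bounds split into two interleaved chains: three log-convexity-type inequalities $\alpha_2 = \Omega(\alpha_1^2)$, $\alpha_3 = \Omega(\alpha_2^2/\alpha_1)$, $\alpha_4 = \Omega(\alpha_3^2/\alpha_2)$, and three non-degeneracy statements $\alpha_k = o(p\alpha_{k+1})$ for $k \in \{1,2,3\}$. Each non-degeneracy statement follows instantly from the corresponding log-convexity inequality together with the hypothesis $p\alpha_1 \geq cp^{1/3} \to \infty$: for instance, $\alpha_2 = \Omega(\alpha_1^2)$ gives $p\alpha_2 \gtrsim \alpha_1 \cdot (p\alpha_1) \gg \alpha_1$, and the later statements cascade analogously. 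Hence the plan is to alternate the two chains: first prove $\alpha_2 = \Omega(\alpha_1^2)$, deduce $\alpha_1 = o(p\alpha_2)$, then use this to prove $\alpha_3 = \Omega(\alpha_2^2/\alpha_1)$, deduce $\alpha_2 = o(p\alpha_3)$, and analogously for the last pair.

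Each log-convexity inequality I would derive by one application of the Cauchy-Schwarz inequality for pseudoexpectations $\tEE[fg]^2 \leq \tEE[f^2]\tEE[g^2]$ (valid for polynomials $f,g$ of degree at most $2$ as a direct consequence of the PSD condition $M \succeq 0$), with carefully chosen test polynomials:
\begin{itemize}
\item For $\alpha_2 = \Omega(\alpha_1^2)$: take $f = 1$ and $g = \sum_i x_i$, which yields $(p\alpha_1)^2 \leq p\alpha_1 + \frac{p(p-1)}{2}\alpha_2$, so $\alpha_2 \geq 2\alpha_1(p\alpha_1 - 1)/(p-1) \sim 2\alpha_1^2$ once $p\alpha_1 \to \infty$.
\item For $\alpha_3 = \Omega(\alpha_2^2/\alpha_1)$: take $f = x_0$ and $g = x_0 \sum_i x_i$ (simplified using $x_0^2 = x_0$), yielding $\tEE[x_0\sum_i x_i]^2 \leq \alpha_1 \cdot \tEE[x_0(\sum_i x_i)^2]$; both sides expand in closed form by Proposition~\ref{prop:fk-pseudomoment-sums}, and after isolating $\alpha_3$ one obtains $\alpha_3 \geq \frac{2(p-1)}{p-5}\cdot\frac{\alpha_2^2}{\alpha_1} - \frac{4\alpha_2}{p-5}$, where the first term dominates precisely because $\alpha_1 = o(p\alpha_2)$.
\item For $\alpha_4 = \Omega(\alpha_3^2/\alpha_2)$: take $f = \sum_i x_i$ and $g = (\sum_i x_i)^2$, yielding $S_3^2 \leq S_2 S_4$ where $S_k \colonequals \tEE[(\sum_i x_i)^k]$; solving for the $\alpha_4$-coefficient inside $S_4$ and using $\alpha_2 = o(p\alpha_3)$ to absorb lower-order subtracted terms delivers the bound.
\end{itemize}

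The expansion of $S_2$ and $S_3$ in the $\alpha_i$ uses only the edge and triangle counts of $G_p$, which are exactly determined by the strongly-regular parameters (Fact~\ref{fact:regularity}). The expansion of $S_4$ additionally requires the number of $4$-cliques of $G_p$, which equals $\Theta(p^4)$ by Weil's bound (Theorem~\ref{thm:weil}) applied to the associated character sum of six Legendre symbols over four vertex variables. The main obstacle to watch for is the careful bookkeeping at each stage: each Cauchy-Schwarz inequality rearranges into a sum of terms of distinct orders in $p$, and it produces the clean bound $\alpha_{k+1} \gtrsim \alpha_k^2/\alpha_{k-1}$ only once the leading $\Theta(p^{k+1}\alpha_k^2/\alpha_{k-1})$ term is shown to dominate an $O(p^k\alpha_k)$ error. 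That domination is precisely the non-degeneracy statement $\alpha_{k-1} = o(p\alpha_k)$ established in the previous stage, which is why the two chains must be proved in interleaved order.
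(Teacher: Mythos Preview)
Your proposal is correct and matches the paper's proof essentially step for step: the paper packages each of your three Cauchy--Schwarz applications as a Schur complement of a small PSD matrix ($B^{(0)}$, $B^{(1)}$, $B^{(2)}$) tested against the all-ones matrix $J$, which unwinds to exactly the inequalities you write, and it likewise interleaves the log-convexity and non-degeneracy bounds in the same order you describe. The only cosmetic difference is that for the $\alpha_4$ step the paper weights by $s(x)^2 = (\sum_i x_i)^2$ rather than phrasing it as $S_3^2 \le S_2 S_4$, but these are the same inequality.
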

\begin{proof}
    Since $B^{(0)} \succeq 0$, for the Schur complement with respect to the upper left $1 \times 1$ block we also have $A^{(0)} - v^{(0)}{v^{(0)}}^\top \succeq 0$.
    Thus,
    \begin{align*}
        0 &\le \left\langle A^{(0)} - v^{(0)}{v^{(0)}}^\top , J \right\rangle\\
        &= \sum_{a,b \in \mathbb{F}_p } \widetilde{\mathbb{E} }[x_ax_b] - \widetilde{\mathbb{E} }[x_a]\widetilde{\mathbb{E} }[x_b]\\
        &= p\alpha_1 + \frac{p(p-1)}{2}\alpha_2 - p^2 \alpha_1^2, \numberthis
    \end{align*}
    and rearranging this we find
    \begin{align*}
    \alpha_2 &\ge \frac{2p}{(p-1)}\alpha_1^2\left(1 - \frac{1}{p\alpha_1}\right)
    \intertext{Here, since $p \alpha_1 \ge c \cdot p^{\frac{1}{3} }$ by assumption, we may continue}
    &\geq \frac{2p}{(p-1)}\alpha_1^2\left(1 - o(1)\right) \\
    &= \Omega(\alpha_1^2). \numberthis
    \end{align*}
    Also, rearranging differently, we have
    \begin{equation}
        \alpha_1 \leq \frac{1}{p}\alpha_1 + \frac{1}{2}\frac{\alpha_2}{\alpha_1} = O\left(\frac{\alpha_2}{\alpha_1}\right) = o(p\alpha_2).
    \end{equation}

    We carry out a similar calculation for $B^{(1)}$:
    \begin{align*}
        0 &\le \left\langle A^{(1)} - \frac{1}{\widetilde{\mathbb{E} }[x_0] } v^{(1)}{v^{(1)}}^\top , J \right\rangle\\
        &= \sum_{a,b \in \mathbb{F}_p } \widetilde{\mathbb{E} }[x_0x_ax_b] - \frac{1}{\widetilde{\mathbb{E} }[x_0] }\widetilde{\mathbb{E} }[x_0x_a]\widetilde{\mathbb{E} }[x_0x_b]\\
        &= \alpha_1 + \frac{3(p-1) }{2}\alpha_2 + \frac{(p-1)(p-5) }{8} \alpha_3 - \frac{1}{\alpha_1 }\left(\alpha_1 + \frac{p-1}{2}\alpha_2 \right)^2\\
        &= \frac{p-1}{2}\alpha_2 + \frac{(p-1)(p-5)}{8}\alpha_3 - \frac{(p-1)^2}{4} \frac{\alpha_2^2}{\alpha_1 }. \numberthis
    \end{align*}
    Rearranging,
    \begin{align*}
        \alpha_3 &\ge \frac{2(p-1)}{p-5} \frac{\alpha_2^2}{\alpha_1} \left(1 - \frac{2\alpha_1}{(p-1)\alpha_2}\right)
        \intertext{and, since $\alpha_1 = o(p\alpha_2)$,}
        &\geq \frac{2(p-1)}{p-5} \frac{\alpha_2^2}{\alpha_1} \left(1 - o(1)\right) \\
        &= \Omega\left(\frac{\alpha_2^2}{\alpha_1} \right). \numberthis
    \end{align*}
    Rearranging differently as before, we also find $\alpha_2 = O\left(\frac{\alpha_1}{\alpha_2} \alpha_3\right) = o(p\alpha_3 )$.

    Finally, for $B^{(2)}$, let us write $s(x) \colonequals\sum_{i \in \FF_p} x_i$.
    We then have
    \begin{align*}
        0 &\le \left\langle A^{(2)} - \frac{1}{\widetilde{\mathbb{E} }[s(x)^2] } v^{(2)}{v^{(2)}}^\top , J \right\rangle\\
        &= \sum_{a,b\in \mathbb{F}_p } \left(\widetilde{\mathbb{E} }[s(x)^2x_ax_b ] - \frac{1}{\widetilde{\mathbb{E} }[s(x)^2]}\widetilde{\mathbb{E} }[s(x)^2 x_a]\widetilde{\mathbb{E} }[s(x)^2x_b ]\right) \\
        &= \frac{p(p-1)}{2} \widetilde{\mathbb{E} }[x_0x_1s(x)^2 ] - \frac{p^2}{p\alpha_1 + \frac{p(p-1)}{2}\alpha_2 } \left(\widetilde{\mathbb{E} }[x_0s(x)^2 ]\right)^2\\
        &= \frac{p(p-1)}{2} \widetilde{\mathbb{E} }\left[4x_0x_1 + 4x_0x_1s(x) + x_0x_1s(x)^2 \right] - \frac{p^2}{p\alpha_1 + \frac{p(p-1)}{2}\alpha_2 } \left(\widetilde{\mathbb{E} }\left[x_0 + 2x_0s(x) + x_0s(x)^2 \right]\right)^2\\
        &= 2p(p-1)\alpha_2 + \frac{p(p-1)(p-5)}{2}\alpha_3 + \left(\frac{p(p-1)(p-2)(p-3)}{64} + O(p^{7/2})\right)\alpha_4 \\
        &\quad + \frac{p(p-1)(p-5)}{8}\alpha_3 - \frac{p^2}{p\alpha_1 + \frac{p(p-1)}{2}\alpha_2 }\left(\alpha_1 + (p-1)\alpha_2 + \frac{p-1}{2}\alpha_2 + \frac{(p-1)(p-5)}{8}\alpha_3 \right)^2. \numberthis
    \end{align*}
    Since $\alpha_1 = o(p\alpha_2)$ and $\alpha_2 = o(p\alpha_3 )$, rearranging we have
    \begin{align*}
        \alpha_4 &\ge \frac{64(1-o(1))}{p(p-1)(p-2)(p-3)}  \left(\frac{2(1-o(1))}{\alpha_2} \frac{(p-1)^2(p-5)^2}{64}\alpha_3^2 - \frac{5p(p-1)(p-5)}{8}\alpha_3- 2p(p-1)\alpha_2 \right) \\
        &= \frac{2(p-1)(p-5)^2}{p(p-2)(p-3)} \frac{\alpha_3^2}{\alpha_2 } (1 - o(1)) \\
        &= \Omega\left(\frac{\alpha_3^2}{\alpha_2} \right), \numberthis
    \end{align*}
    and rearranging differently gives $\alpha_3 = O\left( \frac{\alpha_2}{\alpha_3} \alpha_4 \right) = o(p \alpha_4)$, completing the proof.
\end{proof}

\begin{proposition}\label{prop:alpha_upper_bound}
    Under the same assumptions as Proposition~\ref{prop:alpha_lower_bound},
    \begin{align}
        \alpha_2 &= O\left(\frac{1}{\sqrt{p}}\alpha_1\right)\\
        \alpha_3 &= O\left(\frac{1}{\sqrt{p}}\alpha_2\right)\\
        \alpha_4 &= O\left(\frac{1}{\sqrt{p}}\alpha_3\right)
    \end{align}
\end{proposition}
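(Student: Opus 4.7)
The plan is to prove each of the three upper bounds by exhibiting a non-negativity constraint of the form $\tEE[r(x) q(x)^2] \geq 0$, where $r$ is a low-degree monomial in Boolean variables and $q(x) = \sum_a c_a x_a$ is a linear form whose coefficient vector will be chosen to exploit the spectral structure of the Paley graph. Under the FK ansatz and the Boolean identities $x_i^2 = x_i$, each such constraint reduces, after separating clique-indicator cases, to a quadratic-form inequality of the shape
\[\alpha_{i+1} \|c\|^2 + \alpha_{i+2} \, c^\top A_H c \geq 0,\]
where $A_H$ is the adjacency matrix of an induced subgraph $H$ of $G_p$ on which $c$ is supported. The desired bound $\alpha_{i+2} = O(\alpha_{i+1}/\sqrt{p})$ then follows by producing a test vector $c$ with Rayleigh quotient $c^\top A_H c/\|c\|^2 = -\Omega(\sqrt{p})$.

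For the first bound, take $r = 1$ and $c$ any eigenvector of $A_{G_p}$ with eigenvalue $(-1-\sqrt{p})/2$ (which exists by Fact~\ref{fact:spectrum}), giving $\alpha_1 - (1+\sqrt{p}) \alpha_2/2 \geq 0$ directly. For the second bound, take $r = x_0$ and $c$ supported on $N(0)$, reducing the constraint to $\alpha_2 \|c\|^2 + \alpha_3 c^\top A_{G_p[N(0)]} c \geq 0$. Choose $c_a = e_p(ta) \cdot \One_{a \in N(0)}$ for a non-residue $t$, i.e., the restriction to $N(0)$ of the additive character $\psi_t$, which is itself an eigenvector of $A_{G_p}$ with eigenvalue $(-1-\sqrt{p})/2$. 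A direct character sum computation, using Proposition~\ref{prop:gauss-sum-norm} to evaluate $\sum_{a \in N(0)} e_p(ta) = -(1+\sqrt{p})/2$ and the strong regularity parameters of $G_p$ (Fact~\ref{fact:regularity}) to evaluate the double sum $c^* S_{G_p} c = (2 + 3\sqrt{p} - p\sqrt{p})/4$ via $|N(0) \cap (N(0) - d)| \in \{(p-5)/4, (p-1)/4\}$, yields Rayleigh quotient $\sim -\sqrt{p}/4$ and hence the bound. For the third bound, since $0 \sim 1$ in $G_p$ (as $1 = 1^2$ is a quadratic residue), take $r = x_0 x_1$ and $c$ supported on the common neighborhood $K = N(0) \cap N(1)$ of size $(p - 5)/4$ (Fact~\ref{fact:regularity}), which reduces to $\alpha_3 \|c\|^2 + \alpha_4 c^\top A_{G_p[K]} c \geq 0$, and apply an analogous restriction-of-character construction.

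The main obstacle will be the character sum bookkeeping for the third bound: the restriction of a character to the doubly-conditioned set $K = \{a : \chi(a) = \chi(a-1) = 1\}$ introduces cross-terms involving products $\chi(a) \chi(a-1) e_p(ta)$, which are incomplete Jacobi-type sums controlled only via the full strength of Weil's bound (Theorem~\ref{thm:weil}), and the analogous evaluation of $c^\top S_{G_p} c$ for $c$ the character restricted to $K$ requires enumerating $|K \cap (K - d)|$, which unlike the single-neighborhood case is not piecewise constant in $d$ but still admits closed-form estimates of magnitude $O(1)$ via the same machinery. The bookkeeping is substantial but the structural conclusion is the same: the restricted character places a dominant fraction of its $\ell^2$-mass in the $-\sqrt{p}$ eigenspace of the Seidel matrix $S_{G_p}$, which via $A_{G_p} = (S_{G_p} + J - I)/2$ translates into the required spectral gap for $A_{G_p[K]}$.
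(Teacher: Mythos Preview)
Your argument for the first two bounds is sound and essentially parallels the paper's approach; the paper uses the full projection $V = \sum_{t \text{ non-residue}} v_t v_t^* = \tfrac{p}{2}I - \tfrac{1}{2}J - \tfrac{\sqrt{p}}{2}S_{G_p}$ onto the negative eigenspace of $A_{G_p}$ rather than a single eigenvector, pairing it with the matrices $A^{(0)}_{a,b} = \tEE[x_ax_b]$ and $A^{(1)}_{a,b} = \tEE[x_0 x_a x_b]$, but your rank-one version works just as well.

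The third bound, however, has a genuine gap. The inequality $\tEE[x_0 x_1\, q(x)^2] \geq 0$ for $q$ linear is \emph{not} a valid degree~$4$ SOS constraint. Under the Boolean identities $x_i^2 = x_i$ you can write $x_0 x_1 q^2 = (x_0 x_1 q)^2$, but $x_0 x_1 q$ has degree $3$, and the positivity axiom only asserts $\tEE[p^2] \geq 0$ for $\deg p \leq 2$. There are no inequality constraints $g_j \geq 0$ in the program \eqref{pseudoexpectation_SOS} to invoke either. Localizing at a single variable $x_0$ costs one degree (leaving room for linear $q$), but localizing at the product $x_0 x_1$ costs two degrees, leaving only constant $q$, which is useless. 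So the reduction to $\alpha_3\|c\|^2 + \alpha_4\, c^\top A_{G_p[K]} c \geq 0$ simply does not follow from the feasibility of the pseudomoments.

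The paper's fix is to replace the weight $x_0 x_1$ by $s(x)^2 = (\sum_i x_i)^2$. Because $s(x)$ is itself degree~$1$, one has $s(x)^2 q(x)^2 = (s(x)q(x))^2$ with $\deg(sq) \leq 2$, so $A^{(2)}_{a,b} \colonequals \tEE[s(x)^2 x_a x_b]$ is genuinely positive semidefinite at degree~$4$. Expanding $s(x)^2 x_a x_b$ under FK pseudomoments produces a polynomial in $\alpha_2,\alpha_3,\alpha_4$ whose dominant $\alpha_4$-term has the correct $p^4$ coefficient, and then pairing $A^{(2)}$ with the same projection $V$ yields the bound $\alpha_4 = O(\alpha_3/\sqrt{p})$ after invoking the relations $\alpha_2 = o(p\alpha_3)$, $\alpha_3 = o(p\alpha_4)$ from Proposition~\ref{prop:alpha_lower_bound} to isolate the leading terms. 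The point is that the ``weight'' must itself be a square of something linear, not merely a square modulo the ideal.
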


\begin{proof}
    For $t \in \mathbb{F}_p$, define the vector $v_t = (e_p(0), e_p(t), e_p(2t), \dots, e_p((p-1)t))^\top \in \mathbb{C}^{\mathbb{F}_p }$.
    Define the matrix $V \in \mathbb{C}^{\mathbb{F}_p \times \mathbb{F}_p }$ as
    \begin{align*}
        V &\colonequals \sum_{t \in \mathbb{F}_p^\times \setminus (\mathbb{F}_p^\times )^2 } v_t v_t^* \\
        &= \frac{p}{2} I - \frac{1}{2} J - \frac{\sqrt{p}}{2} S_{G_p} \\
        &\succeq 0. \numberthis
    \end{align*}
    Then, we have
    \begin{align*}
        0 &\le \langle A^{(0)}, V\rangle\\
        &= \sum_{a,b \in \mathbb{F}_p } \widetilde{\mathbb{E} }[x_ax_b] \left( \frac{p}{2}\One{\{a=b\}} - \frac{1 + \sqrt{p}\cdot \chi(a-b) }{2} \right)\\
        &= \frac{p(p-1)}{2}\alpha_1 - \frac{p(p-1)}{2}\frac{1+ \sqrt{p}}{2}\alpha_2, \numberthis
    \end{align*}
    and rearranging this gives
    \begin{equation}
        \alpha_2 \le \frac{2}{1 + \sqrt{p}} \alpha_1,
    \end{equation}
    so $\alpha_2 = O\left(p^{-1/2}\alpha_1 \right)$.

    Similarly,
    \begin{align*}
        0 &\le \langle A^{(1)}, V\rangle\\
        &= \sum_{a,b \in \mathbb{F}_p } \widetilde{\mathbb{E} }[x_0x_ax_b] \left( \frac{p}{2}\One{\{a=b\}} - \frac{1 + \sqrt{p} \cdot \chi(a-b) }{2} \right)\\
        &= \frac{p-1}{2}\alpha_1 + \frac{(p-1)^2}{4}\alpha_2 - \sum_{a,b \in \mathbb{F}_p: a\ne b } \widetilde{\mathbb{E} }[x_0x_ax_b] \frac{1 + \sqrt{p}\cdot \chi(a-b) }{2}\\
        &= \frac{p-1}{2}\alpha_1 + \frac{(p-1)^2}{4}\alpha_2 -  \frac{(1 + \sqrt{p})(p-1) }{2}\alpha_2 - \frac{(1+ \sqrt{p})(p-1)(p-5)}{16} \alpha_3, \numberthis
    \end{align*}
    which gives
    \begin{align*}
        \alpha_3 &\le \frac{16}{(1+ \sqrt{p})(p-1)(p-5)}\left(\frac{p-1}{2}\alpha_1 + \frac{(p-1)^2}{4}\alpha_2 \right)
        \intertext{Since, by Proposition \ref{prop:alpha_lower_bound}, $\alpha_1 = o(p\alpha_2)$, we further have}
        &\leq \frac{16}{(1+ \sqrt{p})(p-1)(p-5)}\left(\frac{(p-1)^2(1 + o(1))}{4}\alpha_2 \right), \numberthis
    \end{align*}
    so $\alpha_3 = O\left(p^{-1/2}\alpha_2 \right)$.

    Finally, we have
    \begin{align*}
        0 &\le \langle A^{(2)}, V \rangle\\
        &= \sum_{a,b \in \mathbb{F}_p } \widetilde{\mathbb{E} }\left[\left(\sum_{i \in \mathbb{F}_p } x_i \right)^2 x_ax_b \right] \left( \frac{p}{2}\One{\{a=b\}} - \frac{1 + \sqrt{p}\cdot \chi(a-b) }{2} \right)\\
        &= \sum_{a\in \mathbb{F}_p } \left(\widetilde{\mathbb{E} }[x_a] + 2\sum_{i \in \mathbb{F}_p \setminus \{a\} } \widetilde{\mathbb{E} }[x_ax_i] + \sum_{i,j\in \mathbb{F}_p \setminus \{a\} } \widetilde{\mathbb{E} }[x_ax_ix_j] \right) \frac{p-1}{2}\\
        &\quad -\sum_{a,b\in \mathbb{F}_p: a\ne b } \left(4\widetilde{\mathbb{E} }[x_ax_b] + 4\sum_{i\in \mathbb{F}_p \setminus \{a,b\} } \widetilde{\mathbb{E} }[x_ax_bx_i] + \sum_{i,j \in \mathbb{F}_p\setminus \{a,b\} } \widetilde{\mathbb{E} }[x_ax_bx_ix_j] \right) \frac{1 + \sqrt{p}\cdot\chi(a-b) }{2}\\
        &= S_1 - S_2, \numberthis \label{eq:S1-S2}
    \end{align*}
    where we write
    \begin{align}
        S_1 &\colonequals \sum_{a\in \mathbb{F}_p } \left(\widetilde{\mathbb{E} }[x_a] + 2\sum_{i \in \mathbb{F}_p \setminus \{a\} } \widetilde{\mathbb{E} }[x_ax_i] + \sum_{i,j\in \mathbb{F}_p \setminus \{a\} } \widetilde{\mathbb{E} }[x_ax_ix_j] \right) \frac{p-1}{2}, \\
        S_2 &\colonequals \sum_{a,b\in \mathbb{F}_p: a\ne b } \left(4\widetilde{\mathbb{E} }[x_ax_b] + 4\sum_{i\in \mathbb{F}_p \setminus \{a,b\} } \widetilde{\mathbb{E} }[x_ax_bx_i] + \sum_{i,j \in \mathbb{F}_p\setminus \{a,b\} } \widetilde{\mathbb{E} }[x_ax_bx_ix_j] \right) \frac{1 + \sqrt{p}\cdot \chi(a-b) }{2}.
    \end{align}

    From the automorphism group of $G_p$ (see Proposition~\ref{fact:automorphisms}), we notice that the value of the terms in $S_1$ does not depend on $a \in \mathbb{F}_p$. So, we may write $S_1$ as
    \begin{align*}
        S_1 &= \frac{p(p-1)}{2} \left(\widetilde{\mathbb{E} }[x_0] + 2\sum_{i \in \mathbb{F}_p \setminus \{0\} } \widetilde{\mathbb{E} }[x_0x_i] + \sum_{i,j\in \mathbb{F}_p\setminus \{0\} } \widetilde{\mathbb{E} }[x_0x_ix_j]  \right)\\
        &= \frac{p(p-1)}{2}\alpha_1 + \frac{p(p-1)^2}{2} \alpha_2 + \left( \frac{p(p-1)^2(p-5)}{16} \alpha_3 + \frac{p(p-1)^2}{4}\alpha_2 \right)
        \intertext{By Proposition \ref{prop:alpha_lower_bound}, we have $\alpha_1 = o(p^2 \alpha_3)$ and $\alpha_2 = o(p\alpha_3)$, so the main contribution in $S_1$ comes from the $\alpha_3$ term:}
        &= (1+o(1))\frac{p(p-1)^2(p-5)}{16} \alpha_3. \numberthis \label{eq:S1}
    \end{align*}

    Similarly, we notice that each term in the sum of $S_2$ vanishes when $\{a,b\}$ is not an edge in $G_p$, and when $\{a,b\}$ is an edge in $G_p$ does not depend on $a$ and $b$. So, we may write $S_2$ as
    \begin{align*}
        S_2 &= \frac{p(p-1)}{2} \left(4 \widetilde{\mathbb{E} }[x_0x_1] + 4 \sum_{i\in \mathbb{F}_p \setminus \{0,1\} } \widetilde{\mathbb{E} }[x_0x_1x_i] + \sum_{i,j \in \mathbb{F}_p \setminus \{0,1\} } \widetilde{\mathbb{E} }[x_0x_1x_ix_j] \right) \frac{1 + \sqrt{p} }{2} \\
        &= \frac{(1+\sqrt{p} )p(p-1)}{4} \left(4\alpha_2 + 4 \frac{p-5}{4}\alpha_3 + \left(\frac{(p-2)(p-3)}{32} + O(p^{3/2}) \right)\alpha_4 + \frac{p-5}{4}\alpha_3  \right)
        \intertext{By Proposition \ref{prop:alpha_lower_bound}, we have $\alpha_2 = o(p^2\alpha_4)$ and $\alpha_3 = o(p\alpha_4)$, so the main contribution in $S_2$ comes from the $\alpha_4$ term:}
        &= (1 + o(1))\frac{(1+\sqrt{p} )p(p-1)(p-2)(p-3)}{128}\alpha_4. \numberthis \label{eq:S2}
    \end{align*}

    Combining \eqref{eq:S1-S2}, \eqref{eq:S1}, and \eqref{eq:S2}, we have
    \begin{align*}
        0 &\le S_1 - S_2\\
        &= (1+o(1))\frac{p(p-1)^2(p-5)}{16} \alpha_3 - (1 + o(1))\frac{(1+\sqrt{p} )p(p-1)(p-2)(p-3)}{128}\alpha_4\\
        &= O\left(p^4 \alpha_3\right) - \Omega\left(p^{9/2} \alpha_4 \right), \numberthis
    \end{align*}
    so $\alpha_4 = O(p^{1/2}\alpha_3)$, completing the proof.
\end{proof}

\subsection{Preliminary Character Sum Estimates}

Before proceeding to the proof of Theorem~\ref{thm:FK_optimality}, we also establish some character sum bounds that we will need.

\begin{definition}
    Let $u \in \mathbb{R}^{\binom{\mathbb{F}_p }{2} }$ be the vector defined by
    \begin{equation}
        u_{\{i,j\}} = \chi(ij)(\chi(i-j) + 1).
    \end{equation}
\end{definition}
\begin{proposition}\label{prop:u_norm}
    For any $p \equiv 1 \Mod{4}$,
    \begin{align}
        \|u\|^2 &= O(p^2)\\
        \|(P_0 + P_1)u\|^2 &= O(1).
    \end{align}
\end{proposition}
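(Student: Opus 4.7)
The bound $\|u\|^2 = O(p^2)$ is immediate: by the definitions of $u$ and the Legendre symbol, $|u_{\{i,j\}}| \le |\chi(ij)| \cdot |\chi(i-j)+1| \le 2$, so
\[
\|u\|^2 \le 4 \binom{p}{2} = O(p^2).
\]
The content of the proposition is the bound on $\|(P_0+P_1)u\|^2$. The strategy is to exploit the fact that $\mathbb{V}_0 \oplus \mathbb{V}_1$ is precisely the image of the natural symmetrization map
\[
\phi: \RR^{\FF_p} \to \RR^{\binom{\FF_p}{2}}, \qquad \phi(w)_{\{i,j\}} = w_i + w_j,
\]
which one checks is injective, so that $P_0+P_1 = \phi(\phi^*\phi)^{-1}\phi^*$ and hence
\[
\|(P_0+P_1)u\|^2 = \bigl\langle \phi^*u,\, (\phi^*\phi)^{-1}\phi^*u\bigr\rangle.
\]
A short direct computation gives $\phi^*\phi = (p-2)I + J_p$, with eigenvalue $2p-2$ on $\mathbf{1}$ and $p-2$ on $\mathbf{1}^\perp$; this reduces everything to understanding the single vector $\phi^*u \in \RR^{\FF_p}$.

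The next step, and the core of the proof, is to evaluate $\phi^*u$ entry by entry. Since $\chi(0) = 0$, the term $(\phi^*u)_0$ vanishes immediately. For $a \ne 0$, we expand
\[
(\phi^*u)_a = \chi(a)\!\!\sum_{j \ne 0,a}\!\chi(j)\chi(a-j)\, +\, \chi(a)\!\!\sum_{j \ne 0,a}\!\chi(j).
\]
The second sum is $-\chi(a)$ since $\sum_j \chi(j)=0$, so it contributes $-1$. For the first sum, the key observation is that after absorbing $\chi(-1) = 1$ (recalling $p \equiv 1 \Mod 4$) into one factor, Proposition~\ref{prop:chi-sum-2} gives $\sum_{j} \chi(j)\chi(a-j) = -1$ for $a \ne 0$. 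Combining these yields the clean formula $(\phi^*u)_a = -\chi(a) - 1$ for $a \neq 0$, with $(\phi^*u)_0 = 0$.

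With $\phi^*u$ in hand, the rest is bookkeeping: decompose $\phi^*u = c\mathbf{1} + r$ with $r \perp \mathbf{1}$, so that $c = -(p-1)/p$ and $\|r\|^2 = (p^2-1)/p$ (using once more that $\sum_{a\neq 0}\chi(a) = 0$ and $\chi(a)^2 = 1$). Because $(\phi^*\phi)^{-1}$ acts by $1/(2p-2)$ on $\mathbf{1}$ and $1/(p-2)$ on $\mathbf{1}^\perp$, we obtain
\[
\|(P_0+P_1)u\|^2 = \frac{c^2 \, p}{2p-2} + \frac{\|r\|^2}{p-2} = \frac{p-1}{2p} + \frac{p+1}{p} = O(1),
\]
as claimed. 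There is no essential obstacle: the only nontrivial input is the character sum identity for $\sum_j \chi(j)\chi(a-j)$, which is already available as Proposition~\ref{prop:chi-sum-2}; the rest is linear algebra on the two-dimensional subspace spanned by $\mathbf{1}$ and the Legendre vector on $\FF_p^\times$.
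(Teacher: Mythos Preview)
Your proof is correct and follows essentially the same approach as the paper: both reduce to computing $(\phi^*u)_a = \sum_{j\neq a} u_{\{a,j\}} = -\chi(a)-\chi(a)^2$ and then solve the normal equations for the projection onto $\mathbb{V}_0\oplus\mathbb{V}_1$ (the paper writes out the least-squares first-order conditions explicitly, while you use the compact formula $P_0+P_1=\phi(\phi^*\phi)^{-1}\phi^*$). One tiny arithmetic slip: in the final display, $\|r\|^2/(p-2) = \frac{(p-1)(p+1)}{p(p-2)}$ rather than $\frac{p+1}{p}$, but this is still $O(1)$ and does not affect the conclusion.
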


\begin{proof}
    Since each $u_{\{i,j\}} = O(1)$, we immediately have $\|u\|^2 = O(p^2)$.


    Recall that $P_0 + P_1$ is the orthogonal projection to the subspace
    \begin{equation}
        \mathbb{V}_0 \oplus \mathbb{V}_1 = \left\{w \in \mathbb{R}^{\binom{\mathbb{F}_p}{2} }: w_{\{i,j\}} = v_i + v_j \text{ for some } v \in \mathbb{R}^{\mathbb{F}_p } \text{ and for all } \{i,j\} \in \binom{\mathbb{F}_p}{2} \right\}.
    \end{equation}
    Suppose $\left((P_0 + P_1) u\right)_{\{i,j\}} = v_i + v_j$ for some $v \in \mathbb{R}^{\mathbb{F}_p }$. Then,
    \begin{equation}
        v = \argmin_v \sum_{\{i,j\} \in \binom{\mathbb{F}_p}{2} } \left(u_{\{i,j\}} - (v_i + v_j)\right)^2.
    \end{equation}
    By taking derivatives with respect to each $v_i$ to write the first-order optimality conditions for $v$, we find that, for all $i \in \FF_p$,
    \begin{align*}
        0 &= \sum_{j \in \mathbb{F}_p \setminus \{i\} } \left(u_{\{i,j\}} - v_i - v_j \right)\\
         &= \sum_{j \in \mathbb{F}_p \setminus \{i\} } u_{\{i,j\} } - \sum_{j \in \mathbb{F}_p } v_j - (p-2)v_i. \numberthis
    \end{align*}
    Adding these over all $i \in \FF_p$, we find
    \begin{align*}
        \sum_{j \in \mathbb{F}_p } v_j = \frac{\sum_{i \in \mathbb{F}_p } \sum_{j \in \mathbb{F}_p \setminus \{i\} } u_{\{i,j\}} }{2p-2} = - \frac{p}{2p-2}, \numberthis
    \end{align*}
    and substituting this into each individual condition, we solve for $v_i$ and find
    \begin{align*}
        v_i &= \frac{\sum_{j \in \mathbb{F}_p \setminus \{i\} } u_{\{i,j\}}  - \sum_{j\in \mathbb{F}_p }v_j}{p-2}\\
        &= \frac{\chi(i)(-1-\chi(i)) + \frac{p}{2p-2}}{p-2}. \numberthis
    \end{align*}
    In particular, $v_i = O(1/p)$, and thus $\|(P_0 + P_1)u\|^2 = \sum_{\{i,j\} } (v_i + v_j)^2 = O(1)$.
\end{proof}

\begin{proposition}\label{prop:quadratic_bound}
    For the graph matrices $T^{i, j, k}$ as defined in Table~\ref{table:graph_matrices}, we have
    \begin{align}
        u^* T^{3,0,1} u &= O(p^2)\\
        u^* T^{4,0,1} u &= O(p^2)\\
        u^* T^{4,2,1} u &= O(p^{5/2})\\
        u^* T^{4,2,2} u &= O(p^{5/2})\\
        u^* T^{4,1,1} u &= O(p^{5/2}).
    \end{align}
\end{proposition}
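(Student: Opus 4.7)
The plan is to decompose $u = u_{01} + u_2$ with $u_{01} \colonequals (P_0 + P_1)u$ and $u_2 \colonequals P_2 u$, and exploit Proposition~\ref{prop:u_norm}, which asserts $\|u_{01}\|^2 = O(1)$ while trivially $\|u_2\|^2 \leq \|u\|^2 = O(p^2)$. In effect $u$ lies almost entirely in $\mathbb{V}_2$, and this smallness on $\mathbb{V}_0 \oplus \mathbb{V}_1$ is precisely what allows us to combine cheap arguments with the subspace-restricted norm bounds of Propositions~\ref{prop:T301}, \ref{prop:T401}, and~\ref{prop:T_4ij}.

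For $T^{3,0,1}$ and $T^{4,0,1}$, the claims follow immediately from the closed-form spectral decompositions. Each of these matrices equals $c_0 P_0 + c_1 P_1 + c_2 P_2$ with $|c_0|, |c_1| \leq O(p^2)$ and $|c_2| = O(1)$, so
\[ |u^\top T u| \leq |c_0|\|P_0 u\|^2 + |c_1|\|P_1 u\|^2 + |c_2|\|P_2 u\|^2 \leq O(p^2)\,\|u_{01}\|^2 + O(1)\,\|u\|^2 = O(p^2). \]

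For each $T \in \{T^{4,2,1}, T^{4,2,2}, T^{4,1,1}\}$, I would expand
\[ u^\top T u = u_{01}^\top T u_{01} + u_{01}^\top T u_2 + u_2^\top T u_{01} + u_2^\top T u_2 \]
and bound the four pieces separately. Using only the full operator norm $\|T\| = O(p^{3/2})$ from Proposition~\ref{prop:T_4ij}, the first term is $O(p^{3/2})\,\|u_{01}\|^2 = O(p^{3/2})$ and each cross term is at most $\|T\|\,\|u_{01}\|\,\|u\| = O(p^{3/2}) \cdot O(1) \cdot O(p) = O(p^{5/2})$, where the smallness of $\|u_{01}\|$ pays for the extra factor of $p$. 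The most delicate piece is $u_2^\top T u_2 = u^\top P_2 T P_2 u$, which we bound by $\|P_2 T P_2\|\,\|u\|^2$; since $\|P_2 T P_2\| \leq \min(\|P_2 T\|, \|T P_2\|)$, and Proposition~\ref{prop:T_4ij} supplies an $O(\sqrt{p})$ bound on at least one of these two one-sided quantities for each of the three matrices, we get $O(\sqrt{p}) \cdot O(p^2) = O(p^{5/2})$. Summing the four contributions yields the claimed $O(p^{5/2})$ bound.

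The main conceptual point---and essentially the only hurdle---is recognizing that neither the subspace-restricted one-sided norm bound on $T$ nor the subspace-mass bound on $u$ is useful on its own: applied naively to the full $u$, the one-sided bound gives $\|T\|\,\|u\|^2 = O(p^{7/2})$, and applied naively with $\|u_{01}\|^2 = O(1)$ only, the full operator norm gives nothing better than $O(p^{5/2})$ for the cross terms but $O(p^3)$ for $u_2^\top T u_2$. Combining them through the decomposition $u = u_{01} + u_2$, so that the weak factor $\|u\| = O(p)$ is paired with the strong factor $\|P_2 T P_2\| = O(\sqrt{p})$ and the weak factor $\|T\| = O(p^{3/2})$ is paired with the strong factor $\|u_{01}\| = O(1)$, is exactly what balances the budget.
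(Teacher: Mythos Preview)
Your proposal is correct and is essentially identical to the paper's proof. The paper packages the same $u = (P_0+P_1)u + P_2 u$ decomposition as a $2\times 2$ quadratic-form bound $|u^\ast T u| \le [\,\|M^{(0)}u\|,\ \|M^{(1)}u\|\,]\bigl[\|M^{(i)}TM^{(j)}\|\bigr]_{i,j}[\,\|M^{(0)}u\|,\ \|M^{(1)}u\|\,]^\top$ with $M^{(0)}=P_0+P_1$, $M^{(1)}=P_2$, and then plugs in exactly the same norm estimates from Propositions~\ref{prop:u_norm} and~\ref{prop:T_4ij} that you use term by term.
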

\begin{proof}
    We may bound
    \begin{align*}
        \quad \left| u^* T^{4,1,1} u \right| \le \begin{bmatrix}
            a^{(0)} & a^{(1)}
        \end{bmatrix} \begin{bmatrix}
            b^{(0,0)} & b^{(0,1)}\\
            b^{(1,0)} & b^{(1,1)}
        \end{bmatrix} \begin{bmatrix}
            a^{(0)} \\
            a^{(1)}
        \end{bmatrix}, \numberthis
    \end{align*}
    where $a^{(i)} = \|M^{(i)} u\|$ and $b^{(i,j)} = \|M^{(i)} T^{4,1,1} M^{(j)}\|$, and $M^{(0)} = P_0 + P_1$ and $M^{(1)} = P_2$. By Propositions \ref{prop:T_4ij} and \ref{prop:u_norm}, we have
    \begin{align*}
        \quad \left| u^* T^{4,1,1} u \right| &\le \begin{bmatrix}
            O(1) & O(p)
        \end{bmatrix} \begin{bmatrix}
            O(p^{3/2}) & O(p^{3/2})\\
            O(p^{3/2}) & O(p^{1/2})
        \end{bmatrix} \begin{bmatrix}
            O(1) \\
            O(p)
        \end{bmatrix}\\
        &= O(p^{5/2}), \numberthis
    \end{align*}
    as claimed.
    The other claims follow similarly using the other norm bounds on graph matrices from Section~\ref{sec:graph_matrix_norm_bounds}.
\end{proof}

\begin{proposition}\label{prop:quadratic_bound_2}
    For all $p \equiv 1 \Mod{4}$,
    \begin{align*}
        u^* T^{4,4,1} u &= \sum_{\substack{\{a,b\}, \{c,d\} \in \binom{\mathbb{F}_p }{2} \\ \{a,b\} \cap \{c,d\} = \emptyset }} \chi(abcd) \chi(a-b)\chi(a-c)\chi(a-d)\chi(b-c)\chi(b-d)\chi(c-d)\\
        &= O(p^3). \numberthis
    \end{align*}
\end{proposition}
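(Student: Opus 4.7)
The plan is to substitute $u_{\{a,b\}} = \chi(ab)(\chi(a-b)+1)$ into the quadratic form and use $\chi(-1)=1$ to obtain
\[
u^*T^{4,4,1}u = \sum_{\{a,b\}\cap\{c,d\}=\emptyset}\chi(abcd)(\chi(a-b)+1)(\chi(c-d)+1)\chi((a-c)(a-d)(b-c)(b-d)).
\]
Expanding the product of the two binomials decomposes this as $u^*T^{4,4,1}u = S_1 + 2S_2 + S_4$, where $S_1$ is the fully decorated sum asserted in the proposition, $S_2 = S_3$ is the one-factor sum (the two being equal by the symmetry swapping the two pairs), and $S_4$ is the constant-term sum. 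I aim to bound each of $S_1, S_2, S_4$ by $O(p^3)$, reading the proposition's equality as the assertion that $S_1$ captures the main contribution up to an $O(p^3)$ error.

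For $S_4$ and $S_2$, the sums over $c, d$ decouple because $\chi(0)=0$ automatically enforces distinctness, yielding $S_4 = \tfrac12 \sum_{a\ne b} \chi(ab)\, g(a,b)^2$ with $g(a,b) \colonequals \sum_{c \in \FF_p} \chi(c(c-a)(c-b))$, and the analogous expression for $S_2$ carries an extra $\chi(a-b)$. Since $c(c-a)(c-b)$ has three distinct roots for $a\ne b$ nonzero, Weil's bound (Theorem~\ref{thm:weil}) gives $|g(a,b)| \le 2\sqrt p$ and hence $|S_2|, |S_4| = O(p^3)$. One may likewise verify $S_3 = S_2$ either by direct symmetry or by a Fourier expansion of $\chi(c-d)$ via Proposition~\ref{prop:mult-add-fourier} that decouples $c$ and $d$ before collapsing via the identity $G(\chi) = \sqrt p$.

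The main obstacle is the bound $|S_1| = O(p^3)$, since a naive iterated Weil bound yields only $O(p^{7/2})$. I exploit two tricks. First, an inversion identity: substituting $x_i \mapsto 1/y_i$ (the terms with $x_i = 0$ vanish because $\chi(abcd) = 0$), the factor $\chi(abcd) \chi(\prod_{i<j}(x_i - x_j))$ simplifies to $\chi(\prod_{i<j}(y_i - y_j))$ because the common denominator $(y_1 y_2 y_3 y_4)^4$ is a perfect square. This reduces $4 S_1 = \sum_{a \ne b} \chi(a-b)\, q(a,b) + O(p^{5/2})$, where $q(a,b) \colonequals \sum_{c, d \in \FF_p} \chi(c-d) \chi((c-a)(c-b)(d-a)(d-b))$ and the $O(p^{5/2})$ error absorbs the boundary terms from $c, d \in \{0\}$ (bounded by Weil).

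Second, Fourier analysis of $\chi(c-d)$ via Proposition~\ref{prop:mult-add-fourier} gives $q(a,b) = \frac{1}{\sqrt p} \sum_{e \ne 0} \chi(e)\, \phi'(a,b,e)\, \phi'(a,b,-e)$ with $\phi'(a,b,e) \colonequals \sum_c \chi((c-a)(c-b)) e_p(ec)$. Completing the square $(c-a)(c-b) = (c - \tfrac{a+b}{2})^2 - (\tfrac{a-b}{2})^2$ and changing variable $c = \tfrac{a+b}{2} + \tfrac{a-b}{2} y$ identifies $\phi'(a,b,e)$ with a Kloosterman sum via Proposition~\ref{prop:kloosterman-rewrite}: $\phi'(a,b,e) = e_p(e(a+b)/2)\, K(e^2 (a-b)^2 / 16)$, and hence $\phi'(a,b,e) \phi'(a,b,-e) = K(e^2 (a-b)^2/16)^2$. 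Since $(a-b)^2 / 16$ is a perfect square with root $(a-b)/4$, the substitution $e' = e (a-b)/4$ turns the inner sum into $\chi((a-b)/4) \sum_{e' \ne 0} \chi(e') K(e'^2)^2$. Proposition~\ref{prop:twisted-kloosterman} bounds this by $2 p^{3/2}$, so $|q(a,b)| = O(p)$, whence $|4 S_1| \le \sum_{a \ne b} |q(a,b)| = O(p^3)$. The crucial step is the invocation of Proposition~\ref{prop:twisted-kloosterman}, which, resting on Deligne's bounds for hyper-Kloosterman $\ell$-adic sheaves, furnishes the last factor of $\sqrt p$ of cancellation that lifts the bound from $O(p^{7/2})$ to $O(p^3)$.
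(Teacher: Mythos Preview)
Your proof is correct and takes a genuinely different route from the paper's.

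Both arguments begin with an inversion to eliminate the $\chi(abcd)$ factor, but diverge thereafter. The paper first uses the scaling symmetry $x_i \mapsto dx_i$ to set $d=1$ and pull out a factor of $p-1$, then inverts the three remaining variables and shifts to reach the triple sum $\sum_{x,y,z}\chi(xyz)\chi(x-y)\chi(y-z)\chi(z-x)$; it then Fourier-expands all three $\chi(1-\cdot)$ factors via Gauss sums and recognises a twisted second moment of the \emph{hyper}-Kloosterman sum $K_3$, finishing with Proposition~\ref{prop:twisted-kloosterman-2}. You instead invert all four variables at once, Fourier-expand only the single factor $\chi(c-d)$, and complete the square to land on the ordinary Kloosterman sum $K=K_2$ via Proposition~\ref{prop:kloosterman-rewrite}; the resulting inner sum is exactly the twisted second moment handled by Proposition~\ref{prop:twisted-kloosterman}. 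Your route is arguably more direct, avoids the intermediate three-variable reduction, and invokes the slightly sharper $K_2$ result rather than the general $K_k$ bound. You also make explicit what the paper leaves implicit: the equality $u^*T^{4,4,1}u = S_1$ in the proposition is not literal, and you correctly bound the cross-terms $S_2,S_3,S_4$ by Weil so that both $u^*T^{4,4,1}u$ and the displayed sum are $O(p^3)$. A minor bookkeeping slip: your factor $\tfrac12$ in $S_4$ should be $\tfrac14$ (two unordered pairs), and the $c=d$ diagonal in $g(a,b)^2$ contributes an extra $O(p^2)$; neither affects the $O(p^3)$ conclusion.
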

\begin{proof}
    We start by noting that the constraint to $\{a, b\} \cap \{c, d\} = \emptyset$ is superfluous. Summing over all 4-tuples $a,b,c,d \in \mathbb{F}_p^\times$ only incurs an overcounting factor of $4$, so we may rewrite
    \begin{align*}
        &\quad \sum_{\substack{\{a,b\}, \{c,d\} \in \binom{\mathbb{F}_p }{2}:\\ \{a,b\} \cap \{c,d\} = \emptyset }} \chi(abcd) \chi(a-b)\chi(a-c)\chi(a-d)\chi(b-c)\chi(b-d)\chi(c-d)\\
        &= \frac{1}{4}\sum_{a,b,c,d\in \mathbb{F}_p^\times } \chi(abcd) \chi(a-b)\chi(a-c)\chi(a-d)\chi(b-c)\chi(b-d)\chi(c-d)\\
        &= \frac{p-1}{4} \sum_{a,b,c\in \mathbb{F}_p^\times } \chi(abc) \chi(a-b)\chi(a-c)\chi(b-c)\chi(a-1)\chi(b-1)\chi(c-1)\\
        &= \frac{p-1}{4} \sum_{a,b,c\in \mathbb{F}_p^\times } \chi(a^{-1}-b^{-1})\chi(a^{-1}-c^{-1})\chi(b^{-1}-c^{-1})\chi(a^{-1} - 1)\chi(b^{-1} - 1)\chi(c^{-1} - 1)
        \intertext{and, changing variables to $x = a^{-1} - 1, y = b^{-1} - 1,$ and $z = c^{-1} - 1$, we have}
        &= \frac{p-1}{4} \sum_{x,y,z\in \mathbb{F}_p \setminus \{-1\} } \chi(xyz)\chi(x-y)\chi(y-z)\chi(z-x)\\
        &= \frac{p-1}{4} \bigg(\sum_{x,y,z\in \mathbb{F}_p } \chi(xyz)\chi(x-y)\chi(y-z)\chi(z-x) \\
        &\hspace{1cm}- 3\sum_{x,y\in \mathbb{F}_p \setminus \{-1\} }\chi(xy)\chi(x-y)\chi(y+1)\chi(x+1)\bigg) \\
        &= \frac{p-1}{4} (S_1 - S_2), \numberthis
    \end{align*}
    where we denote
    \begin{align}
        S_1 &\colonequals \sum_{x,y,z\in \mathbb{F}_p } \chi(xyz)\chi(x-y)\chi(y-z)\chi(z-x), \\
        S_2 &\colonequals 3\sum_{x,y\in \mathbb{F}_p \setminus \{-1\} }\chi(xy)\chi(x-y)\chi(y+1)\chi(x+1).
    \end{align}
    We immediately have $S_2 = O(p^2)$, so it suffices to show the same bound for $S_1$.

    We will use the Gauss sum identity
    \begin{equation}
        \chi(x) = \frac{1}{\sqrt{p}} \sum_{a \in \mathbb{F}_p} \chi(a) e_p\left(ax\right),
    \end{equation}
    which follows from combining Propositions~\ref{prop:mult-add-fourier} and \ref{prop:gauss-sum-norm}, and noting that we have $\chi(-1) = 1$ since $p \equiv 1 \Mod{4}$.
    Using this, we may rewrite $S_1$:
    \begin{align*}
    S_1 &= \sum_{x,y,z\in \mathbb{F}_p } \chi(xyz)\chi(x-y)\chi(y-z)\chi(z-x)\\
    &= \sum_{x,y,z\in \mathbb{F}_p^\times } \chi\left(1- \frac{y}{x}\right)\chi\left(1- \frac{z}{y}\right)\chi\left(1- \frac{x}{z}\right)\\
    &= (p-1) \sum_{\substack{x, y, z \in \FF_p^{\times} \\ xyz = 1}} \chi\left(1- x\right)\chi\left(1- y\right)\chi\left(1- z\right)\\
    &= (p-1)p^{-3/2} \sum_{\substack{x, y, z \in \FF_p^{\times} \\ xyz = 1}} \sum_{a, b, c \in \mathbb{F}_p} \chi(abc) e_p\left((1 - x)a + (1-y)b + (1 - z)c\right) \\
    &= (p-1)p^{-3/2} \sum_{a, b, c \in \mathbb{F}_p} \chi(abc) e_p(a + b + c)  \sum_{\substack{x, y, z \in \FF_p^{\times} \\ xyz = 1}} e_p\left(-ax -by -cz\right) \\
    &= (p-1)p^{-3/2 } \sum_{a,b, c \in \mathbb{F}_p} \chi(abc) e_p(a + b + c)  \overline{K_3(abc)} \\
    &= (p-1)p^{-3/2 } \sum_{d \in \mathbb{F}_p^\times} \chi(d) \overline{K_3(d)} \sum_{abc = d } e_p(a + b + c)  \\
    &= (p-1)p^{-3/2 } \sum_{d \in \mathbb{F}_p^\times} \chi(d) \left|K_3(d)\right|^2. \numberthis
    \end{align*}
    By Proposition~\ref{prop:twisted-kloosterman-2}, we find $|S_1| = O(p^2)$, and the result follows.
\end{proof}

\subsection{Proof of Theorem~\ref{thm:FK_optimality}}

\begin{proof}[Proof of Theorem \ref{thm:FK_optimality}]
    Let us write $k = k(p) \colonequals \mathrm{FK}_4(G_p)$, and let $\alpha_1, \alpha_2, \alpha_3, \alpha_4$ be the FK pseudomoments that achieve this value.
    Suppose for the sake of contradiction that $\mathrm{FK}_4(G_p) \geq c p^{1/3}$ for a large constant $c > 0$ to be specified later.
    Since $k = p\alpha_1$, $\alpha_1 = k/p \geq cp^{-2/3}$.

    In the following computation, without specifying otherwise, by summing over $\{a,b\}$ we mean the summation over all $2$-element subsets $\{a,b\} \in \binom{\mathbb{F}_p}{2}$. Let $C$ be a constant that we will specify later.
    Following the proof strategy of \cite{HKP-2015-PlantedCliqueSOSMPW}, we evaluate the following pseudoexpectation using Proposition~\ref{prop:fk-pseudomoment-sums}:
    \begin{align*}
        0 &\le \widetilde{\mathbb{E} }\left[\left(Ck^2 x_0 - \sum_{\{a,b\}} \chi(ab)x_ax_b\right)^2\right]\\
        &= C^2 k^4 \widetilde{\mathbb{E} }[x_0] - 2Ck^2 \sum_{\{a,b\}} \chi(ab) \widetilde{\mathbb{E} }[x_0x_ax_b] + \widetilde{\mathbb{E} }\left[\left(\sum_{\{a,b\}} \chi(ab)x_ax_b\right)^2\right]\\
        &= C^2 k^4 \frac{k}{p} - 2Ck^2 \frac{(p-1)(p-5)}{16}\alpha_3  + \widetilde{\mathbb{E} }\left[\left(\sum_{\{a,b\}} \chi(ab)x_ax_b\right)^2\right]. \numberthis
    \end{align*}

    Expanding the last term, we get
    \begin{align*}
        &\quad \widetilde{\mathbb{E} }\left[\left(\sum_{\{a,b\}} \chi(ab)x_ax_b\right)^2\right]\\
        &= \sum_{\{a,b\} } \widetilde{\mathbb{E} }[x_ax_b] + \sum_{\substack{\{a,b\}, \{c,d\} \\ |\{a,b\} \cap \{c,d\}| = 1} } \chi(abcd) \widetilde{\mathbb{E} }[x_ax_bx_cx_d] +  \sum_{\substack{\{a,b\}, \{c,d\}\\ \{a,b\} \cap \{c,d\} = \emptyset} } \chi(abcd) \widetilde{\mathbb{E} }[x_ax_bx_cx_d] \numberthis
    \end{align*}
    Observe that the first term evaluates to $\frac{p(p-1)}{4}\alpha_2$. Now, recall the vector $u\in \mathbb{R}^{\binom{\mathbb{F}_p}{2} } $ defined earlier by $u_{\{i,j\}} = \chi(ij)(\chi(i-j)+1)$. We may write the second and third terms as quadratic forms involving the graph matrices defined earlier, evaluated at $u$.
    For the second term, we have
    \begin{align*}
        &\quad \sum_{\substack{\{a,b\}, \{c,d\}:\\ |\{a,b\} \cap \{c,d\}| = 1} } \chi(abcd) \widetilde{\mathbb{E} }[x_ax_bx_cx_d]\\
        &= \sum_{\substack{\{a,b\}, \{a,c\}:\\ b \ne c} } \chi(a^2bc) \widetilde{\mathbb{E} }[x_ax_bx_c]\\
        &= \sum_{\substack{\{a,b\}, \{a,c\}:\\ b \ne c} } \chi(a^2bc)\frac{(\chi(a-b)+1)(\chi(a-c)+1)(\chi(b-c)+1) }{8}\alpha_3\\
        &= \frac{\alpha_3}{8}\sum_{\substack{\{a,b\}, \{a,c\}:\\ b \ne c} } (\chi(b-c)+1)\cdot \chi(ab)(\chi(a-b)+1) \cdot \chi(ac)(\chi(a-c)+1) \\
        &= \frac{\alpha_3}{8} u^* (T^{3,1,1} + T^{3,0,1}) u, \numberthis
        \intertext{and for the third term,}
        &\quad \sum_{\substack{\{a,b\}, \{c,d\}:\\ \{a,b\} \cap \{c,d\} = \emptyset} } \chi(abcd) \widetilde{\mathbb{E} }[x_ax_bx_cx_d]\\
        &= \sum_{\substack{\{a,b\}, \{c,d\}:\\ \{a,b\} \cap \{c,d\} = \emptyset} } \chi(abcd) \frac{\prod_{\{i,j\} \in \binom{\{a,b,c,d\}}{2} } (\chi(i-j) + 1)}{64}\alpha_4\\
        &= \frac{\alpha_4}{64} \sum_{\substack{\{a,b\}, \{c,d\}:\\ \{a,b\} \cap \{c,d\} = \emptyset} }\left(\prod_{(i,j) \in \{a,b\} \times \{c,d\} } (\chi(i-j) + 1) \right)\cdot \chi(ab)(\chi(a-b)+1)\cdot \chi(cd)(\chi(c-d)+1) \\
        &= \frac{\alpha_4}{64} u^* (T^{4,4,1} + T^{4,3,1} + T^{4,2,1} + T^{4,2,2} + T^{4,2,3} + T^{4,1,1} + T^{4,0,1} ) u. \numberthis
    \end{align*}
    Using Propositions \ref{prop:quadratic_bound} and \ref{prop:quadratic_bound_2} and the graph matrix norm bounds in Section \ref{sec:graph_matrix_norm_bounds}, we have
    \begin{align*}
        \left|\sum_{\substack{\{a,b\}, \{c,d\}:\\ |\{a,b\} \cap \{c,d\}| = 1} } \chi(abcd) \widetilde{\mathbb{E} }[x_ax_bx_cx_d]\right|
        &\le \frac{\alpha_3}{8} \left( u^* T^{3,0,1} u + \|T^{3,1,1}\| \|u\|^2 \right)\\
        &= \frac{\alpha_3}{8} \left(O(p^2) + O(\sqrt{p})O\left(p^2 \right)\right)\\
        &= O\left(p^{5/2} \alpha_3\right), \numberthis
    \end{align*}
    and
    \begin{align*}
        &\hspace{-1cm}\sum_{\substack{\{a,b\}, \{c,d\}:\\ \{a,b\} \cap \{c,d\} = \emptyset} } \chi(abcd) \widetilde{\mathbb{E} }[x_ax_bx_cx_d]  \\
        &\le \frac{\alpha_4}{64}\Bigg((\|T^{4,3,1}\| +  \|T^{4,2,3}\| )\|u\|^2
        + u^* (T^{4,0,1} + T^{4,2,1} + T^{4,2,2} + T^{4,1,1} + T^{4,4,1}) u\Bigg)\\
        &= \frac{\alpha_4}{64}\left(O(p)O(p^2) + O(p^{5/2} + p^3) \right)\\
        &= O(p^3 \alpha_4). \numberthis
    \end{align*}
    So, combining these, we have
    \begin{align*}
        \widetilde{\mathbb{E} }\left[\left(\sum_{\{a,b\}} \chi(ab)x_ax_b\right)^2\right] =
        \frac{p(p-1)}{4}\alpha_2 + O\left(p^{5/2} \alpha_3\right) + O(p^3 \alpha_4). \numberthis
    \end{align*}
    By Proposition \ref{prop:alpha_upper_bound}, $O(p^{5/2} \alpha_3 + p^3 \alpha_4) = O(p^2 \alpha_2)$, so
    \begin{align*}
        \widetilde{\mathbb{E} }\left[\left(\sum_{\{a,b\}} \chi(ab)x_ax_b\right)^2\right] = O(p^2 \alpha_2). \numberthis
    \end{align*}

    Thus, substituting into our initial calculation,
    \begin{align*}
        0 &\le C^2 \frac{k^5}{p} - 2Ck^2 \frac{(p-1)(p-5)}{16}\alpha_3  + O(p^2 \alpha_2). \numberthis
    \end{align*}
    By Proposition \ref{prop:alpha_lower_bound}, $\alpha_3 = \Omega\left(\frac{\alpha_2^2}{\alpha_1} \right) = \Omega\left(\frac{p\alpha_2^2}{k} \right)$. Let $C_1, C_2$ be the implied constants in $\alpha_3 = \Omega\left(\frac{p\alpha_2^2}{k} \right)$ and $\widetilde{\mathbb{E} }\left[\left(\sum_{\{a,b\}} \chi(ab)x_ax_b\right)^2\right] = O(p^2 \alpha_2)$. We then have
    \begin{align*}
        0 &\le C^2 \frac{k^5}{p} - 2C C_1 k^2 \frac{(p-1)(p-5)}{16} \frac{p\alpha_2^2}{k}  + C_2 p^2 \alpha_2\\
        &\le C^2 \frac{k^5}{p} - C C_1' k p^3 \alpha_2^2 + C_2 p^2 \alpha_2, \numberthis
    \end{align*}
    where $C_1'$ is a constant depending on $C_1$. Note that $C_1', C_2$ do not depend on $c$.

    Consider the ratio
    \begin{align*}
        &\quad \frac{C^2\frac{k^5}{p} + C_2p^2\alpha_2 }{CC_1' kp^3 \alpha_2^2}\\
        &= \frac{C}{C_1'} \cdot \frac{k^4}{p^4} \cdot \frac{1}{\alpha_2^2} + \frac{C_2}{CC_1'} \cdot \frac{1}{kp} \cdot  \frac{1}{\alpha_2}. \numberthis
    \end{align*}
    Note that if this ratio is less than $1$, then the expression above $C^2 \frac{k^5}{p} - C C_1' k p^3 \alpha_2^2 + C_2 p^2 \alpha_2$ is less than $0$.

    Recall $\alpha_2 = \Omega(\alpha_1^2) = \Omega\left(\frac{k^2}{p^2} \right)$ by Proposition \ref{prop:alpha_lower_bound}. Therefore, by choosing $C$ to be a sufficiently small constant,
    \begin{align*}
        \frac{C}{C_1'} \cdot \frac{k^4}{p^4} \cdot \frac{1}{\alpha_2^2} = \frac{C}{C_1'} \cdot \frac{k^4}{p^4} \cdot O\left(\frac{p^4}{k^4} \right) \le \frac{1}{4}. \numberthis
    \end{align*}
    On the other hand, since $k \ge c p^{1/3 }$ for some constant $c$ by assumption, choosing $c$ large enough we have
    \begin{align*}
        \frac{C_2}{CC_1'} \cdot \frac{1}{kp} \cdot  \frac{1}{\alpha_2} = \frac{C_2}{CC_1'} \cdot O\left(\frac{p}{k^3} \right) \le \frac{1}{4}. \numberthis
    \end{align*}
    Therefore, under a sufficiently small constant $C$, we derive a contradiction
    \begin{align*}
        0 \le \widetilde{\mathbb{E} }\left[\left(Ck^2 x_0 - \sum_{\{a,b\}} \chi(ab)x_ax_b\right)^2\right] < 0. \numberthis
    \end{align*}
    We conclude that the value of the degree $4$ SOS on Paley graphs $G_p$ restricted to FK pseudomoments is $\mathrm{FK}_4(G_p) = O(p^{1/3} )$.
\end{proof}

\clearpage

\begin{figure}[!htb]
    \begin{center}
    \includegraphics[scale=0.53]{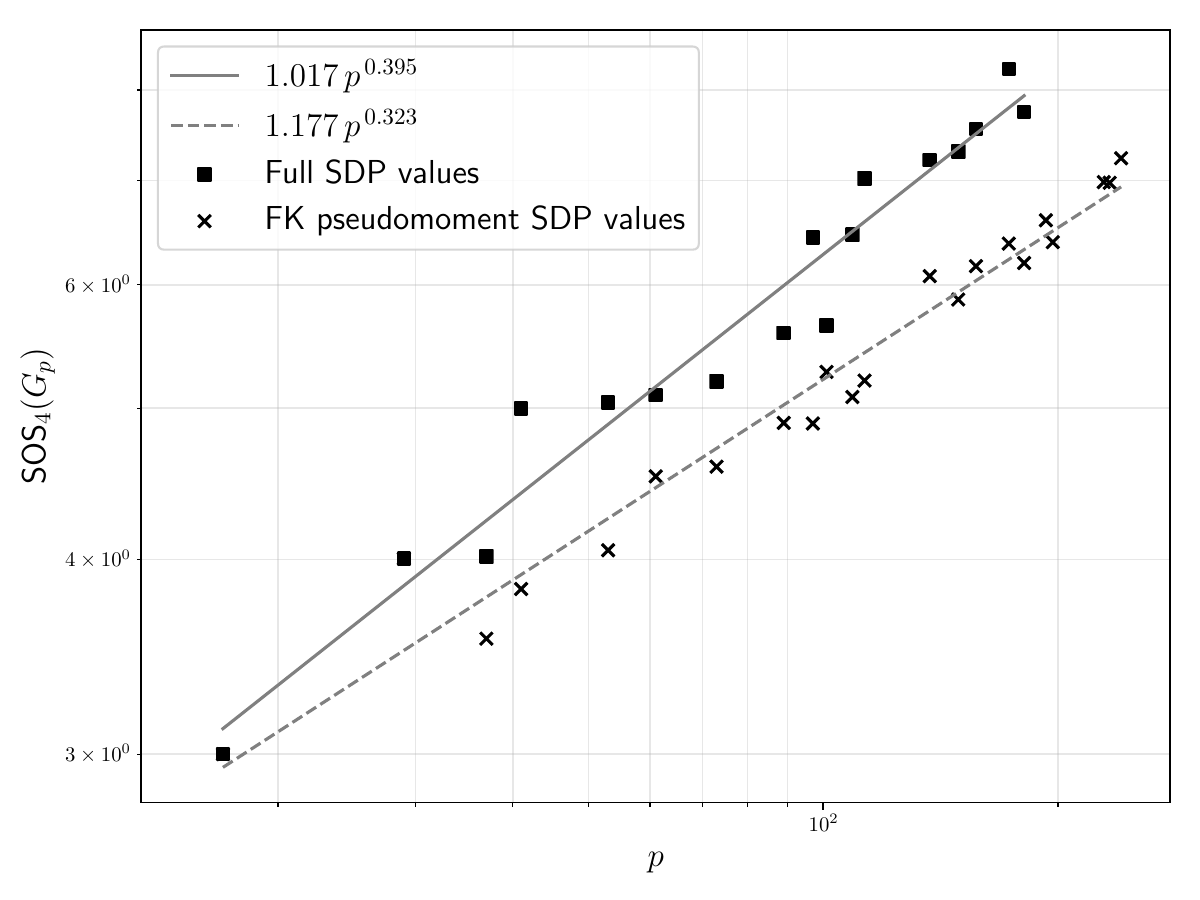}
    \end{center}
    \vspace{-1.5em}
    \caption{For primes $5 \leq p \leq 250$, we present the value of $\SOS_4(G_p)$ and the value of $\mathrm{FK}_4(G_p)$ (where the semidefinite program is restricted to optimize over only FK pseudomoments). We fit power models $ap^b$ to the data and plot the results as well.}
    \label{fig:sdp-numerics}
\end{figure}

\begin{figure}[!htb]
    \begin{center}
        \includegraphics[scale=0.53]{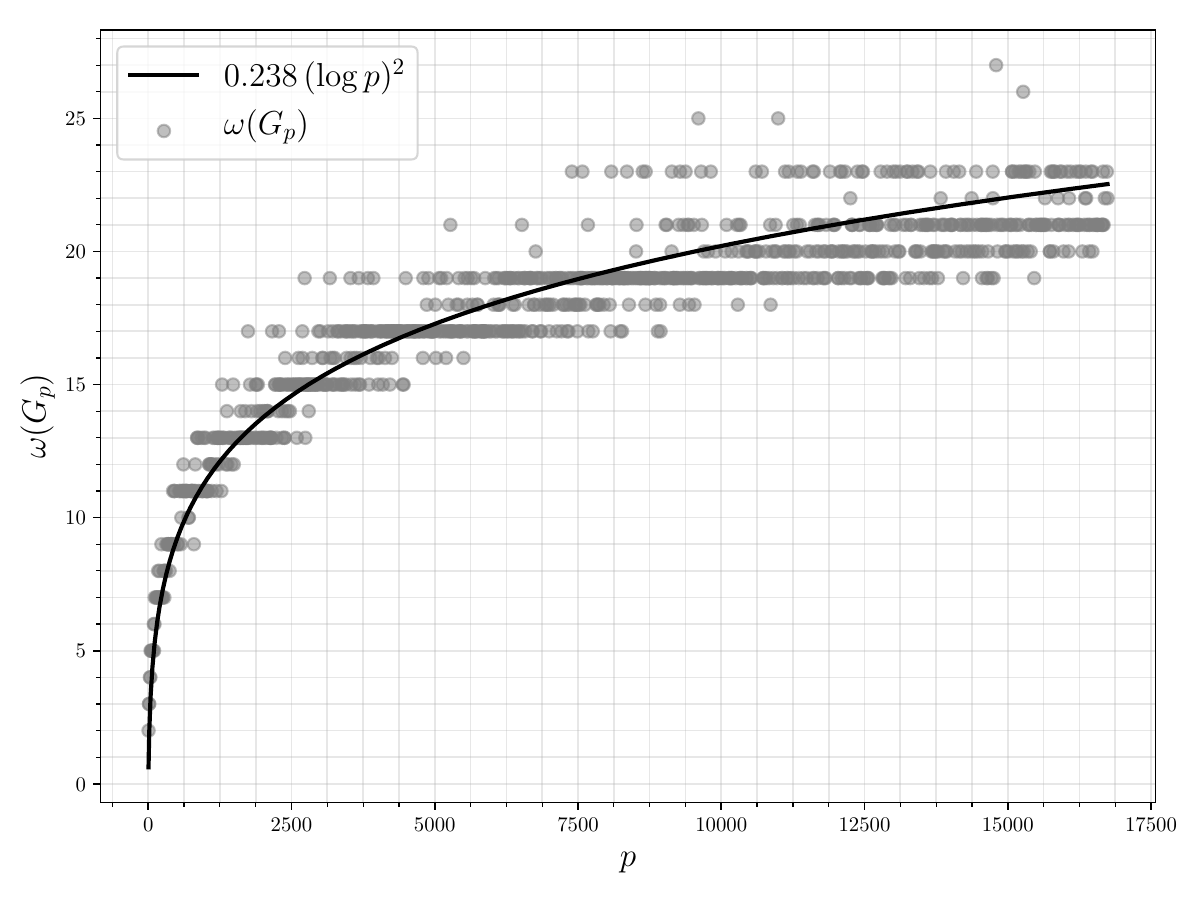}
    \end{center}
    \vspace{-1.5em}
    \caption{For primes $5 \leq p \leq 16741$, we present computations of the true clique number $\omega(G_p)$ (taken from \cite{Shearer-1986-LowerBoundsDiagonalRamsey} and its online supplementary materials). We fit a model $a (\log p)^2$ to the data and plot the results as well.}
    \label{fig:clique-numerics}
\end{figure}

\clearpage

\section{Numerical Experiments}
\label{sec:numerical}

Given our results in Theorems~\ref{thm:main_theorem} and \ref{thm:FK_optimality}, it is natural to ask whether a better lower bound technique than working with FK pseudomoments might prove an optimal lower bound of the form $\SOS_4(G_p) = \Omega(p^{1/2})$.
In Figure~\ref{fig:sdp-numerics}, we present some surprising numerical results suggesting that this is \emph{not} the case.
Namely, in addition to the true values of $\omega(G_p)$, we plot the values of $\SOS_4(G_p)$ (the ``full SDP'') and of $\mathrm{FK}_4(G_p)$ (the ``FK pseudomoment SDP'') on a log-log plot, and fit lines to these results.\footnote{These SDPs are solved using the \texttt{Mosek} solver through the \texttt{CVXPY} interface for \texttt{Python}.}

These results for $\mathrm{FK}_4(G_p)$ confirm the statement of Theorem~\ref{thm:FK_optimality}, with an estimated scaling of $\mathrm{FK}_4(G_p) \sim p^{0.323}$, close to our result showing that $\mathrm{FK}_4(G_p) \sim p^{1/3}$.
For $\SOS_4(G_p)$, the results still indicate a scaling below $p^{1/2}$, estimated at $\SOS_4(G_p) \sim p^{0.395}$.
Based on these results, it seems reasonable to conjecture that $\SOS_4(G_p) = O(p^{1/2 - \epsilon})$ for some $\epsilon > 0$.
We note that this prediction is compatible with that of \cite{KM-2023-SDPCliquePaley}, who, based experiments solving a weaker SDP than degree 4 SOS as proposed by \cite{GLV-2009-BlockDiagonalSDPPaleyClique}, experimentally found that $\SOS_4(G_p) \lesssim p^{0.456}$.

Unfortunately, the size of these SDPs prevents us from exceeding roughly $p \approx 250$, so we cannot be very confident in our scaling estimates.
However, we believe that better understanding whether degree 4 (or higher) SOS can break the ``$\sqrt{p}$ barrier'' is an important question for future study.

\section{General Graph Matrix Norm Bounds Do Not Derandomize}
\label{sec:failure-graphical}

In this section, we give a simple example of a graph matrix for which the norm bound of \cite{AMP-2016-GraphMatrices} for ER graphs fails to hold for Paley graphs.
Since the bound of \cite{AMP-2016-GraphMatrices} is a crucial ingredient in the proof of the $\Omega(p^{1/2})$ SOS lower bound of \cite{BHKKMP-2019-PlantedClique}, we take this as some evidence that a sufficiently high degree of SOS can prove a bound of the form $\omega(G_p) \leq O(p^{1/2 - \epsilon})$.
In particular, this gives some theoretical evidence for the numerical observations in the previous section.

Let $S \in \RR^{n \times n}$ be a symmetric matrix with diagonal equal to zero and off-diagonal entries in $\{\pm 1\}$, the Seidel adjacency matrix of some graph.
We consider the ``diamond-shaped'' graph matrix $M = M(S)$ formed from $S$, whose entries are
\begin{equation}
    M_{xy} = \One{\{x \neq y\}} \sum_{\substack{a, b \in [n] \\ a \neq b}} S_{a,x}S_{a,y} S_{b,x}S_{b,y},
\end{equation}
where we note that we do not need to include the constraints $a, b \notin \{x, y\}$ since these are automatically enacted by having $S_{a,a} = 0$ for all $a$.
See Figure~\ref{fig:graph-matrix-example} for the corresponding shape.

\begin{figure}[!t]
    \begin{center}
    \tikzset{every picture/.style={line width=0.75pt}} 

\begin{tikzpicture}[x=0.75pt,y=0.75pt,yscale=-1,xscale=1]

\draw   (65.2,142.7) .. controls (65.2,140.21) and (67.21,138.2) .. (69.7,138.2) .. controls (72.19,138.2) and (74.2,140.21) .. (74.2,142.7) .. controls (74.2,145.19) and (72.19,147.2) .. (69.7,147.2) .. controls (67.21,147.2) and (65.2,145.19) .. (65.2,142.7) -- cycle ;
\draw   (205,141.5) .. controls (205,139.01) and (207.01,137) .. (209.5,137) .. controls (211.99,137) and (214,139.01) .. (214,141.5) .. controls (214,143.99) and (211.99,146) .. (209.5,146) .. controls (207.01,146) and (205,143.99) .. (205,141.5) -- cycle ;
\draw    (74.2,142.7) -- (135.53,91.03) ;
\draw   (135.53,91.03) .. controls (135.53,88.55) and (137.55,86.53) .. (140.03,86.53) .. controls (142.52,86.53) and (144.53,88.55) .. (144.53,91.03) .. controls (144.53,93.52) and (142.52,95.53) .. (140.03,95.53) .. controls (137.55,95.53) and (135.53,93.52) .. (135.53,91.03) -- cycle ;
\draw   (135.53,192.03) .. controls (135.53,189.55) and (137.55,187.53) .. (140.03,187.53) .. controls (142.52,187.53) and (144.53,189.55) .. (144.53,192.03) .. controls (144.53,194.52) and (142.52,196.53) .. (140.03,196.53) .. controls (137.55,196.53) and (135.53,194.52) .. (135.53,192.03) -- cycle ;
\draw    (144.53,192.03) -- (205.87,140.37) ;
\draw    (144.53,91.03) -- (205.87,140.37) ;
\draw    (74.2,142.7) -- (135.53,192.03) ;
\draw  [color={rgb, 255:red, 74; green, 144; blue, 226 }  ,draw opacity=1 ][dash pattern={on 4.5pt off 4.5pt}] (53.3,142.7) .. controls (53.3,118.32) and (60.64,98.55) .. (69.7,98.55) .. controls (78.76,98.55) and (86.1,118.32) .. (86.1,142.7) .. controls (86.1,167.08) and (78.76,186.85) .. (69.7,186.85) .. controls (60.64,186.85) and (53.3,167.08) .. (53.3,142.7) -- cycle ;
\draw  [color={rgb, 255:red, 208; green, 2; blue, 27 }  ,draw opacity=1 ][dash pattern={on 4.5pt off 4.5pt}] (193.1,141.5) .. controls (193.1,117.12) and (200.44,97.35) .. (209.5,97.35) .. controls (218.56,97.35) and (225.9,117.12) .. (225.9,141.5) .. controls (225.9,165.88) and (218.56,185.65) .. (209.5,185.65) .. controls (200.44,185.65) and (193.1,165.88) .. (193.1,141.5) -- cycle ;

\draw (36,93.4) node [anchor=north west][inner sep=0.75pt]    {$\textcolor[rgb]{0.29,0.56,0.89}{A}$};
\draw (236,90.4) node [anchor=north west][inner sep=0.75pt]    {$\textcolor[rgb]{0.82,0.01,0.11}{B}$};

\end{tikzpicture}
\end{center}
    \caption{We illustrate the graph matrix used as an example in Section~\ref{sec:failure-graphical}.}
    \label{fig:graph-matrix-example}
\end{figure}
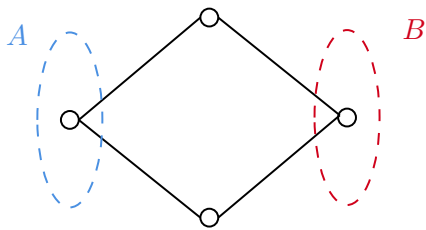

For any such $S$ and $x \neq y$, we have
\begin{align*}
    M_{xy}
    &= \sum_{a} S_{a,x} S_{a,y} \sum_{b \in [p] \setminus \{a\}} S_{b,x}S_{b,y} \\
    &= \sum_a S_{a,x}S_{a,y} ((S^2)_{x,y} - S_{a,x}S_{a,y}) \\
    &= (S^2)_{x,y}^2 - \sum_a S_{a,x}^2 S_{a,y}^2 \\
    &= (S^2)_{x,y}^2 - (p - 2). \numberthis
\end{align*}
That is, $M$ is the matrix $(S^2)^{\circ 2} - (p - 2)\boldsymbol{1}\boldsymbol{1}^{\top}$ with the diagonal zeroed out, where $\circ$ denotes the entrywise power of matrices.

When $S$ is the Seidel adjacency matrix of the Paley graph, we have $S^2 = p I - \boldsymbol{1}\boldsymbol{1}^{\top}$ by Proposition~\ref{prop:chi-sum-2}.
Thus in this case we have $M(S) = (p - 3)I -(p - 3)\boldsymbol{1}\boldsymbol{1}^{\top}$, and thus $\|M\| = (p - 1)(p - 3) \sim p^2$.

On the other hand, when $S$ is the Seidel adjacency matrix of a random ER graph on $p$ vertices, then the results of \cite{AMP-2016-GraphMatrices} show that, since the shape of $M$ has minimum vertex separator of size 1, with high probability, $\|M\| \leq \widetilde{O}(p^{3/2})$.
Thus, the Paley graph adjacency matrix fails to satisfy this basic graphical matrix bound.

We may understand this intuitively as follows: in the random case, $(S^2)^{\circ 2}$ is a random matrix whose entries have mean and standard deviation both on the scale of $p$.
Subtracting $p11^{\top}$ from this matrix centers the entries, so the bound of \cite{AMP-2016-GraphMatrices} says that the norm of $S$ behaves like a $p \times p$ random matrix with i.i.d.\ entries of size $p$.
However, for the Paley graph adjacency matrix, the exact quadratic equation satisfied by $S$ means that these fluctuations are no longer present, making the norm larger.

\section*{Acknowledgments}
\addcontentsline{toc}{section}{Acknowledgments}

We thank Afonso Bandeira, Chris Jones, and Daniel Spielman for helpful discussions, and the anonymous reviewers for their careful reading of the paper.

\bibliographystyle{alpha}
\addcontentsline{toc}{section}{References}
\bibliography{main}

\newcommand{\etalchar}[1]{$^{#1}$}
\begin{thebibliography}{DBSW21}

\bibitem[AKS98]{AKS-1998-PlantedClique}
Noga Alon, Michael Krivelevich, and Benny Sudakov.
\newblock Finding a large hidden clique in a random graph.
\newblock {\em Random Structures \& Algorithms}, 13(3-4):457--466, 1998.

\bibitem[AMP16]{AMP-2016-GraphMatrices}
Kwangjun Ahn, Dhruv Medarametla, and Aaron Potechin.
\newblock Graph matrices: Norm bounds and applications.
\newblock {\em arXiv preprint arXiv:1604.03423}, 2016.

\bibitem[Bab79]{Babai-1979-SpectraCayleyGraphs}
L{\'a}szl{\'o} Babai.
\newblock Spectra of {Cayley} graphs.
\newblock {\em Journal of Combinatorial Theory, Series B}, 27(2):180--189,
  1979.

\bibitem[BDR88]{BDR-1988-CliqueNumbersPaleyGraphs}
I~Broere, D~D{\"o}man, and JN~Ridley.
\newblock The clique numbers and chromatic numbers of certain {Paley} graphs.
\newblock {\em Quaestiones Mathematicae}, 11(1):91--93, 1988.

\bibitem[BHK{\etalchar{+}}19]{BHKKMP-2019-PlantedClique}
Boaz Barak, Samuel Hopkins, Jonathan Kelner, Pravesh~K Kothari, Ankur Moitra,
  and Aaron Potechin.
\newblock A nearly tight sum-of-squares lower bound for the planted clique
  problem.
\newblock {\em SIAM Journal on Computing}, 48(2):687--735, 2019.

\bibitem[BMR13]{BMR-2013-SquaresDifferenceSets}
Christine Bachoc, M{\'a}t{\'e} Matolcsi, and Imre~Z Ruzsa.
\newblock Squares and difference sets in finite fields.
\newblock {\em Integers}, 13:A77, 2013.

\bibitem[Bol01]{Bollobas-2001-RandomGraphs}
B{\'e}la Bollob{\'a}s.
\newblock {\em Random graphs}.
\newblock Cambridge University Press, second edition, 2001.

\bibitem[CGW89]{CGW-1989-QuasirandomGraphs}
Fan R.~K. Chung, Ronald~L. Graham, and Richard~M. Wilson.
\newblock Quasi-random graphs.
\newblock {\em Combinatorica}, 9(4):345--362, 1989.

\bibitem[CI00]{CI-2000-CubicMomentLFunctions}
J~Brian Conrey and Henryk Iwaniec.
\newblock The cubic moment of central values of automorphic {$L$}-functions.
\newblock {\em Annals of Mathematics}, 151(3):1175--1216, 2000.

\bibitem[CL07]{CL-2007-OpenProblemsAdditiveCombinatorics}
Ernie Croot and Vsevolod~F Lev.
\newblock Open problems in additive combinatorics.
\newblock {\em Additive Combinatorics}, 43:207--233, 2007.

\bibitem[DBSW21]{BSW-2021-ProductDirectionsGaloisPaley}
Daniel Di~Benedetto, J{\'o}zsef Solymosi, and Ethan~P White.
\newblock On the directions determined by a {Cartesian} product in an affine
  {Galois} plane.
\newblock {\em Combinatorica}, 41(6):755--763, 2021.

\bibitem[DFHT20]{DFHT-2020-ExplicitSOSLowerHDX}
Irit Dinur, Yuval Filmus, Prahladh Harsha, and Madhur Tulsiani.
\newblock Explicit {SoS} lower bounds from high-dimensional expanders.
\newblock {\em arXiv preprint arXiv:2009.05218}, 2020.

\bibitem[DM15]{DM-2015-SOSPlantedClique}
Yash Deshpande and Andrea Montanari.
\newblock Improved sum-of-squares lower bounds for hidden clique and hidden
  submatrix problems.
\newblock In {\em 28th Annual Conference on Learning Theory (COLT 2015)}, pages
  523--562, 2015.

\bibitem[EPS81]{EPS-1981-CompleteSubgraphsPaley}
RJ~Evans, JR~Pulham, and J~Sheehan.
\newblock On the number of complete subgraphs contained in certain graphs.
\newblock {\em Journal of Combinatorial Theory, Series B}, 30(3):364--371,
  1981.

\bibitem[ES35]{ES-1935-CombinatorialProblemGeometry}
Paul Erd{\"o}s and George Szekeres.
\newblock A combinatorial problem in geometry.
\newblock {\em Compositio Mathematica}, 2:463--470, 1935.

\bibitem[FK00]{FK-2000-PlantedClique}
Uriel Feige and Robert Krauthgamer.
\newblock Finding and certifying a large hidden clique in a semirandom graph.
\newblock {\em Random Structures \& Algorithms}, 16(2):195--208, 2000.

\bibitem[FK03]{FK-2003-LovaszSchrijverIndependentSet}
Uriel Feige and Robert Krauthgamer.
\newblock The probable value of the {L}ov{\'a}sz-{S}chrijver relaxations for
  maximum independent set.
\newblock {\em SIAM Journal on Computing}, 32(2):345--370, 2003.

\bibitem[FKM15]{FKM-2015-SumsOfProducts}
{\'E}tienne Fouvry, Emmanuel Kowalski, and Philippe Michel.
\newblock A study in sums of products.
\newblock {\em Philosophical Transactions of the Royal Society A: Mathematical,
  Physical and Engineering Sciences}, 373(2040):20140309, 2015.

\bibitem[GL17]{GL-2017-LovaszThetaVariants}
Laura Galli and Adam~N Letchford.
\newblock On the {L}ov{\'a}sz theta function and some variants.
\newblock {\em Discrete Optimization}, 25:159--174, 2017.

\bibitem[GLS12]{GLS-2012-GeometricAlgorithmsOptimization}
Martin Gr{\"o}tschel, L{\'a}szl{\'o} Lov{\'a}sz, and Alexander Schrijver.
\newblock {\em Geometric algorithms and combinatorial optimization}, volume~2.
\newblock Springer Science \& Business Media, 2012.

\bibitem[GLV09]{GLV-2009-BlockDiagonalSDPPaleyClique}
Neboj{\v{s}}a Gvozdenovi{\'c}, Monique Laurent, and Frank Vallentin.
\newblock Block-diagonal semidefinite programming hierarchies for 0/1
  programming.
\newblock {\em Operations Research Letters}, 37(1):27--31, 2009.

\bibitem[GR90]{GR-1990-LowerBoundsLeastQNR}
Sidney~West Graham and CJ~Ringrose.
\newblock Lower bounds for least quadratic non-residues.
\newblock In {\em Analytic number theory}, pages 269--309. Springer, 1990.

\bibitem[Gri01]{Grigoriev-2001-SOSParity}
Dima Grigoriev.
\newblock Linear lower bound on degrees of {Positivstellensatz} calculus proofs
  for the parity.
\newblock {\em Theoretical Computer Science}, 259(1-2):613--622, 2001.

\bibitem[GZ19]{GZ-2019-LandscapePlantedClique}
David Gamarnik and Ilias Zadik.
\newblock The landscape of the planted clique problem: Dense subgraphs and the
  overlap gap property.
\newblock {\em arXiv preprint arXiv:1904.07174}, 2019.

\bibitem[Hae95]{Haemers-1995-InterlacingEigenvaluesGraphs}
Willem~H Haemers.
\newblock Interlacing eigenvalues and graphs.
\newblock {\em Linear Algebra and its Applications}, 226:593--616, 1995.

\bibitem[Has96]{Hastad-1996-CliqueHardnessApproximation}
Johan Hastad.
\newblock Clique is hard to approximate within $n^{1 - \epsilon}$.
\newblock In {\em Proceedings of 37th Conference on Foundations of Computer
  Science}, pages 627--636. IEEE, 1996.

\bibitem[HKP15]{HKP-2015-PlantedCliqueSOSMPW}
Samuel~B Hopkins, Pravesh~K Kothari, and Aaron Potechin.
\newblock {SOS} and planted clique: Tight analysis of {MPW} moments at all
  degrees and an optimal lower bound at degree four.
\newblock {\em arXiv preprint arXiv:1507.05230}, 2015.

\bibitem[HL22]{HL-2022-ExplicitCSPSOSBounds}
Max Hopkins and Ting-Chun Lin.
\newblock Explicit lower bounds against $\omega(n)$-rounds of sum-of-squares.
\newblock {\em arXiv preprint arXiv:2204.11469}, 2022.

\bibitem[Hof70]{Hoffman-1970-Eigenvalues}
Alan~J Hoffman.
\newblock On eigenvalues and colorings of graphs.
\newblock In Bernard Harris, editor, {\em Graph Theory and its Applications}.
  Academic Press, 1970.

\bibitem[HP21]{HP-2021-SumsetsPaleyClique}
Brandon Hanson and Giorgis Petridis.
\newblock Refined estimates concerning sumsets contained in the roots of unity.
\newblock {\em Proceedings of the London Mathematical Society},
  122(3):353--358, 2021.

\bibitem[IK21]{IK-2021-AnalyticNumberTheory}
Henryk Iwaniec and Emmanuel Kowalski.
\newblock {\em Analytic number theory}, volume~53.
\newblock American Mathematical Soc., 2021.

\bibitem[IR90]{IR-1990-ClassicalModernNumberTheory}
Kenneth Ireland and Michael~Ira Rosen.
\newblock {\em A classical introduction to modern number theory}, volume~84.
\newblock Springer Science \& Business Media, 1990.

\bibitem[Jer92]{Jerrum-1992-LargeCliques}
Mark Jerrum.
\newblock Large cliques elude the {Metropolis} process.
\newblock {\em Random Structures \& Algorithms}, 3(4):347--359, 1992.

\bibitem[JPR{\etalchar{+}}22]{jones2022sum}
Chris Jones, Aaron Potechin, Goutham Rajendran, Madhur Tulsiani, and Jeff Xu.
\newblock Sum-of-squares lower bounds for sparse independent set.
\newblock In {\em 2021 IEEE 62nd Annual Symposium on Foundations of Computer
  Science (FOCS)}, pages 406--416. IEEE, 2022.

\bibitem[Juh82]{Juhasz-1982-LovaszThetaRandomGraph}
Ferenc Juh{\'a}sz.
\newblock The asymptotic behaviour of {Lov{\'a}sz'} theta function for random
  graphs.
\newblock {\em Combinatorica}, 2(2):153--155, 1982.

\bibitem[Kar72]{Karp-1972-Reducibility}
Richard~M Karp.
\newblock Reducibility among combinatorial problems.
\newblock In {\em Complexity of Computer Computations}, pages 85--103.
  Springer, 1972.

\bibitem[Kar76]{Karp-1976-ProbabilisticAnalysis}
Richard~M. Karp.
\newblock The probabilistic analysis of some combinatorial search algorithms.
\newblock In {\em Algorithms and complexity: New Directions and Recent
  Results}, 1976.

\bibitem[KM23]{KM-2023-SDPCliquePaley}
Vladimir~A Kobzar and Krishnan Mody.
\newblock Revisiting block-diagonal sdp relaxations for the clique number of
  the paley graphs.
\newblock {\em arXiv preprint arXiv:2304.08615}, 2023.

\bibitem[Ku{\v{c}}95]{Kucera-1995-ComplexityGraphPartitioning}
Lud{\v{e}}k Ku{\v{c}}era.
\newblock Expected complexity of graph partitioning problems.
\newblock {\em Discrete Applied Mathematics}, 57(2-3):193--212, 1995.

\bibitem[Lau03]{Laurent-2003-CutPolytopeSOS}
Monique Laurent.
\newblock Lower bound for the number of iterations in semidefinite hierarchies
  for the cut polytope.
\newblock {\em Mathematics of Operations Research}, 28(4):871--883, 2003.

\bibitem[Lau09]{Laurent-2009-SOS}
Monique Laurent.
\newblock Sums of squares, moment matrices and optimization over polynomials.
\newblock In {\em Emerging Applications of Algebraic Geometry}, pages 157--270.
  Springer, 2009.

\bibitem[Liu02]{Liu-2002-KloostermanTwistedMoments}
Chunlei Liu.
\newblock Twisted higher moments of {Kloosterman} sums.
\newblock {\em Proceedings of the American Mathematical Society},
  130(7):1887--1892, 2002.

\bibitem[Lov75]{Lovasz-1975-SpectraGraphsTransitiveGroups}
L{\'a}szl{\'o} Lov{\'a}sz.
\newblock Spectra of graphs with transitive groups.
\newblock {\em Periodica Mathematica Hungarica}, 6(2):191--195, 1975.

\bibitem[Lov79]{Lovasz-1979-ShannonCapacityGraph}
L{\'a}szl{\'o} Lov{\'a}sz.
\newblock On the {Shannon} capacity of a graph.
\newblock {\em IEEE Transactions on Information theory}, 25(1):1--7, 1979.

\bibitem[LZZ18]{LZZ-2018-DistributionJacobiSums}
Qing Lu, Weizhe Zheng, and Zhiyong Zheng.
\newblock On the distribution of {Jacobi} sums.
\newblock {\em Journal f{\"u}r die reine und angewandte Mathematik (Crelles
  Journal)}, 2018(741):67--86, 2018.

\bibitem[MMP19]{MMP-2019-LPCliquesPaley}
Mark Magsino, Dustin~G Mixon, and Hans Parshall.
\newblock Linear programming bounds for cliques in {Paley} graphs.
\newblock In {\em Wavelets and Sparsity XVIII}, volume 11138, page 111381H.
  International Society for Optics and Photonics, 2019.

\bibitem[Mon71]{Montgomery-1971-MultiplicativeNumberTheory}
Hugh~L Montgomery.
\newblock {\em Topics in multiplicative number theory}, volume 227.
\newblock Springer, 1971.

\bibitem[MPW15]{MPW-2015-PlantedClique}
Raghu Meka, Aaron Potechin, and Avi Wigderson.
\newblock Sum-of-squares lower bounds for planted clique.
\newblock In {\em 47th Annual ACM Symposium on Theory of Computing (STOC
  2015)}, pages 87--96. ACM, 2015.

\bibitem[Mra17]{Mrazovic-2017-RandomModelPaleyGraph}
Rudi Mrazovi{\'c}.
\newblock A random model for the {Paley} graph.
\newblock {\em The Quarterly Journal of Mathematics}, 68(1):193--206, 2017.

\bibitem[MW13]{MW-2013-PlantedClique}
Raghu Meka and Avi Wigderson.
\newblock Association schemes, non-commutative polynomial concentration, and
  sum-of-squares lower bounds for planted clique.
\newblock In {\em Electronic Colloquium on Computational Complexity (ECCC)},
  volume~20, page~10, 2013.

\bibitem[Pan21]{Pang-2021-PlantedCliqueSOSExact}
Shuo Pang.
\newblock {SOS} lower bound for exact planted clique.
\newblock In {\em 36th Computational Complexity Conference (CCC 2021)}. Schloss
  Dagstuhl-Leibniz-Zentrum f{\"u}r Informatik, 2021.

\bibitem[Rad11]{Radziszowski-2011-SmallRamseyNumbers}
Stanislaw Radziszowski.
\newblock Small {Ramsey} numbers.
\newblock {\em The Electronic Journal of Combinatorics}, 1000:DS1--Aug, 2011.

\bibitem[RS15]{RS-2015-SOS4PlantedClique}
Prasad Raghavendra and Tselil Schramm.
\newblock Tight lower bounds for planted clique in the degree-4 {SOS} program.
\newblock {\em arXiv preprint arXiv:1507.05136}, 2015.

\bibitem[She86]{Shearer-1986-LowerBoundsDiagonalRamsey}
James~B Shearer.
\newblock Lower bounds for small diagonal {Ramsey} numbers.
\newblock {\em Journal of Combinatorial Theory, Series A}, 42(2):302--304,
  1986.

\bibitem[Yip22]{Yip-2022-CliqueNumberPaleyPrimePower}
Chi~Hoi Yip.
\newblock On the clique number of {Paley} graphs of prime power order.
\newblock {\em Finite Fields and Their Applications}, 77:101930, 2022.

\end{thebibliography}

\end{document}